\documentclass[a4paper,11pt]{article}
\usepackage{jheppub} % for details on the use of the package, please see the JINST-author-manual
\usepackage{lineno}
\usepackage{simpler-wick}
\usepackage{latexsym}
\usepackage[utf8]{inputenc}
\usepackage[english]{babel}
\usepackage{graphicx}
\usepackage{braket}
\usepackage{subcaption}
\usepackage{verbatim}
\usepackage{multirow}
\usepackage{amsmath, amssymb, amsthm, amsfonts}
\usepackage{bm}
\usepackage{mathrsfs}
\usepackage{soul}
\usepackage{hyperref}
\usepackage{natbib}
\usepackage{float}
\usepackage{mathtools}
\usepackage{pgfplots}
\usepackage{siunitx}
\usepackage{slashed}
\usepackage{amsthm}
\usepackage{amssymb}
\usepackage{physics}	% installed packages for typing maths and physics
% note that this package changes some commands regarding brakets
\usepackage{graphicx} % need for figures
\usepackage{epstopdf}
\usepackage{bbold}
\usepackage{wasysym}
\usepackage[ampersand]{easylist}
%\linenumbers

%%\allowdisplaybreaks

%%%% Theorems and Proofs
\newtheorem{theorem}{Theorem}
\newtheorem{lemma}[theorem]{Lemma} % Lemma numbering follows Theorems

\theoremstyle{definition}
\newtheorem{definition}{Definition}

\theoremstyle{remark}
\newtheorem*{remark}{Remark}

%%%%%%%%%%%%%%%%%% Ping's definition

%\usepackage{showlabels}
\usepackage{stmaryrd}
\usepackage{simpler-wick}

\def\nn{\nonumber}

\def\be{\begin{equation}}
\def\ee{\end{equation}}

\global\long\def\mA{\mathcal{A}}%
 
\global\long\def\mB{\mathcal{B}}%

\global\long\def\mF{\mathcal{F}}%

\global\long\def\mH{\mathcal{H}}%

\global\long\def\mJ{\mathcal{J}}%
 
\global\long\def\mK{\mathcal{K}}%

\global\long\def\mM{\mathcal{M}}%
 
\global\long\def\mN{\mathcal{N}}%
 
\global\long\def\mO{\mathcal{O}}%

\global\long\def\mU{\mathcal{U}}%

\global\long\def\mW{\mathcal{W}}%
 
\global\long\def\mX{\mathcal{X}}%

\global\long\def\e{\epsilon}%
 
\global\long\def\ra{\rightarrow}%

\global\long\def\avg#1{\left\langle #1\right\rangle }%

\global\long\def\f#1#2{\frac{#1}{#2}}%
 
\global\long\def\del{\partial}%
 
\global\long\def\t{\theta}%
 
\global\long\def\a{\alpha}%
 
\global\long\def\b{\beta}%
 
\global\long\def\g{\gamma}%
 
\global\long\def\G{\Gamma}%
 
\global\long\def\s{\sigma}%
 
\global\long\def\r{\rho}%
 
\global\long\def\d{\delta}%
 
\global\long\def\Tr{\text{Tr}}%
 
\global\long\def\tr{\text{tr}}%
 
\global\long\def\bra#1{\left\langle #1\right|}%
 
\global\long\def\ket#1{\left|#1\right\rangle }%
 
\global\long\def\N{\mathbb{N}}%
\global\long\def\I{\mathbb{I}}%

\global\long\def\R{\mathbb{R}}%

\global\long\def\p{\varphi}%

\global\long\def\w{\omega}%
 
\global\long\def\D{\Delta}%

\global\long\def\l{\ell}%

\global\long\def\app{\approx}%
\global\long\def\arccosh{\text{arccosh}}%
\global\long\def\lam{\lambda}%
\global\long\def\i{\text{i}}%

\global\long\def\td{\text{Td}}%
 
\global\long\def\bri#1{\left\llbracket #1\right|}%
 
\global\long\def\kit#1{\left|#1\right\rrbracket }%
\global\long\def\avvg#1{\left\llbracket #1\right\rrbracket }%
\global\long\def\Li{\text{Li}}%

\global\long\def\chooseq#1#2{\begin{bmatrix}#1\\
#2
\end{bmatrix}_{q}}%

\title{Single-Sided Black Holes in Double-Scaled SYK Model and No Man's Island}

% Collaborations

%% [A] If main author
%% \collaboration{\includegraphics[height=17mm]{collabroation-logo}\\[6pt]
%%  XXX collaboration}

%% or
%% [B] If "on behalf of"
%% \collaboration[c]{on behalf of XXX collaboration}

% Authors
% The "\note" macro will give a warning: "Ignoring empty anchor...", you can safely ignore it.

%% [A] simple case: 2 authors, same institution
%% \author[1]{A. Uthor\note{Corresponding author.}}
%% \author{and A. Nother Author}
%% \affiliation{Institution,\\Address, Country}

%% or, e.g.
%% [B] more complex case: 4 authors, 3 institutions, 2 footnotes
%% \author[a,b]{F. Irst,\note{Now at another university}}
%% \author[c]{S. Econd,}
%% \author[a,2]{T. Hird\note{Also at Some University.}}
%% \author[c,2]{and Fourth}
%% \affiliation[a]{Institution_1,\\Address, Country}
%% \affiliation[b]{Institution_2,\\Address, Country}
%% \affiliation[c]{Institution_3,\\Address, Country}

\author[1]{Xuchen Cao}
\affiliation[1]{Department of Physics, University of Illinois Urbana Champaign}
\author[2]{and Ping Gao}
\affiliation[2]{Kavli Institute for Theoretical Sciences (KITS),\\
University of Chinese Academy of Sciences, Beijing 100190, China}

% E-mail addresses: only for the corresponding author
\emailAdd{xuchenc2@illinois.edu, gaoping@ucas.ac.cn}

\abstract{
We study a single-sided black hole with an end-of-the-world (EoW) brane behind the horizon in the double-scaled SYK (DSSYK). The new Hamiltonian is a deformation of the original DSSYK Hamiltonian with an extra exponential wormhole length operator, which leads to a new chord diagram rule. The boundary algebra is defined as generated by the new Hamiltonian and boundary matter. There is an alternative but equivalent definition with a $q$-coherent state due to a nontrivial isomorphism of the vN algebra of DSSYK. This isomorphism induces a unitary equivalence, which yields a surprising result that the boundary algebra of a single-sided black hole in DSSYK has a non-trivial commutant and is a type II$_1$ vN factor. It follows that the full bulk reconstruction from the boundary is impossible, and there is a ``no man's island" behind the horizon in the semiclassical JT limit. Inspired by the EoW brane, we construct a family of matter-brane states with an arbitrary number of matter chords and behaving like an EoW brane. They exactly solve the full spectrum of DSSYK. 

We take different ways to understand the nontrivial commutant. We argue that the commutant is complex on chord number basis and thus non-geometric. In the semiclassical JT limit, the commutant becomes the canonical purification of the boundary algebra and claims the no man's island. In the context of Hawking radiation after Page time, the unitary equivalence is interpreted as encoding the canonical purification into the old Hawking radiation, and the no man's island has the same essence as the island. Including the exponential wormhole length operator independently, the boundary algebra is extended to all bounded operators and reconstructs the no man's island. This can be regarded as a different choice for the definition of boundary algebra. This type I$_\infty$ algebra is closely related to the EoW brane in Kourkoulou-Maldacena.
}

\begin{document}
\maketitle
\flushbottom

\section{Introduction}
\label{sec:intro}
Sachdev-Ye-Kitaev (SYK) model, which is a strongly interacting model of Majorana fermions originally proposed in the context of spin glass, has been extensively studied in recent years as a toy model for quantum gravity~\cite{Sachdev:1992fk,Sachdev:2010um,kitaev2015simple,Almheiri:2014cka,Polchinski:2016xgd,Jensen:2016pah,Maldacena:2016upp,Maldacena:2016hyu} due to its tractability in the large $N$ limit and resemblance to dynamics of gravity on $\text{AdS}_2$ spacetime. There are three relevant parameters in the SYK model: $N$ the total number of fermions in the system, $p$ the number of fermions coupled by the Hamiltonian, and $\beta \mJ$ the dimensionless coupling constant. As we always take the large $N$ limit the first two can be packed into a single dimensionless parameter $\lambda=\frac{2p^2}{N}$. %The physics of the model depend on different regimes of these parameters, including the large $p$ limit where $\lambda\ra 0$ and $\lam^{1/2}\b\mJ}$ fixed, the quantum JT limit where $\lambda\ra0$ and $\lam^{3/2}\b\mJ$ fixed, the semiclassical JT limit where we take $\lam^{3/2}\b\mJ\ll 1$ in the quantum JT limit. 
In this paper we will mainly study the double scaling limit where $\lam$ is held constant as $N\ra \infty$. It has been shown that in this limit the model can be solved by the so-called chord diagram technique~\cite{Berkooz:2018qkz,Berkooz:2018jqr,Berkooz:2020xne,Berkooz:2020uly,Berkooz:2020fvm,Berkooz:2024lgq}, where quantities such as partition functions and matter correlators can be calculated in a closed form. One remarkable observation is that the sum over chord diagrams can be interpreted as a gravitational path integral over discrete bulk geometries coupled to matter fields, and these geometries reduce to $\text{AdS}_2$ upon taking the triple scaling/JT limit~\cite{Berkooz:2018jqr,Lin:2022rbf,Lin:2023trc,Berkooz:2022mfk}. This observation implies that we can treat this emergent discrete bulk geometry as a UV completion of JT gravity coupled to matter. We will take this point of view throughout this paper and discuss the structure of Hilbert spaces and algebra of observables defined on this geometry. On the other hand, it has been proposed that the double-scaled SYK (DSSYK) model itself is dual to the bulk sine-dilaton gravity theory~\cite{Blommaert:2023opb,Blommaert:2023wad,Blommaert:2024ymv,Blommaert:2025avl,Blommaert:2025eps}, so we expect our discussion here to have implications in studying the algebraic structures of sine-dilaton gravity.

The application of operator algebraic methods in holography has attracted considerable attention in the past few years as it provides a natural language to study the emergence of the bulk from boundary degrees of freedom both in and away from the large $N$ limit~\cite{Leutheusser:2021qhd,Leutheusser:2021frk,Witten:2021unn,Leutheusser:2022bgi,Chandrasekaran:2022eqq,Gesteau:2024rpt,Engelhardt:2023xer,Engelhardt:2025bxy,Liu:2025cml,Kudler-Flam:2025cki}. By subregion-subalgebra duality~\cite{Leutheusser:2022bgi}, subregions in the bulk can be understood as being defined by subalgebras on the boundary. One advantage of this description is that the notion of bulk subregion is valid only in the semiclassical limit $G_N\ra 0$ where $G_N\sim\f{1}{N^2}$ is the bulk Newton constant, while subalgebras are well-defined for arbitrary $N$. This allows for the generalization of the notion of subregion to the regime where quantum fluctuation in the bulk is large and the geometric description no longer holds. In some simple models such as JT gravity with two AdS boundaries, algebras of boundary observables has been explicitly constructed and are proved to be type $\text{II}_\infty$ von Neumann factors~\cite{Penington:2023dql,Kolchmeyer:2023gwa,Penington:2024sum}. Furthermore, it has been shown that in this case the algebras defined on the left and right boundary are commutants of each other, which, together with factoriality, ensure that arbitrary bulk operators can be reconstructed from the union of algebras on the two boundaries. Furthermore, we can take semiclassical limits of these algebras to restore the notion of bulk entanglement wedge~\cite{Gao:2024gky}. The reminiscence of this construction in the DSSYK model is that the emergent bulk geometry also has two boundaries where algebras of boundary observables can be defined~\cite{Lin:2022rbf,Xu:2024hoc}. This bulk Hilbert space is known as q-Fock space and has been studied extensively in mathematical literature~\cite{bozejko1991example,speicher1993generalized,Bozejko1994,bozejko1997q,Ricard2003FactorialityOQ,bozejko2017fock,skalski2018remarks}. In particular, it has been proved that boundary algebras are of type $\text{II}_1$.

While the above discussion of JT gravity focused on the case of nearly $\text{AdS}_2$ boundary condition. It is natural to consider more general boundary conditions~\cite{Goel:2020yxl,Ferrari:2020yon}. In particular, we will focus on the end-of-the-world (EoW) brane boundary condition which is essentially the Neumann boundary condition for both bulk metric and dilaton field. In contrast to AdS boundaries, EoW branes are dynamical objects. These geometries has been conjectured to describe pure states in the SYK model~\cite{Kourkoulou:2017zaj,Almheiri:2018xdw}. Dynamics of nearly $\text{AdS}_2$ geometries in JT gravity with EoW branes and their random matrix dual has been studied in~\cite{Gao:2021uro}. In~\cite{Penington:2019kki}, nearly $\text{AdS}_2$ geometries ending on EoW branes were proposed as a toy model for black hole evaporation and Page transitions. The key feature of the Page transition is a crossover of quantum minimal surfaces (QES) in the bulk, and respectively a sudden change in the entanglement wedge of the AdS boundary. In the algebraic framework~\cite{Engelhardt:2023xer} this transition manifests as a change in the type of the boundary algebra. It is therefore meaningful to understand the boundary algebra of a single-sided nearly $\text{AdS}_2$ black hole with an EoW brane, as they potentially serve as an example of the algebraic Page transition.

\begin{figure}
	\centering
        \includegraphics[height=5cm]{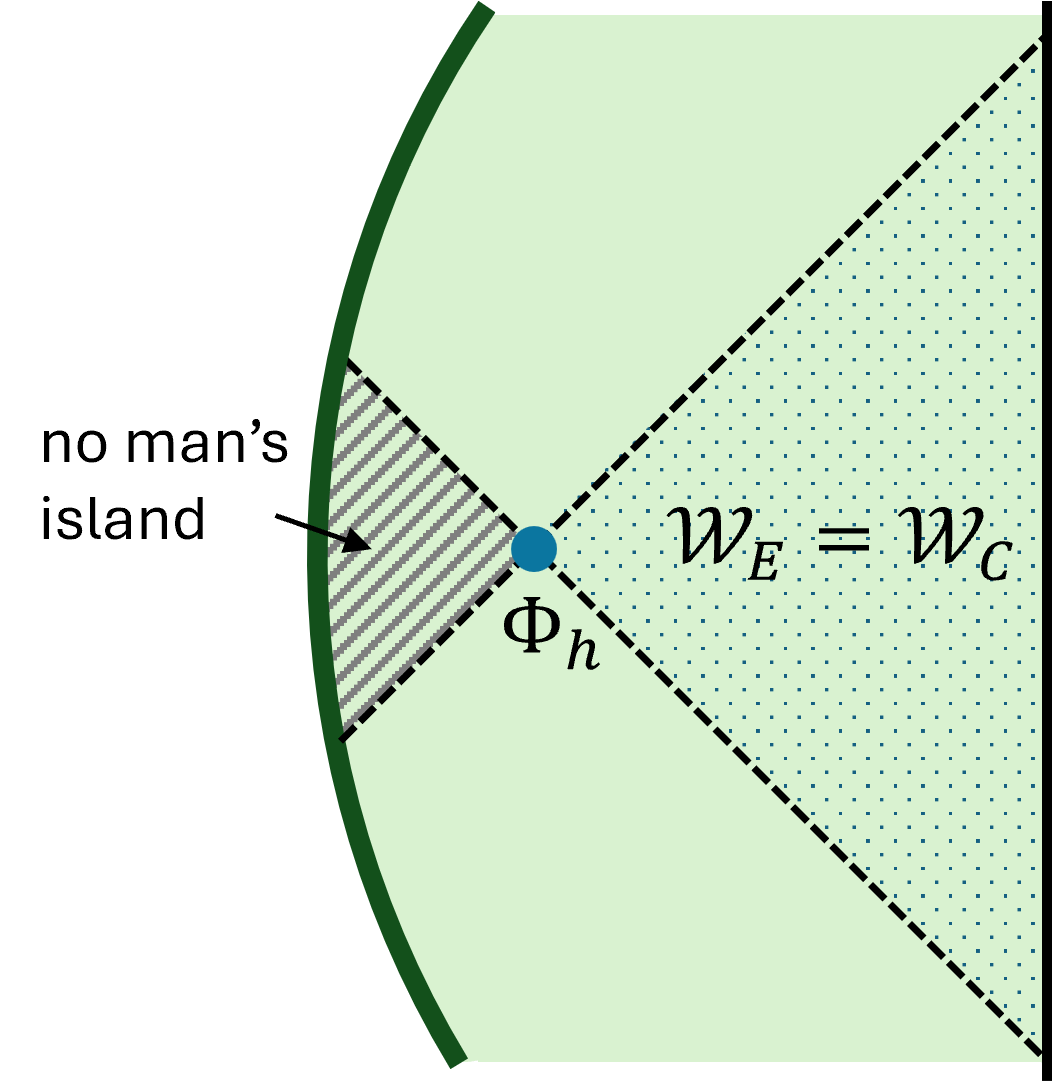}
        \caption{A single-sided black hole with an EoW brane behind the horizon in JT gravity. The dark green curve is the trajectory of the EoW brane; the blue dot is the horizon; the shaded region behind the horizon is the ``no man's island"; the entanglement wedge of the boundary is the same as the causal wedge and is the dotted region outside the horizon. \label{pic-nmisland}}
\end{figure}

In this paper we take a step in this direction by studying the boundary algebra in a different context: a single-sided black hole defined in DSSYK model with an EoW brane, which can be thought of as a discrete version of geometries in~\cite{Gao:2021uro}. EoW branes in DSSYK models were first studied in~\cite{Okuyama:2023byh} by constructing a discrete Hamiltonian which reproduce the dynamics in~\cite{Gao:2021uro} upon taking the triple scaling/JT limit. In this paper we start from re-deriving this Hamiltonian from a new set of diagrammatic rules. Building on this, the boundary algebra can then be defined and analyzed. We construct two different but equivalent ways to characterize the EoW brane: a deformation of Hamiltonian or a $q$-coherent state. In the first perspective, the single-side black hole boundary algebra $\tilde \mA_R$ is defined as generated by this deformed Hamiltonian and a matter operator, analogous to the boundary algebra defined in the ordinary DSSYK. In the second perspective, we have a special state for the EoW brane in the DSSYK Hilbert space $\mH$, where all bounded operators $\mB(\mH)$ is generated by the boundary algebras $\mA_{L,R}$ on each side \cite{Lin:2023trc,Xu:2024hoc}. The equivalence between these two descriptions is warranted by a nontrivial isomorphism between von Neumann algebras $\tilde \mA_R$ and $\mA_R$, which further induces a unitary equivalence $U$. This unitary is a very special map between two different geometric pictures: from the first perspective, the Hamiltonian deformation of EoW brane gives a new one-sided chord diagram rule that would naively look like a single-sided system\footnote{In~\cite{Berkooz:2025ydg} a similar chord diagram rule was derived in a different setup.}; from the second perspective, the two-sided Hamiltonian are the same as DSSYK, and the EoW brane is a special entangled state in a two-sided system. In fact, these two perspectives can also be understood as resulting from different slicing schemes of one-sided chord diagrams. 

It follows that the boundary algebra of a single-sided black hole with an EoW brane has a nontrivial commutant $\tilde \mA_L$ and is a type II$_1$ factor. This is a remarkable and surprising result: we are unable to reconstruct the whole bulk from the single-sided boundary algebra. In the semiclassical JT limit, where we have emergent continuous bulk spacetime, we argue that the region behind the horizon cannot be claimed by the boundary and we name it as ``no man's island" (see Figure \ref{pic-nmisland} as an illustration).

We take a few different ways to understand the commutant $\tilde \mA_L$ of the boundary algebra $\tilde \mA_R$. First, we argue that the matter operator (defined by the Tomita conjugate $\tilde J_\psi$ in \eqref{3.62}) in the commutant are complicated in the chord number basis. Since the chords can be regarded as discrete geometry in the bulk, we interpret this feature as non-geometric. This heuristically explains why the commutant $\tilde \mA_L$ is not obviously associated to an asymptotic boundary. Nevertheless, we show that a minimal extension of the boundary algebra $\tilde \mA_R$ to all bounded operators $\mB(\mH)$ is simply including the exponential wormhole length operator $e^{-\bar \l _b}$ of the single-sided black hole. 

In the semiclassical JT limit, the commutant becomes the canonical purification of the boundary algebra of Lorentzian ``single trace" operators and claims the no man's island. Since the nontrivial unitary $U$ maps to a two-sided system, an appropriate way to understand the no man's island is comparing it with the Hawking radiation process after Page time. Consider the single-sided black hole is simulated by an SYK model of $N$ Majorana fermions, and let it couple with an $N/2$-site spin chain as if the Hawking radiation (see Figure \ref{pic-syk-SC}). They have very different dynamics but have the same dimensional Hilbert space. After Page time, the spin chain claims an island in the dual spacetime even though it is not a holographic system. However, we can use a Jordan-Wigner transformation to represent all spin chain Pauli operators in terms of $N$ Majorana fermions, and then map the spin chain to another SYK model that canonically purifies the black hole as a thermofield double state. Heuristically, we implement a complex unitary transformation to collapse all old Hawking radiation into another black hole and create an Einstein-Rosen bridge with the old radiating single-sided black hole. We interpret the nontrivial unitary equivalence $U$ as the encoding map from the canonical purification to the old Hawking radiation, and thus the no man's island has the same essence as the island.

From the chord diagram viewpoint, it is also natural to define a {\it different} boundary algebra for a single-sided black hole by including the exponential wormhole length operator $e^{-\bar \l _b}$. As we show in Section \ref{sec:4.2}, this algebra is indeed the full algebra $\mB(\mH)$ and claims the no man's island in the semiclassical JT limit. Since this operator can be written as the coupling term of the eternal traversable wormhole, we can interpret it as an averaged version of the measurement-based single-sided operation in \cite{Kourkoulou:2017zaj} to reconstruct the no man's island in a pure state of a single SYK model. 

This is indeed a very interesting ambiguity about which operators should be included in the definition of a boundary algebra in DSSYK. From chord diagram (or bulk) viewpoint, there is no preferred rule to assign some operator to a specific boundary. Therefore, different choices define different subalgebras, which are dual to different bulk subregions. This is a generalization of the subregion-subalgebra duality in \cite{Leutheusser:2022bgi}. On the other hand, if we regard an algebra as defining a theory on the boundary, then different subalgebras indeed correspond to different types of boundary theories for a single-sided black hole. From the brief summary of the above two paragraphs, there are two types of single-sided black holes: 
\begin{itemize}
    \item A single-sided black hole entangled with a large amount of non-gravitational degrees of freedom, e.g. an old radiating black hole after Page time.
    \item A single-sided black hole in a pure state without entanglement, e.g. pure state of SYK model in \cite{Kourkoulou:2017zaj}.
\end{itemize}  
Including $e^{-\bar \l _b}$ or not additional to $\tilde \mA_R$ serves as two examples in a unified framework for these two types of single-sided black holes.

Inspired by the EoW brane state, we construct a family of matter-brane states, which contain an arbitrary number of matter chord and behave the same as the EoW brane when probed by boundary Hamiltonian. We can understand these states as more general EoW branes with various bulk matter boundary conditions. In terms of these matter-brane states, we are able to exactly diagonalize the DSSYK Hamiltonian (and any matter operator) in the full Hilbert space and solve the spectrum. 

This paper is organized as follows, in Section~\ref{sec:DSSYK} we review basics of DSSYK model. Readers familiar with these materials can skip this section. In Section~\ref{sec:EoW brane} we construct EoW branes in DSSYK model from diagrammatics. We define the algebra of boundary observables and prove that it is a type $\text{II}_1$ von Neumann algebra. We also define matter-brane states, which exactly solve the full spectrum of DSSYK in the whole Hilbert space. In Section~\ref{more on boundary algebra} we elaborate more on the boundary algebra by discussing its commutant. We discuss the semiclassical JT limit and prove the existence of a no man's island behind the horizon, with an emphasis on its physical significance. We also discuss the physical interpretation of the alternative definition of boundary algebra by including the exponential wormhole length operator. We argue that they correspond to two types of single-sided black holes with or without a large anount of entanglement. Finally, in Section~\ref{sec:discussion} we include discussions and propose future directions.\\ \\
Here we list some frequently used notations for reference
\begin{itemize}
    \item $a_{L/R,0},a^\dagger_{L/R,0}$: annihilation and creation operators of gravitational chords ($H$ chords) on the left and right boundaries. 
    \item $H_{L/R,0}(\equiv H_{L/R})$: Hamiltonian operators on the left and right boundaries
    \item $H_{L/R,1}(\equiv M_{L/R})$: Matter operators on the left and right boundaries
    \item $\tilde H_{R,0/1}$: Hamiltonian and matter operators in the presence of EoW branes, which we always assume to act on the right boundary
    \item $\tilde H_{bulk}$: EoW brane Hamiltonian in the pure gravity case 
    \item $\mA_{L,R}$: von Neumann algebras generated by $H_{L/R,0}$ and $H_{L/R,1}$.
    \item $\tilde\mA_{R}$: von Neumann algebra generated by $\tilde H_{R,0}$ and $\tilde H_{R,1}$.
    \item $\mathcal H$: the full Hilbert space.
    \item $B(\mathcal H)$: the algebra of all bounded operators on $\mathcal H$.
    \item $\ket{\w}$: the vacuum state (the state with no chords) in $\mathcal H$.
    \item $P_{\psi}$: the projection operator onto state $\psi$ in $\mathcal H$.
    \item $\ket{w}$: an arbitrary chord state given by a fixed sequence of $H$ and $M$ chords, such as $\ket{00111011}$ where $0$ and $1$ denote $H$ and $M$ chords respectively
    \item $[n]_q$: the shorthand notation defined by $[n]_q=\frac{1-q^n}{1-q}$
    \item $(a;q)_n$: the $q$-Pochhammer symbol defined by $(a,q)_n=\prod_{k=0}^{n-1}(1-aq^k)$, we also have the shorthand notation that $(a,b;q)_n=(a;q)_n(b;q)_n$ and $(ae^{\pm i\theta},q)_n=(ae^{i\theta},q)_n(ae^{-i\theta},q)_n$
\end{itemize}
\section{Double-scaled SYK model}
\label{sec:DSSYK}

\subsection{Double-scaled SYK model and chord diagrams} \label{sec:2.1}
This section is a brief review of the chord diagram approach to the double-scaled SYK (DSSYK) model. Acquainted readers can skip this section. The SYK model is a model consisting of $N$ flavors of Majorana fermions ($N$ even) described by the following random $p$-body Hamiltonian
\begin{equation}
    H=\i^\frac{p}{2}\sum_{i_1<i_2\ldots <i_p} J_{i_1i_2\ldots i_p}\psi_{i_1}\psi_{i_2}\ldots \psi_{i_p},\quad \{\psi_i,\psi_j\}=2\d _{ij}
\end{equation}
where $J_{i_1i_2\ldots i_p}$ are Gaussian random variables with the following distribution
\begin{equation}
    \avg{J_{i_1i_2\ldots i_p}J_{i_1'i_2'\ldots i_p'}}=\binom{N}{p}^{-1}\mathcal J^2\delta_{i_1i_1'}\ldots \delta_{i_pi_p'} \label{2.2}
\end{equation}
where $\avg{\ldots}$ denotes the ensemble average. For simplicity, we will set $\mJ=1$. The double-scaled limit is defined by taking $N$ and $p$ to infinity while keeping the following ratio fixed
\begin{equation}
    \lambda=\frac{2p^2}{N}
\end{equation}

It had been shown that in the double-scaled limit correlators such as $\avg{\tr{(H^{2k})}}$ can be calculated using combinatorial objects called chord diagrams\cite{Berkooz:2018qkz,Berkooz:2018jqr,Berkooz:2020xne,Berkooz:2020uly,Berkooz:2020fvm,Berkooz:2024lgq}. To calculate $\avg{\tr{(H^{2k})}}$, noting that the random coupling $J_{i_1\cdots i_p}$ has Gaussian distribution, it becomes the sum over all possible Wick contractions of $2k$ numbers of $H$. These Wick contractions have intersections and to evaluate them we need to use the anti-commutation relation of $\psi_i$. It turns out that the double-scaled limit largely simplifies the evaluation by replacing each intersection by its ensemble averaged value \be 
q\equiv e^{-\lambda} \label{2.4}
\ee
Therefore, we will have the following chord diagram calculus for $\avg{\tr{(H^{2k})}}$. We put $2k$ points on a circle, then we enumerate all possible pairings of these points, for each pairing we draw a chord diagram with each pair of points connected by a line called chord. Next we count the number of crossings between chords in each diagram, note that here we only consider 'topological' crossings which cannot be undone by moving chords with their end points fixed. For each crossing we assign a factor $q$, thus the value of each diagram is given by $q^{\#(\text{crosssings})}$. Finally we sum over all diagrams to get $\avg{\tr{(H^{2k})}}$.\footnote{Our normalization of $\mJ=1$ leads to $\avg{\tr(H^2)}=1$. } Figure~\ref{fig:chord-diagram-examples} shows examples of chord diagrams with $k=3$.

\begin{figure}[t]
  \centering

  \begin{subfigure}{0.18\textwidth}
    \includegraphics[width=\linewidth]{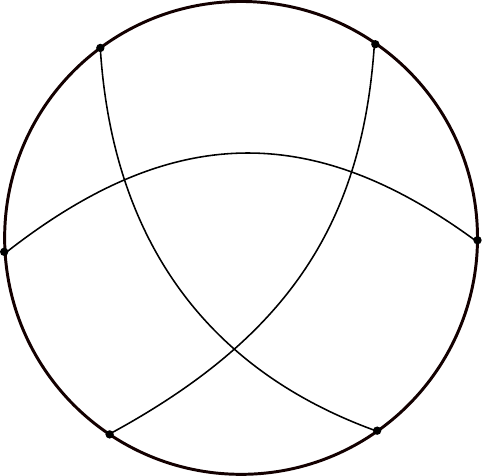}
    \caption{}\label{fig:chord1}
  \end{subfigure}
  \begin{subfigure}{0.18\textwidth}
    \includegraphics[width=\linewidth]{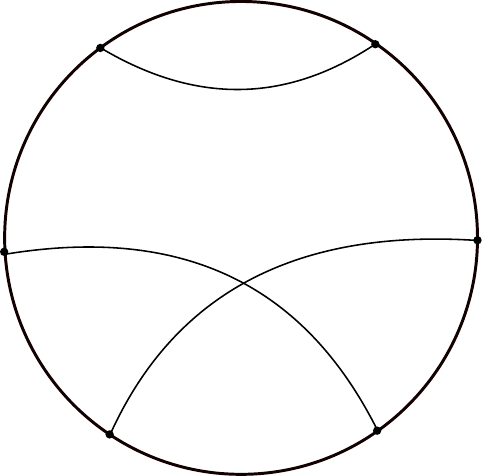}
    \caption{}\label{fig:chord2}
  \end{subfigure}
  \begin{subfigure}{0.18\textwidth}
    \includegraphics[width=\linewidth]{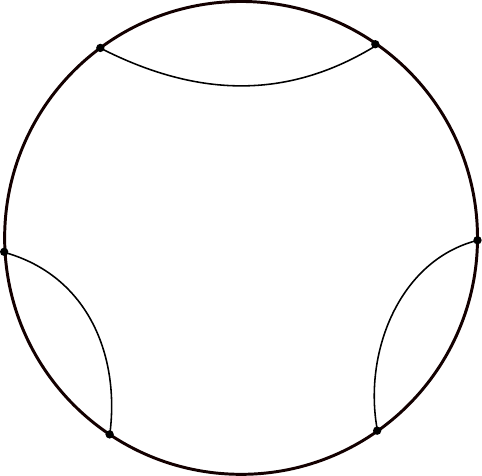}
    \caption{}\label{fig:chord3}
  \end{subfigure}
  \begin{subfigure}{0.18\textwidth}
    \includegraphics[width=\linewidth]{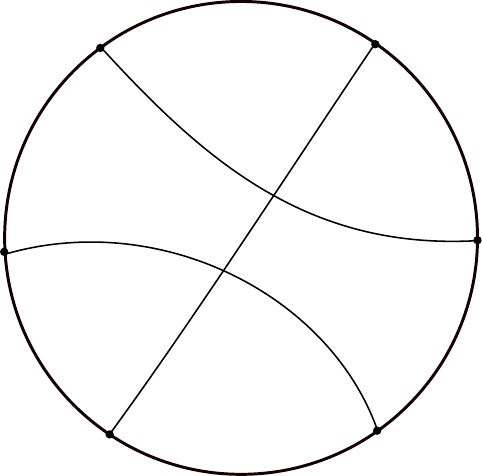}
    \caption{}\label{fig:chord7}
  \end{subfigure}
  \begin{subfigure}{0.18\textwidth}
    \includegraphics[width=\linewidth]{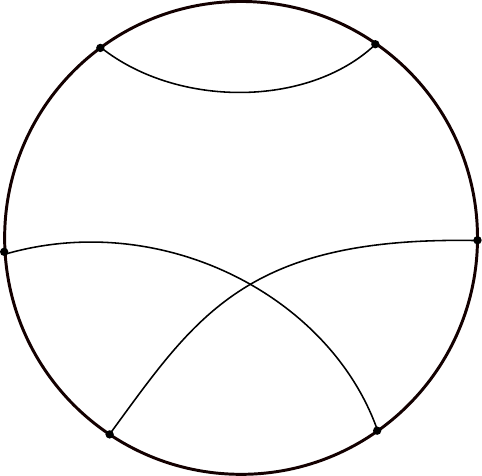}
    \caption{}\label{fig:chord8}
  \end{subfigure}

  \caption{Topologically inequivalent chord diagrams contributing to 
  $\langle \mathrm{tr}(H^6) \rangle$, respectively taking values 
  $q^3$, $q$, $1$, $q^2$ and $q$. Note that multiplicity of each diagram has to be accounted for when calculating the trace.}
  \label{fig:chord-diagram-examples}
\end{figure}

With these diagrammatic rules, quantities such as the partition function $\avg{\tr(e^{-\beta H})}$ can be explicitly computed.

Besides the Hamiltonian, we also consider general ``matter" observables of the following form
\begin{equation}
    M=\i^\frac{p'}{2}\sum_{i_1<i_2\ldots <i_{p'}} J'_{i_1i_2\ldots i_{p'}}\psi_{i_1}\psi_{i_2}\ldots \psi_{i_{p'}} \label{eq:2.5}
\end{equation}
where $J'_{i_1i_2\ldots i_{p'}}$ are again Gaussian random variables with 
\begin{equation}
    \avg{J'_{i_1i_2\ldots i_{p'}}J'_{i_1'i_2'\ldots i_{p'}'}}=\binom{N}{p'}^{-1}(\mathcal {J}')^2\delta_{i_1i_1'}\ldots \delta_{i_{p'}i_{p'}'}
\end{equation}
and we will again set $\mJ'=1$ for simplicity. One can consider multiple types of matter $M$ with different $p'$ and independent Gaussian random variables $\mJ'_{i_1\cdots i_{p'}}$. In the scope of this paper, we will limit ourselves to just one type of matter. We are interested in moments of the form $\avg{\tr(H^{k_1}M^{k_2}H^{k_2}M^{k_3}\cdots)}$, and they can be calculated using the following diagrammatic rule: draw the sequence $H^{k_1}M^{k_2}H^{k_2}M^{k_3}\cdots$ counterclockwise on a circle with $H$ and $M$ insertions denoted by different types of points, then enumerate all possible pairings of these points where only the same type of points are allowed to be paired, and connect all pairs with chords to construct a chord diagram. In this case the factor assigned to each crossing depends on the type of chords, crossings between two $H$ chords, two $M$ chords, one $H$ chord and one $M$ chord respectively give factors 
\begin{align} 
q&\equiv e^{-\lambda_{HH}},\quad q_{m}\equiv e^{-\lambda_{MM}},\quad r\equiv e^{-\lambda_{HM}} \label{2.7}\\
\lambda_{HH}&=\frac{2p^2}{N},\quad \lambda_{MM}=\frac{2p'^2}{N},\quad \lambda_{HM}=\frac{2pp'}{N}
\end{align}
Figure~\ref{mixed chord diagram examples} shows examples of chord diagrams with two different types of chords. Cases with more observables can be analyzed in a similar way with different factors assigned to all possible types of crossings. Detailed calculations of correlators of general observables can be found in~\cite{Berkooz:2018jqr}.
\begin{figure}
	\centering
	\begin{subfigure}{0.25\textwidth}
		\includegraphics[width=\textwidth]{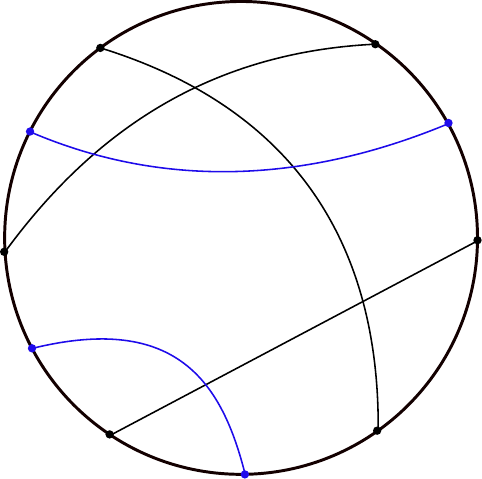}
        \captionsetup{format=hang}
		\caption{}
	\end{subfigure}
        \hspace{5em}
	\begin{subfigure}{0.25\textwidth}
		\includegraphics[width=\textwidth]{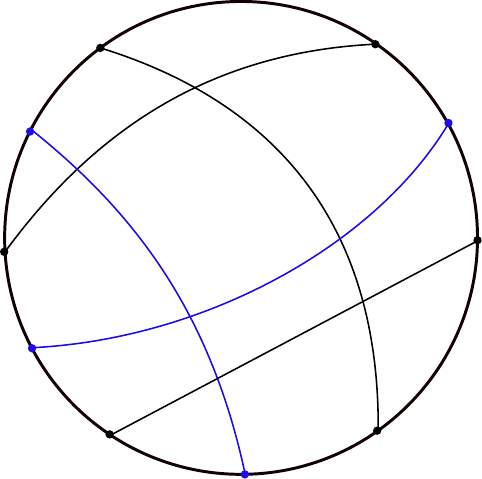}
        \captionsetup{format=hang}
        \caption{}
        \end{subfigure}
        \caption{Examples of chord diagrams contributing to $\avg{\tr(H^2MH^2MHMHM)}$, respectively taking values $q^2r^3$ and $q^2r^3q_m$. One needs to sum over all possible chord diagrams. $H$ and $M$ chords are respectively denoted by black solid lines and blue lines.}
        \label{mixed chord diagram examples}
\end{figure}

\subsection{The $q$-deformed harmonic oscillator and the two-sided algebra}

The chord diagram rule of DSSYK is equivalent to the $q$-deformed harmonic oscillator and can be solved using algebraic method. Let us first consider the case where we only have $H$, which only allows one type of chords. We can draw each chord diagram in the horizontal way (see Figure \ref{fig:horizontal_chord} for an illustration), and each $H$ can either emanate an open chord from the boundary or annihilate an an open chord to the boundary. Therefore, we can write $H$ as the sum of a creation and annihilation operators
\begin{figure}
    \centering
    \includegraphics[width=0.5\linewidth]{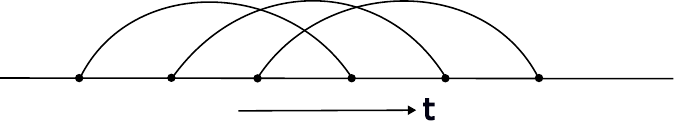}
    \caption{Chord diagrams can be equivalently drawn horizontally, here we show the horizontal version of the chord diagram Figure~\ref{fig:chord1}. We can trace the diagram from the left to the right, each Hamiltonian insertion on the boundary  then either opens a new chord or annihilates an existing one.}
    \label{fig:horizontal_chord}
\end{figure}
\be 
H=a^\dag+a \label{2.9}
\ee
where $a^\dag$ creates an open chord and $a$ annihilates an open chord. Starting with zero-chord state $\ket{\w}$, which we refer as the vacuum, the state with $n$ open chords is denoted by 
\be 
\ket{n}=(a^{\dag})^n\ket{\w} \label{2.9-1}
\ee
On the other hand, for $a$ acting on $\ket{n}$, there are $n$ different choices of annihilating an open chord, which by the crossing rule \eqref{2.4} results in a nontrivial factor
\be 
a\ket{n}=(1+q+\cdots+q^{n-1})\ket{n-1}=\f{1-q^n}{1-q}\ket{n-1}
\ee
For the vacuum $\ket{\w}$, $a$ cannot annihilate any open chords, and we must have $a\ket{\w}=0$. It follows that $a^\dag$ and $a$ obey a $q$-deformed harmonic oscillator algebra
\be
[a,a^\dag]_q\equiv a a^\dag-q a^\dag a=1 \label{2.11}
\ee
which reduces to a harmonic oscillator when $q\ra 1$.

Given this vector space spanned by $\{\ket{n}\}$, we define the hermitian conjugation as \cite{Xu:2024hoc} 
\be 
(a)^\dag \equiv a^\dag,\quad (a^\dag)^\dag\equiv a
\ee
and the bra states and inner product follow as
\be
\bra{n}\equiv\bra{\w}a^n,\quad \avg{n|m}=\avg{\w|a^n(a^\dag)^m|\w}=[n]!_q \d_{mn},\quad \avg{\w|\w}\equiv 1 \label{2.14-1}
\ee
The normalization $[n]!_q$ of the inner product is computed using the commutation relation \eqref{2.11}, which can also be understood as the sum over all possible intersections between $m$ and $n$ open chords as illustrated in Figure \ref{fig:4a}. This defines a Hilbert space $\mH_0$ of pure $H$ open chords.

Analogous to solving the harmonic oscillator, it has been shown in \cite{Berkooz:2018jqr} that $H$ can be diagonalized using $q$-Hermite polynomial
\be 
H\ket{\t}=E(\t)\ket{\t},\quad \ket{\t}=\sum_{n=0}^\infty \f{(1-q)^{n/2}}{(q;q)_n}H_n(\cos \t|q) \ket{n} \label{2.14}
\ee
where the eigenvalues are
\be 
E(\t)=\f{2\cos \t}{\sqrt{1-q}},\quad \t\in[0,\pi] \label{2.16}
\ee
and the $q$-Hermite polynomial is defined through the recurrence relation
\be 
 2xH_n(x|q)=H_{n+1}(x|q)+(1-q^n)H_{n-1}(x|q),\quad H_0(x|q)=1
\ee
which obeys the orthogonality
\begin{align}
\int_{0}^{\pi}\f{d\t}{2\pi}\f{(e^{2\i\t},e^{-2\i\t},q;q)_{\infty}}{(q;q)_{n}}H_{m}(\cos\t|q)H_{n}(\cos\t|q) & =\d_{nm}
\end{align}
By the orthogonality, the eigen states $\ket{\t}$ are normalized as
\be 
\avg{\t|\t'}=\f{2\pi }{\r(\t)}\d(\t-\t'),\quad \r(\t)\equiv (e^{2\i\t},e^{-2\i\t},q;q)_{\infty}
\ee

Including the matter operator $M$ permits a simple generalization. Since the calculus with both $H$ and $M$ are essentially the same except three different crossing factors \eqref{2.7}, we can decompose $M$ as the sum of another pair of creation and annihilation operators. However, as we have two types of creation operators, which do not commute, the space of open chords is spanned by states labeled by a string in the form of $\ket{H^{k_1}M^{k_2}H^{k_3}M^{k_4}\cdots}$. For each creation operator acting on such a state of string, there are two inequivalent ways by adding an $H$ (or $M$) letter to the right or to the left of the string. This is different from the previous case with only $H$ chords because adding an $H$ letter from either side are the same. For a simpler notation, from now on we will put  both $H$ and $M$ on equal footing as
\be 
H_0\equiv H,\quad H_1\equiv M
\ee

Let us first work on the definition of acting from the right. The operators are defined as
\be 
H_{R,0}\equiv H_R =a_{R,0}+a_{R,0}^\dag,\quad H_{R,1}\equiv M_R=a_{R,1}+a_{R,1}^\dag \label{2.15}
\ee
and the states are labeled by a binary string
\begin{equation}
\ket{i_{k}\cdots i_{1}}=a_{R,i_{1}}^{\dagger}\cdots a_{R,i_{k}}^{\dagger}\ket{\w},\quad i_j=0,1\label{eq:1-1}
\end{equation}
where each $a^\dag_{R,i}$ acts to the right of the binary string by adding a digit $i\in \{0,1\}$. In the chord diagram language, these states are illustrated in Figure \ref{chord-inner-product}. The annihilation operators $a_{R,i}$ is defined as pulling an open chord of $i$-type to the right boundary \cite{Lin:2022rbf}, which in the algebraic language means
\begin{align}
a_{R,i}\ket{i_{1}\cdots i_{k}}&=\sum_{j=1}^{k}\d_{i_{j},i}\prod_{s=j+1}^{k}Q_{i,i_{s}}\ket{i_{1}\cdots\slashed{i_{j}}\cdots i_{k}} \label{2.17}\\
Q_{0,0}=q,&\quad  Q_{0,1}=Q_{1,0}=r,\quad Q_{1,1}=q_m
\end{align}
where $\slashed{{i}_{j}}$ means the digit $i_{j}$ is deleted from the
string $i_{1}\cdots i_{k}$ when it coincides with $i$. The coefficients $Q_{ij}$ is from the crossing factor \eqref{2.7} because when the $i$-type open chord is deleted it intersects with all open chords $i_{j+1}\cdots i_k$ to its right. Same as before, the annihilation operators should also kill the vacuum, i.e. $a_{R,i}\ket{\w}=0$. Using this definition, it follows that they obey a generalized $q$-deformed harmonic oscillator algebra
\be 
a_{R,i} a_{R,j}^\dag- Q_{ij}a_{R,j}^\dag a_{R,i}=\d_{ij},\quad Q=\begin{bmatrix}q & r\\
r & q_{m}
\end{bmatrix}, \quad i=0,1 \label{2.19}
\ee

\begin{figure}
	\centering
	\begin{subfigure}{0.28\textwidth}
    \centering
		\includegraphics[height=2.2cm]{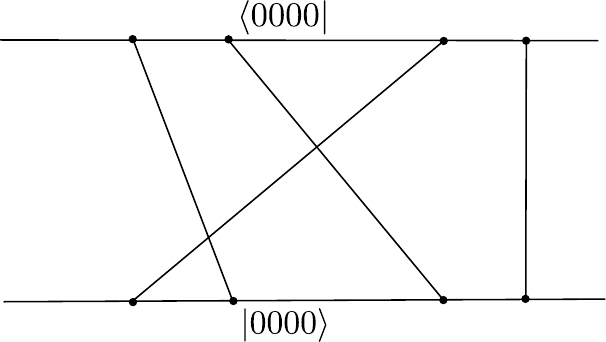}
        \captionsetup{format=hang}
		\caption{}
        \label{fig:4a}
	\end{subfigure}
	\begin{subfigure}{0.32\textwidth}
    \centering
		\includegraphics[height=2.2cm]{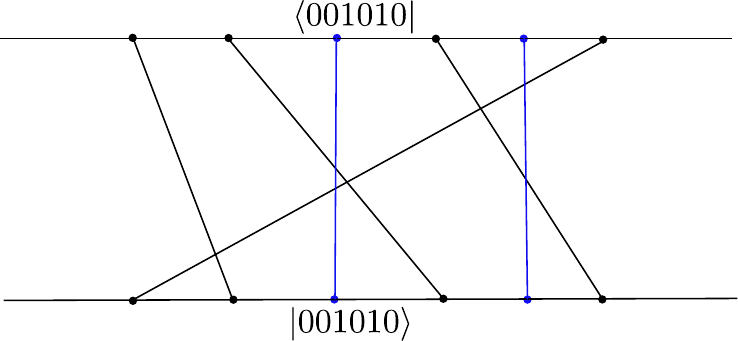}
        \captionsetup{format=hang}
        \caption{}
        \label{fig:4b}
    \end{subfigure}
	\begin{subfigure}{0.33\textwidth}
    \centering
		\includegraphics[height=2.2cm]{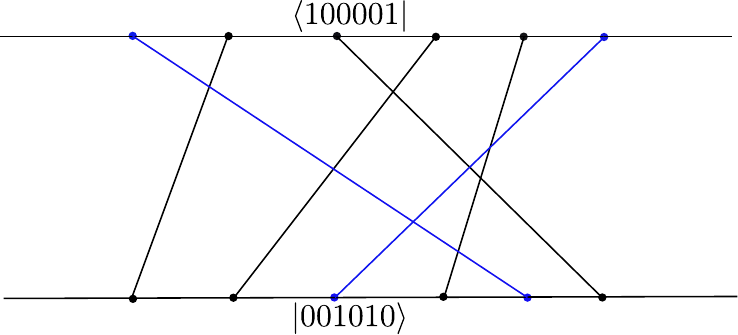}
        \captionsetup{format=hang}
        \caption{}
        \label{fig:4c}
    \end{subfigure}
    \caption{Inner products between two chord states are calculated by summing over all possible ways to pair up chords of the same type and counting crossings, with each crossing contributing a factor. Here we include some examples, Figure~\ref{fig:4a},~\ref{fig:4b},~\ref{fig:4c} respectively contribute to inner products $\braket{0000}$, $\braket{001010}$ and $\braket{100001}{001010}$, with values $q^2$, $q^3r^4$ and $q^2r^4s$.
    }
    \label{chord-inner-product}
\end{figure}

The hermitian conjugation is defined in the same way
\be 
(a_{R,i})^\dag=a_{R,i}^\dag,\quad (a_{R,i}^\dag)^\dag=a_{R,i}
\ee
and the inner product is defined accordingly as
\be 
\avg{i_{1}\cdots i_{k}|j_1\cdots j_l}=\avg{\w|a_{R,i_1}\cdots a_{R,i_k} a_{R,j_l}^\dag\cdots a_{R,j_1}^\dag|\w} \label{2.21}
\ee
which can be computed using the commutation relation \eqref{2.19} to move all $a_{R,i}$ to the right to kill the vacuum. Indeed, this is equivalent to summing over all possible crossing patterns between two states of open chords \cite{Xu:2024hoc}, as illustrated in Figure \ref{fig:4b}. It is easy to see that the Hilbert space $\mH$ equipped with this inner product has the following decomposition labeled by the number $N_i$ of open chords of $i$-type
\begin{equation}
\mH=\oplus_{N_{0},N_{1}\in\N}\mH_{N_{0},N_{1}} \label{2.28}
\end{equation}
because \eqref{2.21} is nonzero only when the numbers of 0 digit and 1 digit are the same in the two binary strings $i_1\cdots i_k$ and $j_1\cdots j_l$. For each subspace $\mH_{N_{0},N_{1}}$, there are $(N_{0}+N_{1})!/(N_{0}!N_{1}!)$
numbers of states. Under the inner product \eqref{2.21}, these binary
string states (\ref{eq:1-1}) are not orthogonal within each subspace
$\mH_{N_{0},N_{1}}$. 

It is clear that the Hilbert space $\mH_0$ with only $H_0$ chords is the subspace $\oplus_{N_0\in\N}\mH_{N_0,0}$, where each $\mH_{N_0,0}$ is one-dimensional. Within this subspace, the spectrum of $H_{R,0}$ is solved in \eqref{2.14}. However, the full spectrum of $H_{R,0}$ has not been solved yet in the full Hilbert space though solutions are available in one-matter and two-matter subspaces \cite{Lin:2023trc,Xu:2024hoc}. However, the method in \cite{Lin:2023trc,Xu:2024hoc} quickly becomes too involved in practice for higher-matter subspaces. Instead, we will develop a completely different method in Section \ref{sec:3.4}, which is inspired by the end-of-the-world brane, to solve the full spectrum of $H_{R,0}$ in a much cleaner way.

The operators of acting from the left can be defined in a similar way. We have 
\be 
H_{L,0}\equiv H_L =a_{L,0}+a_{L,0}^\dag,\quad H_{L,1}\equiv M_L=a_{L,1}+a_{L,1}^\dag \label{2.23}
\ee
and the states of binary string can also be generated by the left creation operators
\begin{equation}
\ket{i_{k}\cdots i_{1}}=a_{L,i_{k}}^{\dagger}\cdots a_{L,i_{1}}^{\dagger}\ket{\w},\quad i_j=0,1\label{eq:1-2}
\end{equation}
where each $a^\dag_{L,i}$ acts to the left of the binary string by adding a digit $i\in \{0,1\}$. The annihilation operators are defined as
\begin{align}
a_{L,i}\ket{i_{1}\cdots i_{k}}&=\sum_{j=1}^{k}\d_{i_{j},i}\prod_{s=1}^{j-1}Q_{i,i_{s}}\ket{i_{1}\cdots\slashed{i_{j}}\cdots i_{k}} \label{2.25}
\end{align} 
and they obey the same generalized $q$-deformed harmonic oscillator algebra
\be 
a_{L,i} a_{L,j}^\dag- Q_{ij}a_{L,j}^\dag a_{L,i}=\d_{ij} \label{2.26}
\ee

It is clear that the left and right operators are both defined in the same Hilbert space $\mH$. Using their definitions, one can work out their mutual commutation relations \cite{Xu:2024hoc}
\be
[a_{L,i},a_{R,j}^\dag]=[a_{R,i},a_{L,j}^\dag]=\d_{ij}Q_{0,i}^{N_0} Q_{1,j}^{N_1},\quad [a_{L,i},a_{R,j}]=[a_{L,i}^\dag,a_{R,j}^\dag]=0 \label{2.27}
\ee
where $N_0$ and $N_1$ are the operators counting how many 0 and 1 digits respectively in the binary string of a state $\ket{i_1\cdots i_k}$. As we see from \eqref{2.27} that the left and right operators are not commutative to each other. However, the hermitian operators \eqref{2.15} and \eqref{2.23} on each side do commute with each other
\be 
[H_{L,i},H_{R,j}]=[a_{L,i},a_{R,j}^\dag]+[a_{L,i}^\dag,a_{R,j}]=\d_{ij}(Q_{0,i}^{N_0}Q_{1,j}^{N_1}-Q_{0,j}^{N_0}Q_{1,i}^{N_1})=0
\ee
It follows that the left and right von Neumann algebras generated by $H_{L,i}$
 and $H_{R,i}$ respectively
\be 
\mA_L=\{H_{L,0},H_{L,1}\}'',\quad \mA_R=\{H_{R,0},H_{R,1}\}'' \label{2.29}
\ee
are two commuting algebras
\be 
\mA_L\subseteq \mA_R',\quad \mA_R\subseteq \mA_L' \label{2.30}
\ee
where the prime of $\mA'$ means the commutant of $\mA$, which consists of all bounded operators that commute with $\mA$, and the double prime in \eqref{2.29} is the crucial completion procedure to make a von Neumann algebra.\footnote{Some pedagogical introduction to von Neumann algebra can be found, e.g. in \cite{Witten:2018zxz,Liu:2025krl}. In Appendix~\ref{app:math} we included a minimal introduction to relevant concepts in von Neumann algebras.} 

The left and right algebras $\mA_{L,R}$ are related to each other by the modular conjugation operator $J$, which simply flips the ordering of the binary string when acting on a state
\be 
J\ket{i_{k}\cdots i_{1}}=\ket{i_{1}\cdots i_{k}}
\ee
By Tomita's theory, this modular conjugation $J$ is anti-unitary, unique, and $J^{-1}=J=1$. It also maps $\mA_{L,R}$ to its commutant
\be 
J\mA_{L}J=\mA_L',\quad J\mA_R J=\mA_R' \label{2.32}
\ee
On the other hand, we can easily see from the definition \eqref{eq:1-1}, \eqref{2.17}, \eqref{eq:1-2}, and \eqref{2.25} that
\be 
J H_{L,i}J=H_{R,i},\quad J H_{R,i}J=H_{L,i}
\ee
Together with \eqref{2.32}, we find that $\mA_{L,R}$ are commutant to each other
\be 
\mA_L'=\mA_R,\quad \mA_R'=\mA_L
\ee

This generalized $q$-harmonic oscillator algebra is conventionally called $q$-Gaussian algebra in mathematical literature~\cite{bozejko1991example,speicher1993generalized,Bozejko1994,bozejko1997q,Ricard2003FactorialityOQ,bozejko2017fock,skalski2018remarks}. It has been proved under different assumptions~\cite{bozejko1997q,Ricard2003FactorialityOQ,skalski2018remarks} that these algebras are type $\text{II}_1$ factors. Type II$_1$ means the Hilbert space does not allow a factorization into tensor of left and right but still allows a trace, which maps identity to one. In the DSSYK, the trace for $\mA_{L}$ and $\mA_R$ is given by $\avg{\w|\cdot|\w}$, which has the cyclic property (invariant under circular shift)\footnote{To be a trace, it also needs to be a faithful, normal, semifinite, and positive linear functional, which is proved in \cite{Xu:2024hoc}.}
\be 
\avg{\w|ab|\w}=\avg{\w|ba|\w},\quad \forall a,b\in\mA_{L,R} \label{2.40}
\ee
Factor means the intersection between the algebra and its commutant is proportional to identity. Since $\mA_L$ and $\mA_R$ are commutant to each other, it follows that $\mA_L$ and $\mA_R$ altogether generates all bounded operators $\mB(\mH)$ of the Hilbert space $\mH$
\be
\mathcal A_L\vee \mathcal A_R\equiv (\mA_L\cup \mA_R)''=\mathcal B(\mH)
\ee

\subsection{The bulk dual description and triple scaling/JT limit}

It had been proposed that the SYK model with $p\sim O(1)$ in low temperature is dual to a near-$\text{AdS}_2$ (NAdS) bulk in the sense that low energy modes in the SYK model agree with boundary graviton modes in the NAdS space, and they share the same symmetry and effective action of a Schwarzian derivative \cite{Maldacena:2016upp}. In the SYK model there is a exact conformal symmetry at zero energy which is broken down to SL(2,$\mathbb R$) by taking into consideration low energy excitations, while in the NAdS space the exact conformal symmetry is broken down to SL(2,$\mathbb R$) by a cutoff near infinity, see~\cite{Maldacena:2016upp,Sarosi:2017ykf,Kitaev:2017awl,Trunin:2020vwy} for reviews of these results. It is then natural to ask whether such a duality persists in the double scaled limit, and quite surprisingly, it is not only preserved but also enhanced to a duality between Hilbert spaces~\cite{Lin:2022rbf,Lin:2023trc}. In this subsection, we will restrict ourselves to zero-matter subspace $\mH_0$. An interesting observation from Section \ref{sec:2.1} is that the ordinary trace tr in the original SYK model, e.g. $\avg{\tr(e^{-\beta H_{SYK}})}$, becomes a transition amplitude $\avg{\w|e^{-\beta (H_{L,0}+H_{R,0})/2}|\w}$ after ensemble average and in the double-scaled limit. Note here that $H_{R,0}\ket{\w}=H_{L,0}\ket{\w}$ acting on the vacuum state. This amplitude is compatible with the disk topology of the dual bulk of AdS$_2$ while the boundary has a topology of a circle that is compatible with an ordinary trace of the original SYK model. Indeed, we can view $\ket{\w}$ as the maximally entangled state, the infinite temperature Hartle-Hawking state, of two boundaries and there is a bulk Hamiltonian 
\be 
H_{bulk}=\frac{1}{2}(H_{L,0}+H_{R,0}) \label{2.43}
\ee
This rewriting can be visualized by cutting a chord diagram into upper and lower two halves representing the bra and ket states of left and right boundaries. The vacuum states $\ket{\w}$ and $\bra{\w}$ live on the two antipodal points, and the discrete averaged boundary time $t=t_L+t_R$ now flows from one antipodal point to the other, see Figure~\ref{bulk-slices} for an illustration. 
\begin{figure}
	\centering
	\begin{subfigure}{0.38\textwidth}
		\includegraphics[width=\textwidth]{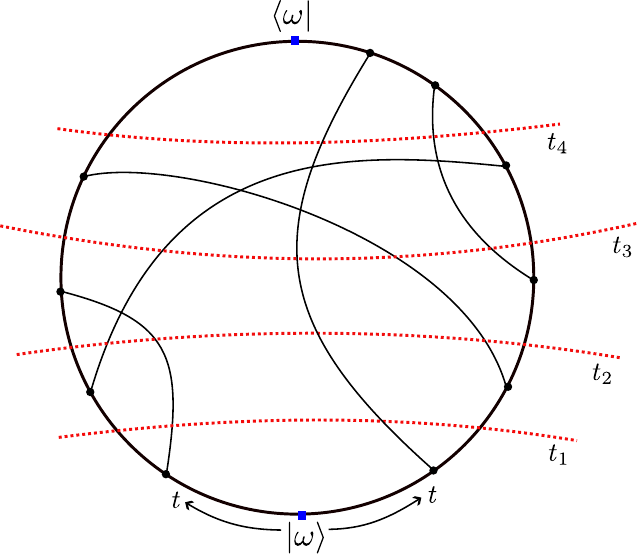}
        \captionsetup{format=hang}
		\caption{}
        \label{bulk-slices}
	\end{subfigure}
        \hspace{5em}
	\begin{subfigure}{0.3\textwidth}
		\includegraphics[width=\textwidth]{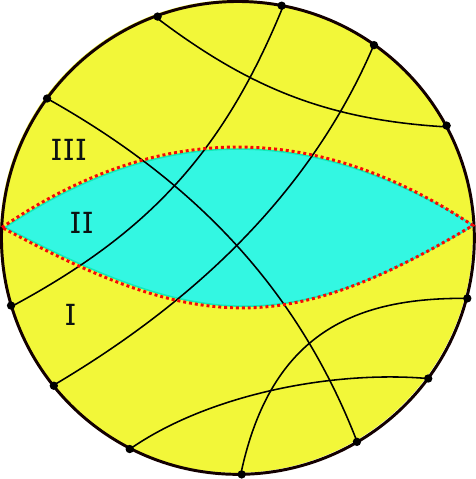}
        \captionsetup{format=hang}
        \caption{}
        \label{inner-product-diagram}
        \end{subfigure}
        \caption{(a): Illustration of two-sided boundary time and corresponding bulk slices. States defined on the four time slices $t_1$-$t_4$ are respectively $\ket{2},\ket{4},\ket{4},\ket{2}$. (b): Slicing the chord diagram into upper, middle and lower parts, where the inner product is defined by the middle part (II). In this case the middle part contributes to the inner product $\ip{3}{3}$.}
\end{figure}

On the other hand, this averaged boundary time flow provides a natural foliation of chord diagrams which contribute to the partition function. For each $t$, we define a bulk time slice as shown in Figure~\ref{bulk-slices} which is uniquely chosen in the way that all chords intersecting the slice do not cross before they intersect the slice. Thus we can define a bulk Hilbert space on the time slice with the bulk state specified by the number of open chords intersected by the slice, which is proportional to the state $\ket{n}$ defined in \eqref{2.9-1}. The boundary Hamiltonian insertions then either add a new chord or annihilate an existing chord. Note that annihilating chords necessarily results in extra factors as chords being annihilated necessarily cross other chords before reaching the boundary. Now we can interpret the upper and lower regions as preparing states $H_{bulk}^{n_1}\ket{\w}$ and $H_{bulk}^{n_2}\ket{
\w}$ on the cuts and the middle region as defining an inner product between states of different chord numbers. Since these chords could cross in all different ways in the middle region, the inner product is given by summing over all possible crossing patterns, which is exactly the inner product we defined before in \eqref{2.14-1}. See Figure~\ref{inner-product-diagram} for illustration.

In the zero-matter subspace, since $H_{L,0}\ket{\w}=H_{R,0}\ket{\w}$, the bulk Hamiltonian \eqref{2.43} is equivalent to $H$ in \eqref{2.9} and can be solved by \eqref{2.14}. To connect this bulk interpretation with JT gravity, we may first write $\l_b\equiv \lam n$ \cite{Lin:2022rbf}, and the creation and annihilation operators become translation in $\l_b$
\be 
a^\dag = \sqrt{\f{1-e^{-\l_b-\lam}}{1-e^{-\lam}}} e^{\i k \lam},\quad a = \sqrt{\f{1-e^{-\l_b}}{1-e^{-\lam}}} e^{-\i k \lam},\quad k\equiv -\i\del_{\l_b} \label{2.44}
\ee
where the prefactors in front of $e^{\pm \i k\lam}$ accounts for the normalization of $\ket{n}$ in \eqref{2.14-1}. To reduce to JT gravity, we need to take the following triple scaling/JT limit
\begin{equation}
  \lam \ra 0,\quad  e^{-\l}\equiv e^{-\l_b}/\lam^2 \text{ fixed} \label{2.46}
\end{equation}
in which the zero-matter Hilbert space basis $\ket{n}$ are projected to a continuous sector with $n\ra \infty$. To consider the low energy regime, we need to shift $k\ra\pi/\lam-k$ and expand \eqref{2.9} in quadratic order of $\lam$, which leads to
\begin{equation}
   \lam^{^{-3/2}} (H_{bulk}-E_0)=k^2+e^{-\l},\quad E_0=-\f{2}{\sqrt{1-q}}\app-2/\sqrt{\lam} \label{2.47}
\end{equation}
where $E_0$ is the ground energy. This is exactly the Liouville potential of the JT gravity in thermofield double state via canonical quantization \cite{Harlow:2018tqv}.\footnote{In our notation, to make $\ket{\l}$ basis normalized as a delta function, we need to rescale \eqref{2.2} by a factor of $1/\lam$, which is the notation used in \cite{Lin:2022rbf}. \label{ftn1}}

\section{End-of-the-world brane in DSSYK} \label{sec:2}
\label{sec:EoW brane}
\subsection{EoW brane as a deformed Hamiltonian: new chord diagrams} \label{sec:3.1}
End-of-the-world (EoW) branes are dynamical boundaries in JT gravity on which the spacetime ends. They can be used to model microstates of single-sided black holes in JT gravity~\cite{Penington:2019kki}, and are proposed to be the bulk dual of amplitudes between pure states in the SYK model~\cite{Kourkoulou:2017zaj}. In~\cite{Gao:2021uro} it has been proved that in pure JT gravity the dynamics of single-sided black holes ending on the EoW brane is equivalent to that of a single particle in the Morse potential, in contrast to the Liouville potential in the two-sided case~\cite{Harlow:2018tqv}. 

The action of JT gravity with an EoW brane is given by
\begin{equation}\label{JT action}
    S_{JT}=S_0+\frac{1}{8\pi G}\bigg\{\frac{1}{2}\int d^2x \Phi\sqrt{|g|}(R+2)
    +\int_{\text{AdS}} du\sqrt{|g_{uu}|}(K-1)+\int_{\text{EoW}} dv\sqrt{|g_{vv}|}(\Phi K-\mu_r)
    \bigg\}
\end{equation}
Here $S_0$ is the topological term which is a constant here. $\Phi$ and $g_{\mu\nu}$ are bulk dilaton and metric fields. $u$ and $v$ are coordinates on the AdS and EoW brane while $g_{uu}$ and $g_{vv}$ are the induced metrics respectively. Finally, $K$ is the extrinsic curvature of the boundary and $\mu_r$ is a parameter called the brane tension. On the AdS boundary we choose the boundary condition to be
\begin{equation}
    \Phi=\frac{\phi_b}{\epsilon},\;\;g_{uu}=-\frac{1}{\epsilon^2}
\end{equation}
while on the EoW boundary we choose
\begin{equation}\label{EoW boundary condition}
    K=0,\;\;n^\nu\nabla_\nu\Phi=\mu_r
\end{equation}
where $n^\nu$ is the normal derivative of the boundary. Note that $K=0$ means that the EoW brane is a geodesic in AdS$_2$. Using the phase space formulation and canonical quantization \cite{Gao:2021uro}, it can be shown that the dynamics in the presence of the EoW brane is captured by the following Hamiltonian
\begin{equation}
    H=\frac{2}{\phi_b}\left(\frac{P^2}{4}+\mu_r e^{-L}+e^{-2L}\right) \label{3.4}
\end{equation}
where $L$ is the normalized length of the geodesic normally enamating from the EoW brane and ending on the AdS boundary, and $P$ is the conjugate momentum. This potential is called the Morse potential, and it reduces to the Liouville potential in the $\mu_r\rightarrow 0$ limit. 

While it is unclear how to define EoW brane in DSSYK in prior because there is no geodesic in chord diagrams, comparing this Hamiltonian with \eqref{2.47}, we see that the additional term is mainly due to $\mu_r e^{-L}$. On the other hand, from the triple scaling/JT limit, \eqref{2.46} naturally suggest the following modification to the bulk Hamiltonian\footnote{A set of one-parameter deformations of this discrete Hamiltonian has been proposed in~\cite{Blommaert:2025avl} from a sine-dilaton gravity construction. These deformed Hamiltonians agree with ours in the triple scaling limit, while their chord diagrammatic interpretation is not clear (some comparison with DSSYK was discussed in \cite{Aguilar-Gutierrez:2025hty}). See also~\cite{Cui:2025sgy} for recent progress. This bulk Hamiltonian was also considered recently in \cite{Aguilar-Gutierrez:2025hty}. Another discussion of the diagrammatic interpretation of deformed Hamiltonians can be found in~\cite{Berkooz:2025ydg}.}
\be 
\tilde H_{bulk}=a+a^\dag +\mu q^n \label{3.5}
\ee
where $n$ is the number operator reading out the chord number of $\ket{n}$. Since $q^n\sim e^{-\l_b}$ in the semiclassical limit, it is analogous to the additional middle term of \eqref{3.4}. Indeed, the proposal is far from unique, but this Hamiltonian~\eqref{3.5} also agrees with the one proposed by Okuyama in~\cite{Okuyama:2023byh}. Before we go to the solution of this problem, let us first discuss what the EoW brane means in the chord diagram language.

From the definition \eqref{3.5}, it is clear that the role of $\tilde H_{bulk}$ now has three parts: 
\begin{enumerate}
    \item Creating an $H$ open chord;
    \item Annihilating an $H$ open chord;
    \item Measuring how many $H$ open chords in a state $\ket{n}$ with a weight $\mu q^n$.
\end{enumerate} 
While the first two actions are the same as $H_{bulk}$, the third action could be imagined as drawing an $H$ chord from a wall behind all $H$ open chords, and the intersection gives the factor $q^n$, multiplied by an coefficient $\mu$. This wall is the EoW brane in DSSYK. Diagrammatically, these three parts are illustrated in Figure \ref{fig:5}, where we draw the asymptotic boundary on the right and the EoW brane on the left.

\begin{figure}
	\centering
	\begin{subfigure}{0.22\textwidth}
		\includegraphics[width=\textwidth]{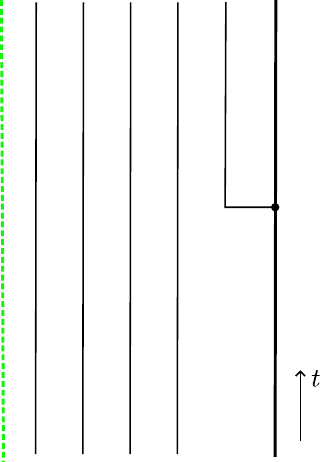}
		\caption{}
	\end{subfigure}
        \hspace{2em}
	\begin{subfigure}{0.22\textwidth}
		\includegraphics[width=\textwidth]{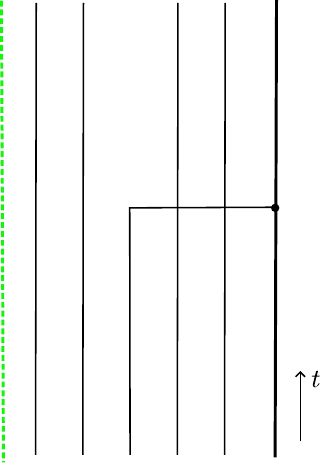}
        \caption{}
        \end{subfigure}
        \hspace{2em}
        \begin{subfigure}{0.22\textwidth}
		\includegraphics[width=\textwidth]{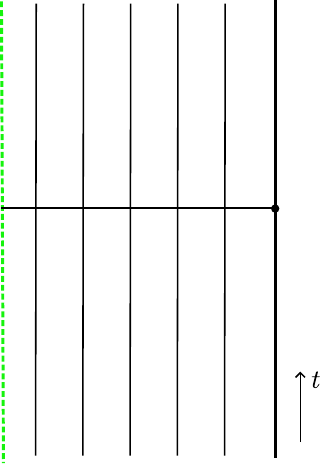}
        \caption{}
        \end{subfigure}
        \caption{The black line on the right the asymptotic boundary, and the green dashed line on the left is the EoW brane. In (a) $H$ creates a new open chord on the right. In (b) $H$ annihilates an existing open chord, in this case resulting in a factor $q^2$. In (c) $H$ draws a chord from the EoW brane, in this case multiplying the state with a factor $\mu q^5$.}
        \label{fig:5}
\end{figure}

Similar to the bulk dual description of DSSYK, where a complete chord diagram starts and ends with the vacuum state $\ket{\w}$, in the case of the EoW brane we can also consider the bulk amplitude $\avg{\w|\tilde H_{bulk}^n|\w}$, which corresponds to the chord diagrams illustrated in Figure \ref{fig:6}. We can easily summarize its diagrammatic rule as follows:
\begin{itemize}
    \item Choose $2k$ points where $0\leq k\leq \lfloor\frac{n}{2}\rfloor$ and pair them up in all possible ways, connect each pair with a chord. These chords will be denoted as bulk chords. 
    \item Draw a chord from each of the remaining $n-2k$ points to the EoW (left) boundary where these chords do not cross each other. This chords will be denoted as EoW chords.  
    \item Evaluate the chord diagram, each crossing between chords contributes a factor $q$, while each chord ending on the EoW brane contributes a factor $\mu$. 
    \item Sum over all chord diagrams to get $\avg{\w|\tilde H_{bulk}^n|\w}$.
\end{itemize}
The transition amplitude of Euclidean time evolution is then given by summing over all chord diagrams in an infinite series
\begin{equation}
    \avg{\w|e^{-\b \tilde H_{bulk}}|\w}=\sum_{n=0}^{\infty} \frac{(-\beta)^n}{n!}\avg{\w|\tilde H_{bulk}^n|\w}
\end{equation}

\begin{figure}
	\centering
	\begin{subfigure}{0.3\textwidth}
    \centering
		\includegraphics[height=6cm]{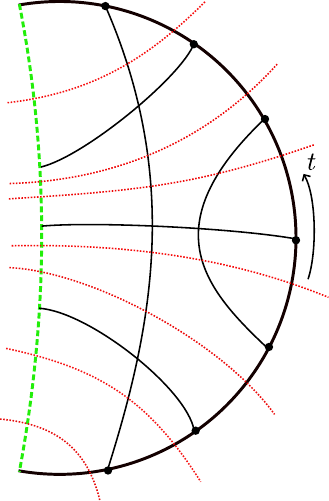}
		\caption{}
        \label{fig:6a} 
	\end{subfigure}
        \hspace{5em}
	\begin{subfigure}{0.35\textwidth}
    \centering
		\includegraphics[height=6cm]{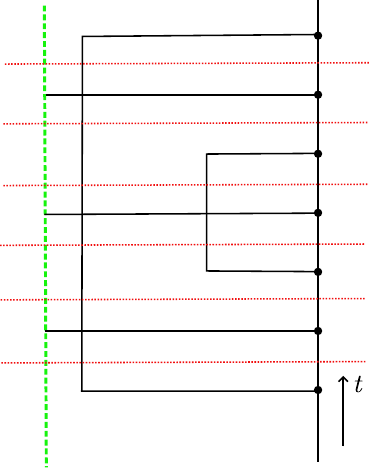}
        \caption{}
        \label{fig:6b}
        \end{subfigure}
        \caption{(a) Time slices (red dashed lines) in an chord diagram with EoW brane. (b) The diagram in (a) can be deformed into a strip. \label{fig:6}}
\end{figure}

There is also an unique way to assign the bulk slicing for a chord diagram (see Figure \ref{fig:6a}) analogous to the canonical Cauchy slicing in JT gravity with an EoW brane:
\begin{itemize}
    \item Bulk chords intersecting a time slice do not cross below the slice (this rule is the same as in the two-sided case in Figure \ref{bulk-slices}).
    \item EoW chords do not intersect time slices.
\end{itemize}
With these rules a chord diagram with EoW brane can be deformed into a strip shape as shown in Figure~\ref{fig:6b}. The state on each time is again specified by the number of open chords on the slice, which is actually proportional to the states $\ket{n}$ in the zero-matter subspace $\mH_0$. 

The Hamiltonian \eqref{3.5} is exactly diagonalized in zero-matter subspace $\mH_0$ by Okuyama in~\cite{Okuyama:2023byh}, and the eigen functions are another orthogonal polynomials called continuous big $q$-Hermite polynomials~\cite{Okuyama:2023byh,Floreanini1995AnAI}
\be 
\tilde H_{bulk}\ket{\mu,\t}=E(\t)\ket{\mu,\t},\quad \ket{\mu,\t}=\sum_{n=0}^\infty \f{(1-q)^{n/2}}{(q;q)_n}H_n(\cos \t;\mu\sqrt{1-q}|q) \ket{n} \label{3.7}
\ee
where the eigenvalues are the same as before
\be 
E(\t)=\f{2\cos \t}{\sqrt{1-q}},\quad \t\in[0,\pi] \label{3.8}
\ee
and the big $q$-Hermite polynomial is defined through the recurrence relation
\be 
 2xH_n(x;a|q)=H_{n+1}(x;a|q)+aq^n H_n(x;a|q)+(1-q^n)H_{n-1}(x;a|q),~ H_0(x;a|q)=1 \label{eq:3.9}
\ee
which for $a\in\R$ and $|a|<1$ obeys the orthogonality
\begin{align}
\int_{0}^{\pi}\f{d\t}{2\pi}\f{(e^{2\i\t},e^{-2\i\t},q;q)_{\infty}}{(ae^{\i\t},ae^{-\i\t};q)_{\infty}(q;q)_{n}}H_{m}(\cos\t;a|q)H_{n}(\cos\t;a|q) & =\d_{nm} \label{eq:3.10}
\end{align}
By the orthogonality, the eigen states $\ket{\mu,\t}$ are normalized as
\be 
\avg{\mu,\t|\mu,\t'}=\f{2\pi}{\r(\t;\mu)}\d(\t-\t'),\quad \r(\t;\mu)\equiv\f{(e^{2\i\t},e^{-2\i\t},q;q)_{\infty}}{(\mu\sqrt{1-q}e^{\i\t},\mu\sqrt{1-q}e^{-\i\t};q)_{\infty}} \label{3.11}
\ee
It follows that the states $\ket{n}$ can be written as
\begin{equation}
\ket{n}=\int_{0}^{\pi}\f{d\t}{2\pi}\f{\r(\t;\mu)}{(1-q)^{n/2}}H_{n}(\cos\t;\mu\sqrt{1-q}|q)\ket{\mu,\t}\label{eq:69}
\end{equation}
Taking $\mu\ra 0$, the big $q$-Hermite polynomial reduces back to the $q$-Hermite polynomial.

Now we confirm that the new Hamiltonian \eqref{3.5} with EoW brane does reduce to the Morse potential in \eqref{3.4} in the triple scaling/JT limit \cite{Okuyama:2023byh}. Using $\l_b$ we can rewrite
\be 
\tilde H_{bulk}=\sqrt{\f{1-e^{-\l_b-\lam}}{1-e^{-\lam}}} e^{i k \lam}+\sqrt{\f{1-e^{-\l_b}}{1-e^{-\lam}}} e^{-i k \lam} +\mu e^{-\l_b} \label{3.12}
\ee
Since we only have one asymptotic boundary in the EoW brane case, the triple scaling/JT limit is slightly modified as
\be
  \lam \ra 0,\quad  \mu= -\f{e^{-(\mu_r+1/2) \lam}}{\sqrt{1-e^{-\lam}}},\quad  e^{-L}\equiv e^{-\l_b}/\lam\text{ fixed} \label{3.13}
\ee
where we should note that $\mu$ has a minus sign. Similar to \eqref{2.47}, we shift $k\ra \pi/\lam -k$ for the low energy regime and expanding \eqref{3.12} in quadratic order of $\lam$, we find
\be 
 \lam^{-3/2}(\tilde H_{bulk}-E_0)=k^2+\mu_r e^{-L}+\f 1 4 e^{-2L} \label{3.14}
\ee
With a slightly redefinition of $L$ and $\mu_r$, this is identical to the Morse potential \eqref{3.4} from the bulk canonical quantization of JT gravity with an EoW brane.

\subsubsection*{Including matter}

Above definition for the EoW brane as a modification of the Hamiltonian in DSSYK is only in the zero-matter subspace $\mH_0$. From \eqref{2.28}, the full Hilbert space includes states with arbitrary open matter chords, on which we need to generalize the action of $\tilde H_{bulk}$. From the chord diagram interpretation in Figure \ref{fig:5}, a natural generalization is to include additional intersection factor $r^{N_1}$ when $\tilde H_{bulk}$ draws a chord from the EoW brane on a state with $N_1$ open chords. On the other hand, we have to distinguish operators acting from the left or right on the Hilbert space after including matter operators just like the ordinary DSSYK case. Since the chord diagram rules in Figure \ref{fig:5} have a natural interpretation as single side operation, and the left asymptotic boundary now is pictorially replaced by an EoW brane, we will use the binary notation and understand $\tilde H_{bulk}$ as a part of the right boundary operator. More precisely, we have
\be 
\tilde H_{R,0}\equiv H_{R,0}+\mu q^{N_0} r^{N_1}\simeq \tilde H_{bulk}
\ee
where $H_{R,0}$ is defined the same as \eqref{2.15}. As for the boundary matter operator, we will keep it the same as $H_{R,1}$ in \eqref{2.15}. In chord diagram language, this means that it only creates or annihilates matter chords from or to the right boundary but never draw a chord from the EoW brane. If we understand a chord from the boundary to the EoW brane as a creation-annihilation process between them, keeping $H_{R,1}$ unchanged should be understood as a reflection boundary condition for a bulk particle on the EoW brane because the matter cannot be created or absorbed by the EoW brane. On the other hand, the modification of $H_{R,0}$ to $\tilde H_{R,0}$ with an additional term $\mu q^{N_0} r^{N_1}$ reflects the fact that the EoW brane is a source of gravitational interaction.

With this setup, we can define the von Neumann algebra of the right side similar to the ordinary DSSYK
\be 
\tilde \mA_R=\{\tilde H_{R,0},\tilde H_{R,1}\}'',\quad \tilde H_{R,1}\equiv H_{R,1}\label{3.16}
\ee
It is very interesting ask what the property  $\tilde \mA_R$ has, and in particular its type. From the JT gravity viewpoint, the bulk spacetime only has one asymptotic boundary and it corresponds to a single-sided black hole with an EoW brane behind its horizon. By a naive expectation from AdS/CFT, the full boundary algebra should reconstruct everything in the bulk and those are all physical observables in the Hilbert space. Therefore, it is naturally conjectured that the boundary algebra in JT with an EoW brane is type I \cite{Kolchmeyer:2023gwa}. 

Surprisingly, this naive expectation is wrong, and we will show in Section \ref{sec:3.3} that this algebra $\tilde \mA_R$ is not type I but still type II$_1$, which means that it has a nontrivial commutant! In other words, there are physical observables that cannot be reconstructed by the right boundary only, at least before the triple scaling/JT limit. One fact that is important but implicit behind this statement is that we have different algebras with or without the EoW brane, but both of them act on the same Hilbert space $\mH$. Therefore, the full set of bounded operators $\mB(\mH)$ with the existence of an EoW brane is the same as the ordinary DSSYK, which is crucial to discuss the commutant. However, this is far from obvious in the JT gravity because different geometries, one-sided versus two-sided, are usually thought as dual to different states in different Hilbert spaces, which are spanned by acting local bulk fields on the that state \cite{Leutheusser:2021frk}. 

While \eqref{3.16} is a reasonable definition generalized from the ordinary DSSYK for the (right) boundary algebra of an EoW brane, there indeed exists an alternative definition of the boundary algebra by including more operators. This boundary algebra turns out to be equivalent to $\mB(\mH)$ as we show in Section \ref{sec:4.2}, and is thus type I$_\infty$. We will discuss its implication in Section \ref{sec:4.5}.

\subsection{EoW brane as a $q$-coherent state: alternative bulk slicing} \label{sec:3.2}

Representing an EoW brane in terms of a deformation of Hamiltonian has a direct connection to JT gravity. However, it is not obvious how to write the deformation of the Hamiltonian $H_R\ra\tilde H_R$ in the microscopic fermion basis $\psi_i$ of an SYK model. To have a better connection to the original SYK model, we will take a different but
equivalent viewpoint for the EoW brane in this section. It is proposed by Okuyama \cite{Okuyama:2023byh} that we can understand the EoW brane in the zero-matter subspace as a $q$-coherent state
\begin{equation}
\ket{\mB_{\a}}=\sum_{n=0}^{\infty}\f{\a^{n}}{[n]_{q}!}\ket{0^n}=\sum_{n=0}^{\infty}\f{(1-q)^{n}\a^{n}a_{R,0}^{\dag n}}{(q;q)_{n}}\ket{\w}\label{eq:12}
\end{equation}
where we use a short notation of binary string 
\be 
\ket{\ldots 0^n \ldots}=|\ldots \underbrace{0\cdots 0}_{n}\ldots\rangle,\quad \ket{\ldots 1^n \ldots}=|\ldots \underbrace{1\cdots 1}_{n}\ldots\rangle,\quad \ket{0^0}\equiv\ket{\w}
\ee
and formulate it in terms of right operators. The norm and inner product of $\ket{\mB_{\a}}$
states are
\begin{equation}
\avg{\mB_{\a}|\mB_{\b}}=\sum_{n=0}^{\infty}\f{(\a^{*}\b)^{n}}{[n]_{q}!}=\f 1{((1-q)\a^{*}\b;q)_{\infty}}\label{eq:10}
\end{equation}
which is finite for $|\a|,|\b|<1/\sqrt{1-q}$. All $\ket{\mB_{\a}}$
with complex $\a$ and $|\a|<1/\sqrt{1-q}$ form an overcomplete basis for the zero-matter
subspace $\mH_0$. To see this, we can use residue theorem to write $\ket{n}=\ket{0^n}$
in terms of a contour integral of $\a$ 
\begin{equation}
\ket{0^n}=\f{[n]_{q}!}{2\pi\i}\oint_{0}d\a\a^{-n-1}\ket{\mB_{\a}}\label{eq:11}
\end{equation}
which holds for any contour of $\a$ around the origin with $|\a|<1/\sqrt{1-q}$ and also shows that $\ket{\mB_{\a}}$ are overcomplete. Similar to the coherent state in quantum mechanis, $\ket{\mB_{\a}}$ is the $q$-coherent state for $a_{R,0}$ and a
vacuum for $a_{R,1}$
\begin{equation}
a_{R,0}\ket{\mB_{\a}}=\a\ket{\mB_{\a}},\quad a_{R,1}\ket{\mB_{\a}}=0
\end{equation}

Now we show why the $q$-coherent state $\ket{\mB_\mu}$ (with $\a=\mu$) is an equivalent description of an EoW brane. Let us define the following deformed annihilation operator
\be 
\tilde a _{R,0} =a _{R,0}+\mu q^{N_0}r^{N_1},\quad \tilde a_{R,1} =a_{R,1} \label{3.22}
\ee
which are generalized from the zero-matter subspace version considered in \cite{Okuyama:2023byh}.\footnote{In \cite{Okuyama:2023byh}, the deformation was made for the creation operator, which can also be generalized here, but we choose to deform the annihilation operator for later convenience.} Even though $\tilde{a}_{R,i}$ is not the hermitian conjugate
of $a^\dag_{R,i}$, this tilde operator surprisingly obeys exactly the same generalized $q$-harmonic oscillator commutation relation \eqref{2.19} with replacement $a_{R,i}\ra \tilde a_{R,i}$, namely
\begin{equation}
\tilde a_{R,i}a_{R,j}^{\dag}-Q_{ij}a_{R,j}^{\dag}\tilde a_{R,i}=\d_{ij}\label{eq:6}
\end{equation}
Indeed, the only nontrivial piece is for $i=0$, where we have 
\begin{align}
 & \tilde a_{R,0} {a}_{R,i}^{\dag}\ket{i_{1}\cdots i_{k}}=\tilde a_{R,0}\ket{i_{1}\cdots i_{k}i}\nonumber \\
= & \sum_{j=1}^{k}\d_{i_{j},i}Q_{i,0}\prod_{s=j+1}^{k}Q_{i,i_{s}}\ket{i_{1}\cdots\slashed{i}_{j}\cdots i_{k}i}+\d_{i,0}\ket{i_{1}\cdots i_{k}}+\mu r^{N_{1}}q^{N_{0}}\ket{i_{1}\cdots i_{k}i}\\
 & a_{R,i}^{\dag}\tilde a_{R,0}\ket{i_{1}\cdots i_{k}}=a_{R,i}^{\dag}\left[\sum_{j=1}^{k}\d_{i_{j},i}\prod_{s=j+1}^{k}Q_{i,i_{s}}\ket{i_{1}\cdots\slashed{i_{j}}\cdots i_{k}}+\mu q^{N_0}r^{N_1}/Q_{i,0}\ket{i_1\cdots i_k }\right]\nonumber \\
= & \sum_{j=1}^{k}\d_{i_{j},i}\prod_{s=j+1}^{k}Q_{i,i_{s}}\ket{i_{1}\cdots\slashed{i_{j}}\cdots i_{k}i}+\mu q^{N_0}r^{N_1}/Q_{i,0}\ket{i_1\cdots i_k i}
\end{align}
where $N_{1}=i+\sum_{j=1}^{k}i_{j}$ and $N_{0}=k+1-N_{1}$. It is easy to see that these two
equations agree with (\ref{eq:6}).

\begin{figure}
	\centering
	\begin{subfigure}{0.3\textwidth}
    \centering
		\includegraphics[height=6cm]{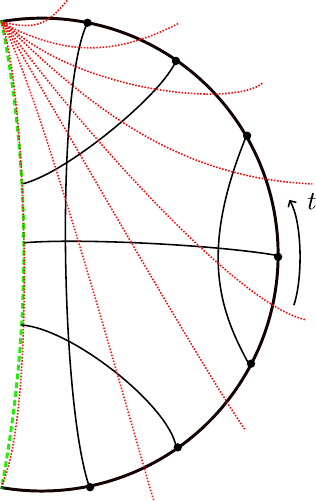}
		\caption{}
        \label{Fig:7a} 
	\end{subfigure}
        \hspace{5em}
	\begin{subfigure}{0.35\textwidth}
    \centering
		\includegraphics[height=6cm]{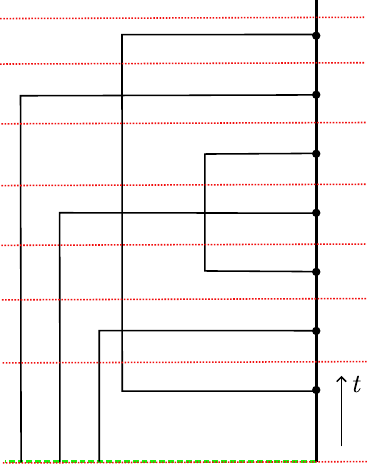}
        \caption{}
        \label{Fig:7b}
        \end{subfigure}
        \caption{(a) The bulk slicing choice for a chord diagram corresponding to the evolution from vacuum to the $q$-coherent state $\ket{\mB_\mu}$. (b) The diagram in (a) in the strip shape.}
\end{figure}

The equivalence of the $q$-coherent state $\ket{\mB_{\mu}}$ as an EoW brane is well reflected by the following polynomial identity
\begin{equation}
\avg{\w|\tilde{A}_{1}\cdots\tilde{A}_{k}|\w}=\avg{\w|A_{1}\cdots A_{k}|\mB_{\mu}}\label{eq:17-1}
\end{equation}
where $A_{j}\in\{a_{R,i}^{\dag},a_{R,i}\}$ and $\tilde{A}_{j}\in\{a_{R,i}^{\dag}, \tilde a_{R,i}\}$
and for each $j=1,\cdots,k$ we replace the operator in the LHS with
the corresponding untilde operator in the RHS. To show this identity,
for the RHS we can move the rightmost annihilation operator $a_{R,i}$ rightward using \eqref{2.19} over all creation operators to its right, and eventually it acts on $\ket{\mB_{\mu}}$ as
\begin{equation}
a_{R,0}\ket{\mB_{\mu}}=\mu \ket{\mB_{\mu}},\quad a_{R,1}\ket{\mB_{\mu}}=0\label{eq:18}
\end{equation}
Then we can repeat this process for all remaining annihilation operators $a_{R,i}$ for the RHS. For the LHS, we can apply the same calculus and eventually act the
tilde annihilation operators $\tilde{a}_{R,i}$ on $\ket{\w}$
as
\begin{equation}
\tilde{a}_{R,0}\ket{\w}=\mu q^{N_{0}}r^{N_{1}}\ket{\w}=\mu\ket{\w},\quad \tilde a_{R,1}\ket{\w}=0
\end{equation}
which is exactly the same as (\ref{eq:18}). As we have just showed that the
commutation relation for the tilde and untilde operators are exactly
the same, this implies that the contributions in each step of this
calculus are the same for both sides of (\ref{eq:17-1}). Since $H_{R,i}$ and $\tilde H_{R,i}$ are just the sum of the untilde/tilde annihilation and creation operators, a special case of the polynomial identity is
\begin{equation}
\avg{\w|\tilde H_{R,i_1}\cdots\tilde H_{R,i_k}|\w}=\avg{\w|H_{R,i_1}\cdots H_{R,i_k}|\mB_{\mu}}\label{eq:17-2}
\end{equation}

The physical interpretation is clear. The LHS is a general amplitude in Section \ref{sec:3.1} if we understand the EoW brane as a deformed Hamiltonian. The RHS is a special amplitude from $\ket{\mB_\mu}$ to vacuum if we evolve the state using the original right-side Hamiltonian from the ordinary DSSYK. The equality shows that we can equivalently understand the EoW brane as a special state, rather than a deformed operator, in the full Hilbert space. 

Indeed, this identity reflects two different bulk slicings for every chord diagram with EoW brane. In Figure \ref{fig:6a}, we have introduced a slicing corresponding to a deformed Hamiltonian. On the other hand, we can change the slicing to the one in Figure~\ref{Fig:7a}, where the initial slice $t=0$ coincides with the EoW brane and the final slice along $t$ is a point in the top. As usual, we require that all the chords intersecting a time slice should not cross before they reach the slice. Since the time slicing never ends on the EoW brane, the evolution should correspond to the original right-side Hamiltonian $H_{R,0}$. After deforming the diagram into a strip shape in Figure \ref{Fig:7b}, the EoW brane can be readily regarded as a projection to the state $\ket{\mB_\mu}$ at the beginning.\footnote{These two equivalent pictures between the deformed Hamiltonian and the $q$-coherent state, and their chord diagram interpretation, were also recently discussed in \cite{Berkooz:2025ydg} in a different context (the chords ending on the EoW brane are called reservoir chords there).}

The infinite sum in (\ref{eq:12}) defines an operator that generates
the state $\ket{\mB_\a}$ from $\ket{\w}$
\be 
\ket{\mB_\a}=\f 1{((1-q)\a a_{R,0}^{\dag};q)_{\infty}}\ket{\w}
\ee
However, this operator is
not hermitian and we will find a hermitian alternative operator using $H_{R,0}$, the
right Hamiltonian \eqref{2.15} in DSSYK. As we know from \eqref{2.14}, $H_{R,0}$ is
a bounded operator with the following spectrum representation in the zero-matter subspace
\begin{equation}
H_{R,0}=\int\f{d\t}{2\pi}\r(\t)E(\t)\ket{\t}\bra{\t}
\end{equation}
Using the expansion \eqref{eq:12}, the state $\ket{\mB_{\a}}$ can be rewritten in $\ket{\t}$ basis as \cite{Okuyama:2023byh}
\begin{align}
\ket{\mB_{\a}}=&\int\f{d\t}{2\pi}\r(\t)\ket{\t}\avg{\t|\mB_{\a}}=\int\f{d\t}{2\pi}\r(\t)\ket{\t}\sum_n\f{(\a\sqrt{1-q})^n}{(q;q)_n}H_n(\cos\t|q)\nn\\
=&\int\f{d\t}{2\pi}\f{\r(\t)}{(\a(1-q)^{1/2}e^{\pm\i\t};q)_{\infty}}\ket{\t}
\end{align}
where we used \eqref{eq:a186} to complete the sum. Note that the ground state has a simple expansion in $\ket{\t}$ basis
\begin{equation}
\ket{\w}=\int\f{d\t}{2\pi}\r(\t)\ket{\t}
\end{equation}
It is easy to see the following hermitian operator generating $\ket{\mB_{\a}}$
from $\ket{\w}$
\begin{equation}
\mB_{R,\a}=\f 1{(\a(1-q)^{1/2}e^{\pm\i\t(H_{R,0})};q)_{\infty}},\quad \t(x)\equiv\arccos[\sqrt{1-q}x/2]\label{eq:17-5}
\end{equation}

Note that this operator $\mB_{R,\a}$ is defined for the full Hilbert space $\mH$ which $H_{R,0}$ acts on, even though the construction is motivated from the zero-matter subspace to create the $q$-coherent state from the vacuum. Nevertheless, we still need to justify that the hermitian operator $\mB_{R,\a}$ is well defined when acting on any state in the full Hilbert space $\mH$, especially on multi-matter states. Indeed, as we will show in Section \ref{sec:3.4}, the full spectrum of $H_{R,0}$ can be solved exactly and it is bounded in
the range $[-2/\sqrt{1-q},2/\sqrt{1-q}]$ and can be parameterized as before with $\t\in[0,\pi]$. It follows that $\mB_{R,\a}$ is well-defined, bounded and invertible if 
\begin{equation}
|\a|(1-q)^{1/2}<1\label{eq:18-1}
\end{equation}
because $||e^{\pm \i \t (H_{R,0})}||=1$ and $|(a;q)_\infty|$ has a strictly positive lower bound and a finite upper bound when $|a|<1$. The continuous functional calculus of operators then guarantees the existence of inverse and square root of $\mB_{R,\a}$. Its inverse is simply 
\begin{equation}
\mB_{R,\a}^{-1}=(\a(1-q)^{1/2}e^{\pm\i\t(H_{R,0})};q)_{\infty}
\end{equation}
The condition (\ref{eq:18-1}) will be the scenario considered throughout this paper. 

Using the operator $\mB_{R,\mu}\in \mA_R$, there is another rewriting of the RHS of \eqref{eq:17-2} corresponding to a different bulk slicing
\begin{align} 
&\avg{\w|\tilde H_{R,i_1}\cdots\tilde H_{R,i_k}|\w}=\avg{\w|H_{R,i_1}\cdots H_{R,i_k}|\mB_{\mu}}=\avg{\w|H_{R,i_1}\cdots H_{R,i_k}\mB_{R,\mu}|\w}\nn\\
=&\avg{\w|\mB_{R,\mu}^{1/2}H_{R,i_1}\cdots H_{R,i_k}\mB_{R,\mu}^{1/2}|\w}\equiv \avg{\mB_{\mu}^{1/2}|H_{R,i_1}\cdots H_{R,i_k}|\mB_{\mu}^{1/2}} \label{3.37}
\end{align}
where in the second line we used the tracial property \eqref{2.40} of $\ket\w$ for $\mA_R$ to move $\mB_{R,\mu}^{1/2}$ from the ket to bra. This means that we can equivalently understand amplitude of  $\mA_R$ with an EoW brane as an expectation value in a half-EoW brane state $\ket{\mB_\mu^{1/2}}$. Since the Hamiltonian is still $H_{R,0}$, this corresponds to a bulk slicing shown in Figure~\ref{fig:new_slicing}. All bulk slices intercept the EoW brane at the single middle point, start from the lower half EoW brane, evolve anticlockwise, and end at the upper half EoW brane. The beginning and the end represent the projection onto $\ket{\mB_\mu^{1/2}}$ from both ket and bra states respectively.
\begin{figure}
	\centering
	\begin{subfigure}{0.3\textwidth}
    \centering
		\includegraphics[height=6cm]{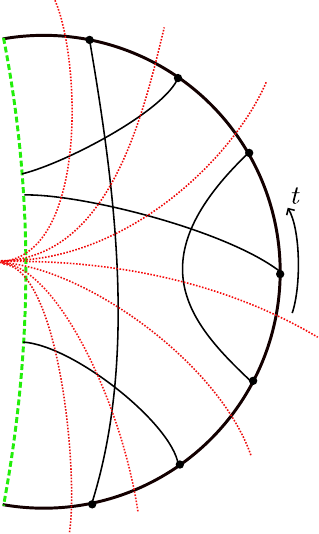}
		\caption{}
        \label{Fig:10a} 
	\end{subfigure}
        \hspace{5em}
	\begin{subfigure}{0.35\textwidth}
    \centering
		\includegraphics[height=6cm]{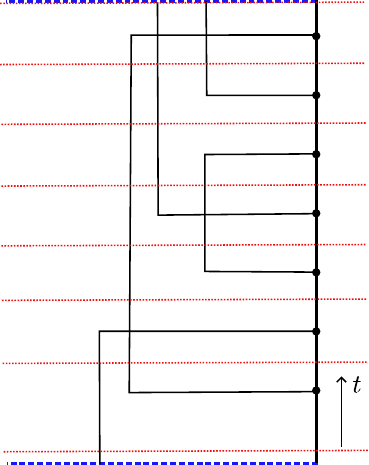}
        \caption{}
        \label{Fig:10b}
        \end{subfigure}
        \caption{(a) A new slicing scheme of the one-sided chord diagram, which can be interpreted as calculating the amplitude $\expval*{H_{R,i_1}H_{R,i_2}\ldots H_{R_,i_k}}{\mB_\mu^\f{1}{2}}$. (b) Initial and final states $|\mB_\mu^\f{1}{2}\rangle$ are denoted by blue dashed lines. Arbitrary numbers of chords are allowed to start from the bottom or end on the top, resulting in extra coefficients corresponding to the expansion of $|\mB_\mu^\f{1}{2}\rangle$ in the $\ket{n}$ basis.}
        \label{fig:new_slicing}
\end{figure}

\subsection{The von Neumann algebra of a single-sided black hole with an EoW brane is a type II$_1$ factor} \label{sec:3.3}

Let us discuss the von Neumann algebras $\mA_R$ and $\tilde \mA_R$ defined in \eqref{2.29} and \eqref{3.16} respectively. As we will see, the identity \eqref{eq:17-2} is crucial to understand the property of the von Neumann algebra $\tilde \mA_R$ of a single-sided black hole with an EoW brane. First of all, we can show the following property.
\begin{lemma} \label{lm1}
$\ket{\w}$ is a cyclic state for $\tilde \mA_R$. 
\end{lemma}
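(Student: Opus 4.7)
The plan is to prove cyclicity directly, by induction on the length $k$ of the binary-string states $\ket{i_1 i_2 \cdots i_k}$, showing each such state lies in $\tilde \mA_R \ket{\w}$. Since these strings form a total set in $\mH$ (they span the finite-dimensional subspaces $\mH_{N_0,N_1}$ of the decomposition \eqref{2.28}), this is enough to conclude that $\tilde \mA_R \ket{\w}$ is dense, i.e.\ $\ket{\w}$ is cyclic.

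For the base case $k = 0$, $\ket{\w} = \mathbb{1} \cdot \ket{\w} \in \tilde \mA_R \ket{\w}$. For $k=1$, using that the right annihilation operators kill the vacuum, I compute
\begin{equation}
\tilde H_{R,0}\ket{\w} = a^\dag_{R,0}\ket{\w} + \mu q^0 r^0 \ket{\w} = \ket{0} + \mu \ket{\w}, \qquad \tilde H_{R,1}\ket{\w} = \ket{1},
\end{equation}
so $\ket{0} = \tilde H_{R,0}\ket{\w} - \mu \ket{\w}$ and $\ket{1} = \tilde H_{R,1}\ket{\w}$ both lie in $\tilde \mA_R \ket{\w}$.

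For the inductive step, assume every binary string of length strictly less than $k$ lies in $\tilde \mA_R \ket{\w}$. For a length-$k$ string $\ket{i_1 i_2 \cdots i_k}$, I act with $\tilde H_{R,i_1}$ on $\ket{i_2 \cdots i_k}$ and solve for the creation piece:
\begin{equation}
\ket{i_1 i_2 \cdots i_k} = a^\dag_{R,i_1}\ket{i_2 \cdots i_k} = \tilde H_{R,i_1}\ket{i_2 \cdots i_k} - a_{R,i_1}\ket{i_2 \cdots i_k} - \delta_{i_1,0}\,\mu\, q^{N_0} r^{N_1} \ket{i_2 \cdots i_k},
\end{equation}
where $N_0, N_1$ count $0$'s and $1$'s in $i_2 \cdots i_k$. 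By the inductive hypothesis, $\ket{i_2 \cdots i_k} = A \ket{\w}$ for some $A \in \tilde \mA_R$, so $\tilde H_{R,i_1}\ket{i_2 \cdots i_k} = \tilde H_{R,i_1} A \ket{\w} \in \tilde \mA_R \ket{\w}$. The term $a_{R,i_1}\ket{i_2\cdots i_k}$ is, by \eqref{2.17}, a linear combination of strings of length $k-2$, which lie in $\tilde \mA_R \ket{\w}$ by induction. The last piece is a scalar multiple of the length-$(k-1)$ string $\ket{i_2 \cdots i_k}$, hence also in $\tilde \mA_R \ket{\w}$. Combining these gives $\ket{i_1 \cdots i_k} \in \tilde \mA_R \ket{\w}$.

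The argument is essentially elementary; the key observation is that the deformation $\mu q^{N_0} r^{N_1}$ in $\tilde H_{R,0}$ acts as a scalar on every binary-string state, so the creation operator $a^\dag_{R,i_1}$ can be isolated from $\tilde H_{R,i_1}$ modulo contributions from strictly shorter strings, which the induction controls. An alternative route would use the polynomial identity \eqref{eq:17-2} together with the invertibility of $\mB^{1/2}_{R,\mu}$ to produce a unitary equivalence between $\overline{\tilde \mA_R \ket{\w}}$ and $\overline{\mA_R \mB^{1/2}_{R,\mu}\ket{\w}} = \mH$. However, this only yields an abstract Hilbert-space isomorphism, not equality as subspaces of $\mH$, so the direct induction above is preferable for establishing cyclicity.
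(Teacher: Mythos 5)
Your induction is correct and is essentially the paper's own argument made explicit: the paper likewise generates the binary-string basis from $\ket{\w}$ by noting that the deformation $\mu q^{N_0}r^{N_1}$ neither adds nor removes digits and then invokes the normal-ordering process of \cite{Xu:2024hoc}, which is precisely your triangularity-plus-subtraction step. The only cosmetic issue is the index convention: with \eqref{eq:1-1} the creation operator appends a digit to the \emph{right} end of the string, so the induction should peel off the last digit via $\ket{i_1\cdots i_k}=a^\dagger_{R,i_k}\ket{i_1\cdots i_{k-1}}$ rather than $i_1$; nothing else changes.
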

\begin{proof}
The proof is similar to the proof of $\ket{\w}$ being cyclic for $\mA_R$ in \cite{Xu:2024hoc}. The state $\ket{\w}$ being cyclic means $\tilde H_{R,i}$ generates a dense set of the Hilbert space when acting on $\ket{\w}$. As the binary string states are dense, we just need to generate all of them using $\tilde H_{R,i}$. Note that $\tilde H_{R,1}=H_{R,1}$ and $\tilde H_{R,0}=H_{R,0}+\mu q^{N_0}r^{N_1}$. Acting no more than $N_1$ $\tilde H_{R,1}$ and no more than $N_0$ $\tilde H_{R,0}$ with all possible ordering on $\ket{\w}$, we can generate states with at most $N_0$ 0-digit and $N_1$ 1-digit because $\mu q^{N_0}r^{N_1}$ does not add or kill any binary digit. Therefore, we can use a similar normal ordering process in \cite{Xu:2024hoc} to work out a linear combination of these operators to generate states with $k$ total number digits/chords from $k=1$ to $k=N_0+N_1$. Therefore, $\tilde \mA_R$ can generate all binary string basis from $\ket{\w}$.
\end{proof}

Next, we would like to define untilde as a map from $\tilde \mA_R$ to $\mA_R$. It is clear that it is well defined for all polynomials $p(\tilde H_{R,0},\tilde H_{R,1})$, which are simply mapped to another polynomial $p(H_{R,0},H_{R,1})$ in $\mA_R$. Reversely, tilde is also well defined for all polynomials $p(H_{R,0},H_{R,1})$, which are mapped to $p(\tilde H_{R,0},\tilde H_{R,1})$ in $\tilde \mA_R$. These two maps naturally extend to the full von Neumann algebra.
\begin{definition}
    $\td^{-1}$ is a map $\tilde \mA_R \ra \mA_R$. For each operator $\tilde\mO\in \tilde\mA_R$, it is a weak limit of a Cauchy sequence of polynomial
\be 
\tilde \mO=\lim^\text{w}_{n\ra \infty} f_n(\tilde H_{R,0}, \tilde H_{R,1}) \label{3.37-1}
\ee
where the superscript w means ``weak limit", and its image under $\td^{-1}$ is
\be 
\td^{-1}(\tilde \mO)=\mO\equiv \lim^\text{w}_{n\ra \infty} f_n(H_{R,0},H_{R,1}) \label{3.37-2}
\ee
Reversely, $\td$ is a map $\mA_R \ra \tilde \mA_R$. For each operator $\mO'\in \mA_R$, it is a weak limit of a Cauchy sequence of polynomial
\be 
\mO'=\lim^\text{w}_{n\ra \infty} g_n(H_{R,0},H_{R,1}) \label{3.37-3}
\ee
and its image under $\td$ is
\be 
\td(\mO')=\tilde \mO'\equiv \lim^\text{w}_{n\ra \infty} g_n(\tilde H_{R,0},\tilde H_{R,1}) \label{3.37-4}
\ee
\end{definition}

Let us check that this definition makes sense. The von Neumann algebra is defined in terms of weak operator topology, which, roughly speaking, is by the matrix elements. Since $\ket{\w}$ is cyclic for $\tilde \mA_R$, all polynomials $p(\tilde H_{R,i})$ acting on $\ket{\w}$ form a dense set of states in $\mH$. Therefore, for any $\tilde \mO\in \tilde \mA_R$ in the form of \eqref{3.37-1}, its matrix elements of a dense set of states are
\begin{align} 
&\avg{p_1(\tilde H_{R,i})|\tilde \mO|p_2(\tilde H_{R,i})}\equiv \lim_{n\ra\infty}\avg{p_1(\tilde H_{R,i})|f_n(\tilde H_{R,i})|p_2(\tilde H_{R,i})} \nn\\
=&\lim_{n\ra\infty}\avg{\w|p_1(\tilde H_{R,i})^\dag f_n(\tilde H_{R,i}) p_2(\tilde H_{R,i})|\w}=\lim_{n\ra\infty}\avg{\w|\mB_{R,\mu} p_1(H_{R,i})^\dag f_n(H_{R,i}) p_2(H_{R,i})|\w}\nn\\
=&\lim_{n\ra\infty}\avg{p_1(H_{R,i})\mB_{\mu} | f_n(H_{R,i}) |p_2(H_{R,i})}=\avg{p_1(H_{R,i})\mB_{\mu} | \mO |p_2(H_{R,i})} \label{3.41}
\end{align}
where in the second line we used the identity \eqref{eq:17-2} and the cyclic property \eqref{2.40} of $\ket{\w}$ for $\mA_R$ to move $\avg{\w|\cdots \mB_{R,\mu}|\w}\ra \avg{\w|\mB_{R,\mu}\cdots |\w}$. Since $\mB_{R,\mu}\in \mA_R$ is invertible and $\ket{\w}$ is cyclic for $\mA_R$, $\ket{p_1(H_{R,i})\mB_{\mu}}$ is also a dense set in $\mH$ because we can use a polynomial of $H_{R,0}$ to approximate $\mB^{-1}_{R,\mu}$ as a sequence limit.\footnote{More precisely, we need to show that there is a sequence of polynomial $q_n(H_{R,0})$ such that $||(q_n(H_{R,0})\mB_{R,\mu}-1)\ket{\w}||\ra0$ as $n\ra \infty$. Since $\mB_{R,\mu}$ is a function of $H_{R,0}$ only, we can expand it in terms of the spectrum of $H_{R,0}$
\be 
\mB_{R,\mu}=\int_{0}^\pi \f {d \t}{(\mu(1-q)^{1/2}e^{\pm\i\t};q)_{\infty}} P_{\t} \label{3.42}
\ee
where $P_\t$ is the projection to the eigen space of $H_{R,0}$ with eigenvalue $E(\t)$ (we will show this in Section \ref{sec:3.4}). Expanding $q_n(H_{R,0})$ on the same spectrum, we can choose $q_n$ such that for arbitrarily small $\e>0$, there exists a $N_\e$ with $|q_n(E(\t))/(\a(1-q)^{1/2}e^{\pm\i\t};q)_{\infty}-1|<\e$ for $n>N_\e$ because the integrand in \eqref{3.42} is finite for $\t\in[0,\pi]$ and its reciprocal can expanded as an absolute convergent series of $E(\t)$. It follows that $||(q_n(H_{R,0})\mB_{R,\mu}-1)\ket{\w}||$ is upper bounded by $O(\e)$.
}
The last line of \eqref{3.41} gives convergent value of the matrix elements of $\mO$ in \eqref{3.37-2} on a dense set of states, given $\tilde \mO\in \tilde \mA_R$ is well defined. Therefore, for each $\tilde \mO\in \tilde \mA_R$, the image $\td^{-1}(\tilde \mO)$ exists in $\mA_R$.\footnote{Rigorously speaking, we need to show this for all vectors in $\mH$ not just the dense set. This is straightforward by replacing $p_1$ and $p_2$ with a sequence of polynomials $p_{1,k}$ and $p_{2,s}$ respectively. For the two vectors $\ket{\tilde x}=\lim_{k\ra \infty}p_{1,k}(\tilde H_{R,i})\ket{\w}$ and $\ket{\tilde y}=\lim_{s\ra \infty}p_{2,s}(\tilde H_{R,i})\ket{\w}$, taking $k$ and $s$ limit inside the $n$ limit in \eqref{3.41} leads to the same identity
\be 
\avg{\tilde x|\tilde \mO|\tilde y}=\avg{x\mB_{\mu} | \mO |y},\quad \ket{x\mB_{\mu}}\equiv \lim_{k\ra \infty}p_{1,k}(H_{R,i})\mB_{R,\mu}\ket{\w},\quad \ket{y}\equiv \lim_{s\ra \infty}p_{2,s}(H_{R,i})\ket{\w}
\ee
where we can easily show that the vectors $\ket{x\mB_{\mu}},\ket{y}$ exist because the following Cauchy sequence holds
\begin{align} 
||(p_{1,k_1}(H_{R,i})-p_{1,k_2}(H_{R,i}))\mB_{R,\mu}\ket{\w}||&\leq ||\mB_{R,\mu}||\avg{\w|\mB_{R,\mu}|p_{1,k_1}(H_{R,i})-p_{1,k_2}(H_{R,i})|^2|\w} \nn\\
&=||\mB_{R,\mu}||\avg{\w||p_{1,k_1}(\tilde H_{R,i})-p_{1,k_2}(\tilde H_{R,i})|^2|\w}<O(\e)
\end{align}
for large enough $k_1,k_2>N_\e$, given $p_{1,k}(\tilde H_{R,i})\ket{\w}$ is a Cauchy sequence converging to $\ket{\tilde x}$. Replacing $\mB_{R,\mu}$ with identity justifies $\ket{y}$. In other places in this subsection, we can repeat similar arguments to extend the result from a dense set to the whole Hilbert space. Therefore, we will only consider dense sets in the main text for simplicity.
}
 Following the equality of \eqref{3.41} reversely, we can also show that the image $\td(\mO)$ exists in $\tilde \mA_R$ for $\forall \mO\in\mA_R$. 

The maps $\td$ and $\td^{-1}$ are also defined without ambiguity. Suppose there are two Cauchy sequence $f_n(\tilde H_{R,i})$ and $g_n(\tilde H_{R,i})$ converge to the same operator $\tilde \mO\in\tilde \mA_R$. Taking the difference of their matrix elements in the form of \eqref{3.41}, we have 
\be 
0=\lim_{n\ra\infty}\avg{p_1(H_{R,i})\mB_{\mu} | f_n(H_{R,i})-g_n(H_{R,i}) |p_2(H_{R,i})} \label{3.42-1}
\ee
Since the matrix elements of $f_n(H_{R,i})$ and $g_n(H_{R,i})$ on a dense set of states both converge to the same value, this means that they define the same operator $\mO\in\mA_R$ in weak operator topology. Therefore, the image of $\td^{-1}$ is unambiguous 
\be 
\td^{-1}(\tilde \mO)=\mO=\lim^{\text w}_{n\ra \infty}f_n(H_{R,i})=\lim^{\text w}_{n\ra \infty}g_n(H_{R,i})
\ee
Following this argument reversely, we can also show that $\td$ is unambiguously defined. 

We can also use \eqref{3.41} to show that the kernel of $\td^{-1}$ and $\td$ are both trivial because $\tilde\mO=0 \iff \tilde \mO=0$ in the sense of matrix elements of a dense set of states. These two maps are inverse to each other, and also commutative with operator product, sum and hermitian conjugate, which leads to
\begin{lemma}\label{isomorphism}
    $\td$ and $\td^{-1}$ are two isomorphisms between $\mA_R$ and $\tilde \mA_R$, and they are inverse to each other.
\end{lemma}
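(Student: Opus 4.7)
The plan is to verify that $\td$ and $\td^{-1}$, as already-shown-to-be-well-defined maps, are $*$-algebra homomorphisms and are mutually inverse, which together with the trivial-kernel statement gives isomorphism. The preceding text has already done the most delicate part: existence of the image in $\mA_R$ (resp.\ $\tilde\mA_R$) via the matrix-element identity \eqref{3.41}, unambiguity (different Cauchy sequences giving the same weak limit map to the same image via \eqref{3.42-1}), and triviality of kernels. What remains is to check the algebraic structure.

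First I would verify linearity and multiplicativity on polynomials, which is immediate from the defining formulas \eqref{3.37-1}--\eqref{3.37-4}: polynomials in $\tilde H_{R,i}$ map to the same-shape polynomials in $H_{R,i}$, so $\td^{-1}(\alpha \tilde f + \beta \tilde g)=\alpha\,\td^{-1}(\tilde f)+\beta\,\td^{-1}(\tilde g)$ and $\td^{-1}(\tilde f\tilde g)=\td^{-1}(\tilde f)\td^{-1}(\tilde g)$ hold tautologically. The nontrivial point is promoting these identities to the weak closure. For this I would take Cauchy sequences of polynomials $f_n,g_n$ converging weakly to $\tilde{\mO}_1,\tilde{\mO}_2\in\tilde\mA_R$, and use \eqref{3.41} to rewrite matrix elements of $f_n g_n$ on a dense set of vectors in terms of matrix elements of $f_n(H_{R,i})g_n(H_{R,i})$ against the displaced dense set $\{\ket{p(H_{R,i})\mB_{\mu}}\}$. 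Combined with boundedness of $\mB_{R,\mu}$ and $\mB_{R,\mu}^{-1}$ (guaranteed by condition \eqref{eq:18-1}), the product sequence is weakly Cauchy and converges to $\td^{-1}(\tilde\mO_1)\td^{-1}(\tilde\mO_2)$, so $\td^{-1}(\tilde\mO_1\tilde\mO_2)=\td^{-1}(\tilde\mO_1)\td^{-1}(\tilde\mO_2)$. Hermitian conjugation is preserved because if $\tilde\mO=\lim^\text{w} f_n(\tilde H_{R,i})$ then $\tilde\mO^{\dag}=\lim^\text{w} f_n^{*}(\tilde H_{R,i})$ and the defining rule replaces tildes by untildes on both sides.

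Next I would show $\td\circ\td^{-1}=\mathrm{id}_{\tilde\mA_R}$ and $\td^{-1}\circ\td=\mathrm{id}_{\mA_R}$. This is essentially already contained in the definitions: applying $\td^{-1}$ to $\tilde\mO=\lim^\text{w}f_n(\tilde H_{R,i})$ gives $\mO=\lim^\text{w}f_n(H_{R,i})$, and applying $\td$ to this $\mO$ returns $\lim^\text{w}f_n(\tilde H_{R,i})=\tilde\mO$; the only subtlety, unambiguity of the intermediate step, is the ambiguity check already handled via \eqref{3.42-1}. Combined with the triviality of both kernels (again via \eqref{3.41} applied to matrix elements on a dense set, noting $\ket{p(H_{R,i})\mB_{\mu}}$ is dense because $\mB_{R,\mu}$ is invertible and $\ket{\w}$ is cyclic for $\mA_R$), this shows bijectivity.

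The main obstacle in this program is the step that promotes multiplicativity from polynomials to weak closures, since weak operator convergence is not jointly continuous in products of general sequences. I would handle this by exploiting that one factor can always be made strongly convergent on the relevant dense set by using the bounded operator $\mB_{R,\mu}$ as an intertwiner in \eqref{3.41}: this converts weak convergence in $\tilde\mA_R$ into weak convergence in $\mA_R$ with a uniformly bounded prefactor, and then one uses that weak $\times$ bounded remains weakly convergent on a dense set. Once multiplicativity is secured, the rest is formal, and together with Lemma~\ref{lm1} and the polynomial identity \eqref{eq:17-2} the claimed isomorphism follows.
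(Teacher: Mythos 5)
Your proposal is correct and follows essentially the same route as the paper: both rest on the matrix-element identity \eqref{3.41}, the density of $\{p(H_{R,i})\mB_{\mu}\ket{\w}\}$ via invertibility of $\mB_{R,\mu}$, and the unambiguity/trivial-kernel arguments already established, after which the homomorphism properties and mutual inverseness are read off from the defining formulas \eqref{3.37-1}--\eqref{3.37-4}. The only difference is that you are more explicit about the subtlety of multiplicativity under weak limits, which the paper simply asserts (and which is in any case made rigorous a posteriori by the unitary implementation in Lemma~\ref{lm4}).
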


Since $\ket{\w}$ is a separating state for $\mA_R$, which means the only operator in $\mA_R$ annihilating $\ket{\w}$ is zero, we can use \eqref{3.41} to show 
\begin{lemma} \label{lm3}
    $\ket{\w}$ is a separating state for $\tilde \mA_R$.
\end{lemma}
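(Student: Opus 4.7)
The plan is to combine the isomorphism $\td: \mathcal{A}_R \to \tilde{\mathcal{A}}_R$ from Lemma~\ref{isomorphism} with the key identity \eqref{3.41} and the fact that $\ket{\w}$ is already separating for $\mA_R$. The strategy is to show that if $\tilde\mO\ket{\w}=0$ for some $\tilde\mO\in\tilde\mA_R$, then the corresponding $\mO=\td^{-1}(\tilde\mO)\in\mA_R$ must annihilate $\ket{\w}$ as well, and then invoke separability on the $\mA_R$ side.

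Concretely, first I would assume $\tilde\mO\in\tilde\mA_R$ satisfies $\tilde\mO\ket{\w}=0$, which is equivalent to saying that all matrix elements $\avg{\p|\tilde\mO|\w}$ vanish as $\p$ ranges over $\mH$. By Lemma~\ref{lm1}, it suffices to test this on the dense set of states of the form $\ket{p_1(\tilde H_{R,i})\w}$ for polynomials $p_1$. Next I would specialize \eqref{3.41} to $p_2=1$, which gives
\be
\avg{p_1(\tilde H_{R,i})\w|\tilde\mO|\w}=\avg{p_1(H_{R,i})\mB_\mu\w|\mO|\w}
\ee
where $\mO=\td^{-1}(\tilde\mO)\in\mA_R$. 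The vanishing of the left-hand side for all $p_1$ therefore forces $\avg{p_1(H_{R,i})\mB_\mu\w|\mO|\w}=0$ for all $p_1$.

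The crucial observation is then that the set $\{p_1(H_{R,i})\mB_{R,\mu}\ket{\w}\}$ is dense in $\mH$: by cyclicity of $\ket{\w}$ for $\mA_R$ the states $\{p_1(H_{R,i})\ket{\w}\}$ are already dense, and since $\mB_{R,\mu}\in\mA_R$ is bounded and invertible (under condition \eqref{eq:18-1}), its image on a dense set remains dense. This argument has essentially already been carried out below \eqref{3.41}, and I would simply reuse it. Hence $\mO\ket{\w}=0$.

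Finally, applying separability of $\ket{\w}$ for $\mA_R$ yields $\mO=0$, and then Lemma~\ref{isomorphism} (triviality of the kernel of $\td$) immediately gives $\tilde\mO=\td(\mO)=0$, completing the proof. The only potential subtlety is making sure the matrix-element vanishing on the dense set truly implies $\mO\ket{\w}=0$ as a vector identity, but this follows from the fact that $\ket{\w}$ is a bounded vector and matrix elements against a dense subspace determine a vector uniquely; this can be handled exactly by the footnote-style Cauchy-sequence argument given after \eqref{3.41}, and is not a real obstacle.
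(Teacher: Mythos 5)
Your proposal is correct and follows essentially the same route as the paper: both set $p_2=1$ in the identity \eqref{3.41}, use density of $\{p_1(H_{R,i})\mB_{R,\mu}\ket{\w}\}$ to conclude $\mO=\td^{-1}(\tilde\mO)$ annihilates $\ket{\w}$, and then invoke the separating property of $\ket{\w}$ for $\mA_R$ together with injectivity of $\td$. The only difference is cosmetic — you argue directly while the paper phrases it as a contradiction.
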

\begin{proof}
    Suppose $\exists \tilde \mO\in \tilde \mA_R$ is nontrivial and $\tilde\mO \ket{\w}=0$. As the limit of a sequence $f_n(\tilde H_{R,i})$, take its matrix elements in \eqref{3.41} with $\ket{p_2(\tilde H_{R,i})}=\ket{\w}$. It follows that $\mO=\lim^{\text w}_{n\ra\infty}f_n(H_{R,i})$ annihilates $\ket{\w}$. Since $\ket{\w}$ is separating for $\mA_R$, $\mO$ must be trivial and the polynomial function $f_n$ must converge to zero, which leads to a contradiction.
\end{proof}

Another immediate consequence of \eqref{3.41} is that $\ket{\w}$ is
not a tracial state for $\tilde{\mA}_{R}$ because it does not obey
the cyclic property \eqref{2.40} as a trace. Consider two generic operators $\tilde{\mO}_{1},\tilde{\mO}_{2}\in\tilde{\mA}_{R}$.
We will have 
\begin{align}
&\avg{\w|\tilde{\mO}_{1}\tilde{\mO}_{2}|\w}=\avg{\w|\mB_{R,\mu}\mO_{1}\mO_{2}|\w}\nn\\
=&\avg{\w|\mB_{R,\mu}\mB_{R,\mu}^{-1}\mO_{2}\mB_{R,\mu}\mO_{1}|\w}=\avg{\w|\tilde{\mB}_{R,\mu}^{-1}\tilde{\mO}_{2}\tilde{\mB}_{R,\mu}\tilde{\mO}_{1}|\w}\label{eq:26}
\end{align}
where in the second step we used the fact that $\ket{\w}$ is tracial
for $\mA_{R}$ and $\mB_{R,\mu}\in\mA_{R}$ to move $\mO_{2}$ to the
left, in the third step we used \eqref{3.41} and define the tilde version of $\mB_{R,\mu}$ as 
\begin{equation}
\tilde{\mB}_{R,\mu}=\f 1{(\mu(1-q)^{1/2}e^{\pm\i\t(\tilde{H}_{R,0})};q)_{\infty}}
\end{equation}
As we will show in the next section that the full spectrum of $\tilde{H}_{R,0}$
is the same as $H_{R,0}$, and thus $\tilde{\mB}_{R,\mu}$ is also an invertible bounded operator
given (\ref{eq:18-1}). Since a generic $\tilde{\mO}_{2}$ does not
commute with $\tilde{\mB}_{R,\mu}$, (\ref{eq:26}) fails to obey the
cyclic property for a tracial state. 

Nevertheless, we can follow (\ref{eq:26})
to find the tracial state $\ket{\psi}$ for $\tilde{\mA}_{R}$. Since
$\tilde{\mA}_{R}\ket{\w}$ generates a dense set of states,
we can consider a generic state as 
\begin{equation}
\ket{\psi}=\tilde{V}\ket{\w},\quad\tilde{V}\in\tilde{\mA}_{R}
\end{equation}
It follows from (\ref{eq:26}) that
\begin{align}
\avg{\psi|\tilde{\mO}_{1}\tilde{\mO}_{2}|\psi} & =\avg{\w|\tilde{V}^{\dag}\tilde{\mO}_{1}\tilde{\mO}_{2}\tilde{V}|\w}=\avg{\w|\mB_{R,\mu}\mB_{R,\mu}^{-1/2}\mO_{2}V\mB_{R,\mu}V^{\dag}\mO_{1}\mB_{R,\mu}^{-1/2}|\w}\nonumber \\
 & =\avg{\w|\tilde{\mB}_{R,\mu}^{-1/2}\tilde{\mO}_{2}\tilde{V}\tilde{\mB}_{R,\mu}\tilde{V}^{\dag}\tilde{\mO}_{1}\tilde{\mB}_{R,\mu}^{-1/2}|\w}
\end{align}
from which we can quickly find that if we choose $\tilde{V}=\tilde{V}^{\dag}=\tilde{\mB}_{R,\mu}^{-1/2}$
the cyclic property is satisfied
\begin{equation}
\avg{\psi|\tilde{\mO}_{1}\tilde{\mO}_{2}|\psi}=\avg{\psi|\tilde{\mO}_{2}\tilde{\mO}_{1}|\psi}
\end{equation}

It also has a good property that any operator $\tilde{\mO}\in\tilde{\mA}_{R}$ has the same expectation value as the corresponding $\mO\in\mA_{R}$ under isomorphism $\td^{-1}$ in $\ket{\w}$
\begin{equation}
\avg{\psi|\tilde{\mO}|\psi}=\avg{\w|\tilde{\mB}_{R,\mu}^{-1/2}\tilde{\mO}\tilde{\mB}_{R,\mu}^{-1/2}|\w}=\avg{\w|\mB_{R,\mu}\mB_{R,\mu}^{-1/2}\mO\mB_{R,\mu}^{-1/2}|\w}=\avg{\w|\mO|\w} \label{3.49}
\end{equation}
 Physically, the state $\ket{\psi}$ behaves just like a vacuum state without an EOW brane for $\tilde \mA_R$. This means that $\td$ also induces an isometry of Hilbert space $\mH$.
%\begin{lemma}
   %We generalize $\td$ to a map $\mH\ra \mH$ by $\td:\ket{\w}\mapsto \ket{\psi}$ and $\td:\mO\ket{\w}\mapsto \tilde \mO \ket{\psi}$. It is an isometry of $\mH$. So is the inverse $\td^{-1}$.
%\end{lemma}
%\begin{remark}
    %Indeed, one can also construct $\td$ and $\td^{-1}$ in a reverse order. First define $\td$ as a map in the Hilbert space for a dense set of states $p(H_{R,i})\ket{\w}$ generated by polynomials of $H_{R,i}$. Since it is an isometry between this dense set and another dense set $p(\tilde H_{R,i}) \ket{\psi}$, one can show that it naturally extends to the whole Hibert space. Then it induces the isomorphism on algebras $\mA_R\ra \tilde \mA_R$.
%\end{remark}
\begin{lemma} \label{lm4}
    $\td$ induces a unitary transformation $U:
    \mathcal H\rightarrow \mathcal H$ defined by $Up(H_{R,i})\mB_{R,\mu}^{\f{1}{2}}\ket{\w}=p(\tilde H_{R,i})\ket{\w}$ for arbitrary polynomials $p$. Furthermore, we have $\tilde{\mA}_R=U\mA_R U^\dagger$.
\end{lemma}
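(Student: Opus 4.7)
The plan is to define $U$ on a dense domain, verify it is isometric using the tracial property of $\ket\w$ for $\mA_R$ together with the key identity \eqref{eq:26}, extend by continuity, and finally verify the conjugation relation using the homomorphism property of $\td$.

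First, I would take the domain to be $\mathcal D = \mA_R\,\mB_{R,\mu}^{1/2}\ket\w$, which contains the polynomial vectors $p(H_{R,i})\mB_{R,\mu}^{1/2}\ket\w$. Since $\ket\w$ is cyclic for $\mA_R$ and $\mB_{R,\mu}^{1/2}\in\mA_R$ is bounded and invertible, $\mathcal D$ is dense in $\mH$. I would define $U$ by the formula $U\,\mO'\mB_{R,\mu}^{1/2}\ket\w = \td(\mO')\ket\w$ for $\mO'\in\mA_R$. Well-definedness follows because $\mB_{R,\mu}^{1/2}\ket\w$ is separating for $\mA_R$ (as $\ket\w$ is separating by Tomita theory and $\mB_{R,\mu}^{1/2}$ is invertible in $\mA_R$), so $\mO'\mB_{R,\mu}^{1/2}\ket\w$ determines $\mO'$ uniquely, and $\td$ is a function by Lemma \ref{isomorphism}.

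Next, I would check that $U$ is isometric on $\mathcal D$. Using the trace identity \eqref{2.40} for $\mA_R$, since $\mB_{R,\mu}\in\mA_R$,
\begin{equation}
\|\mO'\mB_{R,\mu}^{1/2}\ket\w\|^2=\avg{\w|\mB_{R,\mu}^{1/2}\mO'^\dag\mO'\mB_{R,\mu}^{1/2}|\w}=\avg{\w|\mB_{R,\mu}\mO'^\dag\mO'|\w}.
\end{equation}
On the other hand, applying the tilde-untilde identity \eqref{eq:26} with $\tilde\mO_1=\tilde\mO'^\dag$ and $\tilde\mO_2=\tilde\mO'$ gives $\avg{\w|\tilde\mO'^\dag\tilde\mO'|\w}=\avg{\w|\mB_{R,\mu}\mO'^\dag\mO'|\w}$, so $\|U\mO'\mB_{R,\mu}^{1/2}\ket\w\|=\|\mO'\mB_{R,\mu}^{1/2}\ket\w\|$. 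Hence $U$ extends by continuity to an isometry on all of $\mH$.

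For surjectivity, by Lemma \ref{lm1} the set $\tilde\mA_R\ket\w$ is dense in $\mH$, and by Lemma \ref{isomorphism} the isomorphism $\td$ maps $\mA_R$ onto $\tilde\mA_R$. Thus the image $U(\mathcal D)=\{\tilde\mO'\ket\w:\tilde\mO'\in\tilde\mA_R\}$ is dense, and an isometry with dense range is unitary.

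Finally, I would verify $\tilde\mA_R=U\mA_R U^\dag$. For arbitrary $\mO,\mO'\in\mA_R$, acting on the dense domain $\mathcal D$,
\begin{equation}
U\mO U^\dag\,\td(\mO')\ket\w=U\mO\,\mO'\mB_{R,\mu}^{1/2}\ket\w=\td(\mO\mO')\ket\w=\td(\mO)\,\td(\mO')\ket\w,
\end{equation}
where I used that $\td$ is a homomorphism (Lemma \ref{isomorphism}). Since $\tilde\mA_R\ket\w$ is dense, this gives $U\mO U^\dag=\td(\mO)\in\tilde\mA_R$, so $U\mA_R U^\dag\subseteq\tilde\mA_R$. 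Conversely, given $\tilde\mO\in\tilde\mA_R$, set $\mO=\td^{-1}(\tilde\mO)\in\mA_R$; then $U\mO U^\dag=\td(\mO)=\tilde\mO$, proving the reverse inclusion.

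The main technical subtlety is the well-definedness argument: one must ensure that the defining formula does not depend on the polynomial representative and extends consistently to all of $\mA_R$, which is where the invertibility of $\mB_{R,\mu}$ and the separating property of $\ket\w$ are essential. Everything else is a direct consequence of the trace identity and the isomorphism established in Lemma \ref{isomorphism}.
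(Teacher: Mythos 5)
Your proof is correct and follows essentially the same route as the paper: define $U$ on the dense domain of vectors $\mO'\mB_{R,\mu}^{1/2}\ket\w$, verify that it is a bijective isometry using the tracial property of $\ket\w$ together with the tilde--untilde identity, extend by continuity, and transport the algebra through the homomorphism $\td$. You merely make explicit the well-definedness step (via the separating property of $\mB_{R,\mu}^{1/2}\ket\w$ for $\mA_R$) and the isometry computation that the paper delegates to ``previous discussions.''
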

\begin{proof}
    Notice that $\text{span}\{p(\tilde H_{R,i})\ket\w\}$ and $\text{span}\{p(H_{R,i})\mB^{\f{1}{2}}_{R,\mu}\ket\w\}$ are dense in $\mH$. The latter comes from the fact that for an algebra $\mA$ and a state $\ket{\varphi}$ we have $\overline{\mA\ket{\varphi}}=\overline{\mA''\ket{\varphi}}$. It is easy to see that $U$ is a bijective isometry from previous discussions, and is  therefore a unitary between two dense subsets of $\mH$, which can be extended to a unitary operator on $\mH$. It is then easy to verify that for an arbitrary polynomial $f$ we have $f(\tilde H_{R,i})=Uf(H_{R,i})U^\dagger.$\footnote{Strictly speaking, we prove this identity first on an dense subset of $\mH$ and use the fact that $f(H_{R,i})$, $f(\tilde H_{R,i})$ are both bounded operators to extend the identity to $\mH$.} Since taking weak limits commutes with unitary conjugation we immediately have $\tilde{\mA}_R=U\mA_R U^\dagger$.
\end{proof}
\begin{remark} 
    This lemma enhances the conclusion of Lemma~\ref{isomorphism} as it promotes the isomorphism to a unitary equivalence. Note that with $U$ we can write $\ket{\psi}=U\ket\w$. By definition of $U$ we can write this as $U\ket\w=U\mB_{R,\mu}^{-\f{1}{2}}\mB_{R,\mu}^{\f{1}{2}}\ket\w=\tilde{\mB}_{R,\mu}^{-\f{1}{2}}\ket\w$, which agrees with the choice of $\tilde V$ above.
\end{remark}
\noindent Since $\ket{\w}$ is the trace for the type II$_1$ algebra $\mA_R$, $\avg{\w|\cdot|\w}$ is a positive, faithful, normal and semifinite linear map for $\mA_R$. These properties are inherited by $\ket{\psi}$ because $\td$ is an isometry of $\mH$ induced by an isomorphism between $\tilde \mA_R$ and $\mA_R$.\footnote{One can follow the proof in \cite{Xu:2024hoc} to show this in a similar manner. Note that the spectrum of $\tilde H_{R,0}$ is bounded and the same as $H_{R,0}$, as we will show in Section \ref{sec:3.4}.} Since $\mA_R$ is a factor, it immediately follows that
\begin{theorem} \label{th4}
    $\tilde{\mA}_{R}$ is a type II$_{1}$ von Neumann factor with $\ket{\psi}=\tilde{\mB}_{R,\mu}^{-1/2}\ket{\w}$
as the unique tracial state up to normalization. 
\end{theorem}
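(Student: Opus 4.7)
The plan is to leverage Lemma~\ref{lm4} as the main engine: once one has a unitary $U:\mathcal H\to\mathcal H$ implementing $\tilde{\mathcal A}_R = U\mathcal A_R U^\dagger$, the entire type II$_1$ factor structure of $\mathcal A_R$, which is established in \cite{Xu:2024hoc}, is transported verbatim to $\tilde{\mathcal A}_R$, because being a factor and being of a given Murray--von Neumann type are invariants of unitary equivalence. So the first step I would take is simply to invoke this: since $\mathcal A_R$ is a type II$_1$ factor and $\tilde{\mathcal A}_R = U\mathcal A_R U^\dagger$, the algebra $\tilde{\mathcal A}_R$ is automatically a type II$_1$ factor.

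Next I would identify the tracial state. The tracial state on $\mathcal A_R$ is $\ket\omega$, i.e.\ $\tau(\mathcal O)=\avg{\w|\mO|\w}$ is the (unique, up to normalization) faithful, normal, semifinite tracial state. Pushing forward under $U$, the vector $\ket\psi\equiv U\ket\omega$ must satisfy $\avg{\psi|\tilde{\mO}_1\tilde{\mO}_2|\psi}=\avg{\psi|\tilde{\mO}_2\tilde{\mO}_1|\psi}$ for $\tilde{\mO}_i=U\mO_i U^\dagger\in\tilde{\mathcal A}_R$, and it is cyclic and separating for $\tilde{\mathcal A}_R$ since $\ket\omega$ is so for $\mathcal A_R$. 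To match the explicit form in the statement I would use the remark following Lemma~\ref{lm4}: by definition of $U$, $U\ket\omega = U\mB_{R,\mu}^{-1/2}\mB_{R,\mu}^{1/2}\ket\omega = \tilde{\mB}_{R,\mu}^{-1/2}\ket\omega$, reproducing $\ket\psi=\tilde{\mB}_{R,\mu}^{-1/2}\ket\omega$. A consistency check is that this matches the direct computation leading to \eqref{3.49}, where $\tilde V=\tilde{\mB}_{R,\mu}^{-1/2}$ was precisely the choice that enforced the cyclic property.

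For the uniqueness, I would quote the standard fact that on a type II$_1$ factor the faithful normal tracial state is unique up to a positive scalar; equivalently, any other tracial vector state must be proportional to $\ket\psi$ (up to a phase). Because $\ket\psi$ is cyclic and separating and implements a finite faithful normal trace $\avg{\psi|\cdot|\psi}$, it is the unique such vector, modulo normalization.

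The only step that requires any genuine work is verifying that Lemma~\ref{lm4} really does promote the algebraic isomorphism of Lemma~\ref{isomorphism} to a spatial (unitary) equivalence, but this is already handled in the proof of Lemma~\ref{lm4}; after that, Theorem~\ref{th4} is almost a formality. The mild subtlety I would be careful about is that $\tilde{\mathcal B}_{R,\mu}^{-1/2}$ must be shown to be a well-defined, bounded, positive, invertible operator on $\mathcal H$; this follows from Section~\ref{sec:3.4}'s claim that the spectrum of $\tilde H_{R,0}$ coincides with that of $H_{R,0}$, together with the bound \eqref{eq:18-1}, so that continuous functional calculus applies to $\tilde H_{R,0}$ exactly as it did to $H_{R,0}$ when we defined $\mathcal B_{R,\mu}$ in \eqref{eq:17-5}.
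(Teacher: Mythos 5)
Your proposal is correct and follows essentially the same route as the paper: both transport the type II$_1$ factor structure and the tracial vector from $\mA_R$ to $\tilde\mA_R$ via the isomorphism/unitary equivalence, and both identify $\ket{\psi}=U\ket{\w}=\tilde{\mB}_{R,\mu}^{-1/2}\ket{\w}$ using the remark after Lemma~\ref{lm4}. The only cosmetic difference is that the paper verifies factoriality by hand through the matrix-element identity \eqref{3.41} (showing any central element of $\tilde\mA_R$ maps to a central element of $\mA_R$), whereas you obtain it for free from invariance of factoriality and type under unitary conjugation — a legitimate shortcut given that Lemma~\ref{lm4} is already established.
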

\begin{proof}
    It is clear that $\tilde \mA_R$ is type II$_1$ and $\ket{\psi}$ defines a trace. Here we show it is a factor given $\mA_R$ is a factor. Suppose there is an operator $\tilde\mO\in \tilde \mA_R \cap \tilde \mA_R'$. It commutes with all operators $\tilde a\in \tilde \mA_R$, which by \eqref{3.41} and in terms of matrix elements means
    \be 
0=\avg{p_1(\tilde H_{R,i})|[\tilde a, \tilde \mO]|p_2(\tilde H_{R,i})}=\avg{p_1( H_{R,i})\mB_{\mu}|[a,\mO]|p_1(H_{R,i})}
    \ee
    Since $\td$ is an isomorphism, this means that $\mO$ commutes with all operators in $\mA_R$ and thus must be proportional to identity because $\mA_R$ is a factor. It follows that $\tilde\mO$ must be proportional to identity, and $\tilde \mA_R$ is a factor.
\end{proof}

In our view, this theorem is physically surprising because the algebra $\tilde \mA_R$ can be understood as a UV complete version of a single-sided black hole with an EoW brane behind the horizon. From an IR or JT gravity viewpoint, since there is only one asymptotic boundary, the algebra on the boundary should reconstruct all observables in the system and should be type I. The most significant consequence of the difference between these two types of algebras is as follows. If it were type I, the Hilbert space of a single-sided black hole would be factorized and we could isolate ourselves to this Hilbert space and there would be no nontrivial commutant for this algebra. In other words, there were no physical observables that cannot be reconstructed from the boundary. However, being a type II$_1$ algebra means that the Hilbert space of a single-sided black hole (at least a UV complete version) does not factorize and cannot be treated as an isolated system. Instead, it has infinite amount of entanglement with a counterpart, on which there are physical observables that cannot be reconstructed from the single-sided boundary. These unexpected physical observables constitute the commutant of the single-sided algebra $\tilde \mA_R$.

It is definitely crucial to understand the commutant of $\tilde \mA_R$ and uncover the secrets granted by the UV data of quantum gravity, if we regard DSSYK as a toy model for the latter. A very powerful tool to discuss the commutant is the Tomita-Takesaki theory \cite{takesaki2006tomita}. In Theorem \ref{th4}, we have shown that $\ket{\psi}$ is a tracial state for $\tilde \mA_R$. Moreover, it is cyclic and separating for $\tilde \mA_R$, because $\tilde \mB _{R,\mu} ^{-1/2}$ is invertible and $\ket{\w}$ is cyclic and separating for $\tilde \mA_R$ as shown in Lemma \ref{lm1} and \ref{lm3}.

Since $\ket{\psi}$ is cyclic and separating for $\tilde \mA_R$, by Tomita-Takesaki theory, we will have an antilinear Tomita operator $\tilde S_\psi$ for $\tilde{\mA}_{R}$
with respect to $\ket{\psi}$
\begin{equation}
\tilde S_{\psi}\tilde{\mO}_{R}\ket{\psi}=\tilde{\mO}_{R}^{\dag}\ket{\psi}\label{eq:31}
\end{equation}
As the generators $\tilde{H}_{R,i}$ of $\tilde{\mA}_{R}$ are hermitian
operators, if $\tilde{\mO}_{R}=\tilde{H}_{R,i_{1}}\cdots\tilde{H}_{R,i_{k}}$
is a string of generators, the act of $\tilde S_{\psi}$ is simply reverse
the ordering of $i_{1}\cdots i_{k}$. More precisely, we can define a new binary string basis of the Hilbert space by acting $\tilde{H}_{R,i}$ on $\ket{\psi}$
\begin{equation}
\kit{i_{1}\cdots i_{k}}=\tilde{H}_{R,i_{k}}\cdots\tilde{H}_{R,i_{1}}\ket{\psi},\quad \kit{0^0}\equiv \ket{\psi}
\end{equation}
and $\tilde S_\psi$ acts as
\be
\tilde S_\psi\kit{i_{1}\cdots i_{k}}=\kit{i_{k}\cdots i_{1}},\quad \tilde S_\psi\ket{\psi}=\ket{\psi}
\ee
In this basis, the action of operators $\tilde{H}_{R,i}$ is simple
\begin{equation}
\tilde{H}_{R,i}\kit{i_{1}\cdots i_{k}}=\kit{i_{1}\cdots i_{k}i}
\end{equation}
The bra states are defined canonically as
\be 
\bri{i_1\cdots i_k}=\kit{i_1\cdots i_k}^\dag= \bra{\psi} \tilde{H}_{R,i_{1}}\cdots\tilde{H}_{R,i_{k}}
\ee

The Tomita operator $\tilde S_\psi$  has a unique polar decomposition
of $S_{\psi}$ 
\begin{equation}
\tilde S_{\psi}=\tilde J_{\psi}\tilde \D_{\psi}^{1/2},\quad \tilde J^2_\psi=1,\quad \tilde \D_\psi=\tilde S_\psi^\dag \tilde S_\psi \label{3.60}
\end{equation}
By above definition, we can show the matrix elements of $\tilde S^\dag_\psi \tilde S_\psi$ as
\begin{align} 
&\avvg{i_1\cdots i_k|\tilde S^\dag_\psi \tilde S_\psi|j_1\cdots j_s}= \avvg{i_1\cdots i_k|\tilde S^\dag_\psi |j_s\cdots j_1}=\avvg{j_s\cdots j_1|\tilde S_\psi|i_1\cdots i_k} \nn\\
=&\avvg{j_s\cdots j_1|i_k\cdots i_1}=\avg{\psi|\tilde{H}_{R,j_{s}}\cdots\tilde{H}_{R,j_{1}}\tilde{H}_{R,i_{1}}\cdots\tilde{H}_{R,i_{k}}|\psi} \nn\\
=&\avg{\psi|\tilde{H}_{R,i_{1}}\cdots\tilde{H}_{R,i_{k}}\tilde{H}_{R,j_{s}}\cdots\tilde{H}_{R,j_{1}}|\psi}=\avvg{i_1\cdots i_k|j_1\cdots j_s} \label{3.57}
\end{align}
where in the third line we used the tracial property of $\ket{\psi}$ to switch the order. Since  $\kit{i_{1}\cdots i_{k}}$ are dense in $\mH$, it follows that 
\be 
\tilde \D_\psi=\I,\quad \tilde S_\psi=\tilde J_\psi
\ee

As is well-known in Tomita-Takesaki theory, the commutant of $\tilde{\mA}_{R}$
is given by $\tilde J_{\psi}\tilde{A}_{R}\tilde J_{\psi}$, which allows us to formally
define a ``left-side" algebra 
\begin{equation}
\tilde{H}_{L,i}=\tilde J_{\psi}\tilde{H}_{R,i}\tilde J_{\psi},\quad\tilde{\mA}_{L}\equiv\{\tilde{H}_{L,i}\}''=\tilde{\mA}_{R}' \label{3.62}
\end{equation}
Since $\tilde J_{\psi}=\tilde J_{\psi}^{-1}=\tilde J_{\psi}^{\dag}$, $\tilde{H}_{L,i}$
are hermitian operators. Using the $\kit{i_{1}\cdots i_{k}}$ basis,
the tilde left operators act simply as
\begin{equation}
\tilde{H}_{L,i}\kit{i_{1}\cdots i_{k}}=\kit{ii_{1}\cdots i_{k}},\quad \tilde H_{L,i}\ket{\psi}=\tilde H_{R,i}\ket{\psi}\label{3.63}
\end{equation}
Like the isomorphism $\td$ between $\mA_R$ and $\tilde \mA_R$, it also extends to an isomorphism between $\mA_L$ and $\tilde \mA_L$. 
\begin{lemma}
    $\td$ can be extended to an isomorphism between $\mA_L$ to $\tilde\mA_L$.
\end{lemma}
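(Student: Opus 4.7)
The plan is to extend $\td$ to $\mA_L$ by leveraging the unitary equivalence already established in Lemma~\ref{lm4}. Specifically, I would define
\begin{equation}
\td(\mO_L) := U \mO_L U^\dagger \quad \text{for } \mO_L \in \mA_L,
\end{equation}
where $U$ is the unitary of Lemma~\ref{lm4} implementing $\tilde{\mA}_R = U \mA_R U^\dagger$. This is the natural extension because conjugation by $U$ is a single global recipe that restricts correctly on the right side: Lemma~\ref{lm4} already gives $\tilde H_{R,i} = U H_{R,i} U^\dagger$, and since both the original $\td$ (defined via weak closure of polynomials in $H_{R,i}$) and conjugation by $U$ (being weak-operator continuous) agree on this generating set, they agree on all of $\mA_R$. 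Thus the extension is unambiguous and compatible with the original definition.

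The first key step is to verify that this map sends $\mA_L$ into $\tilde{\mA}_L$. Since conjugation by a unitary commutes with the commutant operation,
\begin{equation}
U \mA_L U^\dagger = U \mA_R' U^\dagger = (U \mA_R U^\dagger)' = \tilde{\mA}_R' = \tilde{\mA}_L,
\end{equation}
where the first equality uses the two-sided DSSYK identity $\mA_L = \mA_R'$, and the last identifies $\tilde{\mA}_L = \tilde J_\psi \tilde{\mA}_R \tilde J_\psi$ from \eqref{3.62} with $\tilde{\mA}_R'$ via Tomita-Takesaki theory applied to the cyclic and separating state $\ket{\psi}$ for the factor $\tilde{\mA}_R$ (cyclicity and separating-ness follow from Lemmas~\ref{lm1} and~\ref{lm3} together with invertibility of $\tilde{\mB}_{R,\mu}^{-1/2}$, as already noted in the proof of Theorem~\ref{th4}). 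The isomorphism property is then automatic: unitary conjugation is a normal $*$-isomorphism of $B(\mH)$ that, by the identity above, bijects $\mA_L$ onto $\tilde{\mA}_L$, and it manifestly preserves sums, products, adjoints, and weak limits.

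The main conceptual obstacle is the identification $\tilde{\mA}_L = \tilde{\mA}_R'$, which rests on having $\ket{\psi}$ cyclic and separating for $\tilde{\mA}_R$; once that is granted, the rest is essentially formal manipulation of commutants under unitary conjugation. An alternative, more hands-on route that avoids invoking $U$ would be to mirror the original polynomial-plus-weak-closure construction on the left by setting $\tilde H_{L,i} := \tilde J_\psi \tilde H_{R,i} \tilde J_\psi$ and establishing a left-sided analog of the matrix-element identity \eqref{3.41}, relating amplitudes of polynomials in $\tilde H_{L,i}$ against states $\tilde{\mA}_R \ket{\w}$ to amplitudes of polynomials in $H_{L,i}$ dressed by $\mB_{R,\mu}$. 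This direct approach requires more bookkeeping but would reproduce the same isomorphism; I expect the cleanest presentation to be the unitary-conjugation argument above.
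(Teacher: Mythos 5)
Your proof is correct, but it takes a genuinely different route from the paper's. The paper extends $\td$ at the level of generators, $H_{L,i}\mapsto \tilde H_{L,i}$, proves the correlation-function identity $\avg{\psi|\tilde H_{L,i_1}\cdots \tilde H_{L,i_k}|\psi}=\avg{\w|H_{L,i_1}\cdots H_{L,i_k}|\w}$ using the tracial property of $\ket{\psi}$ for $\tilde\mA_R$ together with \eqref{3.49}, and then re-runs the same matrix-element/weak-limit argument that established the right-sided isomorphism. You instead package everything into conjugation by the unitary $U$ of Lemma~\ref{lm4} and use the covariance of commutants, $U\mA_R'U^\dagger=(U\mA_R U^\dagger)'$, plus the already-established identities $\mA_L=\mA_R'$ and $\tilde\mA_L=\tilde\mA_R'$. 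This is more economical and makes the normality and bijectivity of the extension manifest for free; the paper's route is more self-contained and produces the explicit left-sided polynomial identity \eqref{eq:17-2-2}, which is then reused in Section~\ref{sec:4.4}. One small point you leave implicit: for your extension to coincide with the paper's generator-level definition (which is what is actually used later), you need $UH_{L,i}U^\dagger=\tilde H_{L,i}$, not merely that $U\mA_LU^\dagger=\tilde\mA_L$ as sets. This does hold — since $U\ket{\w}=\ket{\psi}$ and $\tilde\mA_R=U\mA_RU^\dagger$, uniqueness of the Tomita data gives $\tilde J_\psi=UJU^\dagger$, whence $UH_{L,i}U^\dagger=UJH_{R,i}JU^\dagger=\tilde J_\psi\tilde H_{R,i}\tilde J_\psi=\tilde H_{L,i}$ — but it is worth a sentence, since otherwise you have only shown that \emph{some} isomorphic extension exists rather than that the natural one $H_{L,i}\mapsto\tilde H_{L,i}$ is it.
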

\begin{proof}
    The extension is defined in an obvious way, $\td:H_{L,i}\mapsto \tilde H_{L,i}$. By definition, we have an identity for the left-side algebras
    \begin{align}
    &\avg{\psi|\tilde H_{L,i_1}\cdots\tilde H_{L,i_k}|\psi}=\avg{\psi|\tilde H_{R,i_k}\cdots\tilde H_{R,i_1}|\psi}\nn\\
    =&\avg{\w|H_{R,i_k}\cdots H_{R,i_1}|\w}=\avg{\w|H_{L,i_1}\cdots H_{L,i_k}|\w}\label{eq:17-2-2}
\end{align}
    Then all discussions about the isomorphism between $\mA_R$ and $\tilde \mA_R$ apply to the left-side counterpart.
\end{proof}

Though written in the new basis $\kit{i_{1}\cdots i_{k}}_{\psi}$,
the commutant generators $\tilde{H}_{L,i}$ look simple, it is quite
nontrivial to construct them using the generators $H_{L,i}$ and $H_{R,i}$ in the algebras $\mA_{L,R}$ of the two-sided DSSYK. The essential difficulty is that $\tilde J_{\psi}=\tilde S_{\psi}$ in (\ref{eq:31}) acts very nontrivially on the original binary string basis $\ket{i_{1}\cdots i_{k}}$
as it reverses the order of the string of $\tilde{H}_{R,i}$ before
$\tilde \mB_{R,\mu} ^{-1/2}$ though the latter is a functional of $\tilde{H}_{R,0}$. It is even far from obvious that we can use some simple operators, which in a sense have a simple geometric meaning, e.g. $q^{N_0} r^{N_1}$ in $\tilde H_{R,0}$, to construct them in an explicit form. Unlike the right tilde algebra generators $\tilde{H}_{R,0}=H_{R,0}+\mu r^{N_{1}}q^{N_{0}}$ and $\tilde{H}_{R,1}=H_{R,1}$ with clear geometric interpretation as we have discussed in Section \ref{sec:3.1}, the left tilde algebra $\tilde \mA_L$, being their commutant, does not have simple geometric meaning in the EoW brane chord diagram spacetime. This explains why it is unexpected from the pure geometric argument of JT gravity with an EOW brane. In order to understand the commutant $\tilde\mA_L$ better, we will revisit it later in Section \ref{sec:4.1} from a different perspective.

Before we end this section, we would like to briefly mention two more quantities characterizing the EoW brane. The first is the relative modular operator $\tilde \D_{\psi|\w}$. We define the antilinear relative Tomita operator $\tilde S_{\psi|\w}$ as 
\be 
\tilde S_{\psi|\w} \tilde \mO \ket{\psi}=\tilde \mO^\dag \ket{\w},\quad \forall \tilde\mO\in\tilde\mA_R
\ee
which has a similar polar decomposition
\begin{equation}
\tilde S_{\psi|\w}=\tilde J_{\psi|\w}\tilde \D_{\psi|\w}^{1/2},\quad \tilde J^2_{\psi|\w}=\I,\quad \tilde \D_{\psi|\w}=\tilde S_{\psi|\w}^\dag \tilde S_{\psi|\w}
\end{equation}
While the relative Tomita operator $\tilde S_{\psi|\w}$ does not have a simple form, the relative modular operator can be easily computed as follows.
\begin{lemma}
    The relative modular operator $\tilde\D_{\psi|\w}=\tilde{\mB}_{R,\mu}$.
\end{lemma}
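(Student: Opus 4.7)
The plan is to identify $\tilde\D_{\psi|\w}$ by comparing matrix elements on the dense subspace $\tilde\mA_R\ket{\psi}\subset \mH$ with those of $\tilde{\mB}_{R,\mu}$. First I would extract matrix elements of $\tilde\D_{\psi|\w}=\tilde S_{\psi|\w}^\dag \tilde S_{\psi|\w}$ using the standard formula for an antilinear operator, $\avg{\phi|\tilde S_{\psi|\w}^\dag\tilde S_{\psi|\w}|\chi}=\avg{\tilde S_{\psi|\w}\chi|\tilde S_{\psi|\w}\phi}$, combined with the defining relation $\tilde S_{\psi|\w}\tilde\mO\ket{\psi}=\tilde\mO^\dag\ket{\w}$. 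This immediately gives
\be
\avg{\tilde\mO_1\psi|\tilde\D_{\psi|\w}|\tilde\mO_2\psi}=\avg{\tilde\mO_2^\dag\w|\tilde\mO_1^\dag\w}=\avg{\w|\tilde\mO_2\tilde\mO_1^\dag|\w}.
\ee
Invoking the key polynomial identity \eqref{3.41}, which states $\avg{\w|\tilde X|\w}=\avg{\w|\mB_{R,\mu}X|\w}$ whenever $\tilde X\in\tilde\mA_R$ has untilde image $X\in\mA_R$, this reduces to $\avg{\w|\mB_{R,\mu}\mO_2\mO_1^\dag|\w}$.

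Next I would evaluate $\avg{\tilde\mO_1\psi|\tilde{\mB}_{R,\mu}|\tilde\mO_2\psi}$ directly. Writing $\ket{\psi}=\tilde{\mB}_{R,\mu}^{-1/2}\ket{\w}$ and translating the entire tilde expectation value to untilde via \eqref{3.41}, one obtains
\be
\avg{\tilde\mO_1\psi|\tilde{\mB}_{R,\mu}|\tilde\mO_2\psi}=\avg{\w|\mB_{R,\mu}^{1/2}\mO_1^\dag \mB_{R,\mu}\mO_2\mB_{R,\mu}^{-1/2}|\w}.
\ee
Using the tracial property \eqref{2.40} of $\ket{\w}$ for $\mA_R$ to cyclically move $\mB_{R,\mu}^{-1/2}$ past $\mB_{R,\mu}^{1/2}$ leaves $\avg{\w|\mO_1^\dag\mB_{R,\mu}\mO_2|\w}$, and a second cyclic permutation brings this to $\avg{\w|\mB_{R,\mu}\mO_2\mO_1^\dag|\w}$, exactly matching what was obtained above for $\tilde\D_{\psi|\w}$.

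Since the matrix elements of $\tilde\D_{\psi|\w}$ and $\tilde{\mB}_{R,\mu}$ agree on the dense subspace $\tilde\mA_R\ket{\psi}$, and $\tilde{\mB}_{R,\mu}$ is a bounded positive operator under the standing assumption \eqref{eq:18-1}, the two operators coincide. The main bookkeeping subtlety I anticipate is keeping the antilinearity of $\tilde S_{\psi|\w}$ straight when forming $\tilde S_{\psi|\w}^\dag \tilde S_{\psi|\w}$, and ensuring each application of \eqref{3.41} acts on a genuine element of $\tilde\mA_R$ (the relevant tilde strings together with the functional-calculus image $\tilde{\mB}_{R,\mu}^{\pm 1/2}$ all qualify). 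Boundedness of $\tilde{\mB}_{R,\mu}$ then extends the equality from the dense set to all of $\mH$.
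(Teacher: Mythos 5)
Your proposal is correct and follows essentially the same route as the paper: both compute matrix elements of $\tilde S_{\psi|\w}^\dag \tilde S_{\psi|\w}$ between vectors of the form $\tilde\mO\ket{\psi}$, translate to untilde expectation values in $\ket{\w}$ via the isomorphism identity, and use the tracial property of $\ket{\w}$ for $\mA_R$ to match against $\tilde{\mB}_{R,\mu}$. The only cosmetic difference is that the paper pulls the final untilde expression back through the isometry $\avg{\psi|\tilde\mO|\psi}=\avg{\w|\mO|\w}$ to land on $\avvg{\cdot|\tilde\mB_{R,\mu}|\cdot}$, whereas you evaluate $\avg{\tilde\mO_1\psi|\tilde{\mB}_{R,\mu}|\tilde\mO_2\psi}$ separately and meet in the middle.
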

\begin{proof}
    Similar to \eqref{3.57}, we have 
    \begin{align} 
&\avvg{i_1\cdots i_k|\tilde S^\dag_{\psi|\w} \tilde S_{\psi|\w}|j_1\cdots j_s}= \bri{i_1\cdots i_k}\tilde S^\dag_{\psi|\w} \tilde H_{R,j_1}\cdots \tilde H_{R,j_s}\ket{\w} \nn\\
=&\bra{\w}\tilde H_{R,j_s}\cdots \tilde H_{R,j_1}\tilde S_{\psi|\w}\kit{i_1\cdots i_k}=\avg{\w|\tilde H_{R,j_s}\cdots \tilde H_{R,j_1} \tilde H_{R,i_1}\cdots \tilde H_{R,i_k}|\w}  \nn\\
=&\avg{\w|\mB_{R,\mu} H_{R,j_s}\cdots H_{R,j_1}  H_{R,i_1}\cdots  H_{R,i_k}|\w} =\avg{\w|H_{R,i_1}\cdots  H_{R,i_k}\mB_{R,\mu} H_{R,j_s}\cdots H_{R,j_1}  |\w}  \nn\\
=&\avg{\psi|\tilde H_{R,i_1}\cdots  \tilde H_{R,i_k}\tilde\mB_{R,\mu}\tilde H_{R,j_s}\cdots \tilde H_{R,j_1}|\psi}=\avvg{i_1\cdots i_k|\tilde\mB_{R,\mu}|j_1\cdots j_s}
\end{align}
where in the third line we used \eqref{eq:17-2} and tracial property of $\ket{\w}$ for $\mA_R$, and in the last line we used isometry \eqref{3.49}.
\end{proof}

Using the relative modular operator, we can define the relative entropy and relative modular flow, which characterizes how different the EoW brane is from the vacuum in the view of $\tilde \mA_R$. We will leave this investigation as a future work.

Besides the state $\ket{\psi}$, we have introduced the half EoW brane state $\ket{\mB_\mu^{1/2}}$ in Section \ref{sec:3.2} to characterize the EoW brane in terms of the original right side algebra $\mA_R$. Its modular operator $\D_{\mB_\mu ^{1/2}}$ will be useful in Section \ref{sec:4.3}. The Tomita operator $S_{\mB_\mu ^{1/2}}$ is defined as
\be 
S_{\mB_\mu ^{1/2}}\mO \ket{\mB_\mu ^{1/2}}= \mO^\dag \ket{\mB_\mu ^{1/2}},\quad \forall \mO\in\mA_R
\ee
which has a similar polar decomposition to \eqref{3.60} and leads to
\begin{lemma} \label{lemma8}
    The modular operator $\D_{\mB_\mu ^{1/2}}=\mB_{R,\mu}\mB_{L,\mu}^{-1}$.
\end{lemma}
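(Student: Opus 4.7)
The plan is to compute the matrix elements of $S^\dag_{\mB_\mu^{1/2}} S_{\mB_\mu^{1/2}}$ on the dense subspace $\mA_R\ket{\mB_\mu^{1/2}}$ and check they coincide with those of $\mB_{R,\mu}\mB_{L,\mu}^{-1}$, in direct analogy with the matrix-element argument of \eqref{3.57} and the proof of the preceding lemma for $\tilde\D_{\psi|\w}$. Two preliminary observations drive the computation. First, since $\mB_{R,\mu}^{1/2}$ is a bounded invertible element of $\mA_R$ and $\ket{\w}$ is cyclic and separating for $\mA_R$, so is $\ket{\mB_\mu^{1/2}} = \mB_{R,\mu}^{1/2}\ket{\w}$, which guarantees that $S_{\mB_\mu^{1/2}}$ is well defined. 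Second, since $J\ket{\w}=\ket{\w}$ and $JH_{R,0}J = H_{L,0}$ give $J\mB_{R,\mu}J = \mB_{L,\mu}$, one has $\mB_{R,\mu}^{1/2}\ket{\w} = \mB_{L,\mu}^{1/2}\ket{\w}$, and consequently $\mB_{L,\mu}^{-1}\ket{\w} = \mB_{R,\mu}^{-1}\ket{\w}$ (both sides hermitian, so the same identity holds for the bra).

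For $\mO_1,\mO_2\in\mA_R$, I would use antilinearity of $S_{\mB_\mu^{1/2}}$ in the form $\avg{\phi|S^\dag_{\mB_\mu^{1/2}} S_{\mB_\mu^{1/2}}\psi} = \overline{\avg{S_{\mB_\mu^{1/2}}\phi|S_{\mB_\mu^{1/2}}\psi}}$, together with the defining action $S_{\mB_\mu^{1/2}}\mO\ket{\mB_\mu^{1/2}} = \mO^\dag\ket{\mB_\mu^{1/2}}$, to reduce the left-hand side to $\overline{\avg{\w|\mB_{R,\mu}^{1/2}\mO_1\mO_2^\dag \mB_{R,\mu}^{1/2}|\w}}$; the tracial property of $\ket{\w}$ for $\mA_R$ then collapses this to $\avg{\w|\mB_{R,\mu}\mO_2\mO_1^\dag|\w}$. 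For the right-hand side, I would expand $\avg{\mO_1\mB_\mu^{1/2}|\mB_{R,\mu}\mB_{L,\mu}^{-1}|\mO_2\mB_\mu^{1/2}}$, commute $\mB_{L,\mu}^{-1}\in\mA_R'$ past all elements of $\mA_R$, replace $\bra{\w}\mB_{L,\mu}^{-1}$ by $\bra{\w}\mB_{R,\mu}^{-1}$, and apply trace cyclicity on $\mA_R$; this also collapses to $\avg{\w|\mB_{R,\mu}\mO_2\mO_1^\dag|\w}$. Agreement of matrix elements on a dense subspace together with self-adjointness of both operators then yields the claimed identity.

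The computation is pure algebraic bookkeeping, and no obstacle of real depth is expected. The only conceptually nontrivial ingredient is the $J$-invariance identity $\mB_{R,\mu}^{1/2}\ket{\w} = \mB_{L,\mu}^{1/2}\ket{\w}$, which is precisely the mechanism converting the right-side density matrix $\mB_{R,\mu}$ into a commutant-side density matrix $\mB_{L,\mu}$ in the denominator of $\D$, matching the general expectation that $\D_\psi = \rho_{\mA_R}(\psi)\,\rho_{\mA_R'}(\psi)^{-1}$. Bounded invertibility of $\mB_{L,\mu}$ follows from \eqref{eq:18-1} by the same continuous functional calculus argument as for $\mB_{R,\mu}$, and the spectrum of $H_{L,0}$ equals that of $H_{R,0}$ by the unitary $J$, so no additional domain analysis is required.
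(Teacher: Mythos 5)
Your proposal is correct and follows essentially the same route as the paper: it computes matrix elements of $S^\dag_{\mB_\mu^{1/2}}S_{\mB_\mu^{1/2}}$ on the dense set $\mA_R\ket{\mB_\mu^{1/2}}$, reduces them via the defining action of the Tomita operator, the tracial property of $\ket{\w}$ for $\mA_R$, and the identification $\mB_{L,\mu}^{-1}\ket{\w}=\mB_{R,\mu}^{-1}\ket{\w}$ (which the paper obtains from $H_{R,0}\ket{\w}=H_{L,0}\ket{\w}$ rather than via $J$-conjugation, but this is the same fact). The only cosmetic difference is that you handle the antilinear adjoint through $\avg{\phi|S^\dag S\psi}=\overline{\avg{S\phi|S\psi}}$ while the paper moves $S^\dag$ into the bra; both computations collapse each side to $\avg{\w|\mB_{R,\mu}\mO_2\mO_1^\dag|\w}$.
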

\begin{proof}
Checking matrix elements, we have 
    \begin{align} 
&\avg{i_1\cdots i_k|S^\dag_{\mB_\mu ^{1/2}} S_{\mB_\mu ^{1/2}}|j_1\cdots j_s}= \avg{i_1\cdots i_k|S^\dag_{\mB_\mu ^{1/2}} \mB_{R,\mu}^{-1/2} H_{R,j_1}\cdots H_{R,j_s}|\mB_\mu^{1/2}} \nn\\
=&\avg{\mB_\mu^{1/2}|H_{R,j_s}\cdots H_{R,j_1} \mB_{R,\mu}^{-1/2} S_{\mB_\mu ^{1/2}}|i_1\cdots i_k}=\avg{\mB_\mu^{1/2}|H_{R,j_s}\cdots H_{R,j_1} \mB_{R,\mu}^{-1} H_{R,i_1}\cdots H_{R,i_k}|\mB_\mu^{1/2}}  \nn\\
=&\avg{\w|\mB_{R,\mu}^{-1} H_{R,i_1}\cdots H_{R,i_k}\mB_{R,\mu}H_{R,j_s}\cdots H_{R,j_1} |\w}  =\avg{i_1\cdots i_k|\mB_{R,\mu} \mB_{L,\mu} ^{-1}|j_1\cdots j_s} \label{4.16}
\end{align}
where in the last line we used the tracial property \eqref{2.40} of $\ket{\w}$ and $H_{R,0}\ket{\w}=H_{L,0}\ket{\w}$.
\end{proof}

\subsection{EoW brane as a screening matter-brane: full spectrum of DSSYK} \label{sec:3.4}

From the JT limit of the EoW brane, whose trajectory is a geodesic in the AdS$_2$ spacetime \cite{Gao:2021uro}, we may naturally regard the EoW brane as a kind of heavy particle. In this section, we will make this idea explicit in DSSYK and construct a family of states called matter-brane states, which are generalizations of the $q$-coherent state with a fixed number of matter chords dressed with gravity chords in a coherent manner. These states have an equivalent description as EoW branes if they are probed by the gravitational Hamiltonian $H_{R,0}$, but can be distinguished by probes of matter operators $H_{R,1}$. In particular, We can regard the $q$-coherent state of the EOW brane as a spacial case of a matter-brane state with zero matter chord. 

A significant byproduct of the construction of matter-brane states is that  we can use this formalism to solve the full spectrum of $\tilde H_{R,0}$ in $\mH$ with arbitrary numbers of matter chords. It follows that this construction can be blindly applied to DSSYK and gives the explicit spectrum of the latter. It has been attempted in \cite{Lin:2023trc,Xu:2024hoc} to solve the spectrum of DSSYK in one- and two-matter subspaces, but their methods intrinsically treat states as two-sided, and become quite involved in higher matter subspaces, where no explicit construction was written down.\footnote{Along this two-sided treatment, some recent progress was made in \cite{Aguilar-Gutierrez:2025pqp,Aguilar-Gutierrez:2025hty,Aguilar-Gutierrez:2025mxf}.}  Instead, the construction in our method is intrinsically one-sided and turns out to be much simpler than \cite{Lin:2023trc,Xu:2024hoc} and can be easily generalized to multi-matter subspaces.

Let us start with the one-matter states. All such states are spanned
by the basis $\ket{0^{n}10^{m}}$ for all $m,n\in\N$. For an annihilation
operator $a_{R,0}$ acting on this basis, we will have 
\begin{equation}
a_{R,0}\ket{0^{n}10^{m}}=[m]_{q}\ket{0^{n}10^{m-1}}+r[n]_{q}q^{m}\ket{0^{n-1}10^{m}}\label{eq:37}
\end{equation}
Let us imagine the matter chord is a kind of ``EOW brane'' that
separates the $n$ 0-chords behind it and $m$ 0-chords in front of
it. The first term in (\ref{eq:37}) is an ordinary annihilation operation
acting on the 0-matter subspace spanned by $\ket{0^{m}}$. The second
term in (\ref{eq:37}) is proportional to $q^{m}$, which measures
the chord number in front of matter chord. However, this naive ``EOW
brane'' understanding is not quite right because the second term
in (\ref{eq:37}) is not the same as the additional term $\propto q^{N_{0}}r^{N_{1}}$
(here $N_{1}=0$) of $\tilde{a}_{R,0}$ in \eqref{3.22} due to the additional $n$-dependent
coefficient $r[n]_{q}$. Nevertheless, from the construction of $\ket{\mB_{\a}}$,
it is easy to fix this issue by replacing $\ket{0^{n}}$ by a $q$-coherent
state $\ket{\mB_{\a_{1}}}$. Let us define\footnote{In this paper, we have abused the notation of $\otimes$ for the coproduct defined in \cite{Lin:2023trc}. It literally means mapping to a new state labeled by the joining of two binary strings. }  
\begin{equation}
\ket{(\a_{1});0^{m}}=\ket{\mB_{\a_{1}}10^{m}}=\ket{\mB_{\a_1}1}\otimes \ket{0^m}\equiv \sum_{n=0}^{\infty}\f{\a_{1}^{n}}{[n]_{q}!}\ket{0^{n}10^{m}} \label{3.68}
\end{equation}
It is straightforward to check
\begin{align}
a_{R,0}\ket{(\a_{1});0^{m}} &=\sum_{n=0}^{\infty}\f{\a_{1}^{n}}{[n]_{q}!}\left([m]_{q}\ket{0^{n}10^{m-1}}+r[n]_{q}q^{m}\ket{0^{n-1}10^{m}}\right)\nonumber \\
 & =[m]_{q}\ket{(\a_{1});0^{m-1}}+\a_{1}rq^{m}\ket{(\a_{1});0^{m}}\label{eq:40}
\end{align}
which implies that the effective EOW parameter is $\mu=\a_{1}r$. 

Writing in this $q$-coherent basis, we simplifies the 0-chord configurations
behind the matter chord in the one-matter subspace as an effective EOW
brane with an ``zero-matter" subspace. Since $\ket{\mB_{\a}}$ is an
overcomplete basis for zero-matter subspace as seen in \eqref{eq:11}, the states $\ket{(\a_{1});0^{m}}$
is also overcomplete for the one-matter subspace. The original right Hamiltonian
$H_{R,0}$ acting on this basis is simple
\begin{equation}
H_{R,0}\ket{(\a_{1});0^{m}}=\a_{1}rq^{m}\ket{(\a_{1});0^{m}}+[m]_{q}\ket{(\a_{1});0^{m-1}}+\ket{(\a_{1});0^{m+1}}
\end{equation}
which is equivalent to an EOW brane Hamiltonian $\tilde{H}_{R,0}=a_{R,0}^{\dag}+\tilde a_{R,0}$ with $N_{1}=0$ and $\mu=\a_1 r$. As we explained in Section \ref{sec:3.1}, the eigenfunction
of $\tilde{H}_{R,0}$ in zero-matter subspace is the big $q$-Hermite
polynomial and the eigenvalue is parameterized by $E(\t)$. Here for $H_{R,0}$,
we just need to simply replace $\mu\ra\a_{1}r$. In particular, the
spectrum of $H_{R,0}$ in the one-matter subspace is in the range of
$[-2/\sqrt{1-q},2/\sqrt{1-q}]$, which is the same as the zero-matter
subspace. Note that we have implicitly analytically continued the real parameter $\mu$ of the EoW brane to a complex parameter $\a$ with constraint \eqref{eq:18-1}. One can show that this analytic continuation is legitimate and all formula from \eqref{3.7} to \eqref{3.11} are valid.\footnote{A simple way to see this is by rewriting big $q$-Hermite polynomial in terms of $q$-Hypergeometric functions, which is analytic in parameters and arguments in appropriate ranges \cite{gasper2011basic}.}

It is clear from this construction that the EoW brane can understood as a matter chord $q$-coherently dressed with gravity chords behind it. Physically speaking, this is microscopically how the gravitating EoW brane is generated from the matter. As we will see shortly, by going to higher-matter subspaces, the gravitational dressing has a screening effect in a sense that the configurations behind the EoW brane formed by the rightmost matter is invisible from the gravitational probe on right boundary.

Using this technique, we can easily generalize it to higher-matter
subspaces. Let us consider the two-matter subspace, which we expand in the following basis
\be 
\ket{\mB_{\a_{2}}1\mB_{\a_{1}}10^{m}}=\ket{\mB_{\a_2}1}\otimes\ket{(\a_1);0^m}\equiv \sum_{n_{1,2}=0}^{\infty}\f{\a_{2}^{n_{2}}\a_{1}^{n_{1}}}{[n_{2}]_{q}![n_{1}]_{q}!}\ket{0^{n_{2}}10^{n_{1}}10^{m}} \label{3.71}
\ee
Acting $a_{R,0}$ on this state, we have 
\begin{align}
 & a_{R,0}\ket{\mB_{\a_{2}}1\mB_{\a_{1}}10^{m}}=\sum_{n_{1,2}=0}^{\infty}\f{\a_{2}^{n_{2}}\a_{1}^{n_{1}}}{[n_{2}]_{q}![n_{1}]_{q}!}\left([m]_{q}\ket{0^{n_{2}}10^{n_{1}}10^{m-1}}\right. \nn\\
 & \left.+r[n_{1}]_{q}q^{m}\ket{0^{n_{2}}10^{n_1-1}0^{m}}+r^{2}[n_{2}]_{q}q^{m+n_{1}}\ket{0^{n_{2}-1}1 0^{n_1}10^{m}} \right)\nonumber \\
= & [m]_{q}\ket{\mB_{\a_{2}}1\mB_{\a_{1}}10^{m-1}}+\a_{1}rq^{m}\ket{\mB_{\a_{2}}1\mB_{\a_{1}}10^{m}}+\a_{2}r^{2}q^{m}\ket{\mB_{\a_{2}}1\mB_{q\a_{1}}10^{m}}\label{eq:42}
\end{align}
Here we see that $a_{R,0}$ acts on $\mB_{\a_{1}}$ as if on an EOW
brane but produces an additional term with shift $\a_{1}\ra q\a_{1}$
acting on $\mB_{\a_{2}}$. To solve this issue, we need to consider
the linear combination
\begin{equation}
\ket{(\a_{2},\a_{1});0^{m}}\equiv\sum_{n=0}^{\infty}c_{n}\ket{\mB_{\a_{2}}1\mB_{q^{n}\a_{1}}10^{m}}\label{eq:43}
\end{equation}
such that it only produces a constant coefficient $\a_{1}r$ besides
the first term in (\ref{eq:42}). One can easily find from (\ref{eq:42})
that 
\begin{equation}
c_{n}/c_{n-1}=\f{\a_{2}r}{\a_{1}(1-q^{n})}\implies c_{n}=\f 1{[n]_{q}!}\left(\f{\a_{2}r}{\a_{1}(1-q)}\right)^{n}
\end{equation}
which leads to 
\begin{equation}
a_{R,0}\ket{(\a_{2},\a_{1});0^{m}}=[m]_{q}\ket{(\a_{2},\a_{1});0^{m}}+\a_{1}rq^{m}\ket{(\a_{2},\a_{1});0^{m}}
\end{equation}

Note that this is exactly in the same form as (\ref{eq:40}) in the
1-matter subspace and the effective EOW parameter is unchanged $\a=\a_{1}r$,
which only depends on the right most $q$-coherent sector parameter
$\a_{1}$. We can physically understand this as a kind of screening
effect. The right boundary observer with access only to $a_{R,0}$
and $a_{R,0}^{\dag}$ cannot tell the difference between $\ket{(\a_{2},\a_{1});0^{m}}$
and $\ket{(\a_{1});0^{m}}$ because the effect of the additional matter
chord on the left is completely screened by the delicate superposition
of $\ket{\mB_{q^{n}\a_{1}}}$ in the middle sector in (\ref{eq:43}). However, the screening effect only holds for $a_{R,0},a_{R,0}^\dag,a_{R,1}^\dag$ but does not hold for $a_{R,1}$ because the latter will destroy the delicate dressing by annihilating a matter chord in the middle. This is consistent with the fact that $\ket{\w}$ is a cyclic state for the right algebra $\mA_R$.

An immediate consequence is that the spectrum of $H_{R,0}$ in the
two-matter subspace is identical to that in the one- and zero-matter subspace. Expanding
(\ref{eq:43}) and complete the sum over $n$, we can rewrite it as  
\begin{align}
\ket{(\a_{2},\a_{1});0^{m}} & =\sum_{n}\f{(\a_{2}rq^{n_{1}}/\a_{1})^{n}}{(q;q)_{n}}\sum_{n_{1},n_{2}}\f{\a_{1}^{n_{1}}\a_{2}^{n_{2}}}{[n_{1}]!_{q}[n_{2}]!_{q}}\ket{0^{n_{2}}10^{n_{1}}10^{m}}\nonumber \\
 & =\sum_{n_{1},n_{2}}\f{\a_{1}^{n_{1}}\a_{2}^{n_{2}}}{(\a_{2}rq^{n_{1}}/\a_{1};q)_{\infty}[n_{1}]!_{q}[n_{2}]!_{q}}\ket{0^{n_{2}}10^{n_{1}}10^{m}}\label{eq:47}
\end{align}
which is free of singularity when $|\a_2|r<|\a_1|$.

For three-matter subspace, we can iterate above manipulation. We start
with 
\be
\ket{\mB_{\a_{3}}1(\a_{2},\a_{1});0^{m}}\equiv \ket{\mB_{\a_3}1}\otimes \ket{(\a_{2},\a_{1});0^{m}}
\ee
where the tensor product should be clear from the examples \eqref{3.68} and \eqref{3.71}. Straightforward computation leads to
\begin{align}
a_{R,0}&\ket{\mB_{\a_{3}}1(\a_{2},\a_{1});0^{m}}=  [m]_{q}\ket{\mB_{\a_{3}}1(\a_{2},\a_{1});0^{m}}\nonumber \\
 & +\a_{1}rq^{m}\ket{\mB_{\a_{3}}1(\a_{2},\a_{1});0^{m}}+\a_{3}r^{3}q^{m}\ket{\mB_{\a_{3}}1(q\a_{2},q\a_{1});0^{m}}
\end{align}
which is in the same form as (\ref{eq:42}) with replacement $\a_{2}r^{2}\ra\a_{3}r^{3}$
in the last term. Therefore, we should define
\begin{equation}
\ket{(\a_{3},\a_{2},\a_{1});0^{m}}\equiv\sum_{n=0}^{\infty}\f 1{[n]_{q}!}\left(\f{\a_{3}r^{2}}{\a_{1}(1-q)}\right)^{n}\ket{\mB_{\a_{3}}1(q^{n}\a_{2},q^{n}\a_{1});0^{m}}\label{eq:49}
\end{equation}
which leads to
\begin{equation}
a_{R,0}\ket{(\a_{3},\a_{2},\a_{1});0^{m}}=[m]_{q}\ket{(\a_{3},\a_{2},\a_{1});0^{m}}+\a_{1}rq^{m}\ket{(\a_{3},\a_{2},\a_{1});0^{m}}
\end{equation}
Similar to (\ref{eq:47}), we can rewrite the state as 
\begin{align}
&\ket{(\a_{3},\a_{2},\a_{1});0^{m}} =\sum_{n,n_{i}}\f{(\a_{3}r^{2}q^{n_{1}+n_{2}}/\a_{1})^{n}\a_{1}^{n_{1}}\a_{2}^{n_{2}}\a_{3}^{n_{3}}}{(q;q)_{n}(\a_{2}rq^{n_{1}}/\a_{1};q)_{\infty}[n_{1}]!_{q}[n_{2}]!_{q}[n_{3}]!_{q}}\ket{0^{n_{3}}10^{n_{2}}10^{n_{1}}10^{m}}\nonumber \\
 =&\sum_{n_{i}}\f{\a_{1}^{n_{1}}\a_{2}^{n_{2}}\a_{3}^{n_{3}}}{(\a_{3}r^{2}q^{n_{1}+n_{2}}/\a_{1};q)_{\infty}(\a_{2}rq^{n_{1}}/\a_{1};q)_{\infty}[n_{1}]!_{q}[n_{2}]!_{q}[n_{3}]!_{q}}\ket{0^{n_{3}}10^{n_{2}}10^{n_{1}}10^{m}}\label{eq:47-1}
\end{align}
which is free of singularity when $|\a_3|r^2<|\a_1|$ and $|\a_2|r<|\a_1|$.

Now the rule of recurrence is clear for higher number matter subspaces.
Let us define the following state for $k$-matter subspace
\begin{align}
&\ket{(\a_{k},\a_{k-1},\cdots,\a_{1});0^{m}} \equiv\sum_{n=0}^{\infty}\f 1{[n]_{q}!}\left(\f{\a_{k}r^{k-1}}{\a_{1}(1-q)}\right)^{n}\ket{\mB_{\a_{k}}1(q^{n}\a_{k-1},\cdots,q^{n}\a_{1});0^{m}}\label{eq:88-1}\\
=&\sum_{n_{i}}\left(\prod_{i=1}^{k}\f{\a_{i}^{n_{i}}}{[n_{i}]!_{q}}\right)\left(\prod_{i=1}^{k-1}\f 1{(\a_{i+1}r^{i}q^{\sum_{j=1}^{i}n_{j}}/\a_{1};q)_{\infty}}\right)\ket{0^{n_{k}}1\cdots0^{n_{1}}10^{m}}\label{eq:52}
\end{align}
on which $a_{R,0}$ acts as
\begin{equation}
a_{R,0}\ket{(\a_{k},\cdots,\a_{1});0^{m}}=[m]_{q}\ket{(\a_{k},\cdots,\a_{1});0^{m-1}}+\a_{1}rq^{m}\ket{(\a_{k},\cdots,\a_{1});0^{m}}
\end{equation}
For the coefficients in (\ref{eq:52}) to be free of singularity, we have
\begin{equation}
|\a_{i}|r^{i-1}<|\a_{1}|,\quad |\a_1|,|\a_i|<(1-q)^{-1/2},\quad  \forall i=2,3,\cdots,k
\end{equation}
for all states $\ket{(\a_{k},\a_{k-1},\cdots,\a_{1});0^{m}}$. For convenience, let us call the states in \eqref{eq:88-1} as ($k$-)matter-brane states.

By construction, the screening effect generalizes to arbitrary matter subspaces and an effective EoW brane can be microscopically generated by arbitrary numbers of matter chords with appropriate but delicate gravitational dressing, and the effective tension $\mu$ only relies on the rightmost parameter $\a_1=\mu/r$. This idea is reminiscent of the proposal in \cite{Lin:2023trc} to decompose higher matter states into the irreducible representation of single matter states. Using $\ket{(\a_k,\cdots,\a_1);0^n}$ to replace $\ket{n}$ in the RHS of \eqref{3.7} with $\mu=\a_1 r$ and replacing $\tilde H_{bulk}$ with $H_{R,0}$, we can diagonalize $H_{R,0}$ in the subspace with any number of matter chords. This completely solves the spectrum and the eigenstates of $H_{R,0}$, and in particular shows that its spectrum can be parameterized by $E(\t)$ with $\t\in[0,\pi]$. This justifies our earlier statement that $\mB_{R,\a}$ in \eqref{eq:17-5} is a bounded and invertible operator as long as \eqref{eq:18-1} holds. Indeed, this also justifies that the tilde version $\tilde H_{R,0}$ has the same spectrum as $H_{R,0}$ simply because the $k$-matter subspace for $\tilde H_{R,0}$ is just equivalent to the $(k+1)$-matter subspace for $H_{R,0}$ (see Appendix \ref{app:c}). Moreover, in our treatment, $H_{R,0}$ and $H_{R,1}$ are put on equal footing, and they can be switched to each other by changing parameters $q\leftrightarrow  q_m$. Therefore, the spectrum of the matter operator $H_{R,1}$ can also be solved in the same way with parameterization $E_m(\t)=\f{2}{\sqrt{1-q_m}}\cos\t$ with $\t\in[0,\pi]$.

Above construction of the matter-brane states is by a linear combination of the overcomplete basis $\ket{\mB_{\a_k}1\cdots\mB_{\a_1}10^m}$. In Appendix \ref{app:a}, we show that all matter-brane states also form a overcomplete basis of $\mH$ by working out the inverse transformation to the chord number (binary string) basis. We show that each chord number basis can be written as
\begin{align}
\ket{0^{n_{k}}1\cdots10^{n_{1}}10^{m}}=&\oint_{}\prod_{i=1}^{k}\f{[n_{i}]!_{q}d\a_{i}}{2\pi\i\a_{i}^{n_{i}+1}}\oint\prod_{i=1}^{k-1}\f{dy_{i}}{2\pi\i}K(\a_{i+1}r^{i}/\a_{1};y_{i}) \nn\\
&\times \ket{(\a_{k},\cdots,\f{\a_{j}}{y_{j}\cdots y_{k-1}},\cdots,\f{\a_{1}}{y_{1}\cdots y_{k-1}});0^{m}}\\
 K(z;y) \equiv &\f{(zqy;q)_{\infty}}y{}_{2}\phi_{1}\left(\left.\begin{array}{c}
q,0\\
zqy
\end{array}\right|q;y\right)
\end{align}
where $_2\phi_1$ is a $q$-Hypergeometric function (see Appendix \ref{app:b}), and the integral contour obeys $|\a_i|r^{i-1}<|\a_{1}|<|y_j|<1$, $|\a_j|/|y_j\cdots y_{k-1}|,|\a_k|<(1-q)^{-1/2}$ for $i=2,\cdots,k$ and $j=1,\cdots,k-1$. This shows that the full Hilbert space $\mH$ can also be expanded non-uniquely by the overcomplete basis of matter-brane states. 

Taking advantage of the matter-brane states, we can compute correlation functions in DSSYK using simple complex calculus without dealing with the involved combinatorics of chord diagrams in \cite{Berkooz:2018jqr}. Beyond the Feynmann rules developed in \cite{Berkooz:2018jqr}, this method can also be used to systematically compute correlation functions in generic states (not just vacuum $\ket{\w}$) without counting chords, which might be subtle in practice. In Appendix \ref{app:c}, we illustrate the computation of two-point and OTOC four-point function in the vacuum, and show consistency with \cite{Berkooz:2018jqr}. 

\section{Boundary algebra extension and no man's island}
\label{more on boundary algebra}

\subsection{Revisit the commutant of $\tilde{\mA}_R$} \label{sec:4.1}

As shown in Section \ref{sec:3.3}, the von Neumann algebra of the single-sided black hole with EoW brane in DSSYK is a type II$_1$ factor and has a nontrivial commutant. However, the commutant is warrant only by the Tomita theory and its construction in \eqref{3.62} is quite formal on the non-chord number basis $\kit{i_1 \cdots i_k}$. To understand this mysterious commutant better, let us try to reconstruct it in terms of creation and annihilation operators $a_{L,i}^\dag$ and $a_{L,i}$. 

We have attempted two ways to rewrite generators of $\tilde \mA_L$. The first way is to consider a simple case with all $Q_{ij}=0$, in which the two types of chords are prohibited to intersect each other and they obey a free probability distribution. Similar limit of free probability in DSSYK with multiple types of chords was considered in \cite{Gao:2024lve} for novel quantum chaotic behavior. From this special limit, it is relatively easy to construct $\tilde H_{L,0}$ explicitly from a series expansion as shown in Appendix \ref{app:c}. Then it is not hard to generalize it to the generic $Q_{ij}$ case. The second way is to work on a new basis $\tilde H_{R,i_1}\cdots \tilde H_{R,i_k}\ket{\w}$, on which 
\begin{align} 
&\left(\tilde \mB_{L,\mu}^{1/2}\tilde H_{L,i}\tilde\mB_{L,\mu}^{-1/2}\right)\tilde H_{R,i_1}\cdots \tilde H_{R,i_k}\ket{\w}=\tilde H_{R,i_1}\cdots \tilde H_{R,i_k}\tilde \mB_{R,\mu}^{1/2}\tilde \mB_{L,\mu}^{1/2}\tilde H_{L,i}\tilde \mB_{L,\mu}^{-1/2} \ket{\psi} \nn\\
=&\tilde H_{R,i_1}\cdots \tilde H_{R,i_k} \tilde H_{R,i}\tilde\mB_{R,\mu}^{1/2}\ket{\psi} =\tilde H_{R,i_1}\cdots \tilde H_{R,i_k} \tilde H_{R,i}\ket{\w} \label{4.1}
\end{align}
where $\tilde \mB_{L,\mu}$ is by replacing $H_{R,0}$ in \eqref{eq:17-5} with $\tilde H_{L,0}$, and in the second step we used \eqref{3.63}. This means that $\mB_{L,\mu}^{1/2}\tilde H_{L,i}\tilde\mB_{L,\mu}^{-1/2}$  is just like a left operator of chord number states with $H_{R,i}\ra \tilde H_{R,i}$. Then we can expand RHS of \eqref{4.1} in the original chord number basis and compare with the action of $a_{L,i}$ and $a_{L,i}^\dag$. It turns out that $\tilde H_{L,0}=\mB_{L,\mu}^{1/2}\tilde H_{L,0}\tilde\mB_{L,\mu}^{-1/2}$ has a relatively simple action, which can be rewritten into a compact form identical to the one derived from the first method. But the expression of $\mB_{L,\mu}^{1/2}\tilde H_{L,1}\tilde\mB_{L,\mu}^{-1/2}$  is quite complicated and we suspect that it does not have a simple compact form.

The construction we find for $\tilde H_{L,0}$ is
\begin{equation}
    \tilde H_{L,0}=a_{L,0}+a_{L,0}^\dagger-\mu(1-q)a_{L,0}^\dagger a_{L,0}+\mu \I \label{4.2}
\end{equation}
Let us first verify it commutes with $\tilde H_{R,0}$ and $\tilde H_{R,1}$. The latter is obvious due to \eqref{2.27} and $[H_{L,0},H_{R,1}]=0$. For the former, we have\footnote{In the second line of \eqref{4.3}, the fact that the operator $a^\dag_{L,0}a_{L,0}$ commutes with $q^{N_0}r^{N_1}$ was used in \cite{Lin:2023trc} to construct the symmetry algebra $U(J)$.}
\begin{align}
    &[\tilde H_{L,0},\tilde H_{R,0}]=[a_{L,0}+a_{L,0}^\dagger-\mu(1-q)a_{L,0}^\dagger a_{L,0},a_{R,0}+a_{R,0}^\dagger+\mu q^{N_0}r^{N_1}]\nonumber\\
    =&-\mu(1-q)[a_{L,0}^\dagger a_{L,0},a_{R,0}+a_{R,0}^\dagger]+\mu[a_{L,0}+a_{L,0}^\dagger,q^{N_0}r^{N_1}]-\mu^2(1-q)[a^\dagger_{L,0} a_{L,0},q^{N_0}r^{N_1}]\nonumber\\
    =&\mu(1-q)(q^{N_0}r^{N_1} a_{L,0}-a^\dagger_{L,0} q^{N_0}r^{N_1}-q^{N_0}r^{N_1} a_{L,0}+a^\dagger_{L,0} q^{N_0}r^{N_1})=0 \label{4.3}
\end{align}
where we used \eqref{2.27} and 
\begin{equation}
    [a_{L,0},q^{N_0}r^{N_1}]_q=0\quad [q^{N_0}r^{N_1},a_{L,0}^\dagger]_q=0
\end{equation}
On the other hand, it acts on $\ket{\w}$ in the same ways as $\tilde H_{R,0}$
\be 
\left(a_{L,0}+a_{L,0}^\dagger-\mu(1-q)a_{L,0}^\dagger a_{L,0}+\mu \I\right)\ket{\w}=\left(H_{R,0}+\mu \I\right)\ket{\w}=\tilde H_{R,0}\ket{\w}
\ee
which is consistent with \eqref{4.1}.

As we see from \eqref{4.2} that $\tilde H_{L,0}$ is built from $H_{L,0}$ with an additional simple term $a_{L,0}^\dag a_{L,0}$ that does not does change the number of chords. However, for $\tilde H_{L,1}$, if we take the ansatz of $H_{L,1}+X(a_{L,i},a_{L,i}^\dag,N_0,N_1)$, it seems hard to find a simple $X$ such that it commutes with $H_{R,1}$ and $\tilde H_{R,0}$ simultaneously. A simple technical explanation is that $\tilde H_{R,0}$ has an additional term $\mu q^{N_0}r^{N_1}$, which can probe how many chords of each type in a state. For $H_{L,0}$, its commutator with $\mu q^{N_0}r^{N_1}$ generate 0-digit operators, and we only need to solve one equation \eqref{4.3} by adding 0-digit operators into $\tilde H_{R,0}$ to erase this probe using its commutator with $H_{R,0}$; For $H_{L,1}$, its commutator with $\mu q^{N_0}r^{N_1}$ generate 1-digit operators, we have to add both 0 and 1 digit operators into $\tilde H_{L,1}$ to cancel it, which requires solving two nontrivial equations $[\tilde H_{L,1},\tilde H_{R,i}]=0$ for both $i=0,1$. 

Somehow this is not too surprising from the JT viewpoint. Acting on the vacuum $\ket{\w}$, the operator $\tilde H_{L,0}$, relative to the EoW brane Hamiltonian $\tilde H_{R,0}$, is adding gravity chords avoid being probed from the right boundary. We can heuristically say that these gravity chords are ``behind" the EoW brane, which is perturbatively not too awkward from JT perspective since EoW brane is gravitating; however, adding a matter chord ``behind" the EoW brane to avoid being probed should not be a simple task from JT perspective because there is no spacetime behind the EoW brane for the matter to live in and generating a spacetime region from the void behind the EoW brane should be quite non-perturbative. Since the chord basis becomes the continuum spacetime in the JT limit, we may heuristically say that $\tilde \mA_L$ is non-geometric.

\subsection{A minimal extension to all operators} \label{sec:4.2}

Another way to understand the commutant of $\tilde \mA_R$ is to ask an inverse question: what additional operator is required for $\tilde \mA_R$ to generate all bounded operators $\mB(\mH)$ of the full Hilbert space $\mH$? Given this additional operator, the algebra generated by them becomes type I. This fills the gap between the algebra $\tilde \mA_R$ of a single-sided black hole with EoW brane and the naive expectation of the complete bulk reconstruction as discussed in the end of Section \ref{sec:3.1}.

A sort of trivial answer is adding $\tilde \mH_{L,i}$ since $\tilde \mA _{L,R}$ are factors and thus $\tilde \mA_L \vee\tilde \mA_R=\mB(\mH)$. However, as we show in \eqref{4.2}, the generators of the left algebra $\tilde \mA_L$ look like some operation behind the EoW brane, and cannot be easily understood from the right boundary point of view. We would like to ask a refined question: are there some ``simple" operators from the right boundary point of view, together with $\tilde \mA_R$ generating all bounded operators $\mB(\mH)$?

The answer is surprisingly simple. We just need one little operator $q^{N_0}r^{N_1}$ that measures how many gravity and matter chords in a state by drawing a gravity chord from the EoW brane to the right boundary. Let us first introduce a lemma.
\begin{lemma}\label{completelemma}
    $\{a_{R,0},a^\dagger_{R,0},a_{R,1},a^\dagger_{R,1}\}''=B(\mathcal H)$
\end{lemma}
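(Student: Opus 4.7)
The plan is to show that the commutant $\{a_{R,0},a_{R,0}^\dagger,a_{R,1},a_{R,1}^\dagger\}'$ is trivial, from which the lemma follows by the double commutant theorem. First I would note that the algebra already contains $H_{R,i}=a_{R,i}+a_{R,i}^\dagger$ and hence $\mA_R$, so its commutant is contained in $\mA_R'=\mA_L$ by \eqref{2.30}. The content of the lemma is that imposing commutation with the individual creation and annihilation operators (and not merely with their hermitian sums $H_{R,i}$) kills every nontrivial element of $\mA_L$.

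Given $X\in\{a_{R,0},a_{R,0}^\dagger,a_{R,1},a_{R,1}^\dagger\}'\subseteq\mA_L$, commuting $X$ past $a_{R,i}$ and applying to the vacuum yields $a_{R,i}X\ket{\w}=Xa_{R,i}\ket{\w}=0$ for $i=0,1$, so $X\ket{\w}$ lies in the common kernel
\begin{equation}
K:=\ker a_{R,0}\cap \ker a_{R,1}\subseteq \mH. \label{kernel-def}
\end{equation}
The heart of the argument is the kernel claim $K=\C\ket{\w}$. Granted this, $X\ket{\w}=c\ket{\w}$ for some scalar $c$, and since $\ket{\w}$ is cyclic for $\mA_R$ it is separating for $\mA_L=\mA_R'$, so $(X-cI)\ket{\w}=0$ with $X-cI\in\mA_L$ forces $X=cI$, giving $\{a_{R,i},a_{R,i}^\dagger\}''=\mB(\mH)$.

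The main obstacle is therefore the kernel claim. Because $a_{R,i}$ lowers the $(N_0,N_1)$ grading by one in the appropriate slot, $K$ splits as $\bigoplus_{N_0,N_1} K_{N_0,N_1}$ with $K_{N_0,N_1}\subseteq \mH_{N_0,N_1}$, and it suffices to show $K_{N_0,N_1}=0$ whenever $(N_0,N_1)\neq(0,0)$. Equivalently, the positive operator $D:=a_{R,0}^\dagger a_{R,0}+a_{R,1}^\dagger a_{R,1}$ is injective on every non-vacuum graded piece $\mH_{N_0,N_1}$. I see two natural routes: (i) a continuity argument starting from the undeformed limit $q,q_m,r\to 1$, in which $D$ specializes to the total chord-number operator $(N_0+N_1)I$ on $\mH_{N_0,N_1}$ and is manifestly positive definite, extended throughout the physical window $0<q,q_m,r<1$ by the analytic dependence of the finitely-many matrix entries and the finite dimensionality of each graded piece; (ii) a direct computation of $D$ in the chord-number basis using \eqref{2.17} and \eqref{2.19}, showing that the resulting finite Gram-type matrix on each $\mH_{N_0,N_1}$ has nonvanishing determinant in the physical parameter range. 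The combinatorics of the accumulated $Q$-factors across multi-chord strings is the real source of technical difficulty, but once the grading has been invoked the question reduces to a finite-dimensional linear algebra problem on each $\mH_{N_0,N_1}$, which I expect to be tractable either by the perturbative route (i) or by a careful triangulation of the Gram matrix with respect to a suitable ordering of chord strings in route (ii).
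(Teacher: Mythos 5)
Your skeleton matches the paper's: reduce the lemma to showing that the common kernel $K=\ker a_{R,0}\cap\ker a_{R,1}$ is $\C\ket{\w}$, then conclude that any commutant element acts as a scalar. (Your final step differs cosmetically: you invoke $X\in\mA_L$ and the separating property of $\ket{\w}$, whereas the paper propagates $X\ket{\w}=c\ket{\w}$ to all chord states directly via $[X,a_{R,i}^\dagger]=0$; both are fine.) The problem is that the entire content of the lemma sits in your ``kernel claim,'' and you do not prove it — you only propose two routes, and neither is adequate as stated. Route (i) is not a valid argument: analytic dependence of a finite Gram-type matrix on $(q,q_m,r)$ plus positivity at one endpoint does not rule out the determinant vanishing somewhere in the interior of the physical window, and the endpoint $q=q_m=r=1$ is itself degenerate (the graded pieces $\mH_{N_0,N_1}$ collapse there because distinct chord strings become indistinguishable), so you cannot even cleanly anchor the continuation. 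Route (ii) is a genuine computation that you have not carried out, and the combinatorics of the $Q$-factors is exactly the hard part you are deferring.

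The missing idea is a one-line orthogonality argument that makes all of this unnecessary. Suppose $\ket{\w'}\in K$. For any chord state $\ket{\phi}=a_{R,i_1}^{\dagger}\cdots a_{R,i_n}^{\dagger}\ket{\w}$ with $n\geq 1$, taking the adjoint gives
\begin{equation}
\avg{\w'|\phi}=\avg{\w'|a_{R,i_1}^{\dagger}\cdots a_{R,i_n}^{\dagger}|\w}=\left(a_{R,i_1}\ket{\w'}\right)^{\dagger}a_{R,i_2}^{\dagger}\cdots a_{R,i_n}^{\dagger}\ket{\w}=0,
\end{equation}
since $a_{R,i_1}\ket{\w'}=0$. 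Hence $\ket{\w'}$ is orthogonal to every non-vacuum chord state; as these together with $\ket{\w}$ span a dense subspace of $\mH$, the component of $\ket{\w'}$ orthogonal to $\ket{\w}$ vanishes, i.e.\ $K=\C\ket{\w}$. This is the paper's proof, and it sidesteps both the grading decomposition and any positivity analysis of $D=a_{R,0}^{\dagger}a_{R,0}+a_{R,1}^{\dagger}a_{R,1}$. I recommend replacing your routes (i) and (ii) with this argument; the rest of your proposal then goes through.
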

\begin{proof}
    We first prove that the vacuum $\ket{\w}$ is unique in the sense that it is the only state satisfying $a_{R,i}\ket{\w}=0$ for $i=0,1$. The proof is by contradiction. Suppose there is another state $\ket{\w'}$ satisfying the same property, without loss of generality we can assume that $\ip{\w}{\w'}=0$, if this is not the case we can use Gram-Schmidt process to construct a new $\ket{\w'}$. As $a_{R,i}\ket{\w'}=0$, we conclude that $\avg{\w'|\phi}=0$ for arbitrary chord number state $\ket\phi$, which is in the form of $a_{R,i_1}^\dagger\cdots a_{R,i_n}^\dagger\ket{\w}$. Since the Hilbert space $\mH$ is spanned by all chord number states, $\ket{\w'}$ is orthogonal to a dense set of states in $\mathcal H$ and it has to be $0$. Now suppose $\hat O\in\{a_{R,0},a^\dagger_{R,0},a_{R,1},a^\dagger_{R,1}\}'$. We then have $a_{R,i}\hat O\ket{\w}=\hat Oa_{R,i}\ket{\w}=0$, and thus $\hat O\ket{\w}=c\ket{\w}$ for some number $c$. As $[\hat O,a_{R,i}^\dagger]=0$ we have $\hat O\ket{\psi}=c\ket{\phi}$ for arbitrary chord number state $\ket{\phi}$. It follows that $\hat O=cI$ on a dense set in $\mathcal H$ and therefore $\hat O$ is a scalar multiplication, that is $\{a_{R,0},a^\dagger_{R,0},a_{R,1},a^\dagger_{R,1}\}'=\mathbb C$. Or equivalently $\{a_{R,0},a^\dagger_{R,0},a_{R,1},a^\dagger_{R,1}\}''=B(\mathcal H)$. 
\end{proof}

\begin{theorem}\label{thm7}
    $\{\tilde H_{R,0},\tilde H_{R,1},q^{N_0}r^{N_1}\}''=\{H_{R,0},H_{R,1},q^{N_0}r^{N_1}\}''=B(\mathcal H)$.
\end{theorem}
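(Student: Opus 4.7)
The plan is to reduce the theorem to a commutant computation. The first equality is immediate: since $\tilde H_{R,0}=H_{R,0}+\mu q^{N_0}r^{N_1}$ and $\tilde H_{R,1}=H_{R,1}$, the generators of either set are linear combinations of the generators of the other, so the von Neumann algebras they generate coincide. It therefore suffices to prove the second equality $\{H_{R,0},H_{R,1},W\}''=B(\mH)$, where I abbreviate $W\equiv q^{N_0}r^{N_1}$.

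For this I would mimic the strategy of Lemma~\ref{completelemma} but use $W$ as the essential new ingredient. By the bicommutant theorem it is enough to show that $\{H_{R,0},H_{R,1},W\}'=\C I$. Let $\hat O$ be any bounded operator commuting with $H_{R,0}$, $H_{R,1}$ and $W$. The crucial observation is that $W$ is a bounded, positive, self-adjoint operator, diagonal in the chord-number basis with eigenvalues $q^{m}r^{n}$, $m,n\in\N$. Since $q,r\in(0,1)$, the equation $q^{m}r^{n}=1$ forces $m=n=0$, so the $1$-eigenspace of $W$ is the one-dimensional subspace $\C\ket{\w}$. Because $\hat O$ commutes with $W$ it preserves each of its eigenspaces, and hence $\hat O\ket{\w}=c\ket{\w}$ for some scalar $c$. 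Using that $\hat O$ commutes with $H_{R,0}$ and $H_{R,1}$, for every binary string one then gets $\hat O\,H_{R,i_1}\cdots H_{R,i_k}\ket{\w}=c\,H_{R,i_1}\cdots H_{R,i_k}\ket{\w}$. Since $\ket{\w}$ is cyclic for $\mA_R=\{H_{R,0},H_{R,1}\}''$ (as reviewed in Section~\ref{sec:DSSYK} and recalled in the proof of Lemma~\ref{lm1}), these vectors are dense in $\mH$, so $\hat O=cI$ on a dense subspace and, by boundedness, everywhere. This yields $\{H_{R,0},H_{R,1},W\}'=\C I$ and the bicommutant is $B(\mH)$ as claimed.

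The only delicate ingredient is the non-degeneracy of the eigenvalue $1$ of $W$, which is forced by the strict inequalities $q,r\in(0,1)$ coming from $\lambda_{HH},\lambda_{HM}>0$; without it the argument would become purely formal. A more constructive alternative route would exploit the commutation relations $Wa^{\dag}_{R,0}=qa^{\dag}_{R,0}W$ and $Wa^{\dag}_{R,1}=ra^{\dag}_{R,1}W$ together with $[W,H_{R,i}]$ to extract the weighted ladder operators $a^{\dag}_{R,i}W$ and $a_{R,i}W$ from the algebra, and then combine them with the rank-one spectral projection $P_{\w}=\ket{\w}\bra{\w}\in\{W\}''$ to build all matrix units $a^{\dag}_{R,i_1}W\cdots a^{\dag}_{R,i_k}W\,P_{\w}\propto\ket{i_k\cdots i_1}\bra{\w}$ explicitly. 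This also recovers $B(\mH)$ but is computationally heavier than the commutant argument above.
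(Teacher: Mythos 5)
Your proof is correct, and it takes a genuinely different route from the paper's. The paper argues constructively: it extracts the spectral projections $P_{\lambda_i}$ of $W=q^{N_0}r^{N_1}$ via the limit formula \eqref{4.7}, uses them to separate $H_{R,i}$ into its creation and annihilation parts ($\sum_i P_{q\lambda_i}H_{R,0}P_{\lambda_i}=a^\dagger_{R,0}$, $\sum_i P_{r\lambda_i}H_{R,1}P_{\lambda_i}=a^\dagger_{R,1}$), and then invokes Lemma~\ref{completelemma}. Your argument bypasses the ladder-operator reconstruction entirely: the only spectral input is that the eigenvalue $1$ of $W$ is simple with eigenvector $\ket{\w}$ (forced by $q,r\in(0,1)$), which pins down $\hat O\ket{\w}=c\ket{\w}$ for any $\hat O$ in the commutant, and then cyclicity of $\ket{\w}$ for $\mA_R$ (together with $\overline{\mA_R\ket{\w}}=\overline{\{p(H_{R,i})\ket{\w}\}}$) gives $\hat O=cI$ and hence $\{H_{R,0},H_{R,1},W\}'=\C I$. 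This is shorter and automatically immune to the eigenvalue degeneracies $r^{p_1}=q^{p_2}$ that the paper must absorb into its $P_{\lambda_i}$'s, since only the non-degenerate top eigenvalue matters; what you give up is the explicit exhibition of $a_{R,i}$ and $a^\dagger_{R,i}$ as elements of the algebra, which the paper uses in the surrounding physical discussion about separating creation from annihilation with access to the wormhole length. Like the paper's main proof, you assume $q,r>0$; the case $Q_{ij}=0$ is treated separately in Appendix~\ref{freecaseproof} (though your eigenspace argument in fact survives $r=0$ unchanged, as the $1$-eigenspace of $q^{N_0}r^{N_1}$ is still $\C\ket{\w}$). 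Your ``constructive alternative'' sketched at the end is essentially the paper's route.
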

\begin{proof}
    We will prove the second equality since the first one is obvious. Suppose $Q_{ij}$ are all generic positive numbers and less than one (the case for $Q_{ij}=0$ will be discussed in Appendix \ref{freecaseproof}). In this case $q^{N_0}r^{N_1}$ is a compact operator with discrete spectrum $\lambda_i=q^{n_0}r^{n_1}$, where $n_{0},n_1\geq 0$ are non-negative integers. From the spectral theory of von-Neumann algebras,\footnote{See, for example, Theorem 4.1.11 of~\cite{murphy2014c}.} we can show that all projections onto eigen-spaces with fixed values $q^{n_0}r^{n_1}$ can be generated by $q^{N_0}r^{N_1}$. Indeed, we can write
    \be 
     q^{N_0}r^{N_1}=\sum_{n_0,n_1\geq 0}q^{n_0}r^{n_1} P_{n_0,n_1}=\sum_i \lam_i P_{\lam_i} \label{4.6}
    \ee
    where $P_{n_0,n_1}$ is the projection onto the subspace $\mH_{N_0,N_1}$ with $N_0=n_0$ and $N_1=n_1$, and $P_{\lam_i}$ is the projection to the subspaces $\mH_{n_0,n_1}$ with the same eigenvalue $\lam_i$. There could be multiple $P_{n_0,n_1}$ in one $P_{\lam_i}$ when $r^{p_1}=q^{p_2}$ for two coprime integers $p_{1,2}$. A limit form of the Fourier transformation of the LHS of \eqref{4.6} gives all projections
    \be
    P_{\lam_i}=\lim_{M\ra\infty}\f{1}{2M}\int_{-M}^{M} ds \left(\f{q^{N_0}r^{N_1}}{\lam_i}\right)^{\i s}\label{4.7}
    \ee
    
    Now consider the operator $\sum_{i} P_{q \lam_i}H_{R,0}P_{\lam_i}$. Clearly this is just $a_{R,0}^\dagger$ because it only has nontrivial matrix elements from eigenspace of $\lam_i$ to $q\lam_i$ and $a_{R,0}$ only has nontrivial matrix elements from eigenspace of $\lam_i$ to $q^{-1}\lam_i$. Since the projectors with different eigenvalue are orthogonal to each other, we see that $a_{R,0}$ and $a_{R,0}^\dagger$ in $H_{R,0}$ can be separated using spectral projections. Similarly, $\sum_{i} P_{r\lam_i} H_{R,1} P_{\lam_i}=a_{R,1}^\dagger$, and thus we have all creation and annihilation operators $a_{R,i}^\dag$ and $a_{R,i}$ from $\{H_{R,i},q^{N_0}r^{N_1}\}''$. By Lemma~\ref{completelemma},  the algebra is $B(\mathcal H)$.  
\end{proof} 

Theorem~\ref{thm7} tells us that we do not need much to recover the whole algebra: we only need to add the exponential wormhole length operator $e^{-\bar\l_b}$ back. Here the length operator is a generalization of $\l_b$ in \eqref{2.44}\footnote{In \cite{Lin:2023trc}, $\bar\l_b$ was used to generate $a_{R,0}^\dag$ and $a_{R,0}$ by $a_{R,0}^\dag=\f 1 2(H_{R,0}-\lam^{-1}[H_{R,0},\bar \l_b])$ and $a_{R,0}=\f 1 2(H_{R,0}+\lam^{-1}[H_{R,0},\bar \l_b])$. Similar equality applies to $a_{R,1}^\dag$ and $a_{R,1}$. However, $\bar\l_b$ is not a bounded operator and we must use its exponential form for Theorem \ref{thm7}.}
\be 
\bar \l_b=\lam (N_0+\D N_1),\quad r\equiv e^{-\D \lam}
\ee
From the proof we can see that the conclusion generalizes to the case where we have more types of matter chords. Note that this theorem also holds for the single side algebra $\mA_R$ in the ordinary two-sided DSSYK, where it says as long as the right boundary observer gets access to the wormhole length, then the whole algebra $\mA_L$ on the opposite left boundary becomes accessible, no matter how many generators $\mA_L$ has (say, in the case of multiple types of matter operators). In this sense, this simple geometric operator of exponential wormhole length $e^{-\bar\l_b}$ encodes the entanglement between $\mA_R$ and $\mA_L$ in a universal way. We regard the extension from $\mA_R$ or $\tilde \mA_R$ to $\mB(\mH)$ with $e^{-\bar\l_b}$ as a minimal extension to the whole algebra from a single side. 

This exponential wormhole operator indeed raises an important question about how to define the boundary algebra. In the main text we follow the idea of the original DSSYK and only include the appropriate deformation of the right operators $H_{R,i}\ra \tilde H_{R,i}$ as the boundary algebra. However, from the chord diagram rules of $e^{-\bar \l _b}$, it is also very natural to understand this operator as a boundary operator since it has appeared in $\tilde H_{R,0}$. The only issue is whether we need to include it {\it independently}. Taking these two perspectives with seemingly small difference actually leads to a big change in the consequence: $\tilde \mA_R$ without $e^{-\bar \l _b}$ is a type II$_1$ factor with nontrivial commutant, but $\tilde \mA_R\vee \{e^{-\bar \l _b}\}$ is the full algebra $\mB(\mH)$ and is type I$_\infty$. 

These two perspectives define two different systems and should have different interpretations. In Section \ref{sec:4.3} and \ref{sec:4.4}, we will take the first perspective to discuss the nature of the commutant of $\tilde \mA_R$ or $\mA_R$ in the semiclassical JT limit. In Section \ref{sec:4.5}, we will take the second perspective and discuss its connection to another EoW brane setup in \cite{Kourkoulou:2017zaj}.

\subsection{No man's island behind horizon in the semiclassical JT limit} \label{sec:4.3}

We have shown in Section \ref{sec:3.3} that the von Neumann algebra $\tilde \mA_R$ of the single-sided black hole with EoW brane is a type II$_1$ factor. Despite this refutes the naive expectation of type I from JT perspective, one could still wonder if the JT limit of $\tilde\mA_R$ becomes type I and reconciles this tension. In general, to rigorously prove how an algebra changes type in certain limit is very hard because the Hilbert space may change and only part of operators survives in that limit, and thus we need to make sure all survival operators are carefully examined. Nevertheless, we will argue in the following that the semiclassical JT limit of $\tilde \mA_R$ with a heavy EoW brane is of type III$_1$, which is equivalent to the bulk matter algebra in the causal wedge $\mW_C$ outside of the black hole horizon (see Figure \ref{pic-jt-eow}). In other words, the entanglement wedge $\mW_E$ of the right boundary algebra is the same as the causal wedge $\mW_C$ in this strict semiclassical JT limit, which will be defined shortly.

Generally, for the infinite $N$ limit of an algebra $\mA$ in a holographic theory, we should not just consider the limit of the algebra itself, but also the state $\ket{\phi}$, on which the algebra $\mA$ acts. In a generic case at finite $N$, all states can be generated from a vacuum state by acting $\mA$ on it. However, the Hilbert space in large $N$ limit may lose some states and the rest are factorized into different sectors. Some sectors correspond to low energy excitations on a semiclassical geometry, while others may in principle not have a geometric representation. For the former, each sector can be described by a GNS Hilbert space $\mH_\phi$, where $\ket{\phi}\in\mH_\phi$ is a state dual to a spacetime geometry $\mM_\phi$, and all states in $\mH_\phi$ are generated by bulk local fields $\chi_i$ in $\mM$ acting on $\ket{\phi}$, namely $\mH_\phi=\text{span}\{\ket{\phi},\chi_i\ket{\phi},\chi_i\chi_j\ket{\phi},\cdots\}$ \cite{Leutheusser:2021frk,Leutheusser:2022bgi}.\footnote{In the original language of subregion-subalgebra duality \cite{Leutheusser:2022bgi}, the bulk Hilbert space is called Fock space, which is equivalent to the GNS Hilbert space constructed by boundary operators. Due to this equivalence, here we use a mixed language of GNS construction for simplicity.} Different sectors are equipped with different geometry and should be orthogonal to each other in the infinite $N$ limit. The algebra of all bulk local fields $\chi$ is denoted as $\mA_\chi$ and each sector is labeled by the pairing $\{\mA_\chi,\ket{\phi}\}$. An example of the Hilbert space of JT gravity with conformal matter under large $N$ limit (there $N$ is the boundary dilaton value $\phi_b$) was recently studied in \cite{Gao:2024gky}, where each sector is based on a partially entangled thermal state (PETS) $\ket{\phi}$ dual to a long wormhole geometry.

\begin{figure}
	\centering
    \begin{subfigure}{0.32\textwidth}
    \centering
        \includegraphics[height=4.2cm]{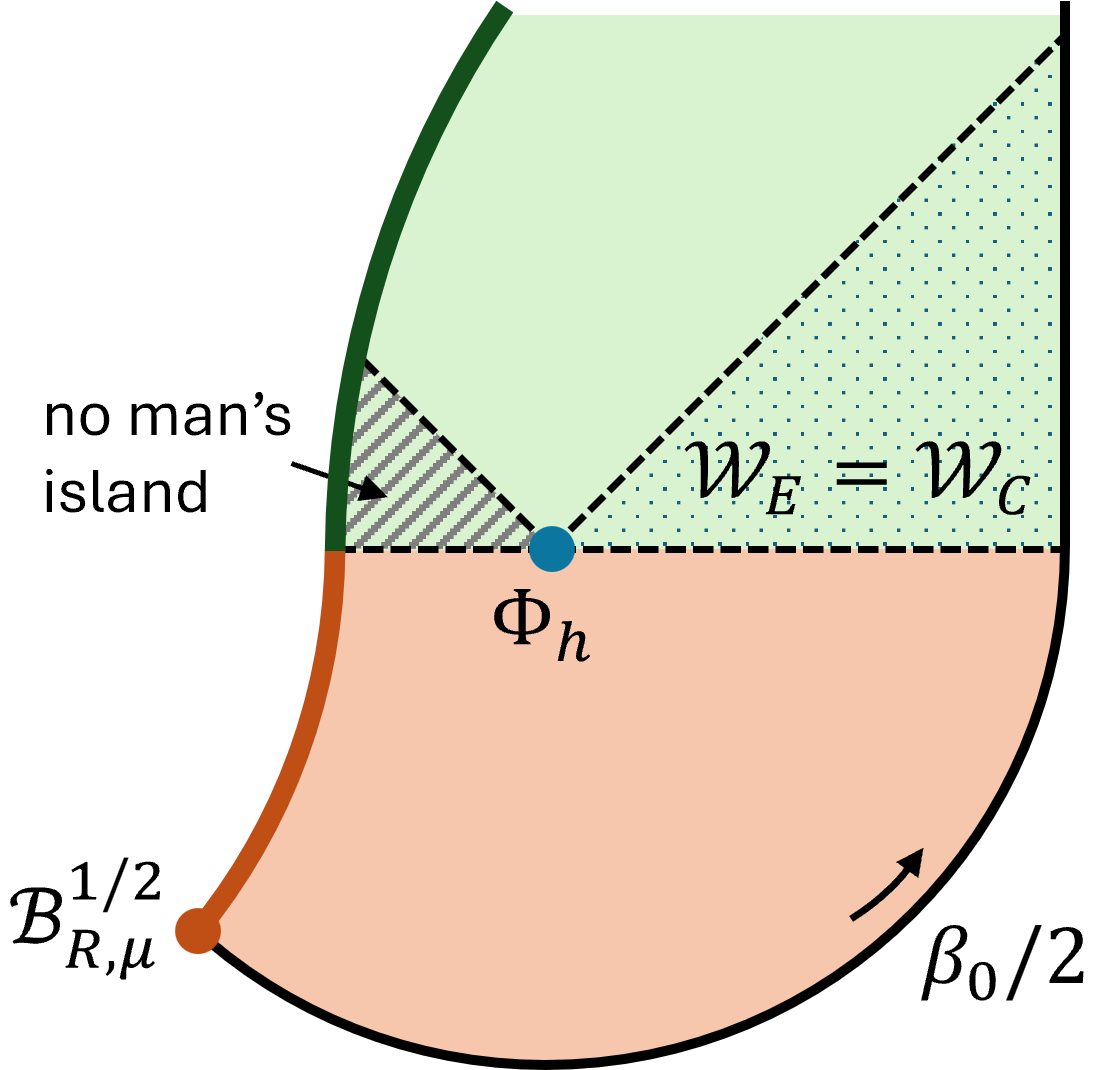}
        \caption{}\label{pic-jt-eow}
    \end{subfigure}
    \begin{subfigure}{0.32\textwidth}
    \centering
        \includegraphics[height=4.2cm]{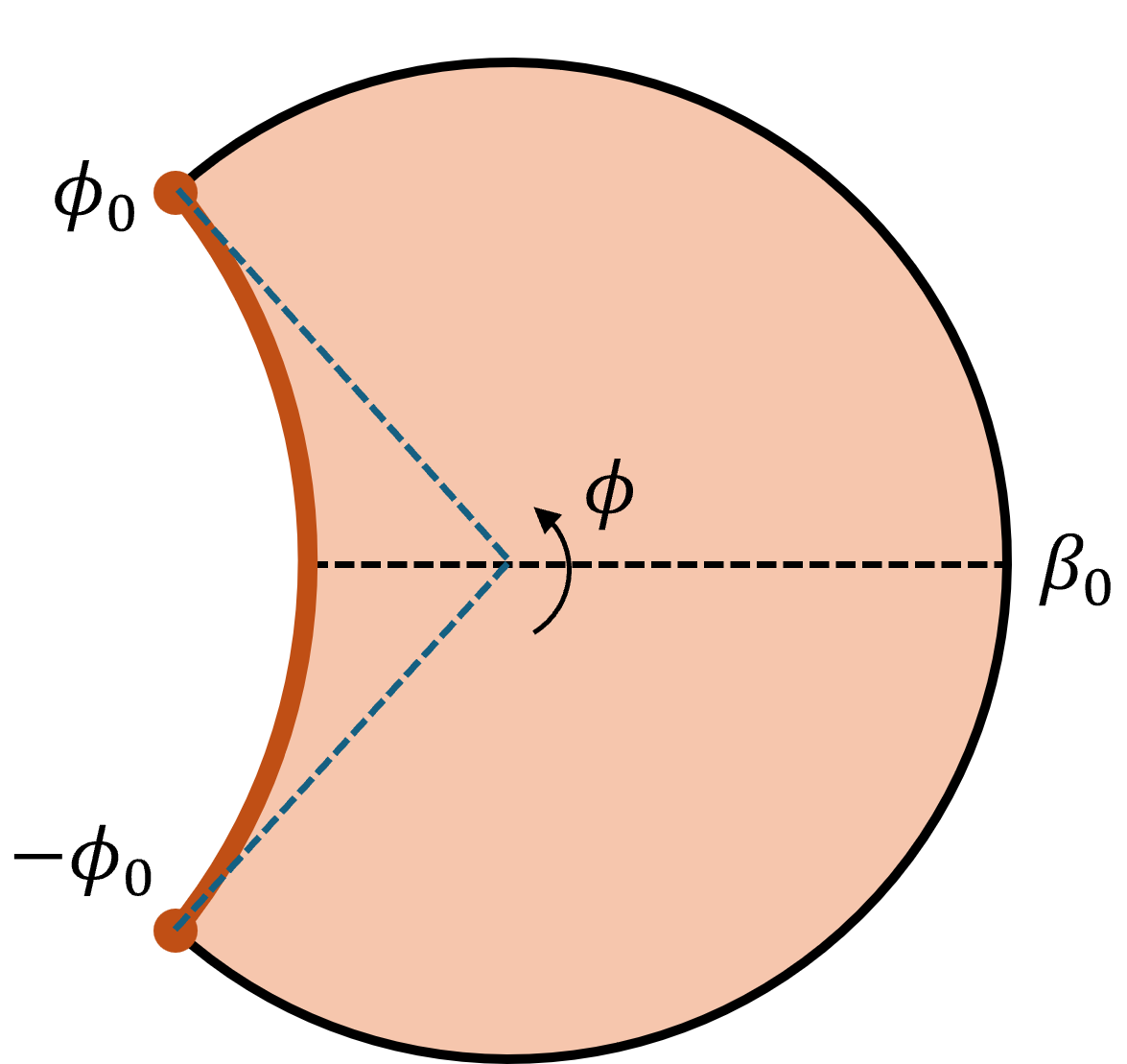}
         \caption{}\label{pic-jt-eow-2}
    \end{subfigure}
    \begin{subfigure}{0.32\textwidth}
    \centering
        \includegraphics[height=4.2cm]{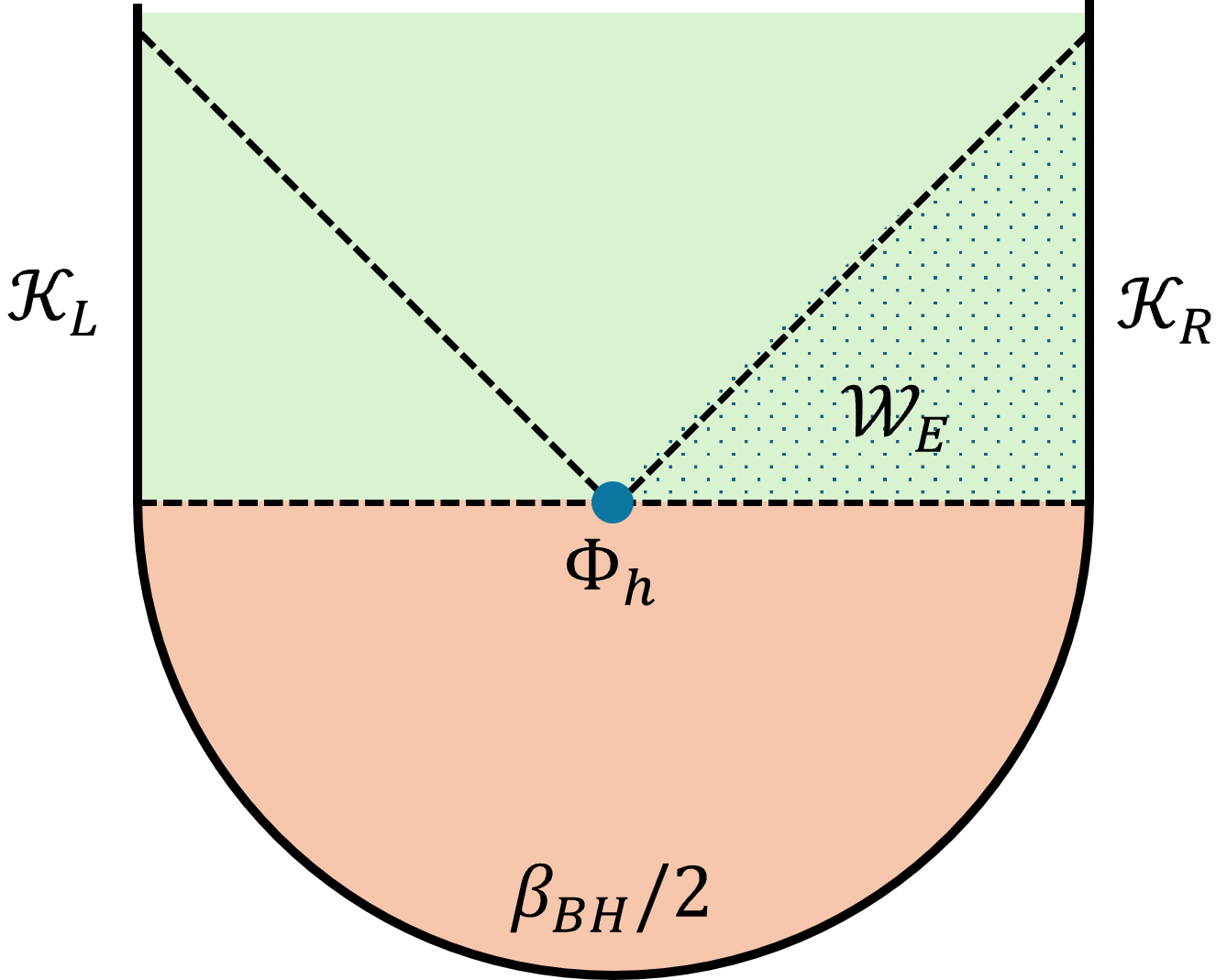}
         \caption{}\label{pic-jt-eow-3}
    \end{subfigure}
        \caption{(a) The Euclidean and Lorentzian spacetime of a single-sided black hole with an EoW brane. The red region is the bulk Euclidean region corresponds to the thermal EoW brane state $\ket{\mu,\b_0/2}$, where the dark red curve is the Euclidean EoW brane. The green region is the Lorentzian spacetime, where the dark green curve is the Lorentzian trajectory of the EoW brane. The blue dot is the horizon in the semiclassical JT limit. The shaded region behind horizon is the no man's island. The wedge outside of horizon (the dotted region) is the causal wedge $\mW_C$, which coincides with the entanglement wedge $\mW_E$. (b) The norm of the thermal EoW brane state, where $\phi$ is the angular Euclidean Rindler coordinate. (c) The canonical purification of $\mW_E$ as a thermofield double state with physical inverse temperature $\b_{BH}$.}
\end{figure}

For the semiclassical JT limit of DSSYK with an EoW brane, we should take the same viewpoint. In other words, we need to work in a limit where a state $\ket{\phi}$ has a fixed geometric dual $\mM_\phi$ and surviving operators in $\tilde \mA_R$ behave as if bulk fields living on $\mM_\phi$. Since the gravity and matter chords are treated in different manners, we will discard the equal footing notation in this subsection and take $H_{R}=H_{R,0}$ and $M_R=H_{R,1}$ (and the tilde versions) for convenience. 

Let us define the thermofield double state
\be 
\ket{\b_0/2}=e^{-\b_0 \tilde H_R/2}\ket{\w} \label{eq:4.9}
\ee
Due to the identity \eqref{3.37}, we also define the thermal EoW brane state
\be 
\ket{\mu,\b_0/2}=e^{-\b_0  H_R/2}\ket{\mB_\mu^{1/2}} \label{eq:4.10}
\ee
The norm of these two states are the same
\be 
\tilde Z(\b_0)=\avg{\w|e^{-\b_0 \tilde H_R}|\w}=\avg{\mB_\mu|e^{-\b_0 H_R}|\w}=\int\f{d\t}{2\pi}\f{(e^{\pm2\i\t},q;q)_{\infty}e^{-\b_{0}E(\t)}}{(\mu(1-q)^{1/2}e^{\pm\i\t};q)_{\infty}} \label{4.10}
\ee
where the computation is given in Appendix \ref{app:d2}. The semiclassical JT limit is defined such that the $\t$ integral of \eqref{4.10} is dominated by a saddle and this saddle should have a geometric interpretation as a single-sided black hole with an on-shell heavy EoW brane in JT gravity (see Figure \ref{pic-jt-eow}). As analyzed in Appendix \ref{app:d2}, the appropriate definition for the semiclassical JT limit contains two separate scalings 
\be 
\lam,\d\ra0,~ \d/\lam\ra+\infty,~\b_{0}=\f{\b}{\d\sqrt{1-q}},~\mu=-\f{e^{-\tilde \mu\d/2}}{\sqrt{1-q}},~\t=\pi-\d k/2,~ k\sim O(1) \label{semiJT}
\ee
This limit can be understood as two steps.\footnote{An alternative way to understand this limit is first define $\phi_b=\d/(2\lam)$ and take $\d,\lam\ra 0$ while keeping $\phi_b$ fixed. This is equivalent to the quantum JT gravity coupled to matter in \cite{Kolchmeyer:2023gwa} as shown in Appendix \ref{app:d6}. Then we take the semiclassical limit with $\phi_b\ra+\infty$.} For $\d\sim O(1)$ and $\lam\ra 0$, \eqref{semiJT} is equivalent to the large $p$ limit \cite{Mukhametzhanov:2023tcg} with an heavy EoW brane $\mu_r\sim O(1/\lam)$, where the norm \eqref{4.10} is dominated by a saddle but with finite temperature effect; then we take $\d\ra0$ limit to cool down to low temperature and tune the EoW brane slightly lighter as $O(\d/\lam)$. We show in Appendix \ref{app:d1} and \ref{app:d2} that this semiclassical JT limit of $\ket{\b_0/2}$ can be precisely matched with the semiclassical JT gravity. It turns out that the thermal EoW brane state has a bulk dual of a single-sided black hole with inverse temperature
\be 
\b_{BH}=2\pi/k_s,\quad \pi-\b k_s/2=\arctan(k_s/\tilde \mu)
\ee
where $k_s$ is determined by the second saddle equation and is related to the extremal dilaton value of the black hole as 
\be 
\Phi_h=\d k_s/(2\lam )\sim O(\d/\lam)
\ee
which is analogous to the horizon area of a higher dimensional near extremal black hole.

While we take $\lam\ra 0$ limit, the gravity chord number $N_0$ should scale as $O(1/\lam)$ due to \eqref{3.13}, and the matter chord number $N_1$ is kept finite. Moreover, we will keep the crossing factor $r=q^\D$ with $\D\sim O(1)$. This means that the semiclassical JT limit also includes a projection to a subspace of $\mH$ with finite $N_1$. Given the general procedure of the infinite $N$ limit, we will consider the semiclassical JT limit of the pair $\{\tilde \mA_R,\ket{\b_0/2}\}$, which is denoted as $\{\tilde \mK_R,\ket{\phi_{\b_0/2}}\}$, where the state $\ket{\phi_{\b_0/2}}$ is an algebraic state that is proportional to $\ket{\b_0/2}$ but with a vanishing normalization $\mN_\lam\sim O(e^{-\d S/\lam})$ with $S>0$ such that $\avg{\phi_{\b_0/2}|\phi_{\b_0/2}}=1$ (see Appendix \ref{app:d2} for more details). 

The surviving algebra $\{\tilde \mK_R,\ket{\phi_{\b_0/2}}\}$ is indeed a GNS algebra, which is defined through correlation functions 
\be 
\avg{\phi_{\b_0/2}|\tilde \mO_1\cdots\tilde \mO_n|\phi_{\b_0/2}}=\lim_{\lam\ra0}\mN_\lam \avg{\b_0/2|\tilde \mO_1\cdots\tilde \mO_n|\b_0/2},\quad \tilde \mO_i\in\tilde \mK_R\subset\tilde \mA_R \label{4.10-2}
\ee
On the other hand, due to the identity \eqref{3.37}, where all expectation values of $\tilde\mO\in\tilde\mA_R$ in $\ket{\w}$ is one-to-one equivalent to the expectation values of $\mO\in\mA_R$ in $\ket{\mu,\b_0/2}$
\begin{align} 
&\avg{\phi_{\b_0/2}|\tilde \mO_1\cdots\tilde \mO_n|\phi_{\b_0/2}}=\lim_{\lam\ra0}\mN_\lam \avg{\mu,\b_0/2|\mO_1\cdots\mO_n|\mu,\b_0/2}\nn\\
\equiv&\avg{\phi^\mu_{\b_0/2}|\mO_1\cdots\mO_n|\phi^\mu_{\b_0/2}},\quad \tilde \mO_i\in\tilde \mK_R\subset\tilde \mA_R,\quad \mO_i\in\mK_R\subset\mA_R \label{4.11}
\end{align}
where $\ket{\phi^\mu_{\b_0/2}}$ is another algebraic state proportional to $\ket{\mu,\b_0/2}$ with the same divergent normalization $\mN_\lam$. Here $\mK_R$ is corresponding algebra of $\tilde \mK_R$ by the $\td$ isomorphism and acts on $\ket{\phi^\mu_{\b_0/2}}$. Indeed, \eqref{4.11} induces another GNS algebra $\{\mK_R,\ket{\phi^\mu_{\b_0/2}}\}$, which is isomorphic to the GNS algebra $\{\tilde \mK_R, \ket{\phi_{\b_0/2}}\}$ from the semiclassical JT limit.

In general, an isomorphism between two algebras may not be necessarily related to the same physics context. However, by our discussion in Section \ref{sec:3.2}, this isomorphism is special because it corresponds to two different bulk slicing of any chord diagram. Moreover, this isomorphism induces an unitary equivalence by Lemma \ref{lm4}, and identifies all correlation functions of boundary operators and also reconstructed bulk fields. As chord diagrams in the semiclassical JT limit can be regarded as the continuum bulk spacetime with a heavy EoW brane, and different bulk slicing should be regarded as the gauge redundancy of diffeomorphism of the geometry, it is natural to expect that the two GNS algebras $\{\tilde \mK_R, \ket{\phi_{\b_0/2}}\}$ and $\{\mK_R, \ket{\phi^\mu_{\b_0/2}}\}$ should be represented by the same bulk QFT in the same background of a single-sided AdS$_2$ black hole with a heavy EoW brane behind horizon. In particular, they should be dual to the bulk fields living in the same spacetime region, namely the entanglement wedge $\mW_E$ of the right boundary.

In the following, we will utilize this isomorphism to discuss the entanglement wedge of $\{\mK_R, \ket{\phi^\mu_{\b_0/2}}\}$. Note that the semiclassical JT limit \eqref{semiJT} has an inverse temperature scale  $O(1/(\d\sqrt{1-q}))$, the time evolution should be taken in the same scale\footnote{Different time scale of Lorentzian evolved operators may form different algebras (see e.g. \cite{Penington:2025hrc}).}
\be 
U_t=e^{-\i h_R t},\quad h_R=H_R/(\d \sqrt{1-e^{-\lam}}) \label{4.17}
\ee
Just like the single-trace operators surviving in the infinite $N$ limit, we should expect all Lorentzian time evolved matter operators $M_R(t)=e^{\i  h_R t}M_Re^{-\i h_R t}$, as an analog of the single-trace operators, form a subalgebra $\mK_{R,*}\equiv \{M_R(t)\}''\subset \mK_R$ (by the isomorphism, there is a corresponding subalgebra $\tilde \mK_{R,*}\subset\tilde \mK_R$). This GNS subalgebra is defined by the correlation functions
\be 
\avg{\phi^\mu_{\b_0/2}|M_R(t_1)\cdots M_R(t_n)|\phi^\mu_{\b_0/2}}=\lim_{\lam\ra0}\mN_\lam \avg{\mu,\b_0/2|M_R(t_1)\cdots M_R(t_n)|\mu,\b_0/2}
\ee
The matter correlation functions can be computed by summing over all Wick contractions of $M_R(t_i)$, which can be grouped into different matter chord diagrams. It turns out that only non-crossing matter chord diagrams survive in the semiclassical JT limit and they factorize into product of two-point functions as we show in Appendix \ref{app:d}. In other words, Lorentzian matter operators reduce to a generalized free field theory. For the two-point function, we have
\begin{equation}
\avg{\phi^\mu_{\b_0/2}|M_R(t_1)M_R(t_2)|\phi^\mu_{\b_0/2}}=\left(\f{(\d\pi/\b_{BH})}{\i\sinh(\f{\pi}{\b_{BH}}t_{12})}\right)^{2\D}
\end{equation}
where the overall vanishing factor $\d^{2\D}$ can be absorbed into the definition of $M_R(t)$. It follows that $\mK_{R,*}$ is just an algebra of generalized free fields in 0+1 dimension in a thermal state with inverse temperature $\b_{BH}$.

By subregion-subalgebra duality \cite{Leutheusser:2022bgi}, this ``single-trace" algebra $\mK_{R,*}$ is dual to the causal wedge $\mW_C$ (see Figure \ref{pic-jt-eow}). The bulk fields living in $\mW_C$ can be reconstructed from $\mK_{R,*}$ using the standard HKLL kernel \cite{Hamilton:2005ju,Hamilton:2006az}. It has been conjectured \cite{Jafferis:2015del,Faulkner:2017vdd} that the one-parameter extension of $\mK_{R,*}$ by the modular flow
\be 
\mK_{R,m}=\{\D_{\mu}^{\i s}\mK_{R,*}\D_{\mu}^{-\i s}|s\in\R\}'' \label{4.14}
\ee
is equivalent to the algebra $\mK_R$ dual to the entanglement wedge. Here $\D_{\mu}$ is the modular operator of the state $\ket{\phi^\mu_{\b_0/2}}$ for $\mK_R$.\footnote{By personal communication with H. Liu, there is a theorem in an unpublished work of T. Faulkner guarantees the modular extension if the state $\ket{\phi^\mu_{\b_0/2}}$ is cyclic and separating for $\mK_{R,0}$. This is true for our case because $\ket{\phi^\mu_{\b_0/2}}$ has a semiclassical bulk geometry, and it is cyclic and separating for the bulk fields in the causal wedge $\mW_C$ due to the Reeh–Schlieder theorem \cite{Witten:2018zxz} of bulk quantum field theory.} While the explicit form of the modular operator $\D_{\mu}$ of $\ket{\phi^\mu_{\b_0/2}}$ for $\mK_R$ is not obvious, its type II$_1$ version before the JT limit is easy to find as 
\be 
\D_{(\mu,\b_0/2)}=e^{-\b_0H_R}\mB_{R,\mu}\mB_{L,\mu}^{-1}e^{\b_0 H_L} \label{4.21}
\ee 
following a similar computation of Lemma \ref{lemma8}. Since the modular operator exists in all types of von Neumann algebra, we expect the matter operators $M_R(t)$ modular-flowed by $\D_{(\mu,\b_0/2)}$ has well-defined semiclassical JT limit, and coincide with \eqref{4.14}. In other words, supposing the semiclassical JT limit is interchangable with modular flow, we have\footnote{The double commutant is outside of the semiclassical JT limit because the Hilbert space and all bounded operators therein also have changed under the limit.}
\be
\mK_{R,m}=\left(\lim_{\lam\ra 0}\{\D_{(\mu,\b_0/2)}^{\i s}\mK_{R,*}\D_{(\mu,\b_0/2)}^{-\i s}|s\in\R\}\right)''
\ee
This method is very useful to study the modular flow of a system, which has a type I or type II completion because the modular operator is simple in the type I/II case. For example, it was applied for the modular reconstruction of a generic PETS dual to a long wormhole spacetime in JT gravity with AdS$_2$ background \cite{Gao:2024gky} because the JT plus matter theory allows a type II$_\infty$ quantum description \cite{Kolchmeyer:2023gwa,Penington:2023dql}.

In Appendix \ref{app:d4}, we compute the modular flowed two-point function as
\be 
\avg{\phi^\mu_{\b_0/2}|\D_{\mu}^{\i s} M_R(t_1)\D_{\mu}^{-\i s} M_R(t_2) |\phi^\mu_{\b_0/2}}=\left(\f{(\d\pi/\b_{BH})}{\i\sinh\left[\f{\pi}{\b_{BH}}t_{12}-\pi s\right]}\right)^{2\D}
\ee
which shows that the modular flow is essentially the same as the time evolution with a constant rescaling. Therefore, the modular extension is trivially and we have 
\be 
\mK_R=\mK_{R,m}=\mK_{R,*}
\ee
Since the causal wedge bulk QFT is type III$_1$, the full surviving right boundary algebra $\mK_R$ is not type I but type III$_1$ as well. From the bulk viewpoint, this implies that the entanglement wedge $\mW_E$ is exactly the same size as the causal wedge $\mW_C$. Correspondingly, the wedge behind the horizon and before the EoW brane cannot be reconstructed from the right boundary algebra $\mK_R$, and must belong to its commutant $\mK_R'$! Inspired from the island in black hole evaporation \cite{Almheiri:2019psf,Penington:2019npb}, where the island does not belong to the black hole but its commutant, the old Hawking radiation, we would like to call this wedge behind horizon as ``no man's island" to stress that there is no asymptotic boundary (since we only have one single boundary in the semiclassical JT limit) that can claim the physical observables therein (see Figure \ref{pic-jt-eow}).

\subsection{No man's island is for a single-sided black hole with entanglement} \label{sec:4.4}

The existence of the no man's island could be very surprising at the first glance because the black hole is single-sided and by ordinary AdS/CFT dictionary the only asymptotic boundary should claim everything in the black hole. However, the existence of no man's island implies a nontrivial commutant to the boundary operators surviving in the semiclassical JT limit. As we know at the type II$_1$ level that there is a nontrivial commutant $\tilde \mA_L$ to the right boundary algebra $\tilde \mA_R$ though it does not have a geometric picture as discussed in Section \ref{sec:4.1}. Nevertheless, we should expect some operators from $\tilde\mA_L$ also surviving in the semiclassical JT limit and they are dual to the bulk fields living on the no man's island. 

One way to understand these survivals from $\tilde \mA_L$ is to compute the matter correlation functions in $\tilde \mA_L$, and the two-sided correlation function between $\tilde \mA_L$ and $\tilde \mA_R$ in the thermofield double state $\ket{\b_0/2}$. The former can be rewritten as matter correlation functions in $\tilde\mA_R$ by noting 
\begin{align} 
\avg{\b_0/2|\tilde H_{L,i_1}\cdots \tilde H_{L,i_k}|\b_0/2}=&\avg{\psi|\tilde H_{R,i_k}\cdots \tilde H_{R,i_1}e^{-\b_0 \tilde H_{R,0}}\tilde \mB_{R,\mu}|\psi}\nn\\
=&\avg{\b_0/2|\tilde H_{R,i_k}\cdots \tilde H_{R,i_1}|\b_0/2}
\end{align}
where we used the tracial property of $\ket{\psi}$ for $\tilde \mA_R$. Therefore, the survival algebra of the pair $\{\tilde \mA_L, \ket{\b_0/2}\}$ is $\{\tilde \mK_L,\ket{\phi_{\b_0/2}}\}$, where $\tilde \mK_L$ is generated by all Lorentzian evolved matter operators
\be 
\tilde \mK_L=\{e^{\i \tilde h_L t}\tilde M_L e^{-\i \tilde h_L t}\}'',\quad \tilde h_L=\tilde H_L/(\d\sqrt{1-e^{-\lam}})
\ee
Under the isomorphism $\td$, we also have the pair $\{\mA_L,\ket{\mu,\b_0/2}\}$ reducing to $\{\mK_L,\phi^\mu_{\b_0/2}\}$ in the semiclassical JT limit, where $\mK_L$ is the same as $\tilde \mK_L$ with tilde removed. By the computation in the last section, this implies that $\mK_L$ is also a generalized free field in a thermal state with inverse temperature $\b_{BH}$. 

To understand the entanglement between $\mK_L$ and $\mK_R$ (or equivalently $\tilde \mK_L$ and $\tilde \mK_R$), we need to compute the two-sided two-point function
\begin{align} 
\avg{\phi^\mu_{\b_0/2}|M_L(t_1)M_R(t_2)|\phi^\mu_{\b_0/2}}=&\avg{\phi_{\b_0/2}|\tilde M_L(t_1)\tilde M_R(t_2)|\phi_{\b_0/2}}\nn\\
=&\lim_{\lam \ra 0} \mN_\lam \avg{\mu,\b_0/2|M_L(t_1)M_R(t_2)|\mu,\b_0/2}
\end{align}
where the first equality is due to \eqref{eq:17-2-2}. We compute it in Appendix \ref{app:d5}, and the result is very simple
\be 
\avg{\phi^\mu_{\b_0/2}|M_L(t_1)M_R(t_2)|\phi^\mu_{\b_0/2}}=\left(\f{(\d\pi/\b_{BH})}{\cosh\f{\pi}{\b_{BH}}(t_{1}+t_{2})}\right)^{2\D} \label{4.28}
\ee
which is exactly the left-right two-point function of thermofield double state with the black hole inverse temperature $\b_{BH}$.

This indicates that the left algebra $\{\mK_L,\ket{\phi^\mu_{\b_0/2}}\}$ is the canonical purification \cite{Engelhardt:2018kcs,Dutta:2019gen,Engelhardt:2022qts} of the right algebra $\{\mK_R,\ket{\phi^\mu_{\b_0/2}}\}$. In other words, we can equivalently understand $\mK_L$ and $\mK_R$ as the single-trace operators on each boundary in the thermofield double state
\be 
\ket{\text{tfd};\b_{BH}}\propto \sum_n e^{-\b_{BH}E_n/2} \ket{E_n}_L\otimes \ket{E_n}_R \simeq \ket{\phi^\mu_{\b_0/2}} \label{4.29}
\ee
This thermofield double state canonically purifies the entanglement wedge $\mW_E$ outside of the horizon of the single-sided black hole into a two-sided black hole (see Figure \ref{pic-jt-eow-3}). Geometrically, this purification is just extending the no man's island all the way to the left boundary. Therefore, the bulk fields on the no man's island can be reconstructed by $\mK_L$ using the standard HKLL kernel. 

Same logic applies to the tilde algebra, namely that $\{\tilde \mK_L,\ket{\phi_{\b_0/2}}\}$ is the canonical purification of $\{\tilde\mK_R,\ket{\phi_{\b_0/2}}\}$. However, from the the chord diagram viewpoint, all generators in $\mA_{L,R}$ and $\tilde \mA_R$ are simple but the generators in $\tilde \mA_L$ is complex (especially the matter operator $\tilde M_L$ that survives in the semiclassical JT limit) as discussed in Section \ref{sec:4.3}. This suggests a different interpretation for the canonical purification of the tilde algebras. 

In Section \ref{sec:4.3} and \ref{sec:3.2} we argue that changing perspective from $\tilde \mA_R$ to $\mA_R$ using the isomorphism $\td$ just corresponds to different bulk slicing and should not change the dual geometry. However, this slicing change picture does not apply for the relation between $\tilde \mA_L$ and $\mA_L$ as we know from Section \ref{sec:4.1} that generators of $\tilde \mA_L$ act non-geometrically (in the chord sense) but $\mA_L$ consists of simple operators acting on the left boundary of a two-sided system. Even though the correlation functions of $\tilde\mA_{L,R}$ in $\ket{\b_0/2}$ and those of $\mA_{L,R}$ in $\ket{\mu,\b_0/2}$ are the same due to the unitary equivalence $U$ induced by $\td$ in Lemma \ref{lm4}, we want to stress that their bulk geometric interpretations should still depend on how geometric quantities emerge from the discrete chord diagrams. 

This is analogous to the black hole radiation process with the emergence of island after Page time \cite{Almheiri:2019psf,Penington:2019npb}. In this case, we will have an old single-sided black hole and old radiations in a state $\ket{\Psi}$ with huge entanglement. Simple operators in the old radiation have non-thermal correlation with bulk fields $\chi_b$ in the entanglement wedge of the old black hole. However, since the old radiation claims the island, one can build a complex operator $\chi_c$ from the algebra of the old radiation such that it behaves like a bulk field living on the island. This complex operator $\chi_c$ should have thermal correlation with $\chi_b$. In this picture, $\chi_c$ is in the old radiation system and does not have a holographic boundary representation. Alternatively, we can implement a unitary $\mU$ that collapses all old Hawking radiations into a new black hole and shorten the Einstein-Rosen bridge to generate a thermofield double state with the old black hole, which is equivalent to the canonical purification of the latter \cite{Maldacena:2013xja,Engelhardt:2022qts}. Under this unitary transformation $\mU$, the complex operator is mapped to a simple operator $\chi_s$ on the boundary of the new black hole, the state $\ket{\Psi}$ is mapped to the thermofield double state, and the correlation between $\chi_s$ and $\chi_b$ in the thermofield double state is the same as the correlation between $\chi_c$ and $\chi_b$ in the original state $\ket{\Psi}$. 

\begin{figure}
	\centering
        \includegraphics[height=1.8cm]{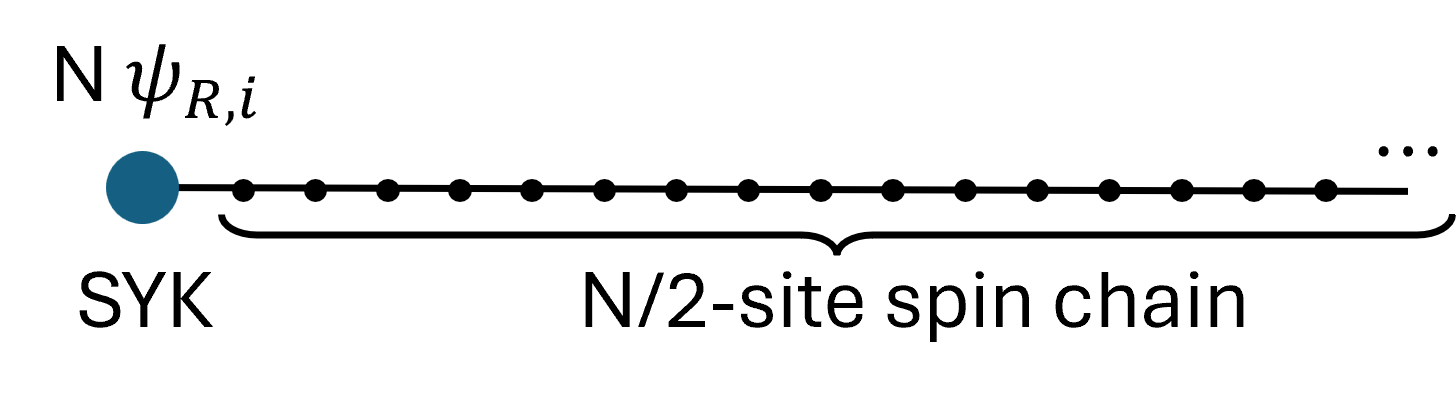}
        \caption{An SYK model with $N$ Majorana fermions coupled to a spin chain with $N/2$ sites as a boundary setup for the black hole radiation process. \label{pic-syk-SC}}
\end{figure}

The boundary description to above black hole radiation process using SYK model can be set up as follows. We can couple an SYK model of even $N$ Majorana fermions $\psi_{R,i}$ with a spin chain with $N/2$ sites. They both have a $2^{N/2}$ dimensional Hilbert space. The SYK model simulates a black hole, the spin chain simulates the radiation bath as shown in Figure \ref{pic-syk-SC}, and the coupling mediates a radiation process. After Page time, the spin chain has $O(N)$ entanglement with the SYK model and will claim an island in the dual bulk spacetime. Then there exists the above operator $\chi_c$ written in the Pauli basis of the spin chain in a complex way. On the other hand, we can use the Jordan-Wigner transformation to unitarily map Pauli operators on the spin basis to Majorana basis 
\be 
\psi_{L,2i-1}=\G_R\s_{1,z}\cdots\s_{i-1,z}\s_{i,y},\quad \psi_{L,2i}=\G_R \s_{1,z}\cdots \s_{i-1,z}\s_{i,x},\quad i=1,\cdots,N/2 \label{4.30}
\ee
where $\G_R=\i^{N/2}\psi_{R,1}\cdots \psi_{R,N}$ is added to guarantee anti-commutation with $\psi_{R,i}$. It follows that $\chi_c$ is a complex operator in terms of the $N$ Majorana fermions $\psi_{L,i}$. In the double scaled limit, the Majorana fermion basis is ensemble averaged by \eqref{eq:2.5} and we should consider the chord basis as the same level of complexity as the Majorana basis. In this sense, the complex operator $\tilde M_L$ should be identified with $\chi_c$. This gives a simple interpretation of the non-geometric feature of $\tilde \mA_L$ that it corresponds to a complex operator in the old Hawking radiation. It follows that we can understand the unitary equivalence $U$ as the complex unitary $\mU$ in the last paragraph to transform the radiation system into the canonical purification of the SYK model or the old black hole in a thermofield double state \eqref{4.29}. In other words, it encodes the bulk information of the no man's island into the old radiation in a complex way. An interesting difference is that $\mU$ is supposed to be an unitary acting on the radiation or spin chain system only (if we ignore $\G_R$ for the anti-commutation with $\psi_{R,i}$) but $U$ acts on both $\tilde\mA_L$ and $\tilde \mA_R$, though its action on $\tilde\mA_R$ can be understood as a bulk gauge redundancy of changing slicing. Probably one reason for this is that above finite $N$ model has factorized Hilbert space but the DSSYK has taken infinite $N$ limit and does not have a factorized Hilbert space. 

With this understanding, the no man's island has the same essence as the island of the black hole radiation process. In a generic sense, the boundary algebra $\tilde \mA_R$ corresponds to a single-sided black hole with a huge amount of entanglement with another system, which may not have a geometric or gravitational description.

\subsection{A different boundary algebra for a pure state single-sided black hole without entanglement} \label{sec:4.5}

Besides the EoW brane by explicit construction in the DSSYK, it has been argued in \cite{Kourkoulou:2017zaj} that projections into simple basis on one side of a thermofield double state will generate an EoW brane behind the horizon of the other side. The projection and the initial thermofield state is just an auxiliary preparation for a specific pure state in a single SYK model, which is dual to an intrinsic single-sided black hole in a pure state without entanglement. The operation in \cite{Kourkoulou:2017zaj} can be briefly summarized as follows.

Consider a thermofield double state of two SYK models with fermions $\psi_{L,i}$ and $\psi_{R,i}$. The entanglement wedge of each SYK model is the same as the the causal wedge of each side. Taking the same Jordan-Wigner transformation \eqref{4.30}, apply projections into the eigen basis of $\s_{i,z}=\i\psi_{L,2i-1}\psi_{L,2i}$ for each $i$, which gives a sequence $s=\{s_i\}$ of measurement outcomes $s_i=\pm$ for $i=1,\cdots,N/2$. Since this projection is in a high energy basis, it produces a high energy shockwave from the left boundary and behind the right horizon, which can be understood as an EoW brane. After projection, the state becomes a product state $\ket{s}\ket{B_s}$ between two SYK models and has zero entanglement. The pure state $\ket{B_s}$ is dual to a single-sided black hole without entanglement with another system and with an EoW brane behind the horizon.

On the other hand, it was shown in \cite{Kourkoulou:2017zaj} that such a pure state $\ket{B_s}$ is a thermal state for boundary two-point functions of $\psi_{R,i}$ with the same fermion indices and independent on the measurement outcome $s_i$. Therefore, using the HKLL reconstruction, one can reconstruct the region outside of the horizon. On the other hand, the two-point function of $\psi_{R,2i-1}$ and $\psi_{R,2i}$ is the product of two thermal two-point functions with coefficient $s_i$. Measuring the sign of all these off-diagonal two-point functions, one can infer the data $s_i$. With the data of $s_i$ and all $\psi_{R,i}$ fermions, it was argued in \cite{Almheiri:2018xdw} that the region between the EoW brane and the horizon, namely the no man's island, can be reconstructed fully from right boundary in terms of a quantum error correction code. Indeed, suppose the no man's island has smooth geometry in $\ket{B_s}$, then it is in the code subspace $\mH_{code}$ and the projection on the left keeps $\mH_{code}$ invariant. Therefore, the projection can be understood as an erasion error, against which the bulk fields on the no man's island is protected by the general feature of quantum error correction \cite{Harlow:2016vwg}. This seems to be a contradiction to our statement that no man's island cannot be reconstructed from the right boundary.

However, as we explained at the end of Section \ref{sec:4.2} that there are two different ways to define the boundary algebra of the single-sided black hole with or without the exponential wormhole length operator $e^{-\bar \l _b}$ independently.\footnote{We thank Hong Liu for discussing on this point.} If we have this operator, we can reconstruct all bounded operators, which in the semiclassical JT limit definitely include all bulk fields acting on the no man's island of state $\ket{\b_0/2}$ in \eqref{eq:4.9} (or equivalently the thermal EoW brane state $\ket{\mu,\b_0/2}$ in \eqref{eq:4.10}). It would be interesting to write down an explicit reconstruction map for no man's island from the boundary using the semiclassical JT version of $e^{-\bar \l _b}$. On the other hand, we do not need $e^{-\bar \l _b}$ to reconstruct the outside horizon region. Since $\tilde \mA_R$ is a subalgebra of $\mB(\mH)$ and the whole bulk contains no man's island, we can view this relation as a generalized case of the subregion-subalgebra duality \cite{Leutheusser:2022bgi}. 

As we stressed in the introduction, the alternative definition of the boundary algebra by including $e^{-\bar \l _b}$ to $\tilde \mA _R$ can be viewed as characterizing a different boundary theory. Since this operator helps reconstruct the no man's island, the boundary algebra with  $e^{-\bar \l _b}$ corresponds to a single-sided black hole in a pure state without entanglement with another system, unlike the case in the previous subsection. Therefore, the ambiguity of the boundary algebra definition is indeed a nice feature to describe two types of single-sided black holes in a unified framework.

In the following, we will give a physical explanation why $e^{-\bar \l _b}$ is essential to reconstruct the no man's island and make a concrete connection to \cite{Kourkoulou:2017zaj}. If we write $\bar\l_b$ in terms of fermion operators of the left and right SYK model, it becomes the size operator  \cite{Qi:2018bje,Lin:2023trc}
\be 
\bar \l_b= \f{\lam}{2p}\sum_{i=1}^{N}(1+ \i\psi_{L,i}\psi_{R,i}) \label{eq:4.31}
\ee
One may question why $e^{-\bar \l _b}$ can be regarded as a right boundary operator in this context. It has been shown in \cite{Gao:2019nyj} that we can write $e^{\i g \bar \l _b}$ as the sum over measurements on some left Pauli basis followed by a measurement-based unitary transformation on the right, namely the teleportation protocol. Therefore, in the context of \cite{Kourkoulou:2017zaj} where the left SYK model has been fully projected, the exponential wormhole length operator $e^{-\bar \l _b}$ can be regarded as a right boundary operator. Indeed, it has been proposed in \cite{Kourkoulou:2017zaj} that we can use a measurement-based eternal traversable wormhole evolution with a new Hamiltonian \cite{Maldacena:2018lmt}
\be 
H_{new}=H_{SYK}+\i \e \sum_{i=1}^N  s_i\psi_{R,2i-1}\psi_{R,2i} \label{eq:5.5}
\ee
to reconstruct the region behind the horizon. One can choose an appropriate sign of $\e$ such that the second term of \eqref{eq:5.5} is equivalent to the two-sided coupling (but slightly different from \eqref{eq:4.31}) between two SYK models, which produces a negative energy that generate a traversable wormhole \cite{Gao:2016bin,Gao:2019nyj,Maldacena:2017axo,Maldacena:2018lmt}. The new evolution allows the boundary trajectory extends beyond the original causal wedge and let it see the region behind the horizon. In this sense, adding $e^{-\bar \l _b}$ to the boundary algebra can be regarded as allowing a universal or averaged version of the measurement-based single-sided operation, which is essential for reconstructing the no man's island. If we disallow such an operation in the definition of the boundary algebra, we will only have $\tilde \mA_R$ and $\tilde \mK_R$, which cannot claim the no man's island.

Another technical remark is that an analogy to \eqref{eq:5.5} as a new Hamiltonian for evolution is very similar to the EoW brane Hamiltonian $\tilde H_{R,0}$ but with a different parameter range\footnote{We thank Ahmed Almheiri for discussing on this point.}
\be 
\hat H_{R,0}=H_{R,0}+\mu_* q^{N_0}r^{N_1},\quad \mu_*=-\f{e^{-(\mu_r+1/2)\lam}}{\sqrt{1-e^{-\lam}}},\quad \mu_r+1/2<0
\ee
Note that in the EoW brane setup of $\tilde H_{R,0}$, we require $\mu_r+1/2>0$ and the big $q$-Hermite polynomials have continuous spectrum and obey the orthogonality \eqref{eq:3.10}. However, with $\mu_r+1/2<0$ the big $q$-Hermite polynomials have another branch of discrete eigenvalues additional to the continuous branch \cite{koekoek2010hypergeometric}
\begin{align}
\int_{0}^{\pi}\f{d\t}{2\pi}\f{(e^{2\i\t},e^{-2\i\t},q;q)_{\infty}}{(ae^{\i\t},ae^{-\i\t};q)_{\infty}(q;q)_{n}}H_{m}(\cos\t;a|q)H_{n}(\cos\t;a|q)&\nn\\
+\sum_{\substack{k\\ 1\leq a q^k\leq a}} w_k \f{(q;q)_\infty}{(q;q)_n} H_{m}(x_k;a|q)H_{n}(x_k;a|q)& =\d_{nm} \\
\quad x_k=\f{a q^k+(a q^k)^{-1}}{2},\quad w_k=\f{(a^{-2};q)_\infty(1-a^2 q^{2k})(a^2;q)_k}{(q;q)_\infty(1-a^2)(q;q)_k}& q^{-3k^2/2-k/2}(-1/a^4)^k
\end{align}
where $a>1$ and $x_k$ is the discrete eigenvalues additional to $\cos \t$. We expect these discrete eigenvalues are the DSSYK version of the discrete bound state of Morse potential with negative $\mu_r+1/2$. It has been shown in \cite{Maldacena:2018lmt} that these discrete energy levels are eternal traversable wormhole states and if the temperature is low enough the evolution with $\hat H_{R,0}$ should be mostly dominated by these discrete energy levels. In this way, the boundary matter operator $H_{R,1}$ can evolve beyond the causal wedge and see the no man's island.

\section{Conclusion and discussion}
\label{sec:discussion}

In this paper we studied a single-sided black hole with an EoW brane behind the horizon in the DSSYK model. We proposed chord diagrammatic rules which reproduces earlier results in~\cite{Okuyama:2023byh}. Building on this, we have constructed the bulk Hilbert space and the boundary algebra for the single-sided black hole. We provide three different perspectives to understand the EoW brane in the DSSYK model. The first way is to deform the boundary Hamiltonian in a way such that it reduces to Morse potential in the JT limit. This deformation is chosen in a simple way such that it allows an exact diagonalization in terms of the big $q$-Hermite polynomials. Together with the matter operator, they form the boundary algebra $\tilde \mA_R$ for a single-sided black hole. The second way is to write an EoW brane as a $q$-coherent state in the original Hilbert space of DSSYK. We prove an isomorphism between the boundary algebra of each perspective to show that these two pictures are equivalent. This isomorphism between algebras further induces an unitary equivalence $U$, which leads to a surprising conclusion: the boundary algebra $\tilde \mA_R$ of a single-sided black hole is still a type $\text{II}_1$ von Neumann factor. The third way is to construct a family of matter-brane states, which includes some matter chords but behaves in the same way as the EoW brane $q$-coherent state if probed by boundary Hamiltonian. Such matter-brane states correspond to EoW branes with various boundary conditions for the bulk matter fields in the dual spacetime. As a byproduct, these matter-brane states give exact diagonalization of the original DSSYK Hamiltonian and solve its spectrum in the full Hilbert space in an elegant way.

Given the type II$_1$ boundary algebra $\tilde \mA_R$, we have attempted a few different ways to understand its commutant. The matter generator $\tilde H_{L,1}$ of the commutant $\tilde \mA_L$ cannot be written in terms of some simple operators on chord number basis, which implies that $\tilde \mA_L$ is ``non-geometric" if we regard chord basis as a kind of discretized geometry. On the other hand, to reconstruct all bounded operators from $\tilde\mA_R$ we just need to include a simple exponential wormhole length operator $e^{-\bar \l _b}$. Taking the semiclassical JT limit, we utilize the unitary equivalence $U$ to argue that the right boundary algebra reduces to $\tilde \mK_R$ or $\mK_R$, which only claims the spacetime region outside the horizon. The region between the horizon and the EoW brane cannot be reconstructed by the boundary of the single-sided black hole and we name it as "no man's island". It is claimed by the commutant $\tilde \mK_L$ or $\mK_L$, which is the canonical purification of $\tilde \mK_R$ or $\mK_R$. We explain that the no man's island is essentially the same as the island in the black hole radiation process after Page time and interpret $U$ as the complex encoding map of the island into the old radiation. The boundary algebra $\tilde \mA_R$ corresponds to a single-sided black hole with a huge amount of entanglement with another system, which may not have a geometric or gravitational description.

On the other hand, we can define the boundary algebra of a single-sided black hole differently by including the exponential wormhole length operator $e^{-\bar \l _b}$. This algebra is the full algebra $\mB(\mH)$ and claims the no man's island in the semiclassical JT limit. Since this operator can be written as the coupling term of the eternal traversable wormhole, we can interpret it as an averaged version of the measurement-based single-sided operation in \cite{Kourkoulou:2017zaj} to reconstruct the no man's island in a pure state of a single SYK model. This alternative boundary algebra corresponds to a single-sided black hole in a pure state without entanglement with another system. This ambiguity of the boundary algebra definition is a nice feature to describe two types of single-sided black holes in a unified framework.

The followings are discussions and future directions.

\subsubsection*{Trumpet Amplitudes and Divergence}
\begin{figure}
	\centering
	\begin{subfigure}{0.3\textwidth}
		\includegraphics[width=\textwidth]{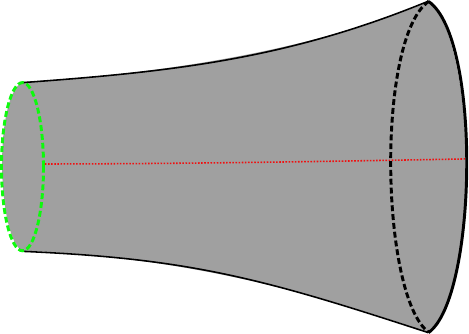}
        \captionsetup{format=hang}
        \caption{}
	\end{subfigure}
        \hspace{5em}
	\begin{subfigure}{0.3\textwidth}
		\includegraphics[width=\textwidth]{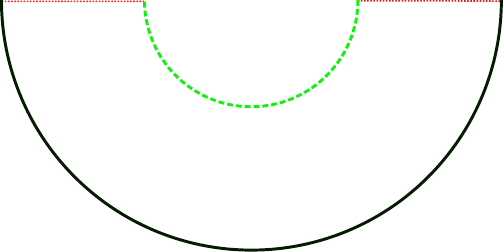}
        \captionsetup{format=hang}
        \caption{}
        \end{subfigure}
        \caption{By slicing open a trumpet geometry in the presence of an EoW brane as shown in (a), we get a two-sided geometry with two disconnected Cauchy surfaces, which we interpret as two single-sided black holes entangled across a pair of EoW branes.}
        \label{cut a trumpet}
\end{figure}

The one-sided chord diagrammatics introduced in Section~\ref{sec:EoW brane} can be naturally generalized to define trumpet amplitudes in the DSSYK model. In JT gravity trumpet amplitudes are divergent. In the pure gravity case they diverge as $\lim_{b\rightarrow 0}\frac{1}{b}$ where $b$ is the length of the EoW brane, while in the presence of matter, the divergence becomes exponential as $\lim_{b\rightarrow 0}e^{\frac{1}{b}}$. We will see that, using our diagrammatic rules, we can write the DSSYK trumpet amplitude as a sum of chord diagrams defined on a disk with a topological defect, and they have similar divergences as in JT gravity as a result of proliferating vacuum bubbles in these diagrams. We propose schemes to regularize these amplitudes by removing vacuum bubbles. By slicing open a regularized trumpet we can then define two disconnected single-sided blackholes entangled across a pair of EoW branes as shown in Figure~\ref{cut a trumpet}. They can be interpreted as thermofield double states with disconnected bulk geometry. We will focus on the pure gravity (a single type of chord) case in this paper and leave discussions for multiple chords to future work.

In the pure gravity case, a naive trumpet amplitude with boundary operator insertion $\hat O$ which is a polynomial in $H$ (we will simply write $H$ for $\tilde H_{R,0}$ in the following for simplicity) can be written as the sum of all diagonal elements of the operator $H^k$ in the chord number basis $\ket{n}$: 
\begin{equation}\label{eq:5.1}
     \Tr(\hat O)=\sum_{n} O_{nn}
\end{equation}
we use $\Tr$ to distinguish this naive trace from the regularized ones to be defined later. Diagrammatically, this is just the sum over chord diagrams as shown in Figure~\ref{diagonal-element}, which can be wrapped up to obtain chord diagrams on a disk with a defect in Figure~\ref{diagonal-element-warpped}. The presence of the defect changes the topology of the disk. As a result we have to take into consideration non-trivial windings and self-interactions of chords.  

\begin{figure}
	\centering
	\begin{subfigure}{0.4\textwidth}
		\includegraphics[width=\textwidth]{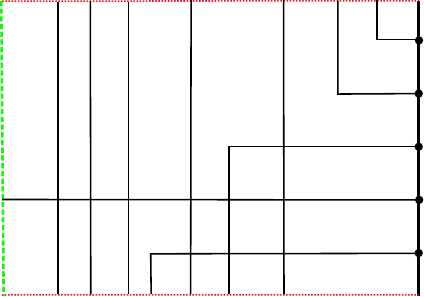}

        \caption{}
        \label{diagonal-element} 
	\end{subfigure}
        \hspace{5em}
	\begin{subfigure}{0.3\textwidth}
		\includegraphics[width=\textwidth]{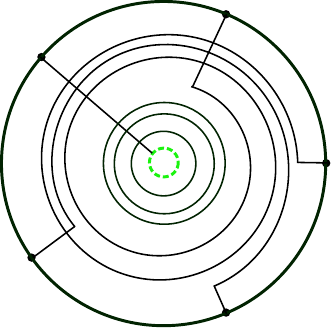}
        \captionsetup{format=hang}
        \caption{}
        \label{diagonal-element-warpped}
        \end{subfigure}

        \caption{Examples of chord diagrams representing diagonal elements. Here (a) contributes to the diagonal matrix element $(H^5)_{77}$ ($H$ is short for $\tilde H_{R,0}$ here). By identifying the top and bottom boundaries we get the chord diagrams on a disk with a defect as shown in (b). }
\end{figure}

Clearly the naive sum \eqref{eq:5.1} diverges. This is a result of the fact that monomials $H^k$ are non-compact operators in the sense that their matrix elements do not decay as $n$ becomes larger. The fact is reflected in the diagrammatics through the emergence of vacuum bubbles, which are defined to be the collection of chords disconnected from the boundary. For example, in Figure~\ref{diagonal-element-warpped} the vacuum bubble is the three concentric loops floating in the middle of the diagram. In Appendix~\ref{app:f} we included a detailed analysis of the structure of vacuum bubbles, where we prove that proliferating vacuum bubbles leads to a linear divergence. There are multiple ways to regularize this divergence and obtain renormalized trumpet amplitudes satisfying linearity, cyclicity and positivity.\footnote{These are the properties of a trace function, the reason we demand them is that we expect a GNS construction using this state.}

Firstly, we can formally divide by the divergent trace of the unit operator $\Tr(\mathbb I)$. This amounts to delete all diagrams either with vacuum bubbles or EoW brane chords (chords connected to the EoW brane). This regularization leads to the following trace 
\begin{equation}
    \tr[f(H)]=\int_0^\pi\frac{d\theta}{\pi}f[E(\theta)]
\end{equation}
where $f(H)$ is an arbitrary function of $H$ inserted on the boundary. Although satisfying the definition of a trace, this regularization seems too coarse in the sense that it eliminates all information about the EoW brane. A more deliberate regularization can be achieved by simply removing all vacuum bubble diagrams, which leads to the regularized trace agreeing with results in~\cite{Okuyama:2023byh}  
\begin{equation}
    \widetilde {\tr}[f(H)]=\int_0^\pi \frac{d\theta}{2\pi}f[E(\theta)]\frac{1-\tilde\mu\cos\theta}{1-2\tilde\mu\cos\theta+\tilde\mu^2}
\end{equation}
where $\tilde \mu=\mu\sqrt{1-q}$. Proofs of these formulas are again included in Appendix~\ref{app:f}. In fact, in the pure gravity case there are infinitely many ways to regularize the trumpet amplitude by integrating $f[E(\t)]$ with positive measures. However most of these regularization lacks a clear geometric representation as the ones defined above. Different trace choices lead to different construction of a type I$_\infty$ algebra. This is consistent with the discussion in \cite{Xu:2024hoc}.

The divergence of trumpet amplitudes persists in the multiple-chord case and becomes more severe. In Appendix~\ref{app:f} we proved that in this case, since each chord in the vacuum bubble can be either a gravitational chord or a matter chord, the divergence from vacuum bubbles becomes exponential. Due to this exponential divergence and the non-Abelian nature of the boundary algebra, the regularizations discussed above are no longer valid in this case. One reasonable choice is to first truncate operators $\tilde H_{R,i}$ in the subspace with a fixed total chord number $K=N_0+N_1$, then normalized by the regular trace of a unit operator in this subspace, and take $K$ to infinity in the end
\be \label{eq:5.2}
\tau(f(\tilde H_{R,i}))=\lim_{K\ra\infty}\Tr(f(P_{N_0+N_1=K}\tilde H_{R,i}P_{N_0+N_1=K}))/\Tr(P_{N_0+N_1=K})
\ee
where $f(\tilde H_{R,i})$ is a polynomial of $\tilde H_{R,i}$ and $P_{N_0+N_1=K}$ is the projection to the subspace $\mH_{N_0,N_1}$ with $K=N_0+N_1$. In the semiclassical JT gravity view point, this is the regularization of the trumpet with fixed distance (if the matter chord number is $O(1)$) between the EoW brane and the asymptotic boundary.\footnote{We thank Daniel Jafferis for discussing on this point.} This regularization $\tau$ should define a positive linear function qualified as a trace if we quotient out some null operators that map to zero under $\tau$. Note that this is a completely different choice of trace function from $\avg{\w|\cdot|\w}$, which is the conventional choice for DSSYK. Therefore, we expect this new trace $\tau$ leads to different dynamics and algebra properties, which will be left for future works. 

\subsubsection*{Bulk dual of matter-brane states}

In Section \ref{sec:3.4}, we have constructed an overcomplete set of states for $H_{R,0}$ that behaves identically as a $q$-coherent state. Therefore, we may understand them as various types of EoW branes in DSSYK model. On the other hand, matter-brane states can contain arbitrary numbers of matter chords, which can be probed by matter operators $H_{R,1}$ because it can contract with the matter chords in the matter-brane states. In the JT gravity language, this means that the matter-brane states are EoW branes with different boundary conditions to bulk matter fields. The matter brane can absorb and emit bulk matter fields with non-trivial amplitudes. These matter-brane states are analogous to EoW branes by various types of projections in \cite{Kourkoulou:2017zaj,Almheiri:2018xdw}, where the off-diagonal boundary matter correlators are non-thermal but depends on the state of the EoW brane. Since the boundary matter correlation is different from that in the $q$-coherent state, many conclusions in this paper about the semiclassical JT limit of the EoW brane could potentially have some changes. In particular, we may ask to what extent the no man's island will be modified.

\subsubsection*{The quantum JT gravity counterpart for the single-sided black hole}
We can think of the DSSYK model as a UV completion of JT gravity. For example, by taking the JT limit of the DSSYK Hamiltonians we can restore Liouville or Morse Hamiltonians describing dynamics in JT gravity with different boundary conditions as we show in \eqref{2.47} and \eqref{3.14}. In the quantum regime, we can systematically include conformal matter fields and couple them with the Schwarzian mode from the quantized JT gravity. For this bottom-up construction, it has been shown in \cite{Penington:2023dql,Kolchmeyer:2023gwa,Penington:2024sum} that the boundary algebra of quantum JT gravity plus conformal matter is a type II$_\infty$ factor. In the main scope of this paper, we focus on either the type II$_1$ level or the semiclassical JT level, where the former is more ultraviolet than quantum JT gravity and the latter is more infrared than quantum JT gravity. To fit the quantum JT gravity into the frame work of DSSYK, we need to take an intermediate scaling, which is briefly mentioned in Appendix \ref{app:d6} that we can fix the ratio $\phi_b=\d/(2\lam)$ in \eqref{semiJT} while taking $\d,\lam\ra 0$ limit. We show in Appendix \ref{app:d6} that the thermal EoW brane state $\ket{\mu,\b_0/2}$ has the same inner product integral with the quantum JT result but more comprehensive examination is required in the future to track how the type II$_1$ algebra reduces to a type II$_\infty$ algebra. Since the matter operators in DSSYK is pairwise connected by a chord, the conformal matter in the quantum JT regime should be limited to generalized free fields. If this is the case, we expect that the no man's island should still exist in the JT version of the single-sided black hole with an EoW brane.

The top-down construction from DSSYK to quantum JT gravity has many advantages. While it is not obvious in prior how to define the quantum EoW brane state in JT gravity, the UV complete DSSYK model gives a simple candidate for the former. Moreover, from the elaborate definition of the EoW brane in Section \ref{sec:2}, we find a nontrivial unitary equivalence $U$ that maps between two algebras in a non-geometric way. This is very nontrivial from the bottom-up viewpoint of quantum JT gravity and worth understanding how it acts at the quantum JT scale if it survives in the aforementioned limit.

\subsubsection*{Towards an algebraic Page phase transition}
In~\cite{Penington:2019kki,Almheiri:2019qdq} the Page transition of an evaporating black hole is identified using a replica trick calculation. As proposed in ~\cite{Engelhardt:2023xer}, the boundary algebraic structure is different before and after the Page time. As discussed in Section \ref{sec:4.4}, the algebra $\tilde \mA_R$ for a single-sided black hole with an EoW brane in the DSSYK could be understood as the phase after Page time, where the no man's island is analogous to the island of the old radiation. Somehow, the extended algebra with $e^{-\bar{\l}_b}$ is similar to a single-sided black hole at the beginning or early stage of the radiation. However, a concrete infinite $N$ holographic model that realizes the dynamical phase transition in the boundary algebra formalism is still in lack. In particular, if we expand around the Page time, we should have a one-parameter family of von Neumann algebras to interpolate an algebraic type transition from type I to type III while undergoing probably a type II in the intermediate regime \cite{Engelhardt:2023xer}. Such a one-parameter family of von Neumann algebras will deepen our understanding on the entanglement phase transition in an accurate way.

\acknowledgments

We thank Ahmed Almheiri, Andreas Blommaert, Thomas Faulkner, Daniel Jafferis, Marius Junge, David Kolchmeyer, Nima Lashkari, Adam Levine, Hong Liu, Xiao-liang Qi, Zhencheng Wang, Jiuci Xu for useful discussions. PG is grateful for the NHETC of Rutgers University and the Institute of Advanced Studies, where part of this work was done. PG is supported by the Fundamental Research Funds for the Central Universities. XC is supported by the DOE award number DE-SC0015655.

\appendix

\section{Transformation from matter-brane basis the chord number basis} \label{app:a}
In the following, we show that all matter-brane states also form a overcomplete basis of $\mH$ by working out the inverse transformation to the chord number (binary string) basis. For one-matter chord states, by \eqref{eq:11} we have 
\begin{equation}
\ket{0^{n_{1}}10^{m}}=\f{[n_{1}]!_{q}}{2\pi\i}\oint\f{d\a}{\a^{n_{1}+1}}\ket{(\a);0^{m}}
\end{equation}
For two-matter chord states, using (\ref{eq:43}), we can write a contour
integral
\begin{align}
\ket{(\a_{2},\a_{1});0^{m}} & =\oint_{|z|>1}\f{dz}{2\pi\i}\sum_{n=0}^{\infty}\f{(\a_{2}r/\a_{1})^{n}}{z(1-q^{n}/z)(q;q)_{n}}\ket{\mB_{\a_{2}}1\mB_{z\a_{1}}10^{m}}\nonumber \\
 & =\oint_{|z|>1}\f{dz}{2\pi\i}\sum_{k,n=0}^{\infty}\f{q^{nk}(\a_{2}r/\a_{1})^{n}}{z^{k+1}(q;q)_{n}}\ket{\mB_{\a_{2}}1\mB_{z\a_{1}}10^{m}}\label{eq:93-1}
\end{align}
Suppose the inverse integral transformation is in the following form
\begin{align}
&\ket{\mB_{\a_{2}}1\mB_{\a_{1}}10^{m}}=  \oint_{|\a_1|<|y|<1}\f{dy}{2\pi\i}K(\a_i;y)\ket{(\a_{2},\a_{1}/y);0^{m}}\nonumber \\
= & \sum_{k,n=0}^{\infty}\oint_{|\a_1|<|y|<1}\f{dy}{2\pi\i}K(\a_i;y)\oint_{|z|>1}\f{dz}{2\pi\i}\f{q^{nk}y^{n}(\a_{2}r/\a_{1})^{n}}{z^{k+1}(q;q)_{n}}\ket{\mB_{\a_{2}}1\mB_{z\a_{1}/y}10^{m}}\nonumber \\
= & \sum_{k,n=0}^{\infty}\oint_{|z|>1/|y|}\f{dz}{2\pi\i}\oint_{|\a_1|<|y|<1}\f{dy}{2\pi\i}K(\a_i;y)\f{q^{nk}y^{n-k}(\a_{2}r/\a_{1})^{n}}{z^{k+1}(q;q)_{n}}\ket{\mB_{\a_{2}}1\mB_{z\a_{1}}10^{m}}\label{eq:94}
\end{align}
where in the last step we shift $z\ra zy$. Assume the kernel $K(\a_i;y)$ is expanded
as power series of $y$
\begin{equation}
K(\a_i;y)=\sum_{s=0}^\infty\kappa_{s}y^{s-1}\label{eq:95}
\end{equation}
and the integral over $y$ in (\ref{eq:94}) leads to\footnote{The contour integral over $y$ can be done as a angular integral with fixed magnitude $|y|$, which leads to $n=k-s\geq 0$. This makes the integral over $z$ with some fixed magnitude $|z|>1$.}
\begin{equation}
\ket{\mB_{\a_{2}}1\mB_{\a_{1}}10^{m}}=\sum_{k=0}^{\infty}\sum_{s=0}^{k}\oint_{|z|>1}\f{dz}{2\pi\i}\kappa_{s}\f{q^{k(k-s)}(\a_{2}r/\a_{1})^{k-s}}{z^{k+1}(q;q)_{k-s}}\ket{\mB_{\a_{2}}1\mB_{z\a_{1}}10^{m}}\label{eq:96-1}
\end{equation}
We require
\begin{equation}
\sum_{s=0}^{k}\kappa_{s}\f{q^{k(k-s)}(\a_{2}r/\a_{1})^{k-s}}{(q;q)_{k-s}}=1\label{eq:97}
\end{equation}
and the sum over $k$ in (\ref{eq:96-1}) gives $1/(z-1)$, which
leads to $z=1$ after the $z$ contour integral (assuming $|z|>1$).
Indeed, (\ref{eq:97}) can be solved by
\begin{equation}
\kappa_{s}=\sum_{p=0}^{s}\f{(-\a_{2}rq^{s}/\a_{1})^{p}q^{-p(p-1)/2}}{(q;q)_{p}}
\end{equation}
and the kernel (\ref{eq:95}) is
\begin{align}
K(\a_i;y)&=\sum_{s=0}^\infty \sum_{p=0}^s \f{(-\a_{2}rq^{s}/\a_{1})^{p}q^{-p(p-1)/2}y^{s-1}}{(q;q)_{p}}=\sum_{s=0}^\infty \sum_{p=0}^\infty \f{(-\a_{2}ryq^{s+1}/\a_{1})^{p}q^{p(p-1)/2}y^{s-1}}{(q;q)_{p}} \nn\\
&=\sum_{s=0}^\infty (\a_{2}ryq^{s+1}/\a_{1};q)_\infty y^{s-1}=\f{(\a_{2}rqy/\a_{1};q)_{\infty}}y{}_{2}\phi_{1}\left(\left.\begin{array}{c}
q,0\\
\a_{2}rqy/\a_{1}
\end{array}\right|q;y\right) \nn\\
&\equiv K(\a_{2}r/\a_{1};y) \label{3.93}
\end{align}
where in the second step we shift $s\ra s+p$ and the $q$-hypergeometric function $_2\phi_1$ is defined in \eqref{app:b4}. It is clear that the sum in \eqref{3.93} is convergent for $|\a_1|<|y|<1$. It follows that the chord number states can be written as
\begin{equation}
\ket{0^{n_{2}}10^{n_{1}}10^{m}}=\f{[n_{1}]!_{q}[n_{2}]!_{q}}{(2\pi\i)^{2}}\oint\f{d\a_{1}d\a_{2}}{\a_{1}^{n_{1}+1}\a_{2}^{n_{2}+1}}\oint\f{dy}{2\pi\i}K(\a_{2}r/\a_{1};y)\ket{(\a_{2},\a_{1}/y);0^{m}}\label{eq:100-1}
\end{equation}
where the contour obeys $|\a_2|r<|\a_{1}|<|y|<1$.

For higher matter chord states, the idea of inverse transformation
is similar. The recurrence relation (\ref{eq:88-1}) can be written
as a contour integral
\begin{equation}
\ket{(\a_{k},\a_{k-1},\cdots,\a_{1});0^{m}}=\oint_{|z|>1}\f{dz}{2\pi\i}\sum_{n=0}^{\infty}\f{(\a_{k}r^{k-1}/\a_{1})^{n}}{z(1-q^{n}/z)(q;q)_{n}}\ket{\mB_{\a_{k}}1(z\a_{k-1},\cdots,z\a_{1});0^{m}}
\end{equation}
where the integrand is just $\a_{2}r\ra\a_{k}r^{k-1}$ of (\ref{eq:93-1}).
Taking the same replacement in $K$, the inversion is given by
\begin{equation}
\ket{\mB_{\a_{k}}1(\a_{k-1},\cdots,\a_{1});0^{m}}=\oint_{|\a_1|<|y|<1}\f{dy}{2\pi\i}K(\a_{k}r^{k-1}/\a_{1};y)\ket{(\a_{k},\a_{k-1}/y,\cdots,\a_{1}/y);0^{m}}
\end{equation}
Iterating this process for $k-1$ times, we find 
\begin{align}
\ket{0^{n_{k}}1\cdots10^{n_{1}}10^{m}}=&\oint_{}\prod_{i=1}^{k}\f{[n_{i}]!_{q}d\a_{i}}{2\pi\i\a_{i}^{n_{i}+1}}\oint\prod_{i=1}^{k-1}\f{dy_{i}}{2\pi\i}K(\a_{i+1}r^{i}/\a_{1};y_{i}) \nn\\
&\times \ket{(\a_{k},\cdots,\f{\a_{j}}{y_{j}\cdots y_{k-1}},\cdots,\f{\a_{1}}{y_{1}\cdots y_{k-1}});0^{m}}
\end{align}
where the integral contour obeys $|\a_i|r^{i-1}<|\a_{1}|<|y_j|<1$, $|\a_j|/|y_j\cdots y_{k-1}|,|\a_k|<(1-q)^{-1/2}$ for $i=2,\cdots,k$ and $j=1,\cdots,k-1$. This shows that the full Hilbert space $\mH$ can also be expanded non-uniquely by the overcomplete basis of matter-brane states. 

\section{Some $q$-analog formulae} \label{app:b}

\subsubsection*{Asymptotic expansion of $q$-Pochhammer symbol}
In \cite{katsurada2003asymptotic}, there is an asymptotic expansion
of $q$-Pochhammer symbol in $q\ra1$ limit. With $q=e^{-\lam}$,
we have 
\begin{align}
\log(q^{\a};q)_{\infty} & \app-\f{\pi^{2}}{6\lam}-(\a-\f 12)\log\lam-\log\f{\G(\a)}{\sqrt{2\pi}}+\f 14(\f 16-\a+\a^{2})\lam\label{eq:a164}\\
\log(e^{2\pi\i\mu}q^{\a};q)_{\infty} & \app-\f{\Li_{2}(e^{2\pi\i\mu})}{\lam}-(\a-\f 12)\log(1-e^{2\pi\i\mu})+(\f 16-\a+\a^{2})\f{\lam}{2(1-e^{-2\pi\i\mu})}\label{eq:a165}\\
\log(e^{2\pi\i\mu};q)_{\infty} & \app-\f{\Li_{2}(e^{2\pi\i\mu})}{\lam}+\f 12\log(1-e^{2\pi\i\mu})+\f{\lam}{12(1-e^{-2\pi\i\mu})}\label{eq:a166}
\end{align}

\subsubsection*{Some sum identities}

The $q$-Hypergeometric function is defined as
\begin{equation}\label{app:b4}
_{r+1}\phi_{s}\left(\left.\begin{array}{c}
a_{1},\cdots,a_{r+1}\\
b_{1},\cdots,b_{s}
\end{array}\right|q;z\right)=\sum_{n=0}^{\infty}\f{(a_{1};q)_{n}\cdots(a_{r+1};q)_{n}}{(b_{1};q)_{n}\cdots(b_{s};q)_{n}(q;q)_{n}}\left((-1)^{n}q^{n(n-1)/2}\right)^{s-r}z^{n}
\end{equation} 

Some identities of $q$-binomial theorem
\begin{align}
(a;q)_{n} & =\sum_{i=0}^{n}\f{(q;q)_{n}}{(q;q)_{i}(q;q)_{n-i}}q^{i(i-1)/2}(-a)^{i}\\
\f 1{(a;q)_{n+1}} & =\sum_{i=0}^{\infty}\f{(q;q)_{n+i}}{(q;q)_{i}(q;q)_{n}}a^{i}\\
(aq^{1+n};q)_{m} & =\f{(aq;q)_{m+n}}{(aq;q)_{n}}
\end{align}

The big $q$-Hermite polynomial can be expanded in terms of $q$-Hermite
polynomial \cite{szablowski2013q}
\begin{equation}
H_{n}(x;a|q)=\sum_{m=0}^{n}\chooseq nm(-a)^{m}q^{m(m-1)/2}H_{n-m}(x|q)\label{eq:99-1}
\end{equation}
Another useful formula is the linearization of $q$-Hermite polynomial
\begin{equation}
H_{n}(\cos\t|q)H_{m}(\cos\t|q)=\sum_{j=0}^{\min(n,m)}\f{(q;q)_{m}(q;q)_{n}}{(q;q)_{j}(q;q)_{m-j}(q;q)_{n-j}}H_{m+n-2j}(\cos\t|q)\label{eq:112-1}
\end{equation}
and its inverse transformation
\begin{equation}
H_{m+n}(\cos\t|q)=\sum_{k=0}^{\min(m,n)}\f{(-1)^{k}q^{k(k-1)/2}(q;q)_{m}(q;q)_{n}}{(q;q)_{k}(q;q)_{m-k}(q;q)_{n-k}}H_{n-k}(\cos\t|q)H_{m-k}(\cos\t|q)\label{eq:112}
\end{equation}
Sum of two $q$-Hermite polynomials for $|\r|<1$
\begin{equation}
\sum_{n=0}^{\infty}\f{\r^{n}}{(q;q)_{n}}H_{n}(\cos\t_{1}|q)H_{n}(\cos\t_{2}|q)=\f{(\r^{2};q)_{\infty}}{(\r e^{\i(\pm\t_{1}\pm\t_{2})};q)_{\infty}}\label{eq:app266}
\end{equation}
A useful series of two $q$-Hermite polynomials
\begin{equation}
\sum_{p=0}^{\infty}\f{t^{p}}{(q;q)_{p}}H_{p+m}(\cos\t|q)H_{p+n}(\cos\t'|q)=\f{(t^{2};q)_{\infty}}{(te^{\i(\pm\t\pm\t')};q)_{\infty}}Q_{m,n}(\cos\t,\cos\t'|t,q)\label{eq:152}
\end{equation}
with 
\begin{align}
 & Q_{m,n}(\cos\t,\cos\t'|t,q)=Q_{n,m}(\cos\t',\cos\t|t,q)\nonumber \\
 & =\sum_{s=0}^{n}(-1)^{s}q^{\f{s(s-1)}2}\chooseq ns\f{t^{s}}{(t^{2};q)_{m+s}}H_{n-s}(\cos\t'|q)Q_{m+s}(\cos\t|te^{\mp\i\t'};q)
\end{align}
where $Q_{n}(\cos\t|a,b;q)$ is the Al Salam-Chihara polynomial
\begin{equation}
Q_{n}(\cos\t|a,b;q)=(ae^{\i\t};q)_{n}e^{-\i n\t}{}_{2}\phi_{1}\left(\left.\begin{array}{c}
q^{-n},be^{-\i\t}\\
a^{-1}q^{-n+1}e^{-\i\t}
\end{array}\right|q;a^{-1}qe^{\i\t}\right)\label{eq:154}
\end{equation}
which has orthogonality ($|a,b|<1$)
\be 
\int \f{d\t}{2\pi} \f{(e^{\pm2\i \t},r^2,q;q)_\infty}{(ae^{\pm \i\t},be^{\pm \i \t};q)_\infty(r^2,q;q)_n}Q_n(\cos\t|a,b;q)Q_m(\cos\t|a,b;q)=\d_{mn} \label{app:borth}
\ee

Generating functions \cite{koekoek2010hypergeometric} with $x=\cos\t$
\begin{align}
\f 1{(te^{\pm\i\t};q)_{\infty}} & =\sum_{n=0}^{\infty}\f{H(x|q)}{(q;q)_{n}}t^{n}\label{eq:a186}\\
\f{(\g e^{\i\t}t;q)_{\infty}}{(e^{\i\t}t;q)_{\infty}}{}_{2}\phi_{1}\left(\left.\begin{array}{c}
\g,0\\
\g e^{\i\t}t
\end{array}\right|q;e^{-\i\t}t\right) & =\sum_{n=0}^{\infty}\f{(\g;q)_{n}}{(q;q)_{n}}H_{n}(x|q)t^{n}\label{eq:a209}\\
\f{(at,bt;q)_{\infty}}{(te^{\pm\i\t};q)_{\infty}} & =\sum_{n=0}^{\infty}\f{Q_{n}(x;a,b|q)}{(q;q)_{n}}t^{n}\label{eq:a211}
\end{align}

Sum of Al-Salam-Chihara polynomials \cite{szablowski2013q}. 
\begin{enumerate}
\item For $ab=\a\b$ and $|t|<1$, we have 
\begin{align}
 & \sum_{n=0}^{\infty}\f{t^{n}}{(q;q)_{n}(ab;q)_{n}}Q_{n}(\cos\t|a,b;q)Q_{n}(\cos\t'|\a,\b;q)\nonumber \\
= & \f{(at^{2},\a te^{\i\t},be^{-\i\t},\f{ab}{\a}te^{\i\t},ate^{\pm\i\t'};q)_{\infty}}{(ab,at^{2}e^{\i\t},te^{\i(\pm\t\pm\t')};q)_{\infty}}{}_{8}W_{7}(\f{at^{2}}qe^{\i\t};\f a{\a}t,\f a{\b}t,ae^{\i\t},te^{\i(\t+\t')},te^{\i(\t-\t')};q,be^{-\i\t})\label{eq:a201}
\end{align}
where $_{8}W_{7}$ is a special $q$-Hypergeometric function
\begin{equation}
_{8}W_{7}(a;a_{1},\cdots,a_{5};q,x)={}_{8}\phi_{7}\left(\left.\begin{array}{c}
a,q\sqrt{a},-q\sqrt{a},a_{1},\cdots,a_{5}\\
\sqrt{a},-\sqrt{a},\f{qa}{a_{1}},\cdots,\f{qa}{a_{5}}
\end{array}\right|q;z\right)\label{eq:a202}
\end{equation}
\item For $|\r_{1}|,|\r_{2}|<1$, we have
\begin{equation}
\sum_{n=0}^{\infty}\f{\r_{1}^{n}}{(q;q)_{n}(\r_{2}^{2};q)_{n}}Q_{n}(\cos\t|\r_{2}e^{\mp\i\t'};q)Q_{n}(\cos\t''|\r_{2}e^{\mp\i\t'}/\r_{1};q)=\f{(\r_{1}^{2},\r_{2}e^{\i(\pm\t\pm\t'')};q)_{\infty}}{(\r_{2}^{2},\r_{1}e^{\i(\pm\t\pm\t')};q)_{\infty}}\label{eq:a205}
\end{equation}
\end{enumerate}

\section{Calculus illustration using matter-brane state} \label{app:c}

Since the subspace with fixed number of matter chord can be organized
into $q$-coherent states, in which $H_{R,0}$ acts as if $\tilde{H}_{R,0}$
with $\mu=\a_{1}r$, we can define the energy eigen basis for $H_{R,0}$
for $k$-matter subspace
\begin{align}
\ket{(\a_{k},\cdots,\a_{1});\t} & =\sum_{n=0}^{\infty}\f{\psi_{n}(\t;\a_{1}r\sqrt{1-q}|q)}{\sqrt{[n]!_q}}\ket{(\a_{k},\cdots,\a_{1});0^{n}}\\
H_{R,0}\ket{(\a_{k},\cdots,\a_{1});\t} & =E(\t)\ket{(\a_{k},\cdots,\a_{1});\t}\\
\psi_{n}(\t;a|q)&\equiv H_n(\cos\t;a|q)/\sqrt{(q;q)_n}
\end{align}
Similarly, for the EoW brane Hamiltonian $\tilde{H}_{R,0}$ in multiple
matter chord subspace, we can construct the energy eigen basis using
$\a$-coherent states. Acting $\tilde{H}_{R,0}$ on $\ket{(\a_{k},\cdots,\a_{1});0^{m}}$,
we have
\begin{align}
\tilde{H}_{0}\ket{(\a_{k},\cdots,\a_{1});0^{m}}= & [m]_{q}\ket{(\a_{k},\cdots,\a_{1});0^{m-1}}+\a_{1}rq^{m}\ket{(\a_{k},\cdots,\a_{1});0^{m}}\nonumber \\
 & +\ket{(\a_{k},\cdots,\a_{1});0^{m+1}}+\mu r^{k}q^{m}\ket{(q\a_{k},\cdots,q\a_{1});0^{m}}
\end{align}
Similar to the first line of (\ref{eq:52}), we can define 
\begin{equation}
\ket{\mu;(\a_{k},\cdots,\a_{1});0^{m}}=\sum_{n}\f{(\mu r^{k-1}/\a_{1})^{n}}{(q;q)_{n}}\ket{(q^{n}\a_{k},\cdots,q^{n}\a_{1});0^{m}}
\end{equation}
such that
\begin{align}
\tilde{H}_{R,0}\ket{\mu;(\a_{k},\cdots,\a_{1});0^{m}}&=[m]_{q}\ket{\mu;(\a_{k},\cdots,\a_{1});0^{m-1}}\nn\\
&+\a_{1}rq^{m}\ket{\mu;(\a_{k},\cdots,\a_{1});0^{m}}+\ket{\mu;(\a_{k},\cdots,\a_{1});0^{m+1}}
\end{align}
This means that the effect of EoW brane is completely screened by
matter chords and we can regard $\tilde{H}_{R,0}$ as if the brane tension
being $\mu=\a_{1}r$. The corresponding energy basis is 
\begin{align}
\ket{\mu;(\a_{k},\cdots,\a_{1});\t} & =\sum_{n=0}^{\infty}\f{\psi_{n}(\t;\a_{1}r\sqrt{1-q}|q)}{\sqrt{[n]!_q}}\ket{\mu;(\a_{k},\cdots,\a_{1});0^{n}}\\
\tilde{H}_{0}\ket{\mu;(\a_{k},\cdots,\a_{1});\t} & =E(\t)\ket{\mu;(\a_{k},\cdots,\a_{1});\t}
\end{align}

\subsection{A crossing element}

\begin{figure}
\begin{centering}
\includegraphics[totalheight=2cm]{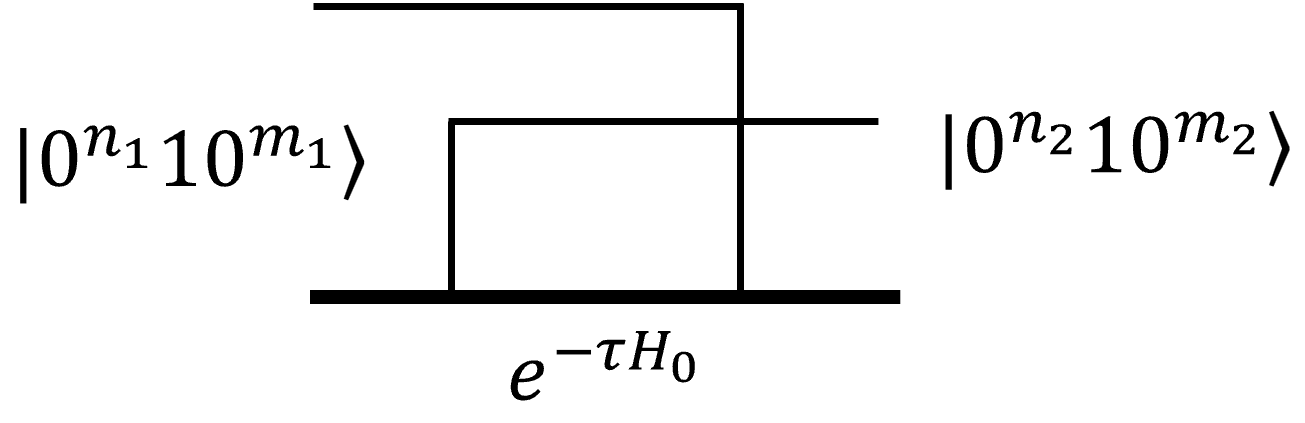}
\par\end{centering}
\caption{The crossing element $\protect\mX(\tau)$. \protect\label{fig:The-crossing-element.}}
\end{figure}

For our illustration purpose of this computation method, let us consider the crossing element $\mX(\tau)$
in Figure \ref{fig:The-crossing-element.}. It is a matrix element
in the one-matter chord basis from $\ket{0^{n_{1}}10^{m_{1}}}$ to
$\ket{0^{n_{2}}10^{m_{2}}}$ though it involves an intermediate step
into the two-matter chord subspace. We have 
\begin{align}
 & \mX(\tau)\ket{0^{n_{1}}10^{m_{1}}}=\tilde{a}_{1}e^{-\tau H_{0}}\ket{0^{n_{1}}10^{m_{1}}1}\nonumber \\
= & \f{[n_{1}]!_{q}[m_{1}]!_{q}}{(2\pi\i)^{3}}\oint\f{d\a_{1}d\a_{2}dyK(\a_{2}r/\a_{1};y)}{\a_{1}^{m_{1}+1}\a_{2}^{n_{1}+1}}\tilde{a}_{1}e^{-\tau H_{0}}\ket{(\a_{2},\a_{1}/y);0^{0}}\nonumber \\
= & \f{[n_{1}]!_{q}[m_{1}]!_{q}}{(2\pi\i)^{3}}\oint\f{d\a_{1}d\a_{2}dyK(\a_{2}r/\a_{1};y)}{\a_{1}^{m_{1}+1}\a_{2}^{n_{1}+1}}\int_{0}^{\pi}\f{d\t}{2\pi}\r(\t;\a_{1}r)e^{-\tau E(\t)}\tilde{a}_{1}\ket{(\a_{2},\a_{1}/y);\t}\nonumber \\
= & q_{m}\f{[n_{1}]!_{q}[m_{1}]!_{q}}{(2\pi\i)^{2}}\oint\f{d\a_{1}d\a_{2}dyK(\a_{2}r/\a_{1};y)}{\a_{1}^{m_{1}+1}\a_{2}^{n_{1}+1}}\int_{0}^{\pi}\f{d\t}{2\pi}\r(\t;\a_{1}r)e^{-\tau E(\t)}\nonumber \\
 & \times\sum_{m_{2}=0}^{\infty}\f{\psi_{m_{2}}(\t,\a_{1}r\sqrt{1-q}/y|q)}{\sqrt{[m_{2}]!_{q}}}r^{m_{2}}\sum_{k_{1},k_{2}}\f{(\a_{1}r/y)^{k_{1}}\a_{2}^{k_{2}}}{(\a_{2}rq^{k_{1}}y/\a_{1};q)_{\infty}[k_{1}]!_{q}[k_{2}]!_{q}}\ket{0^{k_{1}+k_{2}}10^{m_{2}}}\label{eq:79-1}
\end{align}
where $\tilde{a}_{1}$ means annihilate the outmost matter chord from
the boundary, in the second line we used (\ref{eq:100-1}), in the
third line we used (\ref{eq:69}) with $\mu=\a_{1}r/y$, and in the
fourth line we used (\ref{3.7}) and (\ref{eq:47}).

To simplify (\ref{eq:79-1}), we shift $\a_{1}\ra\a_{1}y$ and complete
the contour integral over $y$
\begin{equation}
\oint\f{dy}{2\pi\i y^{m_{1}}}K(\a_{2}r/(\a_{1}y);y)=\sum_{s=0}^{\infty}\f{(-\a_{2}rq^{m_{1}}/\a_{1})^{s}q^{s(s+1)/2}}{(q;q)_{s}}
\end{equation}
We can further integrate over $\a_{2}$
\begin{equation}
\oint\f{d\a_{2}}{2\pi\i\a_{2}^{n_{1}-s-k_{2}+1}(\a_{2}rq^{k_{1}}/\a_{1};q)_{\infty}}=\f{(rq^{k_{1}}/\a_{1})^{n_{1}-s-k_{2}}}{(q;q)_{n_{1}-s-k_{2}}}
\end{equation}
and sum over $s$
\begin{equation}
\sum_{s=0}^{n_{1}-k_{2}}\f{(-q^{m_{1}-k_{1}+1})^{s}q^{s(s-1)/2}}{(q;q)_{s}(q;q)_{n_{1}-k_{2}-s}}=\f{(q;q)_{m_{1}+n_{1}-k_{1}-k_{2}}}{(q;q)_{m_{1}-k_{1}}(q;q)_{n_{1}-k_{2}}}
\end{equation}
It follows that 
\begin{align}
\mX(\tau)\ket{0^{n_{1}}10^{m_{1}}} & =\f{[n_{1}]!_{q}[m_{1}]!_{q}}{2\pi\i}q_{m}\sum_{m_{2},k_{1},k_{2}}\oint\f{d\a_{1}r^{m_{2}+n_{1}+k_{1}-k_{2}}}{\a_{1}^{m_{1}+n_{1}-k_{1}-k_{2}+1}}\int_{0}^{\pi}\f{d\t}{2\pi}\r(\t;\a_{1}r)e^{-\tau E(\t)}\nonumber \\
 & \times\f{q^{k_{1}(n_{1}-k_{2})}(q;q)_{m_{1}+n_{1}-k_{1}-k_{2}}\psi_{m_{2}}(\t,\a_{1}r\sqrt{1-q}|q)}{(q;q)_{m_{1}-k_{1}}(q;q)_{n_{1}-k_{2}}\sqrt{[m_{2}]!_{q}}[k_{1}]!_{q}[k_{2}]!_{q}}\ket{0^{k_{1}+k_{2}}10^{m_{2}}}\label{eq:130}
\end{align}
Using the expansion (\ref{eq:99-1}) and (\ref{eq:a186}), we can expand
all terms in (\ref{eq:130}) as powers of $\a_{1}$ and complete the
$\a_{1}$ contour integral
\begin{align}
&\mX(\tau)\ket{0^{n_{1}}10^{m_{1}}}\nn\\
= & \f{[n_{1}]!_{q}[m_{1}]!_{q}}{2\pi\i}q_{m}\sum_{m_{2},k_{1},k_{2},n,m}\oint\f{d\a_{1}r^{m_{2}+n_{1}+k_{1}-k_{2}}}{\a_{1}^{m_{1}+n_{1}-k_{1}-k_{2}+1}}\int_{0}^{\pi}\f{d\t}{2\pi}\r(\t)e^{-\tau E(\t)}\f{(\a_{1}r\sqrt{1-q})^{n}}{(q;q)_{n}}H_{n}(\cos\t|q)\nonumber \\
 & \times\chooseq{m_{2}}m\f{q^{k_{1}(n_{1}-k_{2})}(q;q)_{m_{1}+n_{1}-k_{1}-k_{2}}(-\a_{1}r\sqrt{1-q})^{m}q^{m(m-1)/2}H_{m_{2}-m}(\cos\t|q)}{(q;q)_{m_{1}-k_{1}}(q;q)_{n_{1}-k_{2}}(q;q)_{m_{2}}(1-q)^{-m_{2}/2}[k_{1}]!_{q}[k_{2}]!_{q}}\ket{0^{k_{1}+k_{2}}10^{m_{2}}}\nonumber \\
= & q_{m}\sum_{m_{2},k_{1},k_{2},m}\f{(q;q)_{n_{1}}(q;q)_{m_{1}}r^{m_{1}+m_{2}+2n_{1}-2k_{2}}}{(1-q)^{(m_{1}+n_{1}-k_{1}-k_{2}-m_{2})/2}}\int_{0}^{\pi}\f{d\t}{2\pi}\r(\t)e^{-\tau E(\t)}\f{H_{m_{1}+n_{1}-k_{1}-k_{2}-m}(\cos\t|q)}{(q;q)_{m_{1}+n_{1}-k_{1}-k_{2}-m}}\nonumber \\
 & \times\f{(-)^{m}q^{k_{1}(n_{1}-k_{2})+m(m-1)/2}(q;q)_{m_{1}+n_{1}-k_{1}-k_{2}}H_{m_{2}-m}(\cos\t|q)}{(q;q)_{m}(q;q)_{m_{1}-k_{1}}(q;q)_{n_{1}-k_{2}}(q;q)_{m_{2}-m}(q;q)_{k_{1}}(q;q)_{k_{2}}}\ket{0^{k_{1}+k_{2}}10^{m_{2}}}\label{eq:143}\\
= & q_{m}\sum_{m_{2},k_{1},k_{2}}\f{(q;q)_{n_{1}}(q;q)_{m_{1}}r^{m_{1}+m_{2}+2n_{1}-2k_{2}}}{(1-q)^{(m_{1}+n_{1}-k_{1}-k_{2}-m_{2})/2}}\int_{0}^{\pi}\f{d\t}{2\pi}\r(\t)e^{-\tau E(\t)}\nonumber \\
 & \times\f{q^{k_{1}(n_{1}-k_{2})}H_{m_{1}+m_{2}+n_{1}-k_{1}-k_{2}}(\cos\t|q)}{(q;q)_{m_{2}}(q;q)_{m_{1}-k_{1}}(q;q)_{n_{1}-k_{2}}(q;q)_{k_{1}}(q;q)_{k_{2}}}\ket{0^{k_{1}+k_{2}}10^{m_{2}}}\label{eq:133-1}
\end{align}
where in the last step we sum over $m$ using (\ref{eq:112}).

\subsection{Time evolution in 1-matter chord subspace}

We also need to write down the time evolution in 1-matter chord subspace
\begin{align}
 & e^{-\tau H_{0}}\ket{0^{n_{1}}10^{m_{1}}}\nonumber \\
= & \f{[n_{1}]!_{q}\sqrt{[m_{1}]!_{q}}}{2\pi\i}\oint_{0}\f{d\a_{1}}{\a_{1}^{n_{1}+1}}\int_{0}^{\pi}\f{d\t}{2\pi}\r(\t;\a_{1}r)\psi_{m_{1}}(\t;\a_{1}r\sqrt{1-q}|q)e^{-\tau E(\t)}\ket{(\a_{1});\t}\nonumber \\
= & \f{[n_{1}]!_{q}\sqrt{[m_{1}]!_{q}}}{2\pi\i}\oint_{0}\f{d\a_{1}}{\a_{1}^{n_{1}+1}}\int_{0}^{\pi}\f{d\t}{2\pi}\r(\t;\a_{1}r)\psi_{m_{1}}(\t;\a_{1}r\sqrt{1-q}|q)e^{-\tau E(\t)}\nonumber \\
 & \times\sum_{n_{2}m_{2}}\f{\a_{1}^{n_{2}}}{[n_{2}]!_{q}}\f{\psi_{m_{2}}(\t;\a_{1}r\sqrt{1-q}|q)}{\sqrt{[m_{2}]!_{q}}}\ket{0^{n_{2}}10^{m_{2}}}\nonumber \\
= & \sum_{n_{2}m_{2},m,n,k}\int_{0}^{\pi}\f{d\t}{2\pi}\r(\t)e^{-\tau E(\t)}\f{(q;q)_{n_{1}}(q;q)_{m_{1}}(1-q)^{(n_{2}+m_{2}-n_{1}-m_{1})/2}}{(q;q)_{n_{2}}(q;q)_{m}(q;q)_{m_{1}-m}(q;q)_{k}(q;q)_{m_{2}-n}(q;q)_{n}}\d_{n_{1},m+n+k+n_{2}}\nonumber \\
 & \times(-)^{m+n}r^{m+n+k}q^{m(m-1)/2+n(n-1)/2}H_{m_{1}-m}(\cos\t|q)H_{m_{2}-n}(\cos\t|q)H_{k}(\cos\t|q)\ket{0^{n_{2}}10^{m_{2}}}\label{eq:134}\\
= & \sum_{n_{2}m_{2},k}\int_{0}^{\pi}\f{d\t}{2\pi}\r(\t)e^{-\tau E(\t)}\f{(q;q)_{n_{1}}(q;q)_{m_{1}}(1-q)^{(n_{2}+m_{2}-n_{1}-m_{1})/2}}{(q;q)_{m_{2}}(q;q)_{n_{1}-k}(q;q)_{k-n_{2}}(q;q)_{n_{2}}(q;q)_{m_{1}+n_{2}-k}}\nonumber \\
 & \times(-)^{k+n_{2}}r^{n_{1}-n_{2}}q^{(n_{2}-k)(n_{2}-k+1)/2}H_{n_{1}+m_{2}-k}(\cos\t|q)H_{m_{1}+n_{2}-k}(\cos\t|q)\ket{0^{n_{2}}10^{m_{2}}}\label{eq:133}
\end{align}
where in the fourth step we plug in $m=n_{1}-n-k-n_{2}$ and then
shift $k\ra-k+n_{1}-n$ and sum over $n$ using (\ref{eq:112}).

\subsection{Consistent check 1: two-point function}

Using the $\a$-coherent basis, we can easily compute the two-point
function without counting the chord diagram combinatorics. Let us
compute the following element acting on 0-matter chord subspace
\begin{align}
&a_{1}e^{-\tau H_{0}}a_{1}^{\dag}\ket{0^{n_{1}}}\nn\\
= & \sum_{n_{2}m_{2},n}\int_{0}^{\pi}\f{d\t}{2\pi}\r(\t)e^{-\tau E(\t)}\f{(q;q)_{n_{1}}(1-q)^{(n_{2}+m_{2}-n_{1})/2}}{(q;q)_{n_{2}}(q;q)_{n_{1}-n_{2}-n}(q;q)_{m_{2}-n}(q;q)_{n}}\nonumber \\
 & \times(-)^{n}r^{n_{1}-n_{2}+m_{2}}q^{n(n-1)/2}H_{m_{2}-n}(\cos\t|q)H_{n_{1}-n_{2}-n}(\cos\t|q)\ket{0^{n_{2}+m_{2}}}\nonumber \\
= & \sum_{n_{2}m_{2}}\int_{0}^{\pi}\f{d\t}{2\pi}\r(\t)e^{-\tau E(\t)}\f{(r^{2};q)_{n_{1}-n_{2}}(q;q)_{n_{1}}(1-q)^{(m_{2}-n_{2})/2}}{(q;q)_{n_{2}}(q;q)_{m_{2}}(q;q)_{n_{1}-n_{2}}}\nonumber \\
 & \times r^{n_{2}+m_{2}}H_{m_{2}}(\cos\t|q)H_{n_{2}}(\cos\t|q)\ket{0^{m_{2}+n_{1}-n_{2}}}\nonumber \\
= & \sum_{m_{2},n_{2}}\sqrt{\chooseq{n_{1}}{n_{1}-n_{2}}\chooseq{m_{2}+n_{1}-n_{2}}{n_{1}-n_{2}}}\f{(r^{2};q)_{n_{1}-n_{2}}\ket{0^{n_{1}+m_{2}-n_{2}}}\avg{0^{n_{2}}|r^{N_{0}}e^{-\tau H_{0}}r^{N_{0}}|0^{m_{2}}}}{\sqrt{[n_{2}]!_{q}[m_{2}]!_{q}[n_{1}+m_{2}-n_{2}]!_{q}/[n_{1}]!_{q}}}\label{eq:135}
\end{align}
where in the first step we used (\ref{eq:134}) with $m_{1}=0$ (that
also implies $m=0$), in the second step we shift $m_{2}\ra m_{2}+n$
and $n_{2}\ra n_{1}-n-n_{2}$ and sum over $n$ from 0 to $n_{1}-n_{2}$.
From (\ref{eq:135}), we see it exactly matches with the result in
\cite{Berkooz:2018jqr}.

\subsection{Consistent check 2: OTOC and $q$-deformed 6$j$ symbol}

An OTOC in DSSYK is defined as
\begin{equation}
\mF(\tau_{1},\tau_{2},\tau_{3},\tau_{4})=\avg{\w|e^{-\tau_{4}H_{0}}H_{1}^{(a)}e^{-\tau_{3}H_{0}}H_{1}^{(b)}e^{-\tau_{2}H_{0}}H_{1}^{(a)}e^{-\tau_{1}H_{0}}H_{1}^{(b)}|\w}
\end{equation}
where the $H_{1}^{(a)}$ pair and the $H_{1}^{(b)}$ pair are respectively
contracted as two crossing matter chords. This quantity is equivalent
to 
\begin{equation}
\avg{\w|e^{-\tau_{4}H_{0}}a_{1}e^{-\tau_{3}H_{0}}\mX(\tau_{2})e^{-\tau_{1}H_{0}}a_{1}^{\dag}|\w}
\end{equation}
Taking (\ref{eq:133-1}) and (\ref{eq:133}) into above formula, we
have 
\begin{align}
&\mF(\tau_{1},\tau_{2},\tau_{3},\tau_{4})=  q_{m}\int\prod_{i=1}^{4}\f{d\t_{i}}{2\pi}\r(\t_{i})e^{-\tau_{i}E(\t_{i})}V(\t_{i})\nonumber \\
&V(\t_{i})=  \sum_{k,k_{1},m_{1},m_{2},m_{3},n_{3}}(-1)^{k+n_{3}}q^{(k-n_{3})(k-n_{3}-1)/2}r^{k_{1}+m_{1}+m_{2}+m_{3}-n_{3}}H_{m_{1}}(\cos\t_{1}|q)\nonumber \\
 & \times\f{H_{m_{1}+m_{2}-k_{1}}(\cos\t_{2}|q)H_{m_{3}-k_{1}-k}(\cos\t_{3}|q)H_{m_{2}+n_{3}-k}(\cos\t_{3}|q)H_{m_{3}+n_{3}}(\cos\t_{4}|q)}{(q;q)_{k_{1}-k}(q;q)_{m_{1}-k_{1}}(q;q)_{m_{3}}(q;q)_{k-n_{3}}(q;q)_{n_{3}}(q;q)_{m_{2}+n_{3}-k}}\label{eq:139}
\end{align}
where we have set some terms like $(q;q)_{k_{2}}(q;q)_{-k_{2}}=H_{0}(x|q)=1$
in the sum. The range of the numbers in the sum of (\ref{eq:139})
is up to the constraint that all $(q;q)_{n}$ term should have $n\geq0$.
To simplify the sum, we can shift $k\ra k+n_{3}$, then $k_{1}\ra k+k_{1}$,
then $m_{2}\ra m_{2}+k,m_{1}\ra m_{1}+k_{1},m_{3}\ra m_{3}+n_{3}-k_{1}$,
then $n_{3}\ra k_{1}-n_{3}$ and have
\begin{align}
V(\t_{i})= & \sum_{k,k_{1},m_{1},m_{2},m_{3},n_{3}}(-1)^{k}q^{k(k-1)/2}r^{2k+k_{1}+m_{1}+m_{2}+m_{3}}H_{m_{1}+k_{1}}(\cos\t_{1}|q)\nonumber \\
 & \times\f{H_{m_{1}+m_{2}}(\cos\t_{2}|q)H_{m_{2}}(\cos\t_{3}|q)H_{m_{3}}(\cos\t_{3}|q)H_{k_{1}+m_{3}-2n_{3}}(\cos\t_{4}|q)}{(q;q)_{k}(q;q)_{m_{1}-k}(q;q)_{m_{2}}(q;q)_{k_{1}-n_{3}}(q;q)_{m_{3}-n_{3}}(q;q)_{n_{3}}}\nonumber \\
= & \sum_{k_{1},m_{1},m_{2},m_{3},n_{3}}r^{k_{1}+m_{1}+m_{2}+m_{3}}(r^{2};q)_{m_{1}}H_{m_{1}+k_{1}}(\cos\t_{1}|q)\nonumber \\
 & \times\f{H_{m_{1}+m_{2}}(\cos\t_{2}|q)H_{m_{2}}(\cos\t_{3}|q)H_{m_{3}}(\cos\t_{3}|q)H_{k_{1}+m_{3}-2n_{3}}(\cos\t_{4}|q)}{(q;q)_{m_{1}}(q;q)_{m_{2}}(q;q)_{k_{1}-n_{3}}(q;q)_{m_{3}-n_{3}}(q;q)_{n_{3}}}\nonumber \\
= & \sum_{k_{1},m_{1},m_{2},m_{3}}r^{k_{1}+m_{1}+m_{2}+m_{3}}(r^{2};q)_{m_{1}}H_{m_{1}+k_{1}}(\cos\t_{1}|q)H_{m_{1}+m_{2}}(\cos\t_{2}|q)\nonumber \\
 & \times\f{H_{m_{2}}(\cos\t_{3}|q)H_{m_{3}}(\cos\t_{3}|q)H_{k_{1}}(\cos\t_{4}|q)H_{m_{3}}(\cos\t_{4}|q)}{(q;q)_{m_{1}}(q;q)_{m_{2}}(q;q)_{k_{1}}(q;q)_{m_{3}}}
\end{align}
where in the second step we sum over $k$ from 0 to $m_{1}$, and
in the third step we sum over $n_{3}$ from 0 to $\min(k_{1}.m_{3})$
using (\ref{eq:112-1}). Comparing with (4.10) of \cite{Berkooz:2018jqr},
we find exact match except a shift $\t_{1}\ra\t_{2}\ra\t_{3}\ra\t_{4}\ra\t_{1}$,
which is indeed the symmetry of the $q$-deformed $6j$ symbol.

\section{Two-point functions in the thermal EoW brane state}\label{app:d}

\subsection{Thermal EoW brane state in semiclassical JT gravity}\label{app:d1}

In JT gravity, the state $\ket{\mu_{r},\b/2}$ dual to a single sided
black hole with an EoW brane behind the horizon can be prepared by
an Euclidean path integral shown in Figure \ref{pic-jt-eow}. In this state, there
is an insertion of an EoW brane on the EAdS$_{2}$ boundary followed
by an Euclidean evolution of length $\b/2$ with Hamiltonian \eqref{3.14}.
Note that $\b$ here is just a parameter for this state, which is
not the temperature of the black hole. By canonical quantization,
the propagator in geodesic length basis is \cite{Gao:2021uro}
\begin{align}
G_{\mu_{r},\b}(L_{2},L_{1}) & =\int dke^{-\b k^{2}/(2\phi_{b})}\p(k)(z_{1}z_{2})^{-1/2}W_{-\mu_{r},\i k}(z_{1})W_{-\mu_{r},\i k}(z_{2})\label{eq:app201}\\
\p(k) & =\f 1{\pi^{2}}k\sinh2\pi k|\G(1/2+\mu_{r}-\i k)|^{2}
\end{align}
where $z_{i}=4e^{-L_{i}}$ and $W_{-\mu,\i k}(z)$ is the Whittaker
function. The norm of $\ket{\mu_{r},\b/2}$ can be computed
as an Euclidean path integral in Figure \ref{pic-jt-eow-2}. This can be understood
as the $L_{i}\ra-\infty$ limit of the propagator \ref{eq:app201}
\begin{equation}
\avg{\mu_{r},\b/2|\mu_{r},\b/2}=\lim_{L_{1,2}\ra\infty}N(L_{1})N(L_{2})G_{\mu,\b}(L_{1},L_{2})
\end{equation}
where $N(L)=\pi e^{z/2}z^{\mu_{r}+1/2}$ is a function of $L$ that
cancels the singular part of $G_{\mu_{r},\b}$ due to $W_{-\mu_{r},\i k}(z)\sim e^{-z/2}z^{-\mu_{r}}$when
$z\ra+\infty$. It follows that
\begin{equation}
\avg{\mu_{r},\b/2|\mu_{r},\b/2}=\int dke^{-\b k^{2}/(2\phi_{b})}k\sinh2\pi k|\G(1/2+\mu_{r}-\i k)|^{2}\label{eq:app204-1}
\end{equation}
which can also be derived using the method in \cite{Yang:2018gdb} and integrating
the length $\l$ of the Hartle-Hawking state with a weight $e^{-\mu_{r}\l}$
from the EoW brane. 

As a generic state $\ket{\mu_{r},\b/2}$ is prepared in the quantized
JT gravity, it does not correspond to a fixed semiclassical spacetime
geometry because the energy has large fluctuations. To achieve a state
with fixed geometric dual, we need to take the semiclassical limit
with a heavy EoW brane
\begin{equation}
\mu_{r}=\tilde{\mu}\phi_{b},\quad\phi_{b}\ra\infty
\end{equation}
In this limit, we can rescale $k\ra k\phi_{b}$ and expand the integrand
\ref{eq:app204-1} in the exponential form
\begin{equation}
\avg{\mu_{r},\b/2|\mu_{r},\b/2}\sim\int dkke^{\phi_{b}(-\b k^{2}/2+2\pi k+(\tilde{\mu}-\i k)\log(\tilde{\mu}-\i k)+(\tilde{\mu}+\i k)\log(\tilde{\mu}+\i k))}\label{eq:app206}
\end{equation}
Since $\phi_{b}$ is large, this integral is dominated by the saddle
$k_{s}$ obeying equation
\begin{equation}
2\pi-k_{s}\b=\i\log\f{\tilde{\mu}-\i k_{s}}{\tilde{\mu}+\i k_{s}}\iff k_{s}/\tilde{\mu}=\tan(\pi-k_{s}\b/2)\label{eq:app207}
\end{equation}

This saddle corresponds to the fixed geometry of a black hole with
an EoW brane behind horizon as we can check the consistency as follows.
In Lorentzian signature, the ADM energy is $E=\Phi_{h}^{2}/(2\phi_{b})$
where $\Phi_{h}$ is the horizon dilaton value \cite{Gao:2021uro}, which compares
with the Boltzman factor in (\ref{eq:app201}) identifies $k_{s}=\Phi_{h}/\phi_{b}$.
The inverse temperature of the black hole is given by 
\begin{equation}
\b_{BH}=2\pi\phi_{b}/\Phi_{h}=2\pi/k_{s}\label{eq:app208}
\end{equation}
The trajectory of the EoW brane is a geodesic, which in global coordinate
is
\begin{equation}
\cos\s=r\cos T,\quad r=\f{\tilde{\mu}}{\sqrt{k_{s}^{2}+\tilde{\mu}^{2}}},\quad ds^{2}=\f{-dT^{2}+d\s^{2}}{\sin^{2}\s},\quad\s\in[0,\pi]
\end{equation}
Given $r\in(0,1)$, the Euclidean version of this geodesic is $\cos\s=r\cosh T$,
which is anchored on the EAdS boundary $\s=0$ and $T=\pm\arccosh(1/r)$,
which in Rindler coordinate $ds^{2}=d\r^{2}+\sinh^{2}\r d\phi^{2}$
corresponds to two angles at $\pm\phi_{0}=\pm(\pi-\arcsin\sqrt{1-r^{2}})$
(see Figure \ref{pic-jt-eow-2}). For a semiclassical black hole, the ratio between
the arc angle to $2\pi$ should be equal to $\b/\b_{BH}$
\begin{equation}
1-\f 1{\pi}\arcsin\sqrt{1-r^{2}}=\f{\b}{\b_{BH}}\iff\f{k_{s}}{\sqrt{k_{s}^{2}+\tilde{\mu}^{2}}}=\sin(\pi-k_{s}\b/2)
\end{equation}
which perfectly matches with the saddle equation (\ref{eq:app207}).
We can further define $x_{s}=k_{s}\b$, and the saddle is fixed by
a single parameter $\b\tilde{\mu}$ that 
\begin{equation}
x_{s}/(\b\tilde{\mu})=\tan(\pi-x_{s}/2)
\end{equation}
Clearly, a heavier EoW brane leads to a cooler black hole.

\subsection{Thermal EoW brane state in DSSYK}\label{app:d2}

The thermal EoW brane state in DSSYK is defined in \eqref{eq:4.10}, whose inner product
is given by
\begin{align}
\avg{\mu,\b_{0}/2|\mu,\b_{0}/2} & =\avg{\mB_{\mu}|e^{-\b_{0}H_{R}}|\w}=\int\f{d\t}{2\pi}\sum_{n}\f{\mu^{n}(1-q)^{n/2}}{(q;q)_{n}}\r(\t)H_{n}(\cos\t|q)e^{-\b_{0}E(\t)}\nonumber \\
 & =\int\f{d\t}{2\pi}\f{(e^{\pm2\i\t},q;q)_{\infty}e^{-\b_{0}E(\t)}}{(\mu(1-q)^{1/2}e^{\pm\i\t};q)_{\infty}}\label{eq:app211}
\end{align}
Similar to the JT case, the energy of the state $\ket{\mu,\b_{0}/2}$
has a large fluctuation in the whole spectrum and it does not dual
to a fixed geometry. For this purpose, we need to take the limit of
heavy EoW brane
\begin{equation}
\lam\ra0,\quad\b_{0}\ra\hat{\b}/\sqrt{1-q},\quad\mu=-e^{-\mu_{0}}/\sqrt{1-q}\label{eq:app212}
\end{equation}
which compared with \eqref{3.13} indicates a large $\mu_{r}\sim O(1/\lam)$. Using formulae in Appendix \ref{app:b}, (\ref{eq:app211})
becomes
\begin{equation}
\avg{\mu,\b_{0}/2|\mu,\b_{0}/2}\app\int\f{d\t}{\sqrt{2\pi\lam}}\f{e^{\mu_{0}/2}\sin\t e^{\f{\Li_{2}(-e^{-\i\t-\mu_{0}})+\Li_{2}(-e^{\i\t-\mu_{0}})-(\pi-2\t)^{2}/2-2\hat{\b}\cos\t}{\lam}-\hat{\b}\cos\t}}{\sqrt{\cosh\f{\mu_{0}+\i\t}2\cosh\f{\mu_{0}-\i\t}2}}\label{eq:app213}
\end{equation}
The saddle equation for $\t$ is
\begin{equation}
\pi-2\t_{s}+\hat{\b}\sin\t_{s}=-\arctan\f{\sin\t_{s}}{e^{\mu_{0}}+\cos\t_{s}}\label{eq:app214}
\end{equation}
In order to have a saddle close to the ground energy $E_{0}$, we
need to consider the case with $\t_{s}$ close to $\pi$. This requires
the following semiclassical JT limit with a small parameter $\d$
\begin{equation}
\t_{s}=\pi-\d k_{s}/2,\quad\mu_{0}=\tilde{\mu}\d/2,\quad\hat{\b}=\b/\d,\quad\d\ra0\label{eq:app215}
\end{equation}
Taking small $\d$ limit of (\ref{eq:app214}) leads to 
\begin{equation}
\pi-\b k_{s}/2=\arctan\f{k_{s}}{\tilde{\mu}}
\end{equation}
which exactly matches with the semiclassical saddle equation (\ref{eq:app207})
of JT gravity. Furthermore, if we expand the exponent of (\ref{eq:app213})
in linear order of $\d$, we find the leading term is
\begin{equation}
\sim e^{\f{\d}{2\lam}(-\b k^{2}/2+2\pi k+(\tilde{\mu}-\i k)\log(\tilde{\mu}-\i k)+(\tilde{\mu}+\i k)\log(\tilde{\mu}+\i k))} \label{app:d18}
\end{equation}
which compared with (\ref{eq:app206}) identifies 
\begin{equation}
\phi_{b}=\d/(2\lam) \label{app:d19}
\end{equation}
This leading exponential term \eqref{app:d18} evaluated with the saddle solution is exponentially divergent because the ground energy $E_0\sim O(1/\lam)$ is negative and large. Therefore, the semiclassical limit of $\ket{\mu,\b_0/2}$ has a singular norm $1/\mN_\lam\sim O(e^{\d S/\lam})$ with $S>0$, and we can divide this norm to define an algebraic state
\be 
\ket{\phi^\mu_{\b_0/2}}\simeq \lim_{\lam\ra 0}\mN_\lam^{1/2}\ket{\mu,\b_0/2}
\ee
This algebraic state is only well-defined in terms of expectation values of operators in the sense of \eqref{4.10-2} and \eqref{4.11}.

This analysis explains the subtle parameter range for a thermal EoW brane
state in DSSYK to have a semiclassical geometry dual. It has two steps.
The first step is taking JT limit $\lam\ra0$ with a heavy EoW brane;
the second step is to take a low temperature/semiclassical limit and
tune the heavy brane tension $\mu_{r}$ much lighter than $1/\lam$.
We summarize it as follows
\begin{equation}
\t=\pi-\d k/2,\quad\mu_{r}=\tilde{\mu}\d/(2\lam),\quad\b_{0}=\f{\b}{\d\sqrt{1-e^{-\lam}}},\quad\lam\ll\d\ll1
\end{equation}

\subsection{Two-point function in thermal EoW brane state} \label{app:d3}

The Euclidean two-point function in a thermal EoW brane state is
\begin{align}
G_{\b_{0}}(\tau_{1},\tau_{2})=&\avg{\mu,\b_{0}/2|e^{\tau_{1}H_{R}}M_{R}e^{-(\tau_{1}-\tau_{2})H_{R}}M_{R}e^{-\tau_{2}H_{R}}|\mu,\b_{0}/2}\nn\\
=&\avg{\mB_{\mu}|e^{-(\b_{0}-\tau_{12})H_{R}}M_{R}e^{-\tau_{12}H_{R}}M_{R}|\w}
\end{align}
where we used the tracial property of $\ket{\w}$ and $\mB_{R,\mu}$
is an operator function of $H_{R}$ and commutes with $H_{R}$. Using
the second equality of (\ref{eq:135}), we have
\begin{align}
G_{\b_{0}}(\tau_{1},\tau_{2})= & \sum_{m_{2}}\int_{0}^{\pi}\f{d\t_{2}}{2\pi}\r(\t_{2})e^{-\tau_{12}E(\t_{2})}\f{(1-q)^{m_{2}/2}}{(q;q)_{m_{2}}}r^{m_{2}}H_{m_{2}}(\cos\t_{2}|q)\avg{\mB_{\mu}|e^{-(\b_{0}-\tau_{12})H_{R}}|0^{m_{2}}}\nonumber \\
= & \int_{0}^{\pi}\f{d\t_{1}d\t_{2}}{(2\pi)^{2}}\r(\t_{1})\r(\t_{2})e^{-(\b_{0}-\tau_{12})E(\t_{1})-\tau_{12}E(\t_{2})}\sum_{m_{2}}\f{r^{m_{2}}}{(q;q)_{m_{2}}}H_{m_{2}}(\cos\t_{2}|q)\nonumber \\
 & \times H_{m_{2}}(\cos\t_{1}|q)\sum_{n}\f{\mu^{n}(1-q)^{n/2}}{(q;q)_{n}}H_{n}(\cos\t_{1}|q)\nonumber \\
= & \int_{0}^{\pi}\f{d\t_{1}d\t_{2}}{(2\pi)^{2}}\f{\r(\t_{1})\r(\t_{2})(r^{2};q)_{\infty}}{(\mu(1-q)^{1/2}e^{\pm\i\t_{1}};q)_{\infty}(re^{\i(\pm\t_{1}\pm\t_{2})};q)_{\infty}}e^{-(\b_{0}-\tau_{12})E(\t_{1})-\tau_{12}E(\t_{2})}\label{app:202}
\end{align}
where in the last step we used \eqref{eq:app266}. To find the approximation
in the JT limit, we first rescale $\tau_{i}=\hat{\tau}_{i}/\sqrt{1-q}$
such that $\hat \tau_i$ is the same order as $\hat \b$, and take the so-called large $p$
limit \cite{Mukhametzhanov:2023tcg}
\begin{equation}
\lam\ra0,\quad\bar{\t}=(\t_{1}+\t_{2})/2\sim O(1),\quad\w\lam=\t_{1}-\t_{2}\sim O(\lam)\label{eq:app223}
\end{equation}
as well as the scaling (\ref{eq:app212}). The integrand in (\ref{app:202})
becomes
\begin{equation}
\f{(e^{\pm2\i\bar{\t}}q^{\pm\i\w},q,q,q^{2\D};q)_{\infty}e^{-\f{2\hat{\b}\cos\bar{\t}}{\lam}-\hat{\b}\cos\bar{\t}+(\hat{\b}-2\hat{\tau}_{12})\w\sin\bar{\t}}}{(-e^{\i\bar{\t}-\mu_{0}}q^{-\i\w/2},-e^{-\i\bar{\t}-\mu_{0}}q^{\i\w/2},e^{\pm2\i\bar{\t}}q^{\D},q^{\D\pm\i\w};q)_{\infty}}\label{eq:app204}
\end{equation}
where the Boltzman factor is expanded in the first two orders of $\lam$
in the exponent. By formulae in Appendix \ref{app:b}, we can approximate (\ref{app:202})
as
\begin{align}
G_{\b_{0}}(\tau_{1},\tau_{2}) & \app\sqrt{\f{\pi}{2\lam}}e^{\mu_{0}/2}\int\f{d\bar{\t}}{2\pi}\f{(2\sin\bar{\t})^{2\D+1}e^{-\hat{\b}\cos\bar{\t}}}{\sqrt{\cosh\f{\mu_{0}-\i\bar{\t}}2\cosh\f{\mu_{0}+\i\bar{\t}}2}}e^{\f{\Li_{2}(-e^{-\i\bar{\t}-\mu_{0}})+\Li_{2}(-e^{\i\bar{\t}-\mu_{0}})-(\pi-2\bar{\t})^{2}/2-2\hat{\b}\cos\bar{\t}}{\lam}}\nonumber \\
 & \int\f{d\w}{2\pi}\f{\G(\D+\i\w)\G(\D-\i\w)}{\G(2\D)}e^{\w((\hat{\b}-2\hat{\tau}_{12})\sin\bar{\t}+\bar{\t}/2)}\left[\f{\cosh\f{\mu_{0}+\i\bar{\t}}2}{\cosh\f{\mu_{0}-\i\bar{\t}}2}\right]^{\i\w/2}\label{eq:app225-2}
\end{align}
Note that the integral over $\bar{\t}$ is exactly the same as (\ref{eq:app213})
and is determined by the same saddle $\bar{\t}=\t_{s}$ and the 1-loop
determinant in $\lam\ra0$ limit. Divided by (\ref{eq:app213}), the
normalized two-point function is
\begin{align}
 & g_{\b_{0}}(\tau_{12})=\f{G_{\b_{0}}(\tau_{1},\tau_{2})}{\avg{\mu,\b_{0}/2|\mu,\b_{0}/2}}\nonumber \\
= & (2\sin\t_{s})^{2\D}\int\f{d\w}{2\pi}\f{\G(\D+\i\w)\G(\D-\i\w)}{\G(2\D)}\left(e^{-\i\left((\hat{\b}-2\hat{\tau}_{12})\sin\t_{s}+\arctan\f{\sin\t_{s}}{e^{\mu_{0}}+\cos\t_{s}}\right)}\right)^{\i\w}\nonumber \\
= & \left(\f{\sin\t_{s}}{\cos(\hat{\tau}_{12}\sin\t_{s}+\pi/2-\t_{s})}\right)^{2\D}\label{eq:app225}
\end{align}
where we used identity
\begin{equation}
\int\f{d\w}{2\pi}\f{\G(\D+\i\w)\G(\D-\i\w)}{\G(2\D)}z^{\i\w}=\left(\f z{(1+z)^{2}}\right)^{\D}\label{eq:184-1}
\end{equation}
A consistency check for (\ref{eq:app225}) is that it equals one when
$\hat{\tau}_{12}=0$. 

Now we take the semiclassical JT limit (\ref{eq:app215}) and analytic continue $\hat \tau_{12}\ra \i t_{12}/\d$, which leads
to
\begin{equation}
g_{\b_{0}}(\tau_{12})\ra\avg{\phi^\mu_{\b_0/2}|M_R(t_1)M_R(t_2)|\phi^\mu_{\b_0/2}}=\left(\f{(\d\pi/\b_{BH})}{\i\sinh(\f{\pi}{\b_{BH}}t_{12})}\right)^{2\D}
\end{equation}
where we used (\ref{eq:app208}). This is the thermal two-point function
of matter fields if we absorb a factor of $\d^{\D}$ into the definition of $M_R(t_i)$.

\subsection{Modular-flowed two-point function in thermal EoW state} \label{app:d4}

Let us consider the following modular-flowed Euclidean two-point function in a thermal EoW state
\begin{align}
G_{\b_{0}}(s;\tau_{12}) & =\avg{\mu,\b_{0}/2|e^{\tau_{1}H_{R}}\mB_{R,\mu}^{\i s}M_{R}\mB_{R,\mu}^{-\i s}e^{-(\tau_{1}-\tau_{2})H_{R}}M_{R}e^{-\tau_{2}H_{R}}|\mu,\b_{0}/2}\nonumber \\
 & =\avg{\mB_{\mu}|e^{-(\b_{0}-\tau_{12})H_{R}}\mB_{R,\mu}^{\i s}M_{R}\mB_{R,\mu}^{-\i s}e^{-\tau_{12}H_{R}}M_{R}|\w}
\end{align}
where the modular flow before or after Euclidean evolution does not
matter because $\mB_{R,\mu}$ is an operator function of $H_{R}$
and commutes with $H_{R}$. Here the modular operator is different from \eqref{4.21} by a Boltzmann factor $e^{-\b_0H_R}$, which can recovered by shifting $\tau_1\ra \tau_1-\i \b_0 s$ in the end. Since $\mB_{R,\mu}$ can be diagonalized
in energy basis, similar to (\ref{app:202}) we have
\begin{align}
G_{\b_{0}}(s;\tau_{12})= & \sum_{m_{2}}\int_{0}^{\pi}\f{d\t_{2}}{2\pi}\f{\r(\t_{2})e^{-\tau_{12}E(\t_{2})}(1-q)^{m_{2}/2}r^{m_{2}}H_{m_{2}}(\cos\t_{2}|q)}{(q;q)_{m_{2}}(\mu(1-q)^{1/2}e^{\pm\i\t_{2}};q)_{\infty}^{-\i s}}\avg{\mB_{\mu}|e^{-(\b_{0}-\tau_{12})H_{R}}\mB_{R,\mu}^{\i s}|0^{m_{2}}}\nonumber \\
= & \int_{0}^{\pi}\f{d\t_{1}d\t_{2}}{(2\pi)^{2}}\f{\r(\t_{1})\r(\t_{2})(r^{2};q)_{\infty}(\mu(1-q)^{1/2}e^{\pm\i\t_{1}};q)_{\infty}^{-\i s}e^{-(\b_{0}-\tau_{12})E(\t_{1})-\tau_{12}E(\t_{2})}}{(\mu(1-q)^{1/2}e^{\pm\i\t_{1}};q)_{\infty}(re^{\i(\pm\t_{1}\pm\t_{2})};q)_{\infty}(\mu(1-q)^{1/2}e^{\pm\i\t_{2}};q)_{\infty}^{-\i s}}\label{app:202-1}
\end{align}
In the JT limit, we first rescale $\tau_{i}=\hat{\tau}_{i}/\sqrt{1-q}$
and take scaling (\ref{eq:app223}) and (\ref{eq:app212}). The additional
modular flow related term becomes
\begin{equation}
\f{(-e^{\i\bar{\t}-\mu_{0}}q^{-\i\w/2},-e^{-\i\bar{\t}-\mu_{0}}q^{\i\w/2};q)_{\infty}^{-\i s}}{(-e^{\i\bar{\t}-\mu_{0}}q^{\i\w/2},-e^{-\i\bar{\t}-\mu_{0}}q^{-\i\w/2};q)_{\infty}^{-\i s}}\ra e^{2\i s\w\arctan\f{\sin\bar{\t}}{e^{\mu_{0}}+\cos\bar{\t}}}\label{eq:app204-2}
\end{equation}
which is an $O(1)$ phase and does not change the saddle equation.
Adding this piece to (\ref{eq:app225-2}) does not change the saddle $\bar{\t}=\t_{s}$ and the 1-loop determinant in $\lam\ra0$ limit. Divided by (\ref{eq:app213}), the normalized modular-flowed two-point function is
\begin{align}
 & g_{\b_{0}}(s;\tau_{12})=\f{G_{\b_{0}}(s;\tau_{12})}{\avg{\mu,\b_{0}/2|\mu,\b_{0}/2}}\nonumber \\
= & (2\sin\t_{s})^{2\D}\int\f{d\w}{2\pi}\f{\G(\D+\i\w)\G(\D-\i\w)}{\G(2\D)}\left(e^{-\i\left((\hat{\b}-2\hat{\tau}_{12})\sin\t_{s}+(2\i s+1)\arctan\f{\sin\t_{s}}{e^{\mu_{0}}+\cos\t_{s}}\right)}\right)^{\i\w}\nonumber \\
= & \left(\f{\sin\t_{s}}{\cos(\hat{\tau}_{12}\sin\t_{s}-\i s(2\t_{s}-\pi-\hat{\b}\sin\t_{s})+\pi/2-\t_{s})}\right)^{2\D}\label{eq:app225-1}
\end{align}
If we continue $\hat{\tau}_{12}\ra\i\hat{t}_{12}-\i\hat \b s$ in Lorentzian signature,
we immediately find that the modular flow is linear in time evolution
for $\hat{t}_{12}$ even before the semiclassical JT limit.

Now we take the semiclassical JT limit (\ref{eq:app215}) and continue $\hat \tau_{12}\ra \i (t_{12}-\b s)/\d$, which leads
to
\begin{equation}
g_{\b_{0}}(s;\tau_{12})\ra\avg{\phi^\mu_{\b_0/2}|\D_{\mu}^{\i s} M_R(t_1)\D_{\mu}^{-\i s} M_R(t_2) |\phi^\mu_{\b_0/2}}=\left(\f{(\d\pi/\b_{BH})}{\i\sinh\left[\f{\pi}{\b_{BH}}t_{12}-\pi s\right]}\right)^{2\D}
\end{equation}
This clearly shows that the modular flow is essentially the time evolution
with a rescaling of $\b_{BH}$. This justifies that the modular extension of the boundary algebra $\mK_{\mu,*}$ keeps it unchanged.

\subsection{Two-sided two-point function in thermal EoW brane state} \label{app:d5}

The two-sided Euclidean two-point function in a thermal EoW brane state
is
\begin{align}
G_{LR}(\tau_{1},\tau_{2}) & =\avg{\mu,\b_{0}/2|e^{\tau_{1}H_{L}}M_{L}e^{-\tau_{1}H_{L}+\tau_{2}H_{R}}M_{R}e^{-\tau_{2}H_{R}}|\mu,\b_{0}/2}\nonumber \\
 & =\avg{\w|\mB_{R,\mu}^{1/2}e^{-(\b_{0}/2+\bar{\tau}_{12})H_{R}}M_{R}\mB_{R,\mu}^{1/2}e^{-(\b_{0}/2-\bar{\tau}_{12})H_{R}}M_{R}|\w}
\end{align}
where $\bar{\tau}_{12}\equiv\tau_{1}+\tau_{2}$. Using the second
equality of (\ref{eq:135}), we have
\begin{align}
&G_{LR}(\tau_{1},\tau_{2})\nn\\
=& \sum_{m_{2}}\int_{0}^{\pi}\f{d\t_{2}}{2\pi}\f{\r(\t_{2})e^{-(\b_{0}/2-\bar{\tau}_{12})E(\t_{2})}(1-q)^{m_{2}/2}}{(q;q)_{m_{2}}(\mu(1-q)^{1/2}e^{\pm\i\t_{2}};q)_{\infty}^{1/2}}r^{m_{2}}H_{m_{2}}(\cos\t_{2}|q)\avg{\w|\mB_{R,\mu}^{1/2}e^{-(\b_{0}/2+\bar{\tau}_{12})H_{R}}|0^{m_{2}}}\nonumber \\
= & \int_{0}^{\pi}\f{d\t_{1}d\t_{2}}{(2\pi)^{2}}\f{\r(\t_{1})\r(\t_{2})e^{-(\b_{0}/2+\bar{\tau}_{12})E(\t_{1})-(\b_{0}/2-\bar{\tau}_{12})E(\t_{2})}}{(\mu(1-q)^{1/2}e^{\pm\i\t_{2}};q)_{\infty}^{1/2}(\mu(1-q)^{1/2}e^{\pm\i\t_{1}};q)_{\infty}^{1/2}}\nonumber \\
 & \times\sum_{m_{2}}\f{r^{m_{2}}}{(q;q)_{m_{2}}}H_{m_{2}}(\cos\t_{2}|q)H_{m_{2}}(\cos\t_{1}|q)\nonumber \\
= & \int_{0}^{\pi}\f{d\t_{1}d\t_{2}}{(2\pi)^{2}}\f{\r(\t_{1})\r(\t_{2})(r^{2};q)_{\infty}e^{-(\b_{0}/2+\bar{\tau}_{12})E(\t_{1})-(\b_{0}/2-\bar{\tau}_{12})E(\t_{2})}}{(\mu(1-q)^{1/2}e^{\pm\i\t_{2}};q)_{\infty}^{1/2}(\mu(1-q)^{1/2}e^{\pm\i\t_{1}};q)_{\infty}^{1/2}(re^{\i(\pm\t_{1}\pm\t_{2})};q)_{\infty}}
\end{align}
where in the last step we used (\ref{eq:app266}). Comparing with
(\ref{app:202}), the main difference is an additional factor
\begin{equation}
\f{(\mu(1-q)^{1/2}e^{\pm\i\t_{1}};q)_{\infty}^{1/2}}{(\mu(1-q)^{1/2}e^{\pm\i\t_{2}};q)_{\infty}^{1/2}}=\f{(-e^{\i\bar{\t}-\mu_{0}}q^{-\i\w/2},-e^{-\i\bar{\t}-\mu_{0}}q^{\i\w/2};q)_{\infty}^{1/2}}{(-e^{\i\bar{\t}-\mu_{0}}q^{\i\w/2},-e^{-\i\bar{\t}-\mu_{0}}q^{-\i\w/2};q)_{\infty}^{1/2}}\ra e^{-\w\arctan\f{\sin\bar{\t}}{e^{\mu_{0}}+\cos\bar{\t}}}
\end{equation}
where we have taken the large $p$ limit (\ref{eq:app223}). This
additional term does not change the saddle. It follows that the normalized
two-sided two-point function is 

\begin{align}
 & g_{LR}(\bar{\tau}_{12})=\f{G_{LR}(\tau_{1},\tau_{2})}{\avg{\mu,\b_{0}/2|\mu,\b_{0}/2}}\nonumber \\
= & (2\sin\t_{s})^{2\D}\int\f{d\w}{2\pi}\f{\G(\D+\i\w)\G(\D-\i\w)}{\G(2\D)}\left(e^{-2\i\hat{\bar{\tau}}_{12}\sin\t_{s}}\right)^{\i\w}\nonumber \\
= & \left(\f{\sin\t_{s}}{\cos(\hat{\bar{\tau}}_{12}\sin\t_{s})}\right)^{2\D}\label{eq:app225-3}
\end{align}
where $\hat{\bar{\tau}}_{12}\equiv\bar{\tau}_{12}\sqrt{1-q}$. Taking
the semiclassical JT limit (\ref{eq:app215}) and analytic continuation
$\hat{\bar{\tau}}_{12}\ra\i(t_{1}+t_{2})/\d$ leads to 
\begin{equation}
g_{LR}(\bar{\tau}_{12})\ra\avg{\phi_{\b_{0}/2}^{\mu}|M_{L}(t_{1})M_{R}(t_{2})|\phi_{\b_{0}/2}^{\mu}}=\left(\f{(\d\pi/\b_{BH})}{\cosh\f{\pi}{\b_{BH}}(t_{1}+t_{2})}\right)^{2\D}
\end{equation}
This shows that the two-sided two-point function is nothing but the
two-sided correlation in a thermofield double state with inverse temperature
$\b_{BH}$.

\subsection{General $2n$-point functions in a thermal EoW brane state}
For a generic $2n$-point function in regular DSSYK model, the Feynmann rule is given in \cite{Berkooz:2018jqr}. From above computations, it is easy to see the additional rule for correlations in a thermal EoW brane state is just adding the factor $\f{1}{(\mu(1-q)^{1/2}e^{\pm\i \t_i})}$ to one of the Boltzmann factor $e^{-\tau_i E(\t_i)}$ in the computation. In the large $p$ limit, suppose the corresponding Boltzmann factor is parameterized as $\t_i=\bar \t+\w_i \lam/2$ where $\bar \t,\w_i\in O(1)$, we expand this factor as 
\be 
\f{1}{(\mu(1-q)^{1/2}e^{\pm\i\t_i};q)_{\infty}}\sim \f {e^{\f{\Li_{2}(-e^{-\i\bar{\t}-\mu_{0}})+\Li_{2}(-e^{\i\bar{\t}-\mu_{0}})}{\lam}}e^{\w_i \arctan \f{\sin\bar \t}{e^{\mu_0}+\cos \bar \t}}} {\sqrt{(1+e^{-\i\bar{\t}-\mu_{0}})(1+e^{\i\bar{\t}-\mu_{0}})}}
\ee
where the first exponential deforms the saddle in the same way as before, and the second exponential will enter the matter correlation function as a shift of Euclidean time. Following the idea of \cite{Mukhametzhanov:2023tcg} and \cite{Jafferis:2022uhu}, one can show that the an un-crossed $2n$-point function becomes generalized free fields and a crossed $2n$-point function is suppressed by $\lam$ for Lorentzian time scale $t\sim O(1)$ and thus vanishes in the semiclassical JT limit.

\subsection{The quantum JT gravity limit} \label{app:d6}

Here we briefly mention how to get the quantum JT gravity limit, where matter is coupled to the quantum Schwarzian mode \cite{Kolchmeyer:2023gwa}. We just need to keep the parameter $\phi_b$ in \eqref{app:d19} fixed while taking $\d,\lam\ra 0$ limit. For example, the inner product \eqref{eq:app211} has the integrand under this limit as
\begin{align} 
\f{(e^{\pm2\i\t},q;q)_{\infty}e^{-\b_{0}E(\t)}}{(\mu(1-q)^{1/2}e^{\pm\i\t};q)_{\infty}}=\f{(q^{\pm 2\i\phi_b k},q;q)_\infty e^{\f{\b\cos k \lam \phi_b}{\phi_b \lam(1-q)}}}{(q^{(\tilde \mu\pm \i k)\phi_b};q)_\infty} \nn\\
\sim e^{-\b k^2 \phi_b^2/2}\phi_b k\sinh (2\pi \phi_b k)|\G(\phi_b(\tilde \mu-\i k))|^2
\end{align}
where we used \eqref{eq:a164}. Rescaling $k\ra k/\phi_b$, $\tilde \mu \ra (\mu_r+1/2)/\phi_b$ exactly leads to the quantum JT result \eqref{eq:app204-1}. Including matter correlations, the computation should be similar, which we leave for a future examination.

\section{Boundary Algebra in the Free Case} \label{app:e}
\subsection{An infinite series for the commutant}
In this subsection we show that in the $Q_{ij}=0$ case we can construct the operator~\eqref{4.2} explicitly from a series expansion. The strategy is to find an infinite series representation of operators $a_{L,0}$ and $a_{L,0}^\dagger$, and therefore $\tilde H_{L,0}$, in terms of $a_{R,0}$ and $a_{R,1}$ and deform each term using the isomorphism $\td$. In the following for notational simplicity we use the abbreviation $a_{R,i}\equiv i$, $a_{R,i}^\dagger\equiv i^\dagger$. In expressions where we need to emphasize operators with given $i$ we denote $a_{R,0}=a$, $a_{R,1}=b$, $a_{L,0}=a_L$, $H_{R,0}=H_R$ and $H_{R,1}=M_R$.

For the free case the wormhole length operator $\bar q^n$ reduces to the vacuum projector $P_\w$ and thus the $\td$ isomorphism acts as $\tilde a=a+\mu P_\w$. First we write $a_L$ and $a_L^\dagger$ in terms of normal ordered series in $a, a^\dagger$ and $b,b^\dagger$. Since $a_L^\dagger$ creates a chord, we expect that all terms in the series should contain one more creation operator than annihilation operators, so we can arrange the series according to the number of annihilation operators, and we define the order $n$ of a term to be the number of annihilation operators it contains. Note that, partial sums up to order $n$ completely fix the action of a series on states with chord numbers $N\leq n$, as higher order terms annihilate all states with $N\leq n$. Therefore, we could construct the series expansion of $a_L$ in an inductive way: suppose we have already fixed the series up to order $n$, then by considering the action on chord states with $N=n+1$ we could fix the order $n+1$ terms. Thus we could obtain a general expression valid to all orders. Now we start from the order $0$ term, for the series to have correct action on $\ket{\w}$, it has to start with $a^\dagger$, that is $a_L^\dagger=a^\dagger+\ldots$. 

Now suppose we have fixed the partial sum $a_L^{\dagger(n)}$ up to order $n$, consider an arbitrary chord state $\ket{w^{(n+1)}}=\ket{i_1i_2\ldots i_{n+1}}$ with $n+1$ chords, clearly $a_L^{\dagger(n)}$ acts on such a state as 
\begin{equation}\label{app:e1}
    a_L^{\dagger(n)}\ket{w^{(n+1)}}=\ket{i_10i_2\ldots i_{n+1}}
\end{equation}
If $i_1=0$, the action of $a_L^{\dagger(n)}$ on $\ket{w^{(n+1)}}$ is already as expected, and thus no more term needs to be added. But if $i_1=1$, then $a^{(n)}_L\ket{w^{(n+1)}}=\ket{10i_2\ldots i_{n+1}}$, which disagrees with the expected result $a^\dagger_L\ket{w^{(n+1)}}=\ket{01i_2\ldots i_{n+1}}$. In this case we need to add the following term to get the correct action on $\ket{w^{(n+1)}}$
\begin{equation}\label{app:e2}
   i^\dagger_{n+1}\ldots i_2^\dagger b^\dagger a^\dagger b i_2\ldots i_{n+1}-i^\dagger_{n+1}\ldots i_2^\dagger a^\dagger b^\dagger b i_2\ldots i_{n+1}
\end{equation}
It is easy to verify that this extra term ensures the correct action on the state $\ket{w^{(n+1)}}$. 

We can repeat this procedure for all states $\ket{w^{(n+1)}}$ with $i_1=1$, notice that extra terms can be added for each $\ket{w^{(n+1)}}$ independently as in the free case different chord states are orthogonal in the sense that
\begin{equation}\label{app:e3}
    i_{1}i_{2}\ldots i_n\ket{i_1'i_2'\ldots i_n'}=\delta_{i_1i_1'}\delta_{i_2i_2'}\ldots \delta_{i_ni_n'}
\end{equation}
thus the extra term added for $\ket{w^{(n+1)}}$ does not affect the action on $\ket{w'^{(n+1)}}$ with $w\neq w'$. Collecting all extra terms we obtain the $n+1$'th order term
\begin{equation}\label{app:e4}
    a_L^{\dagger(n+1)}=a_L^{\dagger(n)}+\sum_{w^{(n+1)},s_1=1} i^\dagger_{n+1}\ldots i_2^\dagger b^\dagger a^\dagger b i_2\ldots i_{n+1}-i^\dagger_{n+1}\ldots i_2^\dagger a^\dagger b^\dagger b i_2\ldots i_{n+1} 
\end{equation}
Hence we get the following series expansion of $a_L$ in terms of right operators
\begin{equation}\label{app:e5}
    a^\dagger_L=a^\dagger+\sum_{w,s_1=1}i^\dagger_{n+1}\ldots i_2^\dagger b^\dagger a^\dagger b i_2\ldots i_{n+1}-i^\dagger_{n+1}\ldots i_2^\dagger a^\dagger b^\dagger b i_2\ldots i_{n+1} 
\end{equation}

Now we replace all operators with their tilded counterparts to get the new operator $\tilde a_{L+}\equiv \td(a_L^\dagger)$\footnote{Here we use the notation $\tilde a_{L+}$ instead of $\tilde a_{L}^\dagger$ to emphasize that $\tilde a_{L+}$ is not the conjugate operator of $\tilde a_L$.} This is done by simply replacing all $a$ with $a+\mu P_\w$ in the series~\eqref{app:e5}. In the free case we have obvious relations $aP_\w=bP_\w=P_\w a^\dagger=P_\w b^\dagger=0$, which drastically simplify the result. Since all terms (except the first one) in~\eqref{app:e5} contains a sequence $bi_2\ldots i_n$, it turns out that all extra terms including $\mu P_\w$ vanish as they will eventually hit the $b$ operator to their left. So we conclude that
\begin{equation}\label{app:e6}
    \tilde a_{L+}=a_L^\dagger
\end{equation}

Next we turn to $a_L$ and the corresponding $\td(a_L)\equiv\tilde a_{L}$, which is more complicated than the creation operator and contains a non-trivial extra contribution. By the same argument as above we see that each term in the series representing $a_L$ contains one less creation operator than annihilation operator, and the series starts with $a$, that is $a_L=a+\ldots$, now suppose we have already constructed the partial sum $a^{(n)}_L$ up to order $n$. Then we look at its action on $\ket{w^{(n+1)}}$. Clearly $a^{(n)}_L$ annihilates the second chord from the left if it is a gravity chord and annihilates the state if the second chord from the left is a matter chord. Therefore, if $\ket{w^{n+1}}=\ket{00i_3i_4\ldots i_{n+1}}$ or $\ket{11i_3i_4\ldots i_{n+1}}$, $a^{(n)}_L$ already has the correct action and no extra term needs to be added. For each state of the form $\ket{w^{n+1}}=\ket{10i_3i_4\ldots i_{n+1}}$, $a_{L}^{(n)}\ket{w^{(n+1)}}=\ket{1i_3i_4\ldots i_{n+1}}$ while what we expect is that $a_L\ket{w^{(n+1)}}=0$, so we need to add the following extra term to cancel the contribution from $a^{(n)}_L$
\begin{equation}\label{app:e7}
    -i_{n+1}^\dagger\ldots i_4^\dagger i_3^\dagger b^\dagger bai_3i_4\ldots i_{n+1}
\end{equation}
On the other hand, for states of the form $\ket{w^{(n+1)}}=\ket{01i_3i_4\ldots i_{n+1}}$, $a^{(n)}_L\ket{w^{(n+1)}}=0$ while what we expect is $a_L\ket{w^{(n+1)}}=\ket{1i_3i_4\ldots i_{n+1}}$, so we have to add the following term
\begin{equation}\label{app:e8}
    i_{n+1}^\dagger\ldots i_4^\dagger i_3^\dagger b^\dagger abi_3i_4\ldots i_{n+1}
\end{equation}

Again we repeat this procedure for each $\ket{w^{(n+1)}}$, and again for different $\ket{w^{(n+1)}}$ extra terms are added independently. We thus obtain the following series expansion for $a_L$ in terms of right operators
\begin{equation}\label{app:e9}
    a_L=a-\sum_{\substack{w\\i_1=1\\i_2=0}}i_{n+1}^\dagger\ldots i_4^\dagger i_3^\dagger b^\dagger bai_3i_4\ldots i_{n+1}+
    \sum_{\substack{w\\i_1=0\\i_2=1}}i_{n+1}^\dagger\ldots i_4^\dagger i_3^\dagger b^\dagger abi_3i_4\ldots i_{n+1}
\end{equation}
here the sum is over all chord states with two or more chords. 

Next we replace all operators with tilded ones. Note that replacing $a$ with $\tilde a$ in the second term on the r.h.s. of~\eqref{app:e9} does not result in extra terms as all projectors $P_\w$ eventually hit the annihilation operator $b$. For the third term, the only non-trivial contribution is 
\begin{equation}\label{app:e10}
    \mu\sum_{\substack{w\\i_1=0\\i_2=1}}i_{n+1}^\dagger\ldots i_4^\dagger i_3^\dagger b^\dagger P_\w bi_3i_4\ldots i_{n+1}=\mu\sum_{w,i_1=1} P_w
\end{equation}
here we have used the fact that $i_{n+1}^\dagger\ldots i_4^\dagger i_3^\dagger b^\dagger P_\w bi_3i_4\ldots i_{n+1}=P_{w'}$ where $P_{w'}$ is the projector onto the chord state specified by the string $w'=bi_3\ldots i_{n+1}$. It turns out that the sum on the l.h.s. of~\eqref{app:e10} over all $w$ satisfying $i_1=0$, $i_2=1$ can be turned into a sum over all $w$ with $i_1=1$ on the r.h.s., which is just the sum over all chord states where the leftmost chord is a $M$ chord. We thus have
\begin{equation}\label{app:e11}
    \tilde a_L=a_L+\mu P_\w+\mu P_b
\end{equation}
where we have defined $P_b$ to be the projector onto the subspace spanned by all chord states with the leftmost chord a matter chord. 

Collecting all results above we conclude that the following operator 
\begin{equation}\label{app:e12}
    \tilde H_L=a_L+a_L^\dagger+\mu(P_\w+P_b)
\end{equation}
should be in the commutant $\{\tilde H_R,\tilde M_R\}'$. In fact the projectors can be written in terms of left creation and annihilation operators as it can be easily verified that
\begin{equation}\label{app:e13}
    P_\w+P_b=1-a_L^\dagger a_L
\end{equation}
We can ignore the constant part and redefine $\tilde H_L$ as 
\begin{equation}\label{app:e14}
    \tilde H_L=a_L+a_L^\dagger-\mu a_L^\dagger a_L
\end{equation}
It is very easy to verify that this operator indeed commutes with both $\tilde H_R$ and $\tilde M_R$ using the commutation relations~\eqref{2.19} with $q^{N_0}r^{N_1}$ replaced by $P_\w$ in the free case. $[\tilde H_L,\tilde M_R]=0$ is obvious, so we only need to check $[\tilde H_L,\tilde H_R]=0$:
\begin{align}\label{app:e15}
    [\tilde H_L,\tilde H_R]&=[a_L+a_L^\dagger-\mu a^\dagger_La_L,a_R+a_R^\dagger+\mu P_\w]\nonumber\\
    &=[a_L+a_L^\dagger,a_R+a_R^\dagger]-\mu[a_L^\dagger a_L,a_R+a_R^\dagger]+\mu[a_L+a_L^\dagger,P_\w]-\mu^2[a^\dagger_La_L,P_\w]\nonumber\\
    &=\mu(P_\w a_L-a_L^\dagger P_\w+a^\dagger_LP_\w-P_\w a_L)=0
\end{align}
where we have used the obvious fact that $a_LP_\w=P_\w a_L^\dagger$ for the free case. Thus we have found a non-trivial element of the commutant in the free case. 

\subsection{The proof of Theorem \ref{thm7} for the free case}
\label{freecaseproof}
\begin{proof} 
    For the case where $q=0$ or $r=0$ the proof of Theorem~\ref{thm7} needs to be slightly modified, as we cannot directly get the spectral projections in the same way as above. We will discuss two cases, either $q=r=0$ or $q\neq 0$ but $r=0$. The former is just the free case while the latter is the case where we have infinitely heavy matter particles. We first consider the free case where $Q_{ij}=0$. In this case we have $aP_\w=bP_\w=P_\w a^\dagger =P_\w b^\dagger=0$, $aa^\dagger=bb^\dagger=1$ and $ab^\dagger=ba^\dagger=0$. With these relations, by repeatedly multiplying $H_R$ and $M_R$ operators on the left and right of $P_\w$ and taking linear combinations we can generate all operators of the form
    \begin{equation}
        i_1^\dagger i_2^\dagger\ldots i_m^\dagger P_\w i_1i_2\ldots i_n
    \end{equation}
    Typically we can generate projections to each chord state since these projectors also take the above form. Thus we can still generate $B(\mathcal H)$.
    
    Next we consider the case where $r=0$ but $q\neq 0$. In this case $q^{N_0}r^{N_1}=q^{N_0}P_{N_1=0}$ where the projector is onto the subspace spanned by states with no matter chord. Now the spectrum of $q^{N_0}r^{N_1}$ is just $\{0\}\cup\{q^{n}\}$ where $n\geq 0$. By the spectral theory we can obtain the projection onto states with different numbers $n$ (including zero) of gravity chords and no matter chords, which we denote as $P_{0,n}$. Now we show that we can construct projectors to all states with these operators together with $H_R$ and $M_R$. To do this we first note that an arbitrary chord state can be written as 
    \begin{equation}
        \ket{w}=\ket{i_1i_2\ldots i_n}=|\underbrace{000\ldots 0}_{n_1\text{ chords}} w'\rangle
    \end{equation}
    where $w'=i_{n_1+1}i_{n_1+2}\ldots i_{n}$. In the case where $n_1=0$ we have $w'=w$. Note that as the crossing factor between gravity and matter chords is $0$, the projector onto $\ket{w}$ can be written as $P_w\sim w'^\dagger P_{0,n_1}w'$, where we write $\sim$ as there is in general some non-trivial coefficient depending on the configuration of gravity chords in $w'$. Now using the relation $P_{0,n}b^\dagger=b P_{0,n}=0$ and $ab^\dagger=ba^\dagger=0$, we can generate all operators of the form $w'^\dagger P_{0,n_1}w'$ by repeatedly multiplying $H_R$ and $M_R$ to be left and right of $P_{0,n_1}$ and taking linear combinations. Thus we can get projections to arbitrary chord states and generate the whole algebra as before.
\end{proof}

\section{Chord Diagrammatics on Trumpets}\label{app:f}
\subsection{Vacuum bubbles on the trumpet}
In this appendix we discuss detailed structures of chord diagrams on a trumpet. We start from the distinction between boundary chords, EoW brane chords and vacuum chords. Throughout this section we will write $\tilde H_{R,0}=H$ and $\tilde H_{R,1}=M$ for notational simplicity. In the following we will denote chord diagrams on the disk with a defect `defect chord diagrams'.
\begin{definition}
    In a defect chord diagram, an EoW brane chord is a chord connected to the defect. A boundary chord is one connected to the boundary on both ends. A vacuum chord is one which is not connected to the boundary. The connected part of the diagram is defined as the collection of all boundary chords and EoW brane chords. The vacuum bubble of the diagram is defined as the collection of all vacuum chords. A diagram is called connected if it contains no vacuum bubbles. 
\end{definition}

In Figure~\ref{diagonal-element-warpped} we see that the vacuum bubble consists of concentric loops which intersects only the EoW brane chords but not the boundary chords, and do not self-intersect. This is not a coincidence, as guaranteed by the following lemma
\begin{figure}
	\centering
		\includegraphics[width=0.4\textwidth]{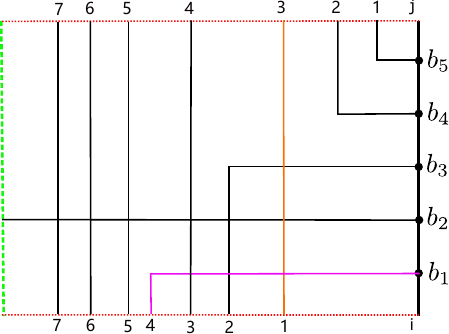}
        \caption{A labeled chord diagram with defect.}
        \label{labeled-diagonal-element}
\end{figure}
\begin{lemma}
    In an arbitrary defect chord diagram, chords in the vacuum bubble part neither self-intersect nor intersect boundary chords. In other words, the vacuum bubble part always consists of a set of concentric loops `floating' in the middle, which intersects only EoW brane chords. 
\end{lemma}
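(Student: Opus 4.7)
The plan is to argue purely topologically, leveraging the rule from Section~\ref{sec:2.1} that only ``topological'' crossings, i.e.\ those which cannot be removed by an isotopy fixing the chord endpoints, contribute to the diagrammatic value. After identifying the top and bottom of the strip representation in Figure~\ref{diagonal-element}, the resulting defect chord diagram lives on a closed disk with a single interior puncture (the defect), and chords split by their endpoints into three classes: boundary chords with both endpoints on the outer boundary, EoW brane chords with one endpoint on the outer boundary and one at the puncture, and vacuum chords with no endpoints on either. Since vacuum chords have no endpoints, they are by definition closed loops on the punctured disk.

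First I would pin down the homotopy class of a vacuum chord. Such a chord descends from an open chord in the strip that threads every time slice and is never absorbed, so after identification it must separate the outer boundary from the puncture. Equivalently, it represents a generator of $\pi_1$ of the punctured disk and is freely homotopic to a small circle encircling the defect exactly once.

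The heart of the proof is then three elementary topological observations. (i) Any two such loops are freely homotopic, so they may be isotoped to disjoint concentric circles around the defect; their minimal intersection number is $0$, ruling out vacuum self-intersections. (ii) A boundary chord cuts the punctured disk into two open regions, with the puncture lying in exactly one of them; any loop encircling the puncture can be pushed entirely into that region, so its minimal intersection with the boundary chord is also $0$. (iii) An EoW brane chord is a radial curve from the outer boundary to the puncture, which is a generator of the relative homology; any loop with nonzero winding around the puncture must cross it, so vacuum chords may (and generically do) intersect EoW chords.

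The main subtlety, rather than a genuine obstacle, is to check that the crossing-factor prescription of Section~\ref{sec:3.1} continues to count only isotopy-invariant intersections once chords are allowed to wrap the puncture. This follows transparently from the derivation of the factor $q$ via Gaussian averaging in the double-scaled limit, which detects only whether two chords are linked in a non-removable way and is indifferent to the global topology of the surface. Combined with observations (i)--(iii), this gives exactly the concentric-loop picture asserted by the lemma.
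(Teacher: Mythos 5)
There is a genuine gap at the crux of your argument: you assert, but do not prove, that each vacuum chord is freely homotopic to a loop encircling the defect \emph{exactly once}. Your justification (``it threads every time slice and is never absorbed, so it must separate the boundary from the puncture'') only shows the winding number is nonzero. A priori, a vacuum chord can be built from several strip segments concatenated through the top--bottom identification, i.e.\ a cycle $i_1\rightarrow i_2\rightarrow\cdots\rightarrow i_k\rightarrow i_1$ of open-chord slots with $k>1$, which after wrapping has winding number $k$ about the puncture. A closed curve on the punctured disk in the free homotopy class of the $k$-th power of the core generator has minimal self-intersection number $k-1>0$, so such a configuration would violate the lemma; ruling it out is precisely the nontrivial combinatorial content. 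The paper does this by tracking slot labels: because new chords are always created on one fixed side of all existing chords, the label of every segment is monotone from bottom to top, so a cyclic sequence $i_1\le i_2\le\cdots\le i_k\le i_1$ collapses to $k=1$. Your purely topological argument has no substitute for this input, so it does not get off the ground. Relatedly, your step (i) conflates two different statements: free homotopy of \emph{distinct} vacuum loops lets you isotope them to be pairwise disjoint, but says nothing about the \emph{self}-intersections of a single loop, which are governed by its winding number --- the very quantity left unestablished.

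Two smaller points. First, the same monotonicity is what the paper uses to show vacuum loops avoid boundary chords (by locating the leftmost annihilated incoming chord and showing everything to its left is forced to be a trivial loop); your separating-arc argument (ii) is a legitimate alternative for that half \emph{once} single winding is known, and is arguably cleaner. Second, if you want to phrase the whole lemma in terms of minimal intersection numbers, you should also note that all pairwise and self minimal positions can be realized \emph{simultaneously} (e.g.\ by taut or geodesic representatives on the punctured disk); otherwise ``the diagram has no such crossings'' does not follow from the pairwise statements. The fix is to import the paper's label-monotonicity argument as a preliminary combinatorial lemma establishing winding number one, after which your homotopy-theoretic conclusion is sound.
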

\begin{proof}
    We first prove that the vacuum bubble consists only of non-intersecting concentric loops. An arbitrary defect chord diagram can be sliced open as shown in Figure~\ref{labeled-diagonal-element} with incoming and outgoing open chords on the bottom and top slices. We label these open chords and boundary operator insertions as illustrated in Figure~\ref{labeled-diagonal-element}. Incoming and outgoing open chords with identical labels are identified. Now we can specify each chord by tracking the labels, for example one of the boundary chords can be represented as 
    \begin{equation}
        b_5\rightarrow1\rightarrow 3\rightarrow 4\rightarrow b_1
    \end{equation}
    Another example is the vacuum chord $7\rightarrow 7$. Here the arrows denote the lines in the diagram which connects bottom open chords to top ones. For example the orange line in Figure~\ref{labeled-diagonal-element} is denoted as $1\rightarrow 3$. Note that, since new chords are created on the right of all existing chords, the top label of each line is always greater than its bottom label. For example the orange line above has top label $3$ and bottom label $1$. Since vacuum chords do not intersect the boundary, they have to be represented as
    \begin{equation}
        i_1\rightarrow i_2\rightarrow i_3\ldots \rightarrow i_1
    \end{equation}
    But we have $i_1\leq i_2\leq i_3\ldots$, so the only possible vacuum chord is given by $i_1\rightarrow i_1$, which is a closed loop, and clearly loops for different $i$'s never cross, so the vacuum bubble part of the diagram is always a set of concentric loops. Also, from this argument we see that all lines $i\rightarrow j$ with $i<j$ must be part of a boundary chord.

    Now we proceed to the proof that vacuum bubbles never intersect boundary chords. To do this we consider the leftmost incoming chord annihilated on the boundary, denoted as $i_L\rightarrow b_L$. In Figure~\ref{labeled-diagonal-element} this is  the pink line $4\rightarrow b_1$. Obviously all lines to the right of the leftmost chord are either connected to the boundary or take the form $i\rightarrow j$ with $i<j$, so none of them could be a vacuum chord. On the other hand, all lines to the left of the leftmost chord must take the form $i\rightarrow i$ and thus form vacuum chords, which do not intersect any boundary chord. Therefore, the vacuum bubble part of the diagram is always a set of concentric loops which intersects only EoW chords but not boundary chords. 
\end{proof}

With the discussion above we can now discuss the diagrammatic representation of the naive trace~\eqref{eq:5.1}. For a monomial $H^k$, $\Tr(H^k)$ is just given by the sum of all defect chord diagrams with $k$ insertions on the boundary. We can group these diagrams according to the numbers of EoW brane chords, denoted as $b$. Diagrams with same $b$'s can be further grouped according to their connected parts, as two different diagrams could have identical connected parts but differ in the numbers of vacuum chords. A diagram with $b$ EoW brane chords and $n_v$ vacuum chords can be written as $D=\mu^bq^{bn_v}D_{c}$, where $D_c$ is the value of the connected part of the diagram\footnote{Here we have extracted the factor $\mu^b$ out of $D_c$ to emphasize that we are grouping the diagrams according to $b$.}. Here the factor $q^{bn_v}$ comes from the fact that once a vacuum chord is added into a diagram with $b$ EoW brane chords it creates $b$ crossings. With the notations defined above the naive trace can be written as
\begin{align}\label{app:f1}
    \Tr(H^k)&=\sum_{b=0}^{\infty}\mu^b\sum_{n_v=0}^{\infty}q^{bn_v}\sum_{c} D_c(k,b)\nonumber\\
    &=\sum_{b=0}^{\infty}\frac{\mu^b}{1-q^{b}}\sum_c D_c(k,b)
\end{align}
where $D_c(k,b)$ denotes connected diagrams with $k$ boundary insertions and $b$ EoW brane chords, note that brane tension factors are not included in $D_c(k,b)$ as mentioned above. The sum~\eqref{app:f1} takes the same form as the trumpet amplitude discussed in~\cite{Okuyama:2023byh}. 

The leading term $b=0$ is divergent, which now has a diagrammatic origin: for a connected diagram with no EoW brane chord, we can add an arbitrary number of vacuum chords without introducing extra suppressing factors, which leads to a divergence. On the contrary, for a connected diagram with at least one EoW brane chord, the factors $q^{bn_v}$ converge and thus there is no divergence. This divergence can be understood as the discrete analog of the divergent trumpet amplitude in~\cite{Gao:2021uro} which goes as $\lim_{b\rightarrow0}\frac{1}{b}$ in the limit of zero brane length. In fact if we take $q=e^{-\Delta}$ and take the continuum limit where $b\Delta\ll1$, the factor $\frac{1}{1-q^b}$ also goes as $\frac{1}{\Delta b}$ as in JT gravity. Now we see the advantage of the diagrammatic approach in the  DSSYK model: we can unambiguously identify the set of diagrams causing the divergence, which leads to natural ways to remove the divergence, instead of a hard cutoff in the JT gravity case. 

Before moving on to regularization of the trumpet amplitude, we should point out that, although seemingly impossible, the sum $\sum_c D_c(k,b)$ can in fact be calculated using the expression for the naive trace introduced in~\cite{Okuyama:2023byh}. We first prove that $\sum_c D_{c}(k,b)$ is finite despite that it contains an infinite number of diagrams, as the winding number of each boundary chord can be arbitrary. For a given diagram consider all chords created on the boundary, for example in Figure~\ref{labeled-diagonal-element} there are two such chords $b_5\rightarrow 1$ and $b_4\rightarrow 2$, this collection of chords then winds around the defect, and finally each chord is annihilated on the boundary with some winding number. In our example the chord $b_5\rightarrow 1$ is annihilated on $b_1$ with winding number $2$ and $b_4\rightarrow 2$ is annihilated on $b_3$ with winding number $0$. Now consider the set of all diagrams with boundary chords created and annihilated at the same points on the boundary, but each of them may have different winding numbers. For example we consider all diagrams with boundary chords of the form $b_5\rightarrow 1\rightarrow \ldots\rightarrow b_1$ and $b_4\rightarrow 2\rightarrow \ldots\rightarrow b_3$. Since the number of boundary insertions is finite, there is only a finite number of such sets. Now let $n_1,n_2\ldots n_k$ be the winding numbers of each boundary chord in a given diagram with $k$ boundary chords, it is clear that the diagram converges at least as fast as $q^{n_1+n_2+\ldots n_k}$ since such a diagram contains at least $\sum n_i$ crossings from self-intersections of chords with non-trivial winding numbers. Therefore the sum of all diagrams in each set converges. Since the number of sets is also finite, the sum $\sum_c D_c(k,b)$ converges as well. 

In~\cite{Okuyama:2023byh} the naive trace is calculated to be
\begin{align}\label{app:f2}
    \Tr(H^k)&=\sum_{n} (H^k)_{nn}\nonumber\\
    &=\int_0^\pi\frac{d\theta}{2\pi}\frac{\rho(\theta)}{(\tilde \mu e^{\pm i\theta};q)_\infty} E^k(\theta)|\tilde \psi_n(\theta)|^2\nonumber\\
    &=\sum_{b=0}\frac{\tilde\mu^b}{1-q^b}\int_0^\pi \frac{d\theta}{2\pi}E^k(\theta)2\cos(b\theta)
\end{align}
where in the second step we have inserted the completeness relation of the $\ket{\t}$ basis and in the third step we have used the resummation formula~\eqref{eq:app266}. Since $\tilde \mu=\mu\sqrt{1-q}$, we can identify $D_c(k,b)$ with the terms multiplying $\frac{\mu^b}{1-q^b}$ in the sum above. We can do some sanity checks to see that the expression makes sense. Plugging~\eqref{eq:3.10} into the integrals above, we immediately see that $D_c(k,b)=0$ for $k<b$, as $k$ boundary insertions cannot produce more than $k$ EoW brane chords. We also have $D_c(k,k)=1$, as there is only one way for $k$ boundary insertions to give $k$ EoW brane chords, that is all of them should be connected to the EoW brane. Another observation is that $D_c(k,b)\neq 0$ only when $k\equiv b\mod 2$, this corresponds to the simple fact that only in this case the number of incoming and outgoing chords can be the same. With the discussion above we can now proceed to the regularization of the trumpet amplitudes to obtain renormalized traces which are finite. 

\subsection{Regularization by overall normalization}
The most naive normalization procedure we can think of is to simply divide by a divergent overall normalization constant, which is the naive trace of the unit operator $\Tr (\mathbb I)$. Since $\mathbb I$ has no boundary insertions, diagrammatically this trace contains an infinite sum of diagrams with arbitrary numbers of vacuum bubbles:
\begin{equation}
\tikz[baseline=(n.base)] {
    \node (n) {$\Tr(\mathbb I)=$};
    \fill[green] (2,0) circle (0.1);
    \draw[thick] (2,0) circle (1);
    \node at (3.5,0) {$+$};
    \fill[green] (5,0) circle (0.1);
    \draw[thick] (5,0) circle (1);
    \draw (5,0) circle (0.8);
    \node at (6.5,0) {$+$};
    \fill[green] (8,0) circle (0.1);
    \draw[thick] (8,0) circle (1);
    \draw (8,0) circle (0.8);
    \draw (8,0) circle (0.6);
    \node at (9.5,0) {$+$};
    \fill[green] (11,0) circle (0.1);
    \draw[thick] (11,0) circle (1);
    \draw (11,0) circle (0.8);
    \draw (11,0) circle (0.6);
    \draw (11,0) circle (0.4);
    \node at (12.6,0) {$+\ldots$};
}
\end{equation}

Now we formally divide the naive trace by the trace of identity to form a renormalized trace
\begin{equation}\label{app:f3}
    \tr(H^k)=\frac{\Tr(H^k)}{\Tr(\mathbb I)}
\end{equation}
We expect this definition to preserve positivity of the trace as it is the quotient of two divergent positive quantities. This division has a diagrammatic explanation. First note that all terms with $b\neq 0$ in~\eqref{app:f2} is finite while $\Tr(\mathbb I)$ is divergent, thus they all vanish after this normalization. Therefore we are left with only $b=0$ terms, meaning that the length of the EoW brane is set to zero, which can be understood as a `regularized conical amplitude'. From previous discussion we know that the $b=0$ sector of the naive trace can be written as~\eqref{app:f2}
\begin{equation}
    \tikz[baseline=(n.base)] {
    \node (n) {$\sum_c D_c(k,b=0)\times$};
    \node at (1.7,0) {$\bigg\{$};
    \fill[green] (3,0) circle (0.1);
    \draw[thick] (3,0) circle (1);
    \node at (4.5,0) {$+$};
    \fill[green] (6,0) circle (0.1);
    \draw[thick] (6,0) circle (1);
    \draw (6,0) circle (0.8);
    \node at (7.5,0) {$+$};
    \fill[green] (9,0) circle (0.1);
    \draw[thick] (9,0) circle (1);
    \draw (9,0) circle (0.8);
    \draw (9,0) circle (0.6);
    \node at (10.8,0) {$+\ldots\bigg\}$};
    }
\end{equation}
Thus the renormalization~\eqref{app:f3} just cancels the sum of all vacuum chords, which gives
\begin{equation}
    \tr(H^k)=\frac{\Tr(H^k)}{\Tr(\mathbb I)}=\sum_c D_c(k,b=0)
\end{equation}
by taking the $b=0$ term in~\eqref{app:f2} this sum is given by the simple integral
\begin{equation}
    \sum_c D_c(k,b=0)=\int_0^\pi\frac{d\theta}{\pi} E^k(\theta)
\end{equation}
or more generally, for an arbitrary function $f(H)$ we have
\begin{equation}
    \tr[f(H)]=\int_0^\pi\frac{d\theta}{\pi}f[E(\theta)]
\end{equation}
We can now see that this trace is indeed positive definite as the integration measure is a positive constant on $[0,\pi]$. Therefore the integral is positive for any positive function $f$. 

\subsection{Regularization by removing vacuum bubbles}
The regularization above is simple but has the disadvantage that it only keeps the $b=0$ component, thus reducing the trumpet to a cone. It turns out that there is a more elegant way to remove the divergence by excluding vacuum bubbles in each diagram. In~\eqref{eq:5.1} the contribution of vacuum bubbles is exactly the factors $\frac{1}{1-q^b}$, so after removing vacuum bubbles we have
\begin{align}\label{app:f4}
    \widetilde\tr(H^k)&=\sum_{b=0}\tilde \mu^b\int_0^\pi \frac{d\theta}{2\pi}E^k(\theta)2\cos(b\theta)\nonumber\\
    &=\int_0^\pi \frac{d\theta}{2\pi}E^k(\theta)\frac{1-\tilde\mu\cos\theta}{1-2\tilde\mu\cos\theta+\tilde\mu^2}
\end{align}
where we use $\widetilde\tr$ to denote this new renormalized trace. In the second line we have completed the $b$ summation. More generally, for an arbitrary function $f(H)$ we now have 
\begin{equation}\label{app:f5}
    \widetilde {\tr}[f(H)]=\int_0^\pi \frac{d\theta}{2\pi}f[E(\theta)]\frac{1-\tilde\mu\cos\theta}{1-2\tilde\mu\cos\theta+\tilde\mu^2}
\end{equation}

We yet have to verify the positivity of this renormalized trace. This is obvious as the integration measure $\frac{1-\tilde\mu\cos\theta}{1-2\tilde\mu\cos\theta+\tilde\mu^2}>0$ on $[0,\pi]$. So the trace is positive for arbitrary $f>0$. This regularization of the trumpet amplitudes justifies the calculation of double-cone amplitudes in~\cite{Okuyama:2023byh} where two trumpet amplitudes are glued together after stripping off the $\frac{\tilde \mu^b}{1-q^b}$ factors.

With the results above, we can now construct the trace of the thermal density matrix, $\tr(e^{-\beta H})$ or $\widetilde{\tr}(e^{-\beta H})$, these trumpets can then be sliced open as in Figure~\ref{cut a trumpet} to produce two-sided states which are highly entangled across a pair of EoW branes. We can treat these two-sided states as `thermal field double states' in the presence of EoW branes. The special feature here is that we have the `left' and `right' component for the bulk while the entanglement is not coming from a connected geometry, which provides an example of the quantum connectivity discussed in~\cite{Engelhardt:2023xer}.

\subsection{Trumpet in the presence of matter: an exponential divergence}
It is natural to generalize the discussion of trumpet amplitudes above to the presence of matter, and we assume that there is only one type of matter for simplicity. The naive trace for an arbitrary monomial $O$ in $H$ and $M$ is again defined by the sum of diagonal matrix elements
\begin{equation}
    \Tr(\hat O)=\sum_i O_{ii}
\end{equation}
here each $i$ denotes a basis vector labeled by $\ket{i_1i_2\ldots i_n}$, with $i_k=0,1$. This basis is not orthonormal but it is nevertheless complete and countable.\footnote{Since for each total chord number $n$ there is a finite number of chord states, so the total number of different states is countable as it is a countable union of finite sets.} Similar to the pure gravity case, $\Tr(O)$ can be represented as a sum over all diagrams as shown in Figure~\ref{diagonal element matter}, which can again be warped into a chord diagram on a defected disk as shown in Figure~\ref{diagonal element matter warpped}.  
\begin{figure}
	\centering
	\begin{subfigure}{0.4\textwidth}
		\includegraphics[width=\textwidth]{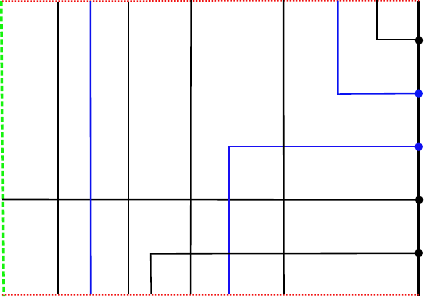}
        \caption{}
        \label{diagonal element matter} 
	\end{subfigure}
        \hspace{5em}
	\begin{subfigure}{0.3\textwidth}
		\includegraphics[width=\textwidth]{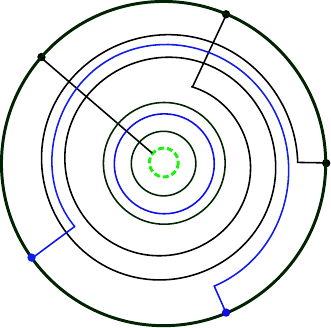}
        \caption{}
        \label{diagonal element matter warpped}
        \end{subfigure}
        \caption{Chord diagrams representing diagonal elements in the presence of matter. Here we used the diagonal matrix element for the state $\ket{0100010}$ as an example, where the matter chord is again represented by a blue line. Note that the chord configurations on the top and bottom slices are identical, thus we can identify them to obtain a defected disk.}
\end{figure}
In the presence of matter we can define boundary, vacuum and EoW chords, connected parts and vacuum bubbles in the same way as in the pure gravity case. Similar arguments as in the pure gravity case can still be applied to show that the sum of connected diagrams $\sum _cD_c(O,b)$ for a given boundary insertion $O$ is finite. Proliferating vacuum bubbles again leads to divergence, but in this case it becomes exponential. 

First consider the $b=0$ sector, each connected diagram with $b=0$ and $n_v$ vacuum chords is multiplied by a factor $2^{n_v}$ in the naive trace as each vacuum chord can be either a gravity or a matter chord. Therefore the leading divergent term is formally $\sum_c D_c(O,b=0)\times(1+2+2^2+2^3+\ldots)$, which diverges exponentially instead of linearly. This is similar to our expectation from JT gravity coupled to conformal matter~\cite{Penington:2023dql}, where for small $b$ the geometry contains a long tube connecting the EoW brane to the asymptotic boundary, and quantization of CFT on such a tube geometry leads to a negative ground energy which scales as $E_0\sim-\frac{\pi c}{6b}$, propagating this negative energy mode along the long tube then leads to an exponential divergence instead of a linear one in the pure gravity case. 

For sectors with $b\neq 0$, a vacuum bubble containing $n_h$ gravity chords and $n_m$ chords leads to a factor $q^{bn_h}r^{bn_b}$, and there are $\binom{n_h+n_b}{n_h}$ ways to arrange these chords, so summing over all possible vacuum bubbles gives the following multiplication factor for each connected diagram
\begin{equation}
    \sum_{n_h,n_b}\binom{n_h+n_b}{n_h}q^{bn_h}r^{bn_b}=\sum_n(q^b+r^b)^n
\end{equation}
which converges only if $q^b+r^b<1$, therefore in the presence of matter there could be subleading divergences coming from the $b\neq 0$ sectors, depending on crossing factors $q$ and $r$.

Regularization in the presence of multiple types of chords becomes much more complicated than the single chord case. We might ask whether simply removing all vacuum bubbles in the diagrams defines a positive definite trace. However, the proof of positivity in the pure gravity case exploits the fact that the boundary algebra is Abelian and is therefore isomorphic to $L^\infty([-\f{2}{\sqrt{1-q}},\f{2}{\sqrt{1-q}}])$. Therefore we can implement the spectral representation of the operators to prove positivity, which reduces to the positivity of the integration measure in~\eqref{app:f5}. In the presence of multiple chords we cannot `diagonalize' $H$ and $M$ operators simultaneously so the same argument no longer works. Nevertheless, it remains unclear whether or not this scheme preserves positivity of the trace in the multiple chord case. 

On the other hand, we also discuss whether the `brute force' regularization, that is simply removing all diagrams either with vacuum bubbles or EoW brane chords, still works in this case. This scheme amounts to formally divide by $\tr{(\mathbb I)}$ so it seems to remain valid. There is, however, a subtlety. In the pure gravity case this formal division can be also interpreted as taking the following limit
\begin{equation}\label{app:f6}
    \lim_{N\ra\infty}\f{\sum_{n=0}^N O_{nn}}{N}
\end{equation}
which is well defined for arbitrary $O$ which is a polynomial in $H$. In the multiple chord case, this limit should be replaced by
\begin{equation}\label{app:f7}
    \lim_{N\ra\infty}\f{\sum_{n_0+n_1\leq N} O_{ii}}{2^N}
\end{equation}
where the sum in the numerator is over all diagonal matrix elements in the subspace of no more than $N$ chords. However, the large $N$ limit of this sum does not converge to the `brute force' renormalized trace defined above, and different diagrams will come with factors depending on their detailed structures. In this case it is not clear whether cyclicity of the trace will be preserved. The root of this problem is that in the formal division, both numerator and denominator are divergent quantities which are, in fact, not well defined. In the pure gravity case this subtlety can be redeemed by interpreting the division as the limit~\eqref{app:f7}, but the same argument fails in the multi-chord case. One possible regulator in this case is provided in~\eqref{eq:5.2}, and further analysis of this regulator will be discussed in future works.

\section{Basic Concepts in Operator Algebra}\label{app:math}
In this appendix we present a short introduction to operator algebras. We will cover only the minimum necessary for the remaining parts of this paper. General discussions can be found, for example, in~\cite{takesaki2001theory,takesaki2003theory}. A short review for physicists can be found in \cite{Sorce:2023fdx}.

Let $\mH$ be a Hilbert space and $B(\mH)$ be the algebra of all bounded operators on $\mH$. Generally we are interested in various subalgebras $\mA\subseteq B(\mH)$. The center of $\mA$ is defined to be the set of all operators $a\in\mA$ which commutes with all elements in $\mA$. The commutant of $\mA$, denoted $\mA'$, is defined to be the set of all operators $a\in B(\mH)$ which commutes with all elements in $\mA$. It is easy to check that $\mA'$ is always a subalgebra of $B(\mH)$ by itself. 

Now we introduce the notion of weak convergence which is of key importance in physics. A sequence of operators $\{a_n\}\in B(\mH)$ is said to converge to $a\in B(\mH)$ in the weak operator topology if 
\begin{equation}
    \lim_{n\ra\infty}\mel{x}{a_n}{y}=\mel{x}{a}{y}\;\;\;\forall x,y\in\mH
\end{equation}
Since all physical observables are essentially matrix elements, we can consider the weak limit as the `physical' limit of operators. $\mA$ is said to be a von Neumann algebra if it is closed in weak topology, that is $\overline{\mA}^w=\mA$, where $\overline{A}^w$ denotes the weak closure. Therefore, we are always interested in von Neumann algebras in physics. There is a simple way to construct a von Neumann algebra from an arbitrary subalgebra $\mA\subseteq B(\mH)$ containing the identity operator and is closed under taking self-adjoints. The celebrated von Neumann bicommutant theorem states that $\overline{A}^w=\mA''$ in this case. That is, we can simply take the commutant twice to get the weak closure. Let $\mA_1$ and $\mA_2$ be two von Neumann algebras, we denote by $\mA_1\vee \mA_2$ the smallest von Neumann algebra containing $\mA_1$ and $\mA_2$, $\mA_1\wedge \mA_2$ the biggest von Neumann algebra contained in $\mA_1$ and $\mA_2$. In fact we have $\mA_1\vee \mA_2=(\mA_1\cup\mA_2)''$ and $\mA_1\wedge \mA_2=\mA_1\cap \mA_2$.

A von Neumann algebra $\mA$ is said to be a factor if its center contains only the identity operator. Factors are basic building blocks of general von Neumann algebras in the sense that general von Neumann algebras can be decomposed into direct sums or direct integrals of factors. For a factor $\mA$ we have $(\mA\vee \mA')'= (\mA\cup\mA')'=\mA'\cap\mA''=\mA'\cap\mA=\mathbb C$, and hence $\mA\vee \mA'=B(\mH)$. That is, a factor and its commutant generate all bounded operators. Another important feature of von Neumann algebras is that they contain enough projections such that all operators in $\mA$ can be generated by taking limits of linear combinations of projections in $\mA$. Specifically, if we take a self-adjoint operator $a\in \mA$, then all its spectral projections are still in $\mA$, we will use this fact in Section~\ref{sec:4.2}.

Von Neumann factors fall into three types according to the structure of their projections. Rather than developing this classification in detail, we focus only on its implications for the existence and properties of traces\footnote{Here by traces we mean faithful, semifinite and normal traces. Faithfulness means that the trace gives non-zero results for non-zero positive operators. Semifiniteness says that the trace is well-defined for `enough' operators in the algebra. While normality says that the trace has some notion of continuity. Details of these definitions can be found in~\cite{takesaki2003theory}}. For type I factors, we have well-defined traces which are ``quantized" on projections. That is, traces of projections take integer values after suitable normalizations. Type II factors also have well-defined traces, but they take continuous values on projections, with the range either $[0,1]$ or $[0,\infty]$ after suitable normalizations. The factor is called type II$_1$ in the former case and type II$_\infty$ in the latter case. Finally, type III factors do not admit well-defined traces. Physically, we usually identify subalgebras with subsystems. Their types can then be understood as characterizing entanglement structures between subsystems and their complements, with the entanglement most regular in the type I case, and most singular in the type III case.    

% Bibliography

%% [A] Recommended: using JHEP.bst file
%% \bibliographystyle{JHEP}
%% \bibliography{biblio.bib}

%% or
%% [B] Manual formatting (see below)
%% (i) We suggest to always provide author, title and journal data or doi:
%% in short all the informations that clearly identify a document.
%% (ii) please avoid comments such as "For a review'', "For some examples",
%% "and references therein" or move them in the text. In general, please leave only references in the bibliography and move all
%% accessory text in footnotes.
%% (iii) Also, please have only one work for each \bibitem.

\bibliographystyle{JHEP}
\bibliography{biblio.bib}

@article{Okuyama:2023byh,
    author = "Okuyama, Kazumi",
    title = "{End of the world brane in double scaled SYK}",
    eprint = "2305.12674",
    archivePrefix = "arXiv",
    primaryClass = "hep-th",
    doi = "10.1007/JHEP08(2023)053",
    journal = "JHEP",
    volume = "08",
    pages = "053",
    year = "2023"
}

@article{Aguilar-Gutierrez:2025pqp,
    author = "Aguilar-Gutierrez, Sergio E.",
    title = "{Building the holographic dictionary of the DSSYK from chords, complexity {\&} wormholes with matter}",
    eprint = "2505.22716",
    archivePrefix = "arXiv",
    primaryClass = "hep-th",
    doi = "10.1007/JHEP10(2025)221",
    journal = "JHEP",
    volume = "10",
    pages = "221",
    year = "2025"
}

@article{Berkooz:2025ydg,
    author = "Berkooz, Micha and Kukolj, Trivko and Seitz, Josef",
    title = "{Comments on Class S(YK)}",
    eprint = "2507.12524",
    archivePrefix = "arXiv",
    primaryClass = "hep-th",
    month = "7",
    year = "2025"
}

@article{Almheiri:2019qdq,
    author = "Almheiri, Ahmed and Hartman, Thomas and Maldacena, Juan and Shaghoulian, Edgar and Tajdini, Amirhossein",
    title = "{Replica Wormholes and the Entropy of Hawking Radiation}",
    eprint = "1911.12333",
    archivePrefix = "arXiv",
    primaryClass = "hep-th",
    doi = "10.1007/JHEP05(2020)013",
    journal = "JHEP",
    volume = "05",
    pages = "013",
    year = "2020"
}

@article{Harlow:2016vwg,
    author = "Harlow, Daniel",
    title = "{The Ryu{\textendash}Takayanagi Formula from Quantum Error Correction}",
    eprint = "1607.03901",
    archivePrefix = "arXiv",
    primaryClass = "hep-th",
    doi = "10.1007/s00220-017-2904-z",
    journal = "Commun. Math. Phys.",
    volume = "354",
    number = "3",
    pages = "865--912",
    year = "2017"
}

@article{Kourkoulou:2017zaj,
    author = "Kourkoulou, Ioanna and Maldacena, Juan",
    title = "{Pure states in the SYK model and nearly-$AdS_2$ gravity}",
    eprint = "1707.02325",
    archivePrefix = "arXiv",
    primaryClass = "hep-th",
    month = "7",
    year = "2017"
}

@article{Almheiri:2018xdw,
    author = "Almheiri, Ahmed",
    title = "{Holographic Quantum Error Correction and the Projected Black Hole Interior}",
    eprint = "1810.02055",
    archivePrefix = "arXiv",
    primaryClass = "hep-th",
    month = "10",
    year = "2018"
}

@article{Witten:2018zxz,
    author = "Witten, Edward",
    title = "{APS Medal for Exceptional Achievement in Research: Invited article on entanglement properties of quantum field theory}",
    eprint = "1803.04993",
    archivePrefix = "arXiv",
    primaryClass = "hep-th",
    doi = "10.1103/RevModPhys.90.045003",
    journal = "Rev. Mod. Phys.",
    volume = "90",
    number = "4",
    pages = "045003",
    year = "2018"
}

@article{Leutheusser:2022bgi,
    author = "Leutheusser, Sam and Liu, Hong",
    title = "{Subregion-subalgebra duality: Emergence of space and time in holography}",
    eprint = "2212.13266",
    archivePrefix = "arXiv",
    primaryClass = "hep-th",
    doi = "10.1103/PhysRevD.111.066021",
    journal = "Phys. Rev. D",
    volume = "111",
    number = "6",
    pages = "066021",
    year = "2025"
}

@article{Gao:2024lve,
    author = "Gao, Ping and Lin, Han and Peng, Cheng",
    title = "{Quantum chaos from multiple copies of integrable commuting SYK models}",
    eprint = "2411.12806",
    archivePrefix = "arXiv",
    primaryClass = "hep-th",
    reportNumber = "USTC-ICTS/PCFT-24-43",
    doi = "10.1103/1k6k-xhmk",
    journal = "Phys. Rev. D",
    volume = "112",
    number = "6",
    pages = "066012",
    year = "2025"
}

@article{Leutheusser:2021frk,
    author = "Leutheusser, Samuel Aaron Wehlau and Liu, Hong",
    title = "{Emergent Times in Holographic Duality}",
    eprint = "2112.12156",
    archivePrefix = "arXiv",
    primaryClass = "hep-th",
    reportNumber = "MIT-CTP/5382",
    doi = "10.1103/PhysRevD.108.086020",
    journal = "Phys. Rev. D",
    volume = "108",
    number = "8",
    pages = "086020",
    year = "2023"
}

@article{Mukhametzhanov:2023tcg,
    author = "Mukhametzhanov, Baur",
    title = "{Large p SYK from chord diagrams}",
    eprint = "2303.03474",
    archivePrefix = "arXiv",
    primaryClass = "hep-th",
    doi = "10.1007/JHEP09(2023)154",
    journal = "JHEP",
    volume = "09",
    pages = "154",
    year = "2023"
}

@article{Gao:2021uro,
    author = "Gao, Ping and Jafferis, Daniel Louis and Kolchmeyer, David K.",
    title = "{An effective matrix model for dynamical end of the world branes in Jackiw-Teitelboim gravity}",
    eprint = "2104.01184",
    archivePrefix = "arXiv",
    primaryClass = "hep-th",
    doi = "10.1007/JHEP01(2022)038",
    journal = "JHEP",
    volume = "01",
    pages = "038",
    year = "2022"
}

@article{Lin:2022rbf,
    author = "Lin, Henry W.",
    title = "{The bulk Hilbert space of double scaled SYK}",
    eprint = "2208.07032",
    archivePrefix = "arXiv",
    primaryClass = "hep-th",
    doi = "10.1007/JHEP11(2022)060",
    journal = "JHEP",
    volume = "11",
    pages = "060",
    year = "2022"
}

@article{Xu:2024hoc,
    author = "Xu, Jiuci",
    title = "{Von Neumann Algebras in Double-Scaled SYK}",
    eprint = "2403.09021",
    archivePrefix = "arXiv",
    primaryClass = "hep-th",
    month = "3",
    year = "2024"
}

@article{Berkooz:2018qkz,
    author = "Berkooz, Micha and Narayan, Prithvi and Simon, Joan",
    title = "{Chord diagrams, exact correlators in spin glasses and black hole bulk reconstruction}",
    eprint = "1806.04380",
    archivePrefix = "arXiv",
    primaryClass = "hep-th",
    doi = "10.1007/JHEP08(2018)192",
    journal = "JHEP",
    volume = "08",
    pages = "192",
    year = "2018"
}

@article{katsurada2003asymptotic,
  title={Asymptotic expansions of certain q-series and a formula of Ramanujan for specific values of the Riemann zeta function},
  author={Katsurada, Masanori},
  journal={Acta Arithmetica},
  volume={107},
  number={3},
  pages={269--298},
  year={2003},
  publisher={Instytut Matematyczny}
}

@article{Gao:2016bin,
    author = "Gao, Ping and Jafferis, Daniel Louis and Wall, Aron C.",
    title = "{Traversable Wormholes via a Double Trace Deformation}",
    eprint = "1608.05687",
    archivePrefix = "arXiv",
    primaryClass = "hep-th",
    doi = "10.1007/JHEP12(2017)151",
    journal = "JHEP",
    volume = "12",
    pages = "151",
    year = "2017"
}

@article{Gao:2019nyj,
    author = "Gao, Ping and Jafferis, Daniel Louis",
    title = "{A traversable wormhole teleportation protocol in the SYK model}",
    eprint = "1911.07416",
    archivePrefix = "arXiv",
    primaryClass = "hep-th",
    doi = "10.1007/JHEP07(2021)097",
    journal = "JHEP",
    volume = "07",
    pages = "097",
    year = "2021"
}

@article{Maldacena:2018lmt,
    author = "Maldacena, Juan and Qi, Xiao-Liang",
    title = "{Eternal traversable wormhole}",
    eprint = "1804.00491",
    archivePrefix = "arXiv",
    primaryClass = "hep-th",
    month = "4",
    year = "2018"
}

@article{Maldacena:2017axo,
    author = "Maldacena, Juan and Stanford, Douglas and Yang, Zhenbin",
    title = "{Diving into traversable wormholes}",
    eprint = "1704.05333",
    archivePrefix = "arXiv",
    primaryClass = "hep-th",
    doi = "10.1002/prop.201700034",
    journal = "Fortsch. Phys.",
    volume = "65",
    number = "5",
    pages = "1700034",
    year = "2017"
}

@article{Hamilton:2005ju,
    author = "Hamilton, Alex and Kabat, Daniel N. and Lifschytz, Gilad and Lowe, David A.",
    title = "{Local bulk operators in AdS/CFT: A Boundary view of horizons and locality}",
    eprint = "hep-th/0506118",
    archivePrefix = "arXiv",
    reportNumber = "BROWN-HET-1448, CU-TP-1130",
    doi = "10.1103/PhysRevD.73.086003",
    journal = "Phys. Rev. D",
    volume = "73",
    pages = "086003",
    year = "2006"
}

@article{Faulkner:2017vdd,
    author = "Faulkner, Thomas and Lewkowycz, Aitor",
    title = "{Bulk locality from modular flow}",
    eprint = "1704.05464",
    archivePrefix = "arXiv",
    primaryClass = "hep-th",
    doi = "10.1007/JHEP07(2017)151",
    journal = "JHEP",
    volume = "07",
    pages = "151",
    year = "2017"
}

@article{Jafferis:2015del,
    author = "Jafferis, Daniel L. and Lewkowycz, Aitor and Maldacena, Juan and Suh, S. Josephine",
    title = "{Relative entropy equals bulk relative entropy}",
    eprint = "1512.06431",
    archivePrefix = "arXiv",
    primaryClass = "hep-th",
    reportNumber = "NSF-KITP-15-162, NSF-KITP-15-162",
    doi = "10.1007/JHEP06(2016)004",
    journal = "JHEP",
    volume = "06",
    pages = "004",
    year = "2016"
}

@article{Hamilton:2006az,
    author = "Hamilton, Alex and Kabat, Daniel N. and Lifschytz, Gilad and Lowe, David A.",
    title = "{Holographic representation of local bulk operators}",
    eprint = "hep-th/0606141",
    archivePrefix = "arXiv",
    reportNumber = "CU-TP-1149",
    doi = "10.1103/PhysRevD.74.066009",
    journal = "Phys. Rev. D",
    volume = "74",
    pages = "066009",
    year = "2006"
}

@book{takesaki2006tomita,
  title={Tomita's theory of modular Hilbert algebras and its applications},
  author={Takesaki, Masamichi},
  volume={128},
  year={2006},
  publisher={Springer}
}

@incollection{koekoek2010hypergeometric,
  title={Hypergeometric orthogonal polynomials},
  author={Koekoek, Roelof and Lesky, Peter A and Swarttouw, Ren{\'e} F},
  booktitle={Hypergeometric Orthogonal Polynomials and Their q-Analogues},
  pages={183--253},
  year={2010},
  publisher={Springer}
}

@inproceedings{Liu:2025krl,
    author = "Liu, Hong",
    title = "{Lectures on entanglement, von Neumann algebras, and emergence of spacetime}",
    booktitle = "{Theoretical Advanced Study Institute in Elementary Particle Physics 2023}: {Aspects of Symmetry}",
    eprint = "2510.07017",
    archivePrefix = "arXiv",
    primaryClass = "hep-th",
    reportNumber = "MIT-CTP/5938",
    month = "10",
    year = "2025"
}

@article{Yang:2018gdb,
    author = "Yang, Zhenbin",
    title = "{The Quantum Gravity Dynamics of Near Extremal Black Holes}",
    eprint = "1809.08647",
    archivePrefix = "arXiv",
    primaryClass = "hep-th",
    doi = "10.1007/JHEP05(2019)205",
    journal = "JHEP",
    volume = "05",
    pages = "205",
    year = "2019"
}

@article{szablowski2013q,
  title={On the q-Hermite polynomials and their relationship with some other families of orthogonal polynomials},
  author={Szablowski, Pawel J},
  journal={Demonstratio Mathematica},
  volume={46},
  number={4},
  pages={679--708},
  year={2013},
  publisher={De Gruyter Open}
}

@article{Berkooz:2018jqr,
    author = "Berkooz, Micha and Isachenkov, Mikhail and Narovlansky, Vladimir and Torrents, Genis",
    title = "{Towards a full solution of the large N double-scaled SYK model}",
    eprint = "1811.02584",
    archivePrefix = "arXiv",
    primaryClass = "hep-th",
    doi = "10.1007/JHEP03(2019)079",
    journal = "JHEP",
    volume = "03",
    pages = "079",
    year = "2019"
}

@article{Berkooz:2020xne,
    author = "Berkooz, Micha and Brukner, Nadav and Narovlansky, Vladimir and Raz, Amir",
    title = "{The double scaled limit of Super--Symmetric SYK models}",
    eprint = "2003.04405",
    archivePrefix = "arXiv",
    primaryClass = "hep-th",
    doi = "10.1007/JHEP12(2020)110",
    journal = "JHEP",
    volume = "12",
    pages = "110",
    year = "2020"
}

@article{Berkooz:2020uly,
    author = "Berkooz, Micha and Narovlansky, Vladimir and Raj, Himanshu",
    title = "{Complex Sachdev-Ye-Kitaev model in the double scaling limit}",
    eprint = "2006.13983",
    archivePrefix = "arXiv",
    primaryClass = "hep-th",
    doi = "10.1007/JHEP02(2021)113",
    journal = "JHEP",
    volume = "02",
    pages = "113",
    year = "2021"
}

@article{Berkooz:2020fvm,
    author = "Berkooz, Micha and Brukner, Nadav and Narovlansky, Vladimir and Raz, Amir",
    title = "{Multi-trace correlators in the SYK model and non-geometric wormholes}",
    eprint = "2104.03336",
    archivePrefix = "arXiv",
    primaryClass = "hep-th",
    doi = "10.1007/JHEP09(2021)196",
    journal = "JHEP",
    volume = "21",
    pages = "196",
    year = "2020"
}

@article{Berkooz:2024lgq,
    author = "Berkooz, Micha and Mamroud, Ohad",
    title = "{A cordial introduction to double scaled SYK}",
    eprint = "2407.09396",
    archivePrefix = "arXiv",
    primaryClass = "hep-th",
    doi = "10.1088/1361-6633/ada889",
    journal = "Rept. Prog. Phys.",
    volume = "88",
    number = "3",
    pages = "036001",
    year = "2025"
}

@article{Blommaert:2024ymv,
    author = "Blommaert, Andreas and Mertens, Thomas G. and Papalini, Jacopo",
    title = "{The dilaton gravity hologram of double-scaled SYK}",
    eprint = "2404.03535",
    archivePrefix = "arXiv",
    primaryClass = "hep-th",
    doi = "10.1007/JHEP06(2025)050",
    journal = "JHEP",
    volume = "06",
    pages = "050",
    year = "2025"
}

@article{Blommaert:2023wad,
    author = "Blommaert, Andreas and Mertens, Thomas G. and Yao, Shunyu",
    title = "{The q-Schwarzian and Liouville gravity}",
    eprint = "2312.00871",
    archivePrefix = "arXiv",
    primaryClass = "hep-th",
    doi = "10.1007/JHEP11(2024)054",
    journal = "JHEP",
    volume = "11",
    pages = "054",
    year = "2024"
}

@article{Blommaert:2023opb,
    author = "Blommaert, Andreas and Mertens, Thomas G. and Yao, Shunyu",
    title = "{Dynamical actions and q-representation theory for double-scaled SYK}",
    eprint = "2306.00941",
    archivePrefix = "arXiv",
    primaryClass = "hep-th",
    doi = "10.1007/JHEP02(2024)067",
    journal = "JHEP",
    volume = "02",
    pages = "067",
    year = "2024"
}

@article{Sarosi:2017ykf,
    author = "S{\'a}rosi, G{\'a}bor",
    title = "{AdS$_{2}$ holography and the SYK model}",
    eprint = "1711.08482",
    archivePrefix = "arXiv",
    primaryClass = "hep-th",
    doi = "10.22323/1.323.0001",
    journal = "PoS",
    volume = "Modave2017",
    pages = "001",
    year = "2018"
}

@article{Trunin:2020vwy,
    author = "Trunin, Dmitrii A.",
    title = "{Pedagogical introduction to the Sachdev{\textendash}Ye{\textendash}Kitaev model and two-dimensional dilaton gravity}",
    eprint = "2002.12187",
    archivePrefix = "arXiv",
    primaryClass = "hep-th",
    doi = "10.3367/UFNe.2020.06.038805",
    journal = "Usp. Fiz. Nauk",
    volume = "191",
    number = "3",
    pages = "225--261",
    year = "2021"
}

@article{Kitaev:2017awl,
    author = "Kitaev, Alexei and Suh, S. Josephine",
    title = "{The soft mode in the Sachdev-Ye-Kitaev model and its gravity dual}",
    eprint = "1711.08467",
    archivePrefix = "arXiv",
    primaryClass = "hep-th",
    doi = "10.1007/JHEP05(2018)183",
    journal = "JHEP",
    volume = "05",
    pages = "183",
    year = "2018"
}

@article{Maldacena:2016upp,
    author = "Maldacena, Juan and Stanford, Douglas and Yang, Zhenbin",
    title = "{Conformal symmetry and its breaking in two dimensional Nearly Anti-de-Sitter space}",
    eprint = "1606.01857",
    archivePrefix = "arXiv",
    primaryClass = "hep-th",
    doi = "10.1093/ptep/ptw124",
    journal = "PTEP",
    volume = "2016",
    number = "12",
    pages = "12C104",
    year = "2016"
}

@article{Lin:2023trc,
    author = "Lin, Henry W. and Stanford, Douglas",
    title = "{A symmetry algebra in double-scaled SYK}",
    eprint = "2307.15725",
    archivePrefix = "arXiv",
    primaryClass = "hep-th",
    doi = "10.21468/SciPostPhys.15.6.234",
    journal = "SciPost Phys.",
    volume = "15",
    number = "6",
    pages = "234",
    year = "2023"
}

@article{bozejko1991example,
  title={An example of a generalized Brownian motion},
  author={Bo{\.z}ejko, Marek and Speicher, Roland},
  journal={Communications in Mathematical Physics},
  volume={137},
  pages={519--531},
  year={1991},
  publisher={Springer}
}

@article{Bozejko1994,
author = {Bozejko, Marek and Speicher, Roland},
journal = {Mathematische Annalen},
keywords = {completely positive maps; Fock representations; -deformed commutation relations; positive definite functions; self-adjoint operators; braid relations; symmetric groups; Coxeter generators; Coxeter groups},
number = {1},
pages = {97-120},
title = {Completely positive maps on Coxeter groups, deformed commutation relations, and operator spaces.},
url = {http://eudml.org/doc/165244},
volume = {300},
year = {1994},
}

@article{bozejko1997q,
  title={q-Gaussian processes: non-commutative and classical aspects},
  author={Bo{\.z}ejko, Marek and K{\"u}mmerer, Burkhard and Speicher, Roland},
  journal={Communications in Mathematical Physics},
  volume={185},
  number={1},
  pages={129--154},
  year={1997},
  publisher={Springer}
}

@article{bozejko2017fock,
  title={Fock representations of Q-deformed commutation relations},
  author={Bo{\.z}ejko, Marek and Lytvynov, Eugene and Wysocza{\'n}ski, Janusz},
  journal={Journal of Mathematical Physics},
  volume={58},
  number={7},
  year={2017},
  publisher={AIP Publishing}
}

@article{Ricard2003FactorialityOQ,
  title={Factoriality of q-Gaussian von Neumann Algebras},
  author={{\'E}ric Ricard},
  journal={Communications in Mathematical Physics},
  year={2003},
  volume={257},
  pages={659-665},
  url={https://api.semanticscholar.org/CorpusID:18099124}
}

@article{skalski2018remarks,
  title={Remarks on factoriality and $q$-deformations},
  author={Skalski, Adam and Wang, Simeng},
  journal={Proceedings of the American Mathematical Society},
  volume={146},
  number={9},
  pages={3813--3823},
  year={2018}
}

@article{speicher1993generalized,
  title={Generalized statistics of macroscopic fields},
  author={Speicher, Roland},
  journal={letters in mathematical physics},
  volume={27},
  pages={97--104},
  year={1993},
  publisher={Springer}
}

@article{Sorce:2023fdx,
    author = "Sorce, Jonathan",
    title = "{Notes on the type classification of von Neumann algebras}",
    eprint = "2302.01958",
    archivePrefix = "arXiv",
    primaryClass = "hep-th",
    reportNumber = "MIT-CTP/5527",
    doi = "10.1142/S0129055X24300024",
    journal = "Rev. Math. Phys.",
    volume = "36",
    number = "02",
    pages = "2430002",
    year = "2024"
}

@article{Engelhardt:2018kcs,
    author = "Engelhardt, Netta and Wall, Aron C.",
    title = "{Coarse Graining Holographic Black Holes}",
    eprint = "1806.01281",
    archivePrefix = "arXiv",
    primaryClass = "hep-th",
    doi = "10.1007/JHEP05(2019)160",
    journal = "JHEP",
    volume = "05",
    pages = "160",
    year = "2019"
}

@article{Maldacena:2013xja,
    author = "Maldacena, Juan and Susskind, Leonard",
    title = "{Cool horizons for entangled black holes}",
    eprint = "1306.0533",
    archivePrefix = "arXiv",
    primaryClass = "hep-th",
    doi = "10.1002/prop.201300020",
    journal = "Fortsch. Phys.",
    volume = "61",
    pages = "781--811",
    year = "2013"
}

@article{Qi:2018bje,
    author = "Qi, Xiao-Liang and Streicher, Alexandre",
    title = "{Quantum Epidemiology: Operator Growth, Thermal Effects, and SYK}",
    eprint = "1810.11958",
    archivePrefix = "arXiv",
    primaryClass = "hep-th",
    doi = "10.1007/JHEP08(2019)012",
    journal = "JHEP",
    volume = "08",
    pages = "012",
    year = "2019"
}

@article{Aguilar-Gutierrez:2025hty,
    author = "Aguilar-Gutierrez, Sergio E.",
    title = "{Symmetry sectors in chord space and relational holography in the DSSYK. Lessons from branes, wormholes, and de Sitter space}",
    eprint = "2506.21447",
    archivePrefix = "arXiv",
    primaryClass = "hep-th",
    doi = "10.1007/JHEP10(2025)044",
    journal = "JHEP",
    volume = "10",
    pages = "044",
    year = "2025"
}

@article{Aguilar-Gutierrez:2025mxf,
    author = "Aguilar-Gutierrez, Sergio E. and Xu, Jiuci",
    title = "{Geometry of Chord Intertwiner, Multiple Shocks and Switchback in Double-Scaled SYK}",
    eprint = "2506.19013",
    archivePrefix = "arXiv",
    primaryClass = "hep-th",
    month = "6",
    year = "2025"
}

@article{Penington:2025hrc,
    author = "Penington, Geoff and Tabor, Elisa",
    title = "{The algebraic structure of gravitational scrambling}",
    eprint = "2508.21062",
    archivePrefix = "arXiv",
    primaryClass = "hep-th",
    month = "8",
    year = "2025"
}

@book{gasper2011basic,
  title={Basic hypergeometric series},
  author={Gasper, George and Rahman, Mizan},
  volume={96},
  year={2011},
  publisher={Cambridge university press}
}

@article{Jafferis:2022uhu,
    author = "Jafferis, Daniel Louis and Kolchmeyer, David K. and Mukhametzhanov, Baur and Sonner, Julian",
    title = "{Matrix Models for Eigenstate Thermalization}",
    eprint = "2209.02130",
    archivePrefix = "arXiv",
    primaryClass = "hep-th",
    doi = "10.1103/PhysRevX.13.031033",
    journal = "Phys. Rev. X",
    volume = "13",
    number = "3",
    pages = "031033",
    year = "2023"
}

@article{Engelhardt:2022qts,
    author = "Engelhardt, Netta and Folkestad, Asmund",
    title = "{Canonical purification of evaporating black holes}",
    eprint = "2201.08395",
    archivePrefix = "arXiv",
    primaryClass = "hep-th",
    reportNumber = "MIT-CTP/5394",
    doi = "10.1103/PhysRevD.105.086010",
    journal = "Phys. Rev. D",
    volume = "105",
    number = "8",
    pages = "086010",
    year = "2022"
}

@article{Penington:2019npb,
    author = "Penington, Geoffrey",
    title = "{Entanglement Wedge Reconstruction and the Information Paradox}",
    eprint = "1905.08255",
    archivePrefix = "arXiv",
    primaryClass = "hep-th",
    doi = "10.1007/JHEP09(2020)002",
    journal = "JHEP",
    volume = "09",
    pages = "002",
    year = "2020"
}

@article{Gao:2024gky,
    author = "Gao, Ping",
    title = "{Modular flow in JT gravity and entanglement wedge reconstruction}",
    eprint = "2402.18655",
    archivePrefix = "arXiv",
    primaryClass = "hep-th",
    doi = "10.1007/JHEP06(2024)151",
    journal = "JHEP",
    volume = "06",
    pages = "151",
    year = "2024"
}

@article{Almheiri:2019psf,
    author = "Almheiri, Ahmed and Engelhardt, Netta and Marolf, Donald and Maxfield, Henry",
    title = "{The entropy of bulk quantum fields and the entanglement wedge of an evaporating black hole}",
    eprint = "1905.08762",
    archivePrefix = "arXiv",
    primaryClass = "hep-th",
    doi = "10.1007/JHEP12(2019)063",
    journal = "JHEP",
    volume = "12",
    pages = "063",
    year = "2019"
}

@article{Kolchmeyer:2023gwa,
    author = "Kolchmeyer, David K.",
    title = "{von Neumann algebras in JT gravity}",
    eprint = "2303.04701",
    archivePrefix = "arXiv",
    primaryClass = "hep-th",
    doi = "10.1007/JHEP06(2023)067",
    journal = "JHEP",
    volume = "06",
    pages = "067",
    year = "2023"
}

@article{Penington:2023dql,
    author = "Penington, Geoff and Witten, Edward",
    title = "{Algebras and States in JT Gravity}",
    eprint = "2301.07257",
    archivePrefix = "arXiv",
    primaryClass = "hep-th",
    month = "1",
    year = "2023"
}

@article{Penington:2024sum,
    author = "Penington, Geoff and Witten, Edward",
    title = "{Algebras and states in super-JT gravity}",
    eprint = "2412.15549",
    archivePrefix = "arXiv",
    primaryClass = "hep-th",
    month = "12",
    year = "2024"
}

@article{Penington:2019kki,
    author = "Penington, Geoff and Shenker, Stephen H. and Stanford, Douglas and Yang, Zhenbin",
    title = "{Replica wormholes and the black hole interior}",
    eprint = "1911.11977",
    archivePrefix = "arXiv",
    primaryClass = "hep-th",
    doi = "10.1007/JHEP03(2022)205",
    journal = "JHEP",
    volume = "03",
    pages = "205",
    year = "2022"
}

@article{Harlow:2018tqv,
    author = "Harlow, Daniel and Jafferis, Daniel",
    title = "{The Factorization Problem in Jackiw-Teitelboim Gravity}",
    eprint = "1804.01081",
    archivePrefix = "arXiv",
    primaryClass = "hep-th",
    doi = "10.1007/JHEP02(2020)177",
    journal = "JHEP",
    volume = "02",
    pages = "177",
    year = "2020"
}

@article{Floreanini1995AnAI,
  title={An algebraic interpretation of the continuous big q‐Hermite polynomials},
  author={Roberto Floreanini and J. P. Letourneux and Luc Vinet},
  journal={Journal of Mathematical Physics},
  year={1995},
  volume={36},
  pages={5091-5097},
  url={https://api.semanticscholar.org/CorpusID:15208438}
}

@article{Engelhardt:2023xer,
    author = "Engelhardt, Netta and Liu, Hong",
    title = "{Algebraic ER=EPR and complexity transfer}",
    eprint = "2311.04281",
    archivePrefix = "arXiv",
    primaryClass = "hep-th",
    reportNumber = "MIT-CTP/5607",
    doi = "10.1007/JHEP07(2024)013",
    journal = "JHEP",
    volume = "07",
    pages = "013",
    year = "2024"
}

@book{murphy2014c,
  title={C*-algebras and operator theory},
  author={Murphy, Gerald J},
  year={2014},
  publisher={Academic press}
}

@book{takesaki2003theory,
  title={Theory of operator algebras II},
  author={Takesaki, Masamichi and others},
  volume={125},
  year={2003},
  publisher={Springer}
}

@article{Sachdev:1992fk,
    author = "Sachdev, Subir and Ye, Jinwu",
    title = "{Gapless spin fluid ground state in a random, quantum Heisenberg magnet}",
    eprint = "cond-mat/9212030",
    archivePrefix = "arXiv",
    reportNumber = "PRINT-93-0077",
    doi = "10.1103/PhysRevLett.70.3339",
    journal = "Phys. Rev. Lett.",
    volume = "70",
    pages = "3339",
    year = "1993"
}

@misc{kitaev2015simple,
  author       = {Kitaev, Alexei},
  title        = {A Simple Model of Quantum Holography},
  howpublished = {\url{http://online.kitp.ucsb.edu/online/entangled15/kitaev/}, 
                  \url{http://online.kitp.ucsb.edu/online/entangled15/kitaev2/}},
  note         = {Talks at KITP, April and May 2015},
  year         = {2015}
}

@article{Almheiri:2014cka,
    author = "Almheiri, Ahmed and Polchinski, Joseph",
    title = "{Models of AdS$_{2}$ backreaction and holography}",
    eprint = "1402.6334",
    archivePrefix = "arXiv",
    primaryClass = "hep-th",
    doi = "10.1007/JHEP11(2015)014",
    journal = "JHEP",
    volume = "11",
    pages = "014",
    year = "2015"
}

@article{Jensen:2016pah,
    author = "Jensen, Kristan",
    title = "{Chaos in AdS$_2$ Holography}",
    eprint = "1605.06098",
    archivePrefix = "arXiv",
    primaryClass = "hep-th",
    doi = "10.1103/PhysRevLett.117.111601",
    journal = "Phys. Rev. Lett.",
    volume = "117",
    number = "11",
    pages = "111601",
    year = "2016"
}

@article{Maldacena:2016hyu,
    author = "Maldacena, Juan and Stanford, Douglas",
    title = "{Remarks on the Sachdev-Ye-Kitaev model}",
    eprint = "1604.07818",
    archivePrefix = "arXiv",
    primaryClass = "hep-th",
    doi = "10.1103/PhysRevD.94.106002",
    journal = "Phys. Rev. D",
    volume = "94",
    number = "10",
    pages = "106002",
    year = "2016"
}

@article{Sachdev:2010um,
    author = "Sachdev, Subir",
    title = "{Holographic metals and the fractionalized Fermi liquid}",
    eprint = "1006.3794",
    archivePrefix = "arXiv",
    primaryClass = "hep-th",
    doi = "10.1103/PhysRevLett.105.151602",
    journal = "Phys. Rev. Lett.",
    volume = "105",
    pages = "151602",
    year = "2010"
}

@article{Polchinski:2016xgd,
    author = "Polchinski, Joseph and Rosenhaus, Vladimir",
    title = "{The Spectrum in the Sachdev-Ye-Kitaev Model}",
    eprint = "1601.06768",
    archivePrefix = "arXiv",
    primaryClass = "hep-th",
    doi = "10.1007/JHEP04(2016)001",
    journal = "JHEP",
    volume = "04",
    pages = "001",
    year = "2016"
}

@article{Witten:2021unn,
    author = "Witten, Edward",
    title = "{Gravity and the crossed product}",
    eprint = "2112.12828",
    archivePrefix = "arXiv",
    primaryClass = "hep-th",
    doi = "10.1007/JHEP10(2022)008",
    journal = "JHEP",
    volume = "10",
    pages = "008",
    year = "2022"
}

@article{Chandrasekaran:2022eqq,
    author = "Chandrasekaran, Venkatesa and Penington, Geoff and Witten, Edward",
    title = "{Large N algebras and generalized entropy}",
    eprint = "2209.10454",
    archivePrefix = "arXiv",
    primaryClass = "hep-th",
    doi = "10.1007/JHEP04(2023)009",
    journal = "JHEP",
    volume = "04",
    pages = "009",
    year = "2023"
}

@article{Dutta:2019gen,
    author = "Dutta, Souvik and Faulkner, Thomas",
    title = "{A canonical purification for the entanglement wedge cross-section}",
    eprint = "1905.00577",
    archivePrefix = "arXiv",
    primaryClass = "hep-th",
    doi = "10.1007/JHEP03(2021)178",
    journal = "JHEP",
    volume = "03",
    pages = "178",
    year = "2021"
}

@article{Leutheusser:2021qhd,
    author = "Leutheusser, Samuel and Liu, Hong",
    title = "{Causal connectability between quantum systems and the black hole interior in holographic duality}",
    eprint = "2110.05497",
    archivePrefix = "arXiv",
    primaryClass = "hep-th",
    reportNumber = "MIT-CTP/5335",
    doi = "10.1103/PhysRevD.108.086019",
    journal = "Phys. Rev. D",
    volume = "108",
    number = "8",
    pages = "086019",
    year = "2023"
}

@article{Engelhardt:2025bxy,
    author = "Engelhardt, Netta and Liu, Hong",
    title = "{The Making of von Neumann Algebras from Bulk Focusing}",
    eprint = "2509.05413",
    archivePrefix = "arXiv",
    primaryClass = "hep-th",
    month = "9",
    year = "2025"
}

@article{Liu:2025cml,
    author = "Liu, Hong",
    title = "{Towards a holographic description of closed universes}",
    eprint = "2509.14327",
    archivePrefix = "arXiv",
    primaryClass = "hep-th",
    reportNumber = "MIT-CTP/5923",
    month = "9",
    year = "2025"
}

@article{Kudler-Flam:2025cki,
    author = "Kudler-Flam, Jonah and Witten, Edward",
    title = "{Emergent Mixed States for Baby Universes and Black Holes}",
    eprint = "2510.06376",
    archivePrefix = "arXiv",
    primaryClass = "hep-th",
    month = "10",
    year = "2025"
}

@article{Gesteau:2024rpt,
    author = "Gesteau, Elliott and Liu, Hong",
    title = "{Toward stringy horizons}",
    eprint = "2408.12642",
    archivePrefix = "arXiv",
    primaryClass = "hep-th",
    reportNumber = "MIT-CTP/5751",
    month = "8",
    year = "2024"
}

@article{Goel:2020yxl,
    author = "Goel, Akash and Iliesiu, Luca V. and Kruthoff, Jorrit and Yang, Zhenbin",
    title = "{Classifying boundary conditions in JT gravity: from energy-branes to $\alpha$-branes}",
    eprint = "2010.12592",
    archivePrefix = "arXiv",
    primaryClass = "hep-th",
    doi = "10.1007/JHEP04(2021)069",
    journal = "JHEP",
    volume = "04",
    pages = "069",
    year = "2021"
}

@article{Ferrari:2020yon,
    author = "Ferrari, Frank",
    title = "{Gauge Theory Formulation of Hyperbolic Gravity}",
    eprint = "2011.02108",
    archivePrefix = "arXiv",
    primaryClass = "hep-th",
    doi = "10.1007/JHEP03(2021)046",
    journal = "JHEP",
    volume = "03",
    pages = "046",
    year = "2021"
}

@article{Blommaert:2025avl,
    author = "Blommaert, Andreas and Levine, Adam and Mertens, Thomas G. and Papalini, Jacopo and Parmentier, Klaas",
    title = "{Wormholes, branes and finite matrices in sine dilaton gravity}",
    eprint = "2501.17091",
    archivePrefix = "arXiv",
    primaryClass = "hep-th",
    doi = "10.1007/JHEP09(2025)123",
    journal = "JHEP",
    volume = "09",
    pages = "123",
    year = "2025"
}

@article{Blommaert:2025eps,
    author = "Blommaert, Andreas and Tietto, Damiano and Verlinde, Herman",
    title = "{SYK collective field theory as complex Liouville gravity}",
    eprint = "2509.18462",
    archivePrefix = "arXiv",
    primaryClass = "hep-th",
    month = "9",
    year = "2025"
}

@article{Cui:2025sgy,
    author = "Cui, Chuanxin and Rozali, Moshe",
    title = "{Splitting and gluing in sine-dilaton gravity: matter correlators and the wormhole Hilbert space}",
    eprint = "2509.01680",
    archivePrefix = "arXiv",
    primaryClass = "hep-th",
    month = "9",
    year = "2025"
}

@book{takesaki2001theory,
  author    = {Takesaki, Masamichi},
  title     = {Theory of Operator Algebras {I}},
  publisher = {Springer-Verlag},
  address   = {Berlin Heidelberg New York},
  year      = {2002},
  series    = {Encyclopaedia of Mathematical Sciences},
  volume    = {124},
  isbn      = {978-3-540-42248-8}
}

@article{Berkooz:2022mfk,
    author = "Berkooz, Micha and Isachenkov, Misha and Isachenkov, Mikhail and Narayan, Prithvi and Narovlansky, Vladimir",
    title = "{Quantum groups, non-commutative AdS$_{2}$, and chords in the double-scaled SYK model}",
    eprint = "2212.13668",
    archivePrefix = "arXiv",
    primaryClass = "hep-th",
    doi = "10.1007/JHEP08(2023)076",
    journal = "JHEP",
    volume = "08",
    pages = "076",
    year = "2023"
}
\end{document}